\newtheorem{theorem}{Theorem}[chapter]
\newtheorem{definition}{Definition}
\newtheorem{lemma}[theorem]{Lemma}
\newtheorem{corollary}[theorem]{Corollary}
\newtheorem{example}{Example}[chapter]
\newenvironment{defn*}{\begin{definition}}{\end{definition}}
\newenvironment{proof}{\noindent{\bf Proof.}}{\hfill$\blacksquare$}
\newcommand{\ue}{\"{u}}
\newcommand{\Ue}{\"{U}}
\newcommand{\qe}{\"{o}}
\newcommand{\Oe}{\"{O}}
\newcommand{\gh}{\~{g}}
\newcommand{\Ie}{\.{I}}
\newcommand{\sh}{\c{s}}
\newcommand{\ch}{\c{c}}
\newcommand{\Ch}{\c{C}}
\newcommand{\set}[1]{\mathcal{#1}}
\newcommand{\GF}{\mathrm{GF}}
\newcommand{\vect}[1]{\mathbf{#1}}
\newcommand{\vectex}[2]{\mathbf{#1}_{\setminus \left\{#2\right\}}}
\newcommand{\norm}[1]{\left\| #1 \right\|}
\newcommand{\setsize}[1]{|\set{#1}|}
\newcommand{\abs}[1]{\left|{#1}\right|}
\newcommand{\innerproduct}[2]{<#1,#2>}
\newcommand{\ssum}{\sideset{^\boxplus}{}\sum}
\newcommand{\rfield}{\mathbb{R}}
\newcommand{\rfieldqn}{\mathbb{R}^{(q^N)}}
\newcommand{\field}[1]{\mathbb{F}_{#1}}
\newcommand{\fieldq}{\field{q}}
\newcommand{\fieldn}[1]{\mathbb{F}_{#1}^{N}}
\newcommand{\fieldqn}{\fieldn{q}}
\newcommand{\fieldnn}[2]{\field{#1}^{#2}}
\newcommand{\fieldql}{\field{q}^L}
\newcommand{\setfield}[1]{\set{P}_{\field{#1}}}
\newcommand{\setfieldq}{\set{P}_{\fieldq}}
\newcommand{\setfieldn}[1]{\set{P}_{\fieldn{#1}}}
\newcommand{\setfieldqn}{\set{P}_{\fieldqn}}
\newcommand{\setfieldnn}[2]{\set{P}_{\fieldnn{#1}{#2}}}
\newcommand{\n}[2]{\set{C}_{#1}\left\{ #2 \right\}}
\newcommand{\evalat}[1]{\Big|_{#1}}
\newcommand{\Evalat}[1]{\Bigg|_{#1}}
\newcommand{\image}[1]{\mathrm{im}\left\{#1\right\}}
\newcommand{\kernel}[1]{\mathrm{ker}\left\{#1\right\}}
\newcommand{\rank}[1]{\mathrm{rank}\left(#1\right)}
\newcommand{\expectation}[1]{\mathbf{E}\left[#1\right]}
\newcommand{\indicator}[2]{\mathbbm{1}_{\set{#1}}\left(#2\right)}
\newcommand{\operator}[2]{\mathcal{#1}\left\{#2\right\}}
\newcommand{\pseudoinv}[2]{\mathcal{#1}^{+}\left\{#2\right\}(x)}
\newcommand{\operatorn}[2]{\mathcal{#1}_{N}\left\{#2\right\}}
\newcommand{\pseudoinvn}[2]{\mathcal{#1}_{N}^{+}\left\{#2\right\}(\vect{x})}
\newcommand{\pseudoinvarg}[3]{\mathcal{#1}^{+}\left\{#2\right\}(#3)}
\newcommand{\pseudoinvi}[2]{\pseudoinvarg{#1}{#2}{x_i}}
\newcommand{\summary}[1]{\sim\{#1\}}
\newcommand{\sopsub}[2]{\mathcal{S}_{#1}\left\{#2\right\}}
\newcommand{\imagesop}[1]{\image{\mathcal{S}_{#1}}}
\newcommand{\complex}{\mathbb{C}}
\newcommand{\real}{\mathbb{R}}
\renewcommand{\Re}[1]{\mathrm{Re}\left\{ #1 \right\}}
\renewcommand{\Im}[1]{\mathrm{Im}\left\{ #1 \right\}}
\newcommand{\psk}[1]{\mu_q\left(#1\right)}
\newcommand{\bpsk}[1]{\beta\left(#1\right)}
\newcommand{\pam}[2]{\eta_{#1}\left(#2\right)}
\newcommand{\gpam}[2]{\kappa_{#1}\left(#2\right)}
\newcommand{\qpsk}[1]{\nu\left(#1\right)}
\newcommand{\integer}[1]{\textrm{int}(#1)}
\newcommand{\nokta}{\textrm{.}}
\newcommand{\virgul}{\textrm{,}}
\author{Muhammet Fatih Bayramo\~{g}lu}
\title{The Hilbert Space of Probability Mass Functions\\ and Applications on Probabilistic Inference}
\date{September 2011}
\keywords{The Hilbert space of pmfs, factorization of pmfs, probabilistic inference, MIMO detection, Markov random fields}
\begin{document}

\begin{preliminaries}
  \maketitle
  \makeapproval
  \plagiarism
  \setlength{\parindent}{0em}
  \setlength{\parskip}{10pt}

  \begin{abstract}\oneandhalfspacing
  The Hilbert space of probability mass functions (pmf) is introduced  in this thesis.
  A factorization method for multivariate pmfs is proposed by using the tools provided 
  by the Hilbert space of pmfs. The resulting factorization is special for two 
  reasons. First, it reveals the algebraic relations between the involved random variables. 
  Second, it determines the conditional independence relations between the random variables. 
  Due to the first property of the resulting factorization, it can be shown that channel decoders
  can be employed in the solution of  probabilistic inference problems other than decoding. This approach might lead
  to  new probabilistic inference algorithms and new hardware options for the implementation of these algorithms. 
  An example of new inference algorithms inspired by the idea of using channel decoder
  for other inference tasks is a multiple-input multiple-output (MIMO) detection algorithm which has a complexity of the 
  square-root of the optimum MIMO detection algorithm.

  \end{abstract}
  \begin{oz}\oneandhalfspacing

  Bu tezde olas\i l\i k k\ue tlesi fonksiyonlar\i n\i n Hilbert uzay\i{}  sunulmaktad\i r. 
  Bu Hilbert uzay\i n\i n sa\gh lad\i \gh \i{} olanaklar kullan\i larak \ch ok de\gh i\sh kenli 
  olas\i l\i k k\ue tlesi fonksiyonlar\i n\i{} \ch arpanlar\i na ay\i rmak  i\ch in bir
  y\qe ntem \qe nerilmi\sh tir. Bu y\qe ntemden elde edilen \ch arpanlara ay\i rma iki nedenle 
  \qe zeldir. \Ie lk olarak, bu \ch arpanlara ay\i rma rastgele de\gh i\sh kenler aras\i ndaki 
  cebirsel ba\gh \i nt\i lar\i{} ortaya koyar. \Ie kinci olarak, rastgele de\gh i\sh kenler aras\i ndaki
  ko\sh ullu ba\gh \i ms\i zl\i k ili\sh kilerini belirler. Birinci \qe zellik sayesinde 
  kanal kod \ch\qe z\ue c\ue lerinin, kod \ch\qe zmekten ba\sh ka olas\i l\i ksal bilgi \ch\i kar\i m\i{}
  problemlerinin  \ch\qe z\ue m\ue nde  de kullan\i labilece\gh i g\qe sterilebilir. Bu yakla\sh\i m
  yeni olas\i l\i ksal bilgi \ch\i kar\i m\i{} algoritmalar\i na ve bu algoritmalar\i{} ger\ch eklemek
  i\ch in yeni donan\i m olanaklar\i na yol a\ch abilir. Kod \ch\qe z\ue c\ue lerin kod \ch\qe zmekten ba\sh ka
  bilgi \ch\i kar\i m\i{} g\qe revlerinde kullan\i lmas\i{} fikrinden esinlenen algor\i tmalar\i n bir 
  \qe rne\gh i, karma\sh\i kl\i \gh \i{} en  iyi algoritman\i n karek\qe k\ue{} olan bir \ch ok-girdili
  \ch ok-\ch\i kt\i l\i{} sezim algoritmas\i d\i r. 


  \end{oz}
  \dedication{\flushright{\textit{Kar\i ma\\ To my wife}}}
  \setlength{\parindent}{0em}
  \setlength{\parskip}{10pt}
  \begin{acknowledgments}\oneandhalfspacing
    Firstly, I would like to thank sincerely my supervisor Assoc. Prof. Dr. Ali \"{O}zg\"{u}r Y\i lmaz. 
    He is an exception in this department regarding both his scientific vision and his 
    personality. His trust and encouragement was crucial to me while working on this thesis.

    I would like to thank the members of the thesis progress monitoring committee members
    Prof. Dr. Mustafa Kuzuo\~{g}lu and Assoc. Prof. Dr. Emre Akta\c{s} for their
    valuable comments and contributions. Moreover, operator theory course of Prof. Kuzuo\~{g}lu
    helped me a lot in this thesis. Furthermore, I appreciate the financial support
    that Assoc. Prof. Akta\c{s} provided to me in the last periods of my thesis study 
    and his understanding. I would like to  thank also the rest of thesis jury members 
    Prof. Dr. Yal\c{c}\i n Tan\i k and Assist. Prof. Dr. Melda Y\"{u}ksel for their 
    valuable comments. 

    I would like express my gratitude to the other two exceptional faculty members of 
    this department who are Prof. Dr. Arif Erta\c{s}  and Assoc. Prof. Dr. \c{C}a\~{g}atay
    Candan.  Prof. Erta\c{s} is a person really deserving his name ``Arif''.  Assoc. Prof.
    Candan is one of the easiest persons that I can communicate with and I appreciate
    his ``always open'' door. 

    I would like to  thank Prof. Dr. Zafer \"{U}nver. I learned  a lot from him while
    assisting EE213 and EE214 courses. 

    Education is a long haul run. Hence, sincere thanks go to my high school mathematics teachers 
    Ahmet Cengiz, H\"{u}seyin \c{C}ak\i r, Perihan \"{O}zdingi\c{s}, and of course Demir Demirhas. 
    A special thanks goes to my undergraduate advisor Prof. Dr. G\"{o}n\"{u}l
    Turhan Sayan. 

    I would like to acknowledge the free software community. I have never needed and used any commercial software 
    during my Ph.D. research.

    I would like to thank Dr. Jorge Cham for phdcomics which introduced some smiles to our overly stressed lives. 
    
    I would like to thank my friends Alper S\"{o}yler, Murat K\i{}l\i{}\c{c}, Mehmet Akif Antepli, Murat \"{U}ney,
    Y\i lmaz Kalkan, Serdar Gedik, and Onur \"Oze\c{c} for their valuable friendship. 

    My sincere gratitude goes to my parents Nezahat and Mustafa Bayramo\~{g}lu. This thesis could not finish without
    their prayers. But the good manners I learned from them is much more valuable to me than this Ph.D. degree. 
    I would like to thank my brother Etka for being the kindest brother in the world. 
    
    My deepest thanks goes to my wife Neslihan, or more precisely Dr. Neslihan Yal\c{c}\i{}n Bay-ramo\~{g}lu. I appreciate
    everything she sacrificed for me. I studied on this thesis on times that I stoled from her and she really deserves
    at least the half of the credit for this thesis. Her support not only was vital 
    for me during the Ph.D. but also will continue to be vital during the rest of my life.

  \end{acknowledgments}

  \begin{preface}    
    This thesis summarizes the research work carried out in six years starting from September 2005. 
    The research topic arose while I was trying to develop an analysis method for the convergence 
    rate of the iterative sum-product algorithm. Since the messages (beliefs)  passed between the nodes in the iterative
    sum-product algorithm are probability mass functions (pmf), I thought that representing 
    the pmfs in a Hilbert space structure would prove useful in the analysis of the sum-product 
    algorithm. Analyzing the convergence of the sum-product algorithm would be an application of
    the norm in the Hilbert space of pmfs. However, later I noticed that the inner product has 
    much more interesting applications and preferred  focusing on the applications of the inner product
    to dealing with the convergence which led to this thesis. 

    In order to read the thesis a basic understanding of inner product spaces and finite fields 
    is necessary. Anybody with this background can follow the chapters from the second to the fifth. 
    I believe that these chapters are the core of the thesis. Chapter 6  contains some 
    applications from communication theory and might require a communication theory background.

    This preface is an adequate place to note  some  observations about my country and university.
    I am happy to observe that  Turkey improved economically and democratically during 
    my graduate studies. On the other hand, I am sad to observe that Middle East Technical University downgraded 
    scientifically and democratically during the same time.

    This thesis is related probability theory. Probability theory is an area  which is close 
    to the border between science and belief. Although Laplace's book on celestial mechanics misses 
    to mention  God,  my explanation on the relation between probability and willpower makes me to 
    believe in God and  I would like to start to the rest of the thesis by a quote from the translation
    of  Qur'an which explains what is science to me: ``Glory be to You, we have no knowledge 
    except what you have taught us. Verily, it is You (Allah), the All-Knower, the All-Wise''.

 \end{preface}

  \setlength{\parindent}{0em}
  \setlength{\parskip}{3pt}
  \tableofcontents


\listoffigures

\end{preliminaries}

\setlength{\parindent}{0em}
\setlength{\parskip}{10pt}

\chapter{INTRODUCTION}

\section{Motivation}

A linear vector space structure over a set provides algebraic tools
such as addition and scaling to carry out on the elements of the set. 
If a vector space can be endowed with an inner product 
then it becomes an inner product space. An inner product provides
geometric concepts such as norm, distance, angle, and projections. 
If every Cauchy sequence in an inner product space converges
with respect to the inner product induced norm then the inner product
space becomes a Hilbert space. Needless to say a Hilbert space
structure is very useful and find application areas  in diverse
fields of science. Communication theory is not an exception. For instance, 
the signal space representation in communication theory 
relies on the Hilbert space structure constructed over the set of square 
integrable functions.

One of the mathematical objects that is too frequently 
used in communication and information theories is the 
probability mass functions (pmf)   which are discrete
equivalents of probability density functions. Although, pmfs
are so frequently used in  communication and information
theories a Hilbert space structure for them was missing. 
A Hilbert space of pmfs might have many interesting applications.

A possible application  for the Hilbert space
of probability mass functions might be analyzing
the characteristics of a multivariate pmf. An important
characteristic of a multivariate pmf is the conditional independence
relations imposed by it. The conditional independence 
relation imposed by a multivariate pmf is determined 
by the factorization of the pmf to local functions\footnote{Local 
functions are functions (not necessarily pmfs) 
which have less arguments than the original multivariate pmf.} as explained in \cite{hct1,hct2}.

The factorization structure of a multivariate pmf into local functions 
also determines the algorithms which can perform inference on 
the pmf, in other words, maximize or marginalize the pmf.
The sum-product algorithm, which is also called belief propagation, 
and the max-product algorithm effectively marginalize or maximize
a multivariate pmf by exploiting the pmfs' factorization 
structure \cite{fgsp}. Modern decoding algorithms
such as low-density parity-check decoding and turbo decoding,
which have become highly popular in the last decade, relies on this fact. 

Some multivariate pmfs, for instance the pmf resulting from a hidden Markov
model, has an apparent factorization structure. However, 
one cannot be sure whether this factorization structure is the 
``best'' possible factorization or not. On the other hand, some pmfs, 
for instance the pmfs obtained empirically, might 
not have an apparent factorization structure at all.  Therefore, 
developing a method which obtains the factorization of a
multivariate pmf systematically would  prove useful in many areas.



\section{Contributions}

The first contribution in this thesis is the derivation of the 
Hilbert space structure for pmfs. The Hilbert space
of pmfs not only provides a vectorial representation of 
evidence but also it proves to be a useful tool in analyzing
the pmfs. 

The second contribution of this thesis is a systematic
method for obtaining factorization of a multivariate pmf. 
The resulting factorization is unique and is the ultimate
factorization possible. Hence, we call the 
resulting factorization as the canonical factorization. 
The canonical factorization of a multivariate pmf is obtained 
by projecting the pmf onto orthogonal basis pmfs 
of the Hilbert space of pmfs. Hence, this factorization 
method heavily relies on the Hilbert space of pmfs. 

The basis pmfs  mentioned in the paragraph
above are special pmfs such that their value is determined only by a  linear
combination of their arguments. In order to be able to talk 
about linear combinations of arguments addition and multiplication
must be well defined between arguments of the pmf. Hence, the canonical
factorization of a pmf can be obtained only if the pmf is a pmf
of finite-field-valued random variables. This is an important limitation of 
the canonical factorization.

The property of the basis pmfs mentioned in the previous paragraph 
causes an important limitation but also this property leads 
to the third and the probably the  most important 
contribution of the thesis. Since the basis pmfs are functions of their 
arguments, the canonical factorization reveals the algebraic
dependencies between the random variables. Thanks to this fact,
 it can be shown that channel decoders 
can be employed as an apparatus for tasks beyond decoding. 
This idea leads to new hardware options as well as  new inference
algorithms. 

The fourth contribution of the thesis is an application of 
the idea explained in the paragraph above. This contribution is 
a multiple-input multiple-ouput
(MIMO) detection algorithm which employs the decoder of a tail biting 
convolutional code as a processing device.
 This algorithm is an approximate soft-input soft-output 
MIMO detection algorithm whose  complexity is the square-root
of that of the optimum MIMO detection algorithm. 

The final contribution of the thesis is another property 
of the canonical factorization. It can be shown that 
the conditional dependence relationships 
imposed by a multivariate pmf can be determined from the 
canonical factorization of the pmf. In other words, 
the conditional independence relationships 
imposed by a pmf can be determined by using the geometric
tools provided by the Hilbert space of pmfs.   
This property of the canonical factorization might lead
to applications in experimental fields such as bioinformatics dealing with large 
amounts of data.



\section{Comparison to earlier work}

A Hilbert space of probability density functions   is first presented in literature in a very different area of science, stochastic
geology, in \cite{firstHilbert}. 
Their derivation is for a class of continuous probability density functions. On the other hand our
derivation is for  pmfs. Although, the resulting 
Hilbert space structures in both their and our derivations are quite similar, 
 our derivation is independent of theirs. Furthermore, we provide many applications of the 
Hilbert space of pmfs on probabilistic inference. 

The canonical factorization proposed in this thesis can be compared to the factorization of pmfs
provided by the Hammersley-Clifford theorem \cite{hct1,hct2}. Both the Hammersley-Clifford 
theorem and the canonical factorization can completely determine the conditional independence
relationships imposed by a pmf. But Hammersley-Clifford theorem does not highlight 
the algebraic dependence relationships between random variables while the canonical factorization 
does. Moreover, the canonical factorization is unique whereas the factorization  of the Hammersley-Clifford
theorem is not.

The results obtained in this thesis can be located in the factor graph literature as follows.
Factor graphs are bipartite graphical models which represent the factorization of a pmf \cite{fgsp}. 
The bipartite graphs were first employed by Tanner to describe low complexity codes in \cite{tanner}. 
A very crucial step in  achieving the factor graph representation is the Ph.D. thesis of Wiberg  \cite{Wiberg,wiberg}.
In his thesis Wiberg  showed the connection between various codes and decoding algorithms by  
introducing hidden state nodes to the graphs described by Tanner and 
characterized the message passing algorithms running on these graphs. 
Local constraints in \cite{Wiberg} are  behavioral constraints, such as parity check constraints. 
The factor graphs are the generalization of the graphical models introduced in \cite{Wiberg}
by allowing local constraints to be arbitrary  functions rather than behavioral constraints \cite{fgsp}.

The canonical factorization proposed in this thesis can also be represented by a factor graph. 
Moreover, the factor functions appearing in the canonical factorization can be transformed
into usual parity check constraints by introducing some auxiliary variables. Therefore, the
factor graph representing the canonical factorization can be transformed 
into a Tanner graph by introducing some auxiliary variable nodes which are very different from 
the hidden state nodes introduced in \cite{Wiberg}. This is essentially
an explanation of the claim that the channel decoders can be employed for inference tasks 
beyond decoding. 


\section{Outline}

After this chapter, the  thesis continues with the introduction of the Hilbert space 
of pmfs in Chapter \ref{hilbertspacechapter}. The Hilbert space 
of pmfs is the main tool to be used throughout the thesis.  The canonical factorization 
is introduced in Chapter \ref{thechapter}. Chapter \ref{specialcasechapter} investigates
the properties and special cases of the canonical factorization. Chapter \ref{decodingchapter}
explains how a channel decoder can be used for other probabilistic inference 
tasks other than its own purpose. This explanation is based on the canonical factorization. 
Some possible consequences of this result are also explained in  Chapter \ref{decodingchapter}.
Chapter \ref{applicationchapter} provides some basic examples from communication theory 
on the use of channel decoders for other inference tasks beyond decoding.  The  MIMO detector
which uses the decoder of a tail biting convolutional code is also introduced in this chapter.
Chapter \ref{markovchapter} shows that the conditional independence relations can be completely 
determined from the canonical factorization. The thesis is concluded with some possible future
directions in Chapter \ref{conclusionchapter}.  For the sake of neatness of the thesis 
some proofs and derivations are collected in the Appendix.  

\section{Some remarks on notation}

Throughout the thesis we denote the deterministic variables with lowercase  letters
and random variables with uppercase letters. 
We represent functions of multiple variables as functions of  
vectors and denote vectors with boldface letters. Lowercase boldface letters
denote deterministic vectors and capital boldface letters denote random variables. 
All vectors encountered in the thesis are row vectors except a few cases in Chapter \ref{applicationchapter}.
 
Matrices are also denoted with capital boldface letters which might lead to a confusion with 
random vectors. Throughout the thesis, we used $\vect{V}$, $\vect{W}$, $\vect{X}$, $\vect{Y}$, and $\vect{Z}$ 
to denote random vectors. All the other capital boldface letters are matrices.

Unfortunately, many different types of additions are included in the thesis such as finite field addition,
real number addition, vector addition, and even direct sum of subspaces. 
We reserve $\oplus$ symbol for the direct sum of subspaces for the sake
of consistency with the linear algebra literature.  We use $\boxplus$ symbol for the vectorial addition
operation of pmfs which is defined in Chapter \ref{hilbertspacechapter}. We have to use the remaining $+$
symbol for all the rest of addition operations such as real number addition, finite field addition,
and vectorial addition in $\mathbb{R}^{N}$. Fortunately, the type of the addition employed can be 
determined from the types of the operands. 

A possible confusion might arise while using the summation symbol $\sum$. For instance, 
$\sum_{i=1}^{N}p_{i}(x)$ might refer to both $p_{1}(x) +p_{2}(x)+\ldots +p_{N}(x)$ 
and  $p_{1}(x) \boxplus p_{2}(x) \boxplus \ldots \boxplus p_{N}(x)$ which are really two different
summations. In order to avoid this confusion we  denote the latter summation with $\ssum_{i=1}^{N}p_i(x)$,
although summations like the former is never encountered in the thesis.

\chapter{THE HILBERT SPACE OF PROBABILITY MASS FUNCTIONS\label{hilbertspacechapter}}

\section{Introduction}

The Hilbert space of probability mass functions (pmf), which is the main tool to be employed in the thesis, 
is introduced in this chapter. Throughout the thesis we are only interested in the 
pmfs of the finite-field-valued random variables. Therefore, we  define what a finite-field-valued
random variable is first in Section \ref{ffvrv}. We  introduce the set of pmfs on which we construct the 
Hilbert space in Section \ref{theset}. Then we construct the algebraic and geometric structures 
over this set in Section \ref{algebraicstructure} and Section \ref{geometricstructure} respectively. 
Section \ref{relhsrv} emphasizes the differences between the Hilbert space of random variables
and the Hilbert space of pmfs in order to avoid possible confusion. Finally, in Section \ref{mvpmf}
the idea of the construction of the Hilbert space is repeated on the set of multivariate pmfs. 
 
\section{ Finite-Field-Valued Random Variables \label{ffvrv}}

Traditionally a random variable is a mapping from the event space to the real or complex fields. However, 
in some experiments, e.g., the experiments with discrete event spaces, 
it might be useful to map the outcomes of the experiment to a finite (Galois) field.
Such a mapping would allow to carry out \emph{meaningful} algebraic operations between the outcomes
of different experiments, for instance as in \cite{massey}.  
A finite-field-valued random variable is defined below.  

\begin{definition}\emph{Finite-field-valued random variable:} Let $\Omega$ be the event space of an
experiment and $\fieldq=\GF(q)$ be the finite field of $q$ elements. Moreover, let 
a function $X:\Omega \rightarrow \fieldq$ be defined as
\begin{equation}
   X( \omega\in \set{E}_{i})\triangleq i\quad \forall i \in \fieldq \textrm{,} \nonumber
\end{equation}
where   $\{\set{E}_{i}:i\in \fieldq\}$ are  events (subsets of $\Omega$) of  this experiment. 
The function $X$ is called an $\fieldq$-valued random variable if the events 
$\{\set{E}_{i}:i\in \fieldq\}$ are mutually exclusive and collectively exhaustive, i.e.,
\begin{eqnarray}
  \set{E}_{i}\neq \set{E}_{j}&\implies& \set{E}_{i} \cap \set{E}_{j} = \emptyset  
  \quad \forall i,j \in \fieldq \nonumber  \textrm{,}\\
  \bigcup_{i \in \fieldq}\set{E}_i&=& \Omega \nonumber \textrm{.}
\end{eqnarray} 
\end{definition}

Actually, we do not need to restrict ourselves to the finite-field-valued random variables
in this chapter since the ideas presented in this chapter can be applied to any discrete random variable. 
We need the concept of finite-field-valued random variables starting from the next chapter. 
However, we introduce the finite-field-valued random variables starting from this 
chapter in order to make the representation simpler.  

\section{ The Set of Strictly Positive Probability Mass Functions \label{theset}}

Many different experiments can be represented with an $\fieldq$-valued random variable. All these
experiments may lead to different pmfs. Furthermore, we may have different pmfs even for the
same experiment if the outcome is conditioned on some other event. Let $\setfieldq$
be the set of all \emph{strictly positive} pmfs that an $\fieldq$-valued random variable
might possess, i.e.,  
\begin{equation}
\setfieldq \triangleq \left\{p(x):\fieldq \rightarrow (0,1)\subset \mathbb{R} \textrm{ s.t. }
\sum_{x\in \fieldq}p(x)=1 \right\}  \label{setdefinition} \textrm{.}
\end{equation}
The Hilbert space of pmfs is going to be constructed on $\setfieldq$. This set excludes the pmfs
which take value zero for some values. The reason under this restriction will be clear
after  scalar multiplication is defined on this set. 

We are going to represent the pmfs with lowercase letters such as $p(x)$, $r(x)$, or $s(x)$. These pmfs
may represent the pmfs of random variables representing different experiments as well as they may represent
the pmfs of the same random variable conditioned on different events. 

\subsection{The normalization operator}

We employ a normalization operator to obtain pmfs from strictly positive real-valued functions by scaling
them. We denote this normalization operator with $\n{\fieldq}{.}$ and define it as
\begin{equation}
  \n{\fieldq}{\alpha(x)}:\set{F}_{\fieldq} \rightarrow \setfieldq \triangleq \frac{\alpha(x)}{\sum_{i\in\fieldq} \alpha(i)} \textrm{,}
\end{equation}
where the set $\set{F}_{\fieldq}$ denotes the set of all functions from $\fieldq$ to $\mathbb{R}^{+}$ and 
$\alpha(x)$ is a function in $\set{F}_{\fieldq}$. An obvious property of the operator 
   $\n{\fieldq}{.}$ that we exploit frequently is given below 
\begin{equation}
  \n{\fieldq}{\beta \alpha(x)} =\n{\fieldq}{ \alpha(x)}\textrm{,} 
\end{equation}
where $\beta$ is  any positive number. 

\section{The Algebraic Structure over $\setfieldq$ \label{algebraicstructure}}
The foundation of the Hilbert space of PMFs is the 
addition operation. Hence, the definition of the 
addition should be meaningful in the  sense of 
probabilistic inference in order to take advantage of the Hilbert space structure
for inference problems. 

The addition operation is inspired by the following 
scenario. Assume that we receive information about a uniformly distributed 
source $X$ via two \emph{independent} channels 
with outputs $y_1$ and $y_2$ as depicted in Figure \ref{additionmeaning}.
Let $p(x)=\Pr\{X=x|y_1\}$, $q(x)=\Pr\{X=x|y_2\}$, and $r(x)=\Pr\{X=x|y_1,y_2\}$.
Since $X$ is uniformly distributed,  $r(x)$ can be derived as
\begin{eqnarray}
  r(x)&=&\frac{p(x)q(x)}{\sum_{x \in \fieldq}p(x)q(x)}   \label{additioneqmean}\\
  &=&\n{\fieldq}{p(x)q(x)}
\end{eqnarray}  
by employing the Bayes' theorem. 
 
The PMFs $p(x)$ and $q(x)$ represent the evidence
 about the source $X$ when only $y_1$ or $y_2$ is known 
respectively. On the other hand, $r(x)$ represents the total
evidence when both outputs are known. In a way, $r(x)$
is obtained by summing $p(x)$ and $q(x)$. Hence, (\ref{additioneqmean})
can be adopted as the definition of addition. For any $p(x)$ and $q(x)$
in $\setfieldq$ their addition is denoted by $\boxplus$ and defined
as
\begin{equation}
  p(x)\boxplus q(x) \triangleq \n{\fieldq}{ p(x)q(x)} \textrm{.}
 \end{equation}

The definition of the addition operation is such a critical point of this
thesis that the rest of the thesis will be built upon this definition. 
 
This definition of addition operation is the same as parallel information combining operation as 
defined in \cite{informationcomb} and message computation at variable nodes 
in the sum-product algorithm \cite{fgsp}.

\begin{figure}
\begin{center}
\includegraphics[scale=.7]{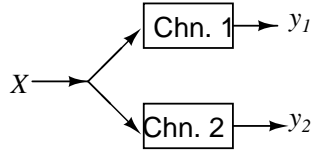}
\caption{The scenario for explaining the meaning of addition operation. \label{additionmeaning}}
\end{center}
\end{figure}

Defining the addition operation also enforces the scalar multiplication 
to have such a form that scalar multiplication is consistent with the addition.
The scalar multiplication, which is denoted
by $\boxtimes$, should satisfy the relation below for positive integers $n$
\begin{eqnarray}
  n\boxtimes p(x)&=&\underbrace{p(x)\boxplus p(x)\boxplus \ldots \boxplus p(x) }_{\textrm{n times}} \nonumber \\
  &=&\n{\fieldq}{(p(x))^n} \label{consistency} \textrm{.}
\end{eqnarray}
Generalizing (\ref{consistency}) to any $\alpha$ in $\mathbb{R}$ leads to the 
definition of scalar multiplication below

\begin{equation}
  \alpha \boxtimes p(x)\triangleq \n{\fieldq}{(p(x))^{\alpha}} \textrm{.} \label{multiplicationdef}
\end{equation}

In order to be able to scale $p(x)$ with 
negative coefficients it is necessary that $p(x) \neq 0$ for any $x$ in $\fieldq$. 
Hence, in the definition of $\setfieldq$ an open interval is used
rather than a closed interval in (\ref{setdefinition}).

\begin{theorem}
  The set $\setfieldq$ together with operations $\boxplus$ and $\boxtimes$
forms a linear vector space over $\mathbb{R}$. 
\end{theorem}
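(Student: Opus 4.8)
The plan is to exhibit $\setfieldq$ as (isomorphic to) an ordinary Euclidean vector space, so that the vector-space axioms can be inherited rather than checked one at a time. The coordinatewise logarithm is the natural candidate for this isomorphism: since every $p\in\setfieldq$ is strictly positive, $\ln p(x)$ is well defined for each $x\in\fieldq$, and the normalization operator $\n{\fieldq}{\cdot}$ merely subtracts a constant in the log domain. I would first dispose of closure, which is immediate because $\n{\fieldq}{\cdot}$ returns a strictly positive pmf whenever its argument is strictly positive, so both $p(x)\boxplus q(x)$ and $\alpha\boxtimes p(x)$ again lie in $\setfieldq$. I would then guess the additive identity to be the uniform pmf $u(x)=1/q$, verifying $p(x)\boxplus u(x)=\n{\fieldq}{p(x)}=p(x)$, and guess the additive inverse of $p(x)$ to be $(-1)\boxtimes p(x)=\n{\fieldq}{1/p(x)}$, verifying that $p(x)\boxplus\big((-1)\boxtimes p(x)\big)$ reduces to a constant function and hence normalizes to $u(x)$.

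For the remaining axioms I would set up the map $\phi:\setfieldq\to\mathbb{R}^{q}/\langle\vect{1}\rangle$ sending $p$ to the coset $(\ln p(x))_{x\in\fieldq}+\langle\vect{1}\rangle$, where $\vect{1}$ is the all-ones vector. Using $\ln\n{\fieldq}{\alpha(x)}=\ln\alpha(x)-\ln\sum_{i}\alpha(i)$, one obtains $\phi(p\boxplus q)=\phi(p)+\phi(q)$ and $\phi(\alpha\boxtimes p)=\alpha\,\phi(p)$ directly from the definitions of $\boxplus$ and $\boxtimes$; the additive normalization constants are exactly what the quotient by $\langle\vect{1}\rangle$ absorbs, and the identity $\n{\fieldq}{\beta\alpha(x)}=\n{\fieldq}{\alpha(x)}$ is the pmf-side statement of this absorption. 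Since $\mathbb{R}^{q}/\langle\vect{1}\rangle$ is a genuine real vector space (of dimension $q-1$), commutativity, associativity, both distributive laws, the scalar-compatibility law $(\alpha\beta)\boxtimes p=\alpha\boxtimes(\beta\boxtimes p)$, and $1\boxtimes p=p$ all transfer back through $\phi$ once it is shown to be a bijection.

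The main obstacle, and the only step requiring genuine care, is establishing that $\phi$ is well defined and bijective. Well-definedness and the homomorphism identities hinge on treating the normalization consistently as passage to the quotient; the risk is a book-keeping slip in the additive constants, which I would avoid by carrying the $\ln\sum_i(\cdot)$ terms explicitly and discarding them only at the coset level. Injectivity is where \emph{strict positivity} is essential: if $\phi(p)=\phi(r)$ then $\ln p(x)-\ln r(x)$ is constant in $x$, so $p=\beta r$ for some $\beta>0$, and since both are normalized pmfs $\beta=1$ and $p=r$. Surjectivity follows by sending a representative $\vect{v}\in\mathbb{R}^{q}$ to $\n{\fieldq}{\exp(v_x)}$. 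With $\phi$ an isomorphism of the algebraic structures, the theorem follows; a reader preferring a self-contained argument could instead verify each axiom directly, the log-domain computation above being precisely the mechanism that makes every such check routine.
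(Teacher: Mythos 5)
Your proof is correct, but it takes a genuinely different route from the paper's. The paper proves the theorem by direct, axiom-by-axiom verification: closure from the normalization operator, commutativity, associativity, distributivity and scalar compatibility read off from the definitions of $\boxplus$ and $\boxtimes$, and the explicit identity $\theta(x)=1/q$ and inverse $\boxminus p(x)=\n{\fieldq}{1/p(x)}$ --- essentially the first third of your argument carried through to the end. You instead stop checking axioms after closure, identity and inverse, and transport the structure: the coset map $\phi(p)=\left(\ln p(x)\right)_{x\in\fieldq}+\langle\vect{1}\rangle$ into $\mathbb{R}^{q}/\langle\vect{1}\rangle$ intertwines $(\boxplus,\boxtimes)$ with the quotient-space operations, and once $\phi$ is shown to be a bijection every remaining axiom is inherited. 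This is sound; in particular you correctly phrase compatibility as intertwining of operations rather than ``linearity'' of $\phi$ (which would presuppose the very vector-space structure being proved), and strict positivity enters exactly where it must, namely to define $\ln p$ and to get injectivity. What your route buys: $\phi$ is precisely the paper's operator $\operator{L}{.}$ in quotient form (the paper's $\operator{L}{.}$ simply picks the coset representative orthogonal to $\vect{1}$), so the injectivity statement of Lemma \ref{operatorlemma}, the identification of the image with $\vect{1}^{\perp}$, and the dimension count of Theorem \ref{dimensionalitytheorem} all fall out of your construction essentially for free, whereas the paper has to develop them separately afterwards. What the paper's route buys: it is elementary and self-contained, requires no quotient-space machinery and no bijectivity argument, and keeps this foundational theorem independent of any embedding of $\setfieldq$ into $\mathbb{R}^{q}$; its cost is terseness, since the ``obvious'' steps conceal exactly the log-domain bookkeeping that your coset formalism makes explicit.
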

\begin{proof}
  The closure of $\setfieldq$ under  both operations 
is ensured by the normalization operators in their definitions. The commutativity 
and associativity are obvious from the definition of $\boxplus$ operation. 
The neutral element with respect to (w.r.t.) the addition operation is the
uniform distribution given by 
\begin{equation}
\theta(x)=\frac{1}{q}  \nonumber \textrm{.}
\end{equation}
 Consequently,  the additive inverse of
$p(x)$, which is denoted by $\boxminus p(x)$, is 
\begin{equation}
  \boxminus p(x) =\n{\fieldq}{\frac{1}{p(x)}} =   -1 \boxtimes p(x) \nonumber \textrm{.}
\end{equation}

The compatibility of scalar multiplication with the multiplication in $\mathbb{R}$ 
is obvious from (\ref{multiplicationdef}). The distributivity of multiplication 
over scalar and vector additions are direct consequences of the definitions of scalar multiplication
and addition. Clearly, $1$ is the identity 
element of scalar multiplication. Hence, $\setfieldq$ becomes a \emph{linear vector space} over $\mathbb{R}$.
\end{proof}

\begin{example} \label{basicalgebraicexample}
The algebraic relations between 
some conditional pmfs is examined in this example in which a combined  experiment 
is taking place in a two dimensional universe. 

First a fair die with three faces\footnote
{We can have a die with three faces in a two dimensional universe. This is the reason why the 
experiment takes place in a two dimensional universe. } is rolled. Then one of the 
three urns is selected corresponding to the outcome of the die rolling 
experiment. These three urns contain balls of six different colors.
 The number of balls of different colors in 
each urn is given in the table below. 
A ball is drawn from the selected urn and replaced back a few times. 

\begin{table}[h]
  \caption{Number of balls in different colors in each urn mentioned in Example \ref{basicalgebraicexample}.}
\begin{center}
\begin{tabular}{|c|c|c|c|c|c|c|}
 \hline
 & Red (R) & Yellow (Y) & Orange (O) & Blue (B) & Green (G) & Purple (P)  \\
\hline
 Urn 1 & 1   &  9   &   9   &  3   &   1    &  1   \\
 Urn 2 & 9  &   1   &   9   &  1    &  3    &  1    \\
 Urn 3 & 9  &  9   &    1   &  1    &   1    &  3    \\
\hline
\end{tabular}
 \end{center}
\end{table}

Let the event space of the die rolling experiment  be mapped to a $\field{3}$-valued random variable $X$
such that the faces $1,2$, and $3$ are mapped to $0,1$, and $2$ in $\field{3}$. Let six 
pmfs of X conditioned on the color of the ball drawn  be defined as follows when a single ball 
is drawn. 
\begin{equation}
  \begin{array}{ccc}
    r(x)\triangleq \Pr\{X=x|\textrm{ R }\} &  y(x)\triangleq \Pr\{X=x|\textrm{ Y }\} 
 &o(x)\triangleq \Pr\{X=x|\textrm{ O }\}  \\
    b(x)\triangleq \Pr\{X=x|\textrm{ B }\} &  g(x)\triangleq \Pr\{X=x|\textrm{ G }\} &
 p(x)\triangleq \Pr\{X=x|\textrm{ P }\}  
  \end{array} \label{condpmfsex}
\end{equation} 

For instance, assume that a ball is drawn from the selected urn and replaced back six times
and the colors of the balls drawn are B, B, G, G, G, and Y. Then the a posteriori pmf
of $X$ can be expressed by using  the definitions of 
addition and scalar multiplication in $\setfield{3}$ as
\begin{equation}
  \Pr\{X=x|B,B,G,G,G,Y\}= 2\boxtimes b(x)\boxplus 3\boxtimes g(x) \boxplus y(x) \nonumber \textrm{.}
\end{equation}
 
Now assume that the process of drawing a ball and replacing is repeated three times and
the colors of the drawn balls are R, Y, and O. Then due to the symmetry in the 
problem the a posteriori pmf of $X$ is
\begin{equation}
  \Pr\{X=x|R,Y,O\}=\frac{1}{3} \textrm{.} \nonumber
\end{equation}
Vectorial representation of this equation in $\setfield{3}$
is 
\begin{equation}
  r(x)\boxplus y(x)\boxplus o(x)= \theta(x) \label{rel1}\textrm{.}
\end{equation}
Similarly, $b(x)$, $g(x)$, and $p(x)$ are also related as 
\begin{equation}
  b(x)\boxplus g(x) \boxplus p(x) = \theta(x)\label{rel2} \textrm{.}
\end{equation}

Now assume that the process of drawing a ball and replacing is repeated twice. The a posteriori 
pmf of $X$ given the colors of the balls  are R and Y is 
\begin{equation}
  \Pr\{X=x|R,Y\}=r(x)\boxplus y(x)=\left\{\begin{array}{ccc}1/11 &,&x=0 \\ 1/11 &,&x=1 \\ 9/11 &,&x=2 \end{array}\right.
\end{equation} 
and the a posteriori pmf of $X$ given both balls are P is
\begin{equation}
  \Pr\{X=x|P,P\}=2\boxtimes p(x)=\left\{\begin{array}{ccc}1/11 &,&x=0 \\ 1/11 &,&x=1 \\ 9/11 &,&x=2 \end{array}\right. 
\textrm{.}
\end{equation} 
Combining these last two results yields
\begin{equation}
  r(x)\boxplus y(x)=2 \boxtimes p(x) \label{rel3}\textrm{.}
\end{equation}
The following two relations can be obtained similarly. 
\begin{eqnarray}
  r(x) \boxplus o(x)&=&2 \boxtimes g(x) \label{rel4} \\
  o(x) \boxplus y(x)&=&2 \boxtimes b(x) \label{rel5}
\end{eqnarray}

Actually, the algebraic relations (\ref{rel1}), (\ref{rel2}), (\ref{rel3}), (\ref{rel4}), and (\ref{rel5}) are all obtained
 by using only the basic tools of probability and the definitions of addition and scalar multiplication
in $\setfield{3}$. We did not make use of the algebraic structure defined on $\setfield{3}$ to derive these 
relations. Further algebraic
relations between the conditional pmfs defined in (\ref{condpmfsex}) can be obtained by using (\ref{rel1}), (\ref{rel2}), (\ref{rel3}), 
(\ref{rel4}), and (\ref{rel5}) and 
exploiting the algebraic structure of $\setfield{3}$. Some of these relations are given below.  
\begin{equation}
  \begin{array}{ccc}
   o(x)=-2\boxtimes p(x) & y(x)=-2\boxtimes g(x)  & r(x)=-2\boxtimes b(x) \\
   p(x)=-\frac{1}{2}\boxtimes o(x) & g(x)=-\frac{1}{2}\boxtimes y(x) & b(x)=-\frac{1}{2}\boxtimes r(x)  
     \end{array}
\end{equation}

\end{example}

\begin{example} Since it is proven that $\setfieldq$ is  a linear vector space
we can talk about linear mappings (transformations) from $\setfieldq$
 to other linear vector spaces. In this example 
we are going to provide a familiar example for such a mapping.

The log-likelihood ratio (LLR), which is defined for binary valued pmfs 
as 
\begin{equation}
  \Lambda\{p(x)\} \triangleq \log \frac{p(0)}{p(1)} \textrm{,}
\end{equation}
is a frequently employed tool in detection theory and channel decoding. 
 For any $\alpha,\beta \in \mathbb{R}$ and $p(x),r(x)\in \setfield{2}$, 
\begin{eqnarray}
  \Lambda\left\{\alpha\boxtimes p(x) \boxplus \beta \boxtimes r(x)\right\}&=& \log \frac{\n{\field{2}}{(p(x))^{\alpha}(r(x))^{\beta}}\evalat{x=0}}
  {\n{\field{2}}{(p(x))^{\alpha}(r(x))^{\beta}}\evalat{x=1}} \nonumber \\
  &=& \log \frac{ (p(0))^{\alpha}(r(0)^{\beta})}{(p(1))^{\alpha}(r(1))^{\beta}} \nonumber\\
  &=& \alpha \Lambda\{p(x)\}+\beta \Lambda\{r(x)\} \nonumber \textrm{.}
\end{eqnarray} 
Hence, the LLR  is a linear mapping from $\setfield{2}$ to $\mathbb{R}$.
\end{example}

\section{The Geometric Structure over $\setfieldq$ \label{geometricstructure}}

The geometric structure over a vector space is defined by means of an inner product. We are going
to define an inner product on $\setfieldq$ by first mapping the vectors of $\setfieldq$ to 
$\mathbb{R}^{q}$ and then borrowing the usual inner product (dot product) on $\mathbb{R}^{q}$. Such 
a mapping should posses the properties stated in the following lemma. 
 
\begin{lemma}\label{mappinglemma} Let $\operator{M}{.}$ be a mapping from $\setfieldq$ to $\mathbb{R}^{q}$
and a function $\sigma(.,.):\setfieldq\times\setfieldq\rightarrow \mathbb{R}$ be defined as
\begin{equation}
  \sigma(p(x),r(x)) \triangleq \innerproduct{\operator{M}{p(x)}}{\operator{M}{r(x)}}_{\mathbb{R}^{q}} \label{innerproductdef}\textrm{,}
\end{equation}
where $\innerproduct{.}{.}_{\mathbb{R}^{q}}$ denotes the usual inner product on $\mathbb{R}^{q}$.
$\sigma(p(x),r(x))$ is an inner product on $\setfieldq$ if $\operator{M}{.}$ is linear and injective (one-to-one).
\end{lemma}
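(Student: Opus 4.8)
The plan is to verify the three defining properties of a real inner product on $\setfieldq$ — symmetry, bilinearity, and positive-definiteness — showing in each case that the property descends from the matching property of the dot product $\innerproduct{.}{.}_{\mathbb{R}^{q}}$ once it is transported through $\operator{M}{.}$. Symmetry is immediate and needs no hypothesis beyond the definition: since the dot product on $\mathbb{R}^{q}$ is symmetric, $\sigma(p(x),r(x)) = \innerproduct{\operator{M}{p(x)}}{\operator{M}{r(x)}}_{\mathbb{R}^{q}} = \innerproduct{\operator{M}{r(x)}}{\operator{M}{p(x)}}_{\mathbb{R}^{q}} = \sigma(r(x),p(x))$. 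For linearity in the first argument I would fix $\alpha,\beta \in \mathbb{R}$ and $p(x),q(x),r(x)\in\setfieldq$ and expand $\sigma(\alpha\boxtimes p(x)\boxplus\beta\boxtimes q(x),\, r(x))$; because $\operator{M}{.}$ is linear, its value at $\alpha\boxtimes p(x)\boxplus\beta\boxtimes q(x)$ is $\alpha\operator{M}{p(x)}+\beta\operator{M}{q(x)}$ in $\mathbb{R}^{q}$, and bilinearity of the dot product then delivers $\alpha\,\sigma(p(x),r(x))+\beta\,\sigma(q(x),r(x))$. Linearity in the second argument follows from this together with symmetry, so $\sigma$ is a symmetric bilinear form.

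The main obstacle, and the only place where injectivity is genuinely used, is positive-definiteness. Writing $\sigma(p(x),p(x)) = \innerproduct{\operator{M}{p(x)}}{\operator{M}{p(x)}}_{\mathbb{R}^{q}} = \norm{\operator{M}{p(x)}}^{2}\ge 0$, nonnegativity is inherited directly from the dot product, and the whole difficulty is concentrated in the equality case. I must show that $\sigma(p(x),p(x))=0$ holds if and only if $p(x)$ is the zero vector of $\setfieldq$, which by the previous theorem is the uniform pmf $\theta(x)=1/q$. Since $\operator{M}{.}$ is linear it maps the additive identity to the origin, so $\operator{M}{\theta(x)}=\vect{0}$ and therefore $\sigma(\theta(x),\theta(x))=0$. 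Conversely, $\sigma(p(x),p(x))=0$ forces $\operator{M}{p(x)}=\vect{0}=\operator{M}{\theta(x)}$, and injectivity of $\operator{M}{.}$ then yields $p(x)=\theta(x)$. Hence $\sigma(p(x),p(x))=0$ exactly for the zero vector and is strictly positive otherwise, which completes positive-definiteness.

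I expect the symmetry and bilinearity computations to be entirely routine, so the real content of the lemma sits in the positive-definiteness argument, where the two hypotheses on $\operator{M}{.}$ do complementary work: linearity guarantees that the origin of $\mathbb{R}^{q}$ is attained by the additive identity $\theta(x)$, so that the candidate inner product of the zero vector with itself vanishes as an inner product must; injectivity guarantees that \emph{no other} pmf is sent to the origin, so that $\sigma(p(x),p(x))>0$ for every $p(x)\neq\theta(x)$. Dropping either hypothesis would break precisely this last axiom while leaving symmetry and bilinearity intact, which is why both conditions appear in the statement.
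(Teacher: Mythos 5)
Your proof is correct and follows essentially the same route as the paper's own argument: symmetry and linearity are inherited from the dot product on $\mathbb{R}^{q}$ via the linearity of $\operator{M}{.}$, and positive-definiteness uses linearity to get $\operator{M}{\theta(x)}=\vect{0}$ together with injectivity to rule out any other preimage of the origin. Your write-up is merely more explicit than the paper's, which states the same three steps in condensed form.
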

The proof of this lemma is given in Appendix \ref{proofmappinglemma}.

We propose the following mapping from $\setfieldq$ to $\mathbb{R}^{q}$ and show later that it is linear and injective 
\begin{eqnarray}
  \operator{L}{p(x)}\triangleq \sum_{i\in\fieldq}\left( \log p(i) -\frac{1}{q} \sum_{j\in \fieldq} \log p(j) \right)\vect{e}_{i} \textrm{,}
\label{operatordef}
\end{eqnarray}
where $\vect{e}_{i}$ is the $i^{th}$ canonical basis vector of $\mathbb{R}^{q}$\footnote{The canonical basis vectors of $\mathbb{R}^{q}$ are 
usually enumerated with integers from $1$ up to $q$. In this thesis we enumerate the canonical basis vectors of $\mathbb{R}^{q}$
with the elements of $\fieldq$. Since there are $q$ canonical basis vectors of $\mathbb{R}^q$ and $q$ elements in $\fieldq$ there is not 
any problem in this enumeration.}. The proposal for $\operator{L}{.}$ is inspired by the meaning of angle between two pmfs. The details of 
 arriving at  the definition of $\operator{L}{.}$ is given in Appendix \ref{anglerationale}.  

\begin{lemma}\label{operatorlemma}
 The mapping $\operator{L}{.}:\setfieldq \rightarrow \mathbb{R}^q$ as defined in (\ref{operatordef})
 is linear and injective. 
\end{lemma}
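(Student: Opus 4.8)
The plan is to prove the two properties separately, beginning with linearity, which I would establish by direct computation. Because both operations are defined through the normalization operator, the combination $\alpha\boxtimes p(x)\boxplus\beta\boxtimes r(x)$ equals $\n{\fieldq}{(p(x))^{\alpha}(r(x))^{\beta}}$, i.e. it is $(p(x))^{\alpha}(r(x))^{\beta}/Z$ with $Z=\sum_{j\in\fieldq}(p(j))^{\alpha}(r(j))^{\beta}$. Taking logarithms, each log-value acquires an additive offset $-\log Z$. The crucial observation is that the centering built into (\ref{operatordef}) --- subtracting the average $\tfrac{1}{q}\sum_{j\in\fieldq}\log(\cdot)$ --- annihilates this constant, since the average of a constant equals that constant. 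After the cancellation, the $i^{\textrm{th}}$ component collapses to $\alpha\bigl(\log p(i)-\tfrac{1}{q}\sum_{j}\log p(j)\bigr)+\beta\bigl(\log r(i)-\tfrac{1}{q}\sum_{j}\log r(j)\bigr)$, which is exactly the $i^{\textrm{th}}$ component of $\alpha\operator{L}{p(x)}+\beta\operator{L}{r(x)}$, proving linearity.

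For injectivity, since $\operator{L}{.}$ will already have been shown to be linear, it suffices to verify that its kernel is trivial, that is, that $\operator{L}{p(x)}=\vect{0}$ forces $p(x)$ to coincide with the additive identity $\theta(x)=1/q$. Setting each component to zero yields $\log p(i)=\tfrac{1}{q}\sum_{j\in\fieldq}\log p(j)$ for every $i\in\fieldq$, so $\log p(i)$ does not depend on $i$; hence $p(i)$ is constant in $i$, and the normalization constraint $\sum_{i\in\fieldq}p(i)=1$ then forces $p(i)=1/q$. Thus the kernel contains only the neutral element of $\setfieldq$, and injectivity follows.

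The step demanding the most care is the cancellation of the normalization constant $Z$ in the linearity argument; everything else is routine bookkeeping. It is precisely the centering term $-\tfrac{1}{q}\sum_{j}\log p(j)$ in (\ref{operatordef}) that renders $\operator{L}{.}$ insensitive to the scaling introduced by $\n{\fieldq}{\cdot}$, so that the map respects the vector-space operations. This also clarifies, at a structural level, why a \emph{centered} log-transform is the natural candidate: because $\boxplus$ and $\boxtimes$ are defined only up to normalization, any well-defined linear map from $\setfieldq$ into $\mathbb{R}^{q}$ must factor out the normalization constant, and centering the logarithm is the simplest way to accomplish exactly that.
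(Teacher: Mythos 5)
Your proof is correct and follows essentially the same route as the paper's: linearity by direct computation, exploiting the fact that the normalization constant introduced by $\n{\fieldq}{\cdot}$ enters the logarithm as an additive constant that the centering term annihilates, and injectivity by showing the kernel reduces to $\theta(x)$. The only cosmetic difference is that you handle $\alpha\boxtimes p(x)\boxplus\beta\boxtimes r(x)$ in a single computation, whereas the paper verifies additivity and homogeneity separately.
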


The proof is given Appendix \ref{operatorlemmaproof}. 

It is a common practice to map pmfs to log-probability vectors 
in the turbo decoding and sum-product algorithm literature.
The main difference between those mappings and the mapping 
$\operator{L}{.}$ that we propose is the normalization
($-\frac{1}{q}\sum_{j\in \fieldq}\log p(j)$ ) 
in the definition of  $\operator{L}{.}$. This normalization 
is necessary to make the operator $\operator{L}{.}$ linear
and consequently allows us to borrow the inner product
on $\mathbb{R}^q$. In other words, it is this normalization 
which allows us to construct a geometric structure on 
$\setfieldq$. We believe that omitting this normalization 
in the literature hindered discovering the geometric
relations between pmfs.  

Obviously, the mapping $\operator{L}{.}$ is not the only mapping
which satisfies the conditions imposed by Lemma \ref{mappinglemma}.
However, $\operator{L}{.}$ exhibits a symmetric form.
This symmetry leads us to a useful geometric structure on $\setfieldq$.  

\begin{theorem} The function  $\innerproduct{.}{.}:\setfieldq\times\setfieldq\rightarrow \mathbb{R}$
defined for any $p(x),r(x)\in \setfieldq$ as
\begin{equation}
  \innerproduct{p(x)}{r(x)} \triangleq \innerproduct{\operator{L}{p(x)}}{\operator{L}{r(x)}}_{\mathbb{R}^{q}} \textrm{,}  
\label{ipdef}
\end{equation}
where $\operator{L}{.}$ is defined in (\ref{operatordef}), is an inner product on 
$\setfieldq$. 
\end{theorem}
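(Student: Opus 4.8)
The plan is to observe that the bilinear form defined in (\ref{ipdef}) is nothing but the form $\sigma(\cdot,\cdot)$ of Lemma \ref{mappinglemma} with the generic mapping $\operator{M}{.}$ taken to be the specific mapping $\operator{L}{.}$ of (\ref{operatordef}). Lemma \ref{mappinglemma} guarantees that such a form is an inner product on $\setfieldq$ whenever the underlying mapping into $\mathbb{R}^q$ is both linear and injective. Hence the entire theorem collapses to checking that $\operator{L}{.}$ meets these two hypotheses, and that is exactly the content of Lemma \ref{operatorlemma}. I would therefore present the proof as a short corollary: since $\operator{L}{.}$ is linear and injective by Lemma \ref{operatorlemma}, the form (\ref{ipdef}) is an inner product by Lemma \ref{mappinglemma}.

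If instead I wanted to argue directly, I would verify the three inner-product axioms by transporting them through $\operator{L}{.}$ to the usual dot product on $\mathbb{R}^q$. Symmetry of $\innerproduct{p(x)}{r(x)}$ is inherited immediately from the symmetry of $\innerproduct{.}{.}_{\mathbb{R}^q}$. Linearity in the first argument, i.e.\ $\innerproduct{\alpha\boxtimes p(x) \boxplus \beta \boxtimes s(x)}{r(x)} = \alpha\innerproduct{p(x)}{r(x)} + \beta\innerproduct{s(x)}{r(x)}$, follows by first pushing the vector-space operations $\boxplus$ and $\boxtimes$ through $\operator{L}{.}$ using its linearity and then invoking bilinearity of the dot product. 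Positive definiteness is the only axiom with any bite: $\innerproduct{p(x)}{p(x)} = \norm{\operator{L}{p(x)}}^2 \ge 0$ holds trivially, and the crucial nondegeneracy claim (that $\innerproduct{p(x)}{p(x)} = 0$ only for the additive neutral element $\theta(x)$) is precisely where injectivity of $\operator{L}{.}$ enters, since it forces $\operator{L}{p(x)} = \mathbf{0}$ to imply $p(x) = \theta(x)$.

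The main obstacle is thus not located in this theorem at all but in the two lemmas it rests upon; granting them, there is essentially no work left. Were those lemmas unavailable, the genuinely nontrivial step would be establishing injectivity of $\operator{L}{.}$: one would have to show that the centered log-probability vector in (\ref{operatordef}) determines $p(x)$ uniquely, exploiting that subtracting the mean $\frac{1}{q}\sum_{j\in\fieldq}\log p(j)$ is exactly compensated by the normalization operator $\n{\fieldq}{.}$ when inverting the map. All the remaining verifications are routine consequences of the linearity of $\operator{L}{.}$ together with the standard properties of the Euclidean inner product on $\mathbb{R}^q$.
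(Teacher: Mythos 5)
Your proposal is correct and takes precisely the paper's approach: the paper likewise dispatches the theorem in one line, noting that it follows directly from Lemma \ref{mappinglemma} (the form is an inner product whenever the underlying map is linear and injective) combined with Lemma \ref{operatorlemma} (the map $\operator{L}{.}$ is linear and injective). Your supplementary ``direct'' verification merely unfolds the proof of Lemma \ref{mappinglemma} given in the appendix, so it adds no genuinely different content.
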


The proof directly follows from  Lemma \ref{mappinglemma} and Lemma \ref{operatorlemma}. 

The definition of the inner product on $\setfieldq$ can be simplified as follows. 
\begin{eqnarray}
  \innerproduct{p(x)}{r(x)}&=& \innerproduct{\operator{L}{p(x)}}{\operator{L}{r(x)}}_{\mathbb{R}^{q}}\nonumber\\
  &=&\sum_{i \in \fieldq} \left(\log p(i)-\frac{1}{q}\sum_{j \in \fieldq} \log p(j)\right)
\left(\log r(i)-\frac{1}{q}\sum_{j \in \fieldq} \log r(j)\right) \label{innerproduct1} \\
&=& \sum_{i \in \fieldq} \log p(i)\log r(i)-\frac{1}{q}\left(\sum_{i \in \fieldq} \log p(i)\right)
\left(\sum_{i \in \fieldq} \log r(i)\right)
\end{eqnarray}
The equation above resembles the covariance of two random variables. Indeed, it is 
 possible to express the definition of inner product in the form of 
a covariance of two real-valued random variables, which is shown in Appendix \ref{covariancederivation}.

The vector space $\setfieldq$ evolves into an inner product space by the definition 
of the inner product in (\ref{ipdef}).  Although we haven't shown what $\dim \setfieldq$ 
is yet, we can conclude that $\setfieldq$ is finite dimensional 
since there exist an injective mapping from $\setfieldq$ to $\mathbb{R}^{q}$
\footnote{We are going to show that $\dim \setfieldq = q-1$ in Theorem \ref{dimensionalitytheorem}}.  
It is well known from functional analysis theory
that any finite dimensional inner product space is complete.
Therefore, $\setfieldq$ is a \emph{Hilbert space}.

\subsection{The norm, distance, and angle on $\setfieldq$}

The inner product on $\setfieldq$ induces the following norm on $\setfieldq$
\begin{eqnarray}
  \norm{p(x)} &\triangleq & \sqrt{\innerproduct{p(x)}{p(x)}} \\
  &=& \sqrt{ \sum_{i \in \fieldq} \left(\log p(i)\right)^2 -\frac{1}{q} \left(\sum_{i\in\fieldq }\log p(i) \right)^2} \textrm{.}
\end{eqnarray}
A distance function between two pmfs can be obtained by combining this norm with the definition 
of subtraction in $\setfieldq$ as in 
\begin{eqnarray}
  D(p(x),r(x)) &\triangleq& \norm{p(x)\boxminus r(x)} \\
  &=& \sqrt{ \sum_{i \in \fieldq} \left(\log \frac{p(i)}{r(i)}\right)^2 -\frac{1}{q} \left(\sum_{i\in\fieldq }\log \frac{p(i)}{r(i)} \right)^2}
\textrm{.}
\end{eqnarray}
Since $\norm{.}$ is a proper norm, this distance is a metric distance. In other words, it is 
 nonnegative, symmetric, and it satisfies the triangle equality.   

Similar to any Hilbert space, the angle between any two pmfs $p(x),r(x)$ in $\setfieldq$ is given by
\begin{equation}
  \angle(p(x),r(x)) \triangleq \arccos \frac{\innerproduct{p(x)}{r(x)}}{\norm{p(x)}\norm{r(x)}} \textrm{.} \label{angledefinition}
\end{equation}

\subsection{The pseudo inverse of $\operator{L}{.}$}

\begin{lemma}\label{operatorortholemma} For any $p(x)$ in $\setfieldq$
  \begin{equation}
    \operator{L}{p(x)} \perp \vect{1} \textrm{,}
  \end{equation}
  where $\vect{1}$ denotes the all one vector in $\mathbb{R}^{q}$. 
\end{lemma}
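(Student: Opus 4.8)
The plan is to verify the orthogonality directly from the definition of the dot product on $\mathbb{R}^q$. Two vectors in $\mathbb{R}^q$ are orthogonal precisely when their usual inner product vanishes, and the inner product of any vector with the all-one vector $\vect{1}$ is simply the sum of that vector's coordinates. So the entire task reduces to showing that the coordinates of $\operator{L}{p(x)}$ sum to zero, i.e. that $\innerproduct{\operator{L}{p(x)}}{\vect{1}}_{\mathbb{R}^q}=0$.

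First I would read off the $i$-th coordinate of $\operator{L}{p(x)}$ from its definition in (\ref{operatordef}), namely $\log p(i)-\frac{1}{q}\sum_{j\in\fieldq}\log p(j)$, and sum it over $i\in\fieldq$. Splitting the result into two pieces, the first piece is $\sum_{i\in\fieldq}\log p(i)$. In the second piece the quantity $\frac{1}{q}\sum_{j\in\fieldq}\log p(j)$ does not depend on the summation index $i$, so summing it over the $q$ elements of $\fieldq$ multiplies it by $q$, yielding exactly $\sum_{j\in\fieldq}\log p(j)$. These two sums are identical, so their difference is zero; hence $\innerproduct{\operator{L}{p(x)}}{\vect{1}}_{\mathbb{R}^q}=0$, which is the claim.

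There is no genuine obstacle here, as the computation is essentially a single line; the only thing to be careful about is keeping track of the fact that the inner normalization sum is constant in $i$ and is therefore counted $q$ times. It is worth emphasizing, however, the conceptual content: this identity is exactly the reason the normalization term $-\frac{1}{q}\sum_{j\in\fieldq}\log p(j)$ was built into $\operator{L}{.}$ in the first place. It is subtracted from every coordinate precisely so that the coordinates sum to zero, which forces the image of every pmf into the $(q-1)$-dimensional hyperplane orthogonal to $\vect{1}$. This already foreshadows that $\image{\operator{L}{.}}$ has dimension at most $q-1$, consistent with the later claim that $\dim\setfieldq=q-1$.
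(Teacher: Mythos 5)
Your proof is correct and matches the paper's own argument exactly: both compute $\innerproduct{\operator{L}{p(x)}}{\vect{1}}_{\mathbb{R}^q}$ as the sum of the coordinates of $\operator{L}{p(x)}$ and observe that the constant normalization term, summed $q$ times, cancels $\sum_{i\in\fieldq}\log p(i)$. Nothing is missing; the added remark about why the normalization term forces the image into $\vect{1}^{\perp}$ is a correct reading of the paper's subsequent use of this lemma.
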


The proof is given Appendix \ref{operatorortholemmaproof}. 

Since $\operator{L}{p(x)}$ is always orthogonal to $\vect{1}$ it is not a surjection (onto). Consequently,
it is not a bijection (injection and surjection). A mapping which is not a bijection does not 
have an inverse. Nonetheless, a pseudo inverse for $\operator{L}{.}$
exists which satisfies 
\begin{equation}
\pseudoinv{L}{\operator{L}{p(x)}}=p(x) \textrm{,}\nonumber
\end{equation}
where $\pseudoinv{L}{.}$ denotes the 
pseudo inverse of $\operator{L}{.}$.

$\pseudoinv{L}{.}$ is a mapping from $\mathbb{R}^q$ to $\setfieldq$. 
We propose the following definition for $\pseudoinv{L}{.}$
\begin{equation}
  \pseudoinv{L}{\vect{p}}\triangleq \n{\fieldq}{\exp\left(-\frac{1}{2}\norm{\vect{p}-\vect{s}(x)}^{2}\right)}\textrm{,} \label{pseudoinvdef}
\end{equation}
where $\vect{p}$ is any vector in $\mathbb{R}^{q}$ and $\vect{s}(x)$ is the vector-valued function from 
$\fieldq$ to $\mathbb{R}^{q}$ given by
\begin{equation}
  \vect{s}(x) \triangleq \vect{e}_{x}-\frac{1}{q}\vect{1} \label{simplexmod}\textrm{.}
\end{equation}
The definition of $\pseudoinv{L}{.}$ can be interpreted as in
\begin{equation}
  \pseudoinv{L}{\vect{p}}=\Pr\{X=x|\vect{s}(X)+\vect{N}=\vect{p}\} \textrm{,}
\end{equation} 
where $\vect{N}$ is random vector whose components are all independent, real, zero-mean Gaussian random 
variables with unit variance. 
Furthermore, notice that 
the function  $\vect{s}(x)$ maps the elements of $\fieldq$ to 
$\mathbb{R}^q$  as in the simplex modulation.

\begin{lemma}\label{pseudoinvlemma} $\pseudoinv{L}{.}:\mathbb{R}^{q}\rightarrow \setfieldq$ defined in (\ref{pseudoinvdef})  
satisfies
\begin{equation}
\pseudoinv{L}{\operator{L}{p(x)}}=p(x) 
\end{equation}
for all $p(x)$ in $\setfieldq$. Moreover,
\begin{equation}
  \operator{L}{\pseudoinv{L}{\vect{p}}}=\vect{p}
\end{equation}
if $\vect{p} \perp \vect{1}$. 
\end{lemma}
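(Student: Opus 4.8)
The plan is to first collapse the Gaussian-shaped definition of $\pseudoinv{L}{\vect{p}}$ into a plain softmax expression, and then verify both identities by direct substitution. The whole argument rests on one observation: since the normalization operator satisfies $\n{\fieldq}{\beta\alpha(x)}=\n{\fieldq}{\alpha(x)}$ for any positive $\beta$, every factor in the argument of $\n{\fieldq}{\cdot}$ that does not depend on $x$ simply drops out.

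First I would expand the squared Euclidean norm appearing in (\ref{pseudoinvdef}). Writing $p_i$ for the component of $\vect{p}$ indexed by $i\in\fieldq$ and using $\vect{s}(x)=\vect{e}_x-\frac{1}{q}\vect{1}$, I get
\begin{equation}
  \norm{\vect{p}-\vect{s}(x)}^2=\norm{\vect{p}}^2-2p_x+\frac{2}{q}\sum_{i\in\fieldq}p_i+1-\frac{1}{q}\textrm{,}\nonumber
\end{equation}
where I have used $\norm{\vect{s}(x)}^2=1-\frac{1}{q}$, which is independent of $x$. The only term that varies with $x$ is $-2p_x$, so after exponentiating, all $x$-independent factors collect into a single positive constant that the normalization discards, leaving
\begin{equation}
  \pseudoinv{L}{\vect{p}}=\n{\fieldq}{e^{p_x}}=\frac{e^{p_x}}{\sum_{j\in\fieldq}e^{p_j}}\textrm{.}\nonumber
\end{equation}
This reduction to the softmax map is the key step; once it is in place the two claimed identities are routine.

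For the first identity I would set $\vect{p}=\operator{L}{p(x)}$, so that $p_i=\log p(i)-\frac{1}{q}\sum_{j}\log p(j)$. Then $e^{p_x}=p(x)\,c$, where $c=\exp(-\frac{1}{q}\sum_{j}\log p(j))$ does not depend on $x$; it therefore cancels under normalization, and since $p(x)$ sums to one we obtain $\pseudoinv{L}{\operator{L}{p(x)}}=\n{\fieldq}{p(x)}=p(x)$. For the second identity I would write $r(x)=\pseudoinv{L}{\vect{p}}=e^{p_x}/Z$ with $Z=\sum_{j}e^{p_j}$, so that $\log r(i)=p_i-\log Z$. Substituting into (\ref{operatordef}), the $\log Z$ terms cancel between $\log r(i)$ and its average, and the $i$-th component of $\operator{L}{r(x)}$ reduces to $p_i-\frac{1}{q}\sum_{j}p_j$. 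Imposing the hypothesis $\vect{p}\perp\vect{1}$, i.e.\ $\sum_{j}p_j=0$, leaves exactly $p_i$, so $\operator{L}{\pseudoinv{L}{\vect{p}}}=\vect{p}$. The only real subtlety throughout is bookkeeping the constants carefully, so that the orthogonality hypothesis is invoked in precisely the right place and nowhere else; nothing deeper is required.
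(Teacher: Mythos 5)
Your proof is correct, and its engine is the same as the paper's: expand the squared norm inside the Gaussian and let the normalization operator $\n{\fieldq}{.}$ absorb every $x$-independent factor. The difference is where you stop the reduction, and it changes which auxiliary facts are needed. The paper stops at $\pseudoinv{L}{\vect{p}}=\n{\fieldq}{\exp\left(\innerproduct{\vect{p}}{\vect{s}(x)}_{\mathbb{R}^{q}}\right)}$ and then leans on geometry: Lemma \ref{operatorortholemma} to kill the term $\frac{1}{q}\innerproduct{\operator{L}{p(x)}}{\vect{1}}_{\mathbb{R}^q}$ in the first identity, and, for the second, the orthogonal decomposition $\vect{p}=\vect{p}'+\alpha\vect{1}$ together with the facts $\vect{s}(x)\perp\vect{1}$ and $\sum_{j\in\fieldq}\vect{s}(j)=\vect{0}$. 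You push the cancellation one step further: since $\innerproduct{\vect{p}}{\vect{s}(x)}_{\mathbb{R}^q}=p_x-\frac{1}{q}\sum_{i\in\fieldq}p_i$ and the second term is itself $x$-independent, the pseudo inverse collapses to the plain softmax $\n{\fieldq}{e^{p_x}}$ for \emph{every} $\vect{p}\in\mathbb{R}^{q}$, with no orthogonality hypothesis. After that, both identities are coordinate bookkeeping: the first needs no orthogonality lemma at all, and for the second you show that the $i$-th component of $\operator{L}{\pseudoinv{L}{\vect{p}}}$ equals $p_i-\frac{1}{q}\sum_{j\in\fieldq}p_j$ in general --- which is precisely the projection of $\vect{p}$ onto $\vect{1}^{\perp}$, the vector the paper calls $\vect{p}'$ --- before invoking $\vect{p}\perp\vect{1}$ at the very last step, exactly where it belongs. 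What your route buys is self-containedness and an explicit, reusable softmax formula for $\pseudoinv{L}{.}$; what the paper's route buys is that its intermediate geometric objects (the orthogonality of $\operator{L}{p(x)}$ to $\vect{1}$, the projection $\vect{p}'$) are the same ones reused immediately afterwards, e.g.\ in Theorem \ref{dimensionalitytheorem} to identify $\image{\set{L}}$ with $\vect{1}^{\perp}$. Both arguments establish the same slightly-stronger intermediate fact, so the proofs are interchangeable in content.
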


The proof is given in Appendix \ref{pseudoinvlemmaproof}. 


\begin{theorem}\label{dimensionalitytheorem}
$\setfieldq$ is a $q-1$ dimensional Hilbert space, i.e..  
  \begin{equation}
    \dim \setfieldq = q-1
  \end{equation}
\end{theorem}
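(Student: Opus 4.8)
The plan is to exploit the linear injection $\operator{L}{.}$ together with its pseudo inverse to identify $\setfieldq$ with a concrete subspace of $\mathbb{R}^q$ whose dimension is transparent. Since $\operator{L}{.}$ is linear and injective by Lemma \ref{operatorlemma}, it is a vector space isomorphism from $\setfieldq$ onto its image $\image{\mathcal{L}}\subseteq \mathbb{R}^q$; hence $\dim\setfieldq=\dim\image{\mathcal{L}}$, and it suffices to compute the latter.

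First I would show the inclusion $\image{\mathcal{L}}\subseteq \{\vect{1}\}^{\perp}$, where $\{\vect{1}\}^{\perp}$ denotes the orthogonal complement of the all-one vector in $\mathbb{R}^q$. This is immediate from Lemma \ref{operatorortholemma}, which states that $\operator{L}{p(x)}\perp\vect{1}$ for every $p(x)\in\setfieldq$, so every element of the image is orthogonal to $\vect{1}$.

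Next I would establish the reverse inclusion $\{\vect{1}\}^{\perp}\subseteq\image{\mathcal{L}}$ using the pseudo inverse. For any $\vect{p}\in\mathbb{R}^q$ with $\vect{p}\perp\vect{1}$, Lemma \ref{pseudoinvlemma} gives $\operator{L}{\pseudoinv{L}{\vect{p}}}=\vect{p}$, so $\vect{p}$ is the image under $\operator{L}{.}$ of the pmf $\pseudoinv{L}{\vect{p}}\in\setfieldq$; hence $\vect{p}\in\image{\mathcal{L}}$. Combining the two inclusions yields $\image{\mathcal{L}}=\{\vect{1}\}^{\perp}$.

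Finally, the orthogonal complement of the one-dimensional subspace $\mathrm{span}\{\vect{1}\}$ in $\mathbb{R}^q$ has dimension $q-1$, so $\dim\setfieldq=\dim\image{\mathcal{L}}=\dim\{\vect{1}\}^{\perp}=q-1$. I do not expect any serious obstacle here, since the three lemmas already stated do all the work; the only point requiring a little care is verifying that $\operator{L}{.}$ maps \emph{onto} all of $\{\vect{1}\}^{\perp}$ rather than merely a proper subspace of it, which is precisely what the second identity of Lemma \ref{pseudoinvlemma} guarantees.
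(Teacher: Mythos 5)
Your proposal is correct and follows essentially the same route as the paper: the paper's proof also combines injectivity of $\operator{L}{.}$ (via rank--nullity with trivial kernel, which is just your ``isomorphism onto its image'' statement in different words), Lemma \ref{operatorortholemma} for the inclusion $\image{\set{L}}\subseteq\vect{1}^{\perp}$, and the second part of Lemma \ref{pseudoinvlemma} for the reverse inclusion, concluding $\dim\setfieldq=\dim\vect{1}^{\perp}=q-1$. No gaps; your emphasis on verifying surjectivity onto $\vect{1}^{\perp}$ is exactly the point the paper also singles out.
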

\begin{proof} 
Due to the rank-nullity theorem in linear algebra
\begin{equation}
  \dim \setfieldq = \dim \image{\set{L}}+\dim \kernel{\set{L}} \nonumber \textrm{,}
\end{equation}
where $\image{\set{L}}$ and $\kernel{\set{L}}$ denote the image and kernel (null space) of $\operator{L}{.}$ respectively. 
Since $\operator{L}{.}$ is shown to be an injection in Lemma \ref{operatorlemma}, $\kernel{\set{L}}$ only contains $\vect{0}$. 
It can be deduced from Lemma \ref{operatorortholemma} that the image (range space) of $\operator{L}{.}$ is a subset of $\vect{1}^{\perp}$,
where $\vect{1}^{\perp}$ is the subspace of $\vect{R}^{q}$ given by
\begin{equation}
  \vect{1}^{\perp}\triangleq \left\{\vect{p} \in \mathbb{R}^q : \innerproduct{\vect{p}}{\vect{1}}_{\mathbb{R}^q}=0 \right\}
\end{equation} 
The second part of Lemma  \ref{pseudoinvlemma} improves this result as it clearly shows that the image of 
$\operator{L}{.}$ is exactly equal to $\vect{1}^{\perp}$. Therefore,
\begin{eqnarray}
  \dim \setfieldq &=& \dim \vect{1}^{\perp}+\dim \{\vect{0}\} \nonumber \\
  &=& q-1 \textrm{,}
\end{eqnarray}
 which completes the proof. 
\end{proof}

\subsection{A set of  orthonormal basis pmfs for $\setfieldq$ \label{orthobasispmfs}} 

A set of $q-1$ linearly independent vectors are necessary to form a basis
for $\setfieldq$. An orthonormal basis for $\setfieldq$ can be 
obtained by finding a set of orthonormal vectors in $\vect{1}^{\perp}$
and then by mapping these vectors to  $\setfieldq$ via  $\pseudoinv{L}{.}$.  
Let $q-1$ vectors in $\mathbb{R}^{q}$ be defined as
\begin{equation}
  \begin{array}{cccccccc}
    \vect{s}_1 & \triangleq  [&\frac{1}{\sqrt{2}} & -\frac{1}{\sqrt{2}}  & 0  & \ldots & 0&] \\
    \vect{s}_2 & \triangleq  [&\frac{1}{\sqrt{6}} &  \frac{1}{\sqrt{6}} & -\frac{2}{\sqrt{6}}  & \ldots & 0&] \\

    \vdots & & \vdots& \vdots & \vdots& \ddots & \vdots \\
    \vect{s}_{q-1} &\triangleq [&\frac{1}{\sqrt{q(q-1)}} & \frac{1}{\sqrt{q(q-1)}}& \frac{1}{\sqrt{q(q-1)}} &\ldots &-\frac{q-1}{\sqrt{q(q-1)}}&]   
\end{array} \textrm{.}\label{orthoinR}
\end{equation}
Clearly, all of these vectors are all in $\vect{1}^{\perp}$ and they are all mutually orthonormal. $q-1$ pmfs in $\setfieldq$
can be obtained by mapping these vectors to $\setfieldq$ via $\pseudoinv{L}{.}$ as follows.
\begin{equation}
  s_{i}(x) \triangleq \pseudoinv{L}{\vect{s}_i}\quad  \textrm{ for }i=1,2,\ldots,q-1 \label{orthoinpmf} \textrm{.}
\end{equation}
Due to the definition of the inner product and the second part of Lemma \ref{pseudoinvlemma},
\begin{eqnarray}
  \innerproduct{s_{i}(x)}{s_{j}(x)} &=& \innerproduct{\operator{L}{s_{i}(x)}}{\operator{L}{s_{j}(x)}}_{\mathbb{R}^{q}}\nonumber \\
&=&\innerproduct{\vect{s}_i}{\vect{s}_j}_{\mathbb{R}^q} \nonumber \\
  &=&\left\{ \begin{array}{cc} 1 & \textrm{, for }i=j\\ 0 & \textrm{, for }i\neq j \end{array}  \right. \textrm{.}
\end{eqnarray}
Therefore, $\{s_{1}(x),s_2(x),\ldots,s_{q-1}(x)\}$ is an orthonormal basis for $\setfieldq$.

\begin{example} \label{basicgeometricexample}
In Example \ref{basicalgebraicexample} basic algebraic relations between six pmfs, which 
are in $\setfield{3}$, is investigated. An orthonormal basis for $\setfield{3}$ is composed
of two pmfs. $s_1(x)$ and $s_2(x)$ given below forms such a basis for $\setfield{3}$.
\begin{eqnarray}
  s_{1}(x)&=& \pseudoinv{L}{\left[
      \begin{array}{ccc}\frac{1}{\sqrt{2}}&-\frac{1}{\sqrt{2}}&0 \end{array}\right]^T  } \nonumber \\
  &\simeq&\left\{\begin{array}{cc}0.57598, &x=0 \\ 0.14002, & x=1 \\ 0.28400,& x=2 \end{array} \right.
\end{eqnarray} 

\begin{eqnarray}
  s_{2}(x)&=& \pseudoinv{L}{\left[
      \begin{array}{ccc}\frac{1}{\sqrt{6}}&\frac{1}{\sqrt{6}}&-\frac{2}{\sqrt{6}} \end{array}\right]^T   } \nonumber \\
  &\simeq&\left\{\begin{array}{cc}0.43595, & x=0 \\ 0.43595, & x=1 \\ 0.12810,& x=2 \end{array} \right.
\end{eqnarray} 

The coordinates of a pmf in $\setfield{3}$, 
with respect to (w.r.t.) the basis $\{s_1(x), s_2(x)\}$ is simply the inner product of the pmf with $s_1(x)$ and $s_2(x)$. 
For instance, $r(x)$ mentioned in Example \ref{basicalgebraicexample}  can be expressed as
\begin{eqnarray}
  r(x)&=&\innerproduct{r(x)}{s_1(x)}\boxtimes s_1(x) \boxplus \innerproduct{r(x)}{s_2(x)}\boxtimes s_2(x)  \nonumber \\
  &\simeq & -1.5537\boxtimes s_1(x) \boxplus -0.89701\boxtimes s_2(x) \textrm{.}
\end{eqnarray}
The coordinates of all the pmfs mentioned in Example \ref{basicalgebraicexample} are given in the 
table below and depicted in Figure \ref{pmfsplot}.

\begin{table}[h]
\caption{Coordinates of the pmfs mentioned in Examples \ref{basicalgebraicexample} and \ref{basicgeometricexample}}
\begin{center}
\begin{tabular}{|c|c|c|c|c|c|c|}
\hline
  &$r(x)$ & $y(x)$ & $o(x)$ & $b(x)$ & $g(x)$ & $p(x)$ \\
\hline
 $s_1(x)$ & $-1.55367$  & $ 1.55367$& $0$       & $0.77684$& $-0.77684$ & $0$ \\
 $s_2(x)$ & $-0.89701$  & $-0.89701$& $1.79403$ & $0.44851$& $0.44851$ & $-0.89701$\\
\hline
\end{tabular}
\end{center}
\end{table}
\end{example}

\begin{figure}
  \begin{center}
    \includegraphics[scale=.5]{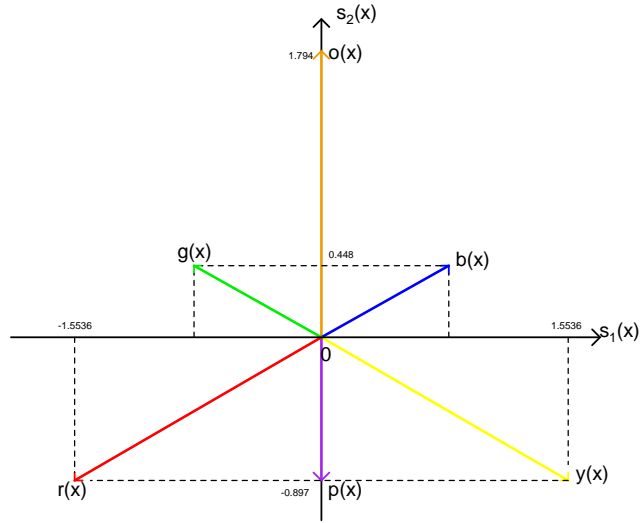}
    \caption{Plot of the pmfs mentioned in Examples \ref{basicalgebraicexample} and \ref{basicgeometricexample}\label{pmfsplot}}
  \end{center}
\end{figure} 

\section{Relation to the Hilbert  space of random variables \label{relhsrv}}

The Hilbert space of probability mass functions of finite field-valued random variables might 
be confused with the Hilbert space of random variables with a finite second order moment which is already well known \cite{papoulis}. 
However, these two Hilbert spaces are quite different from each other. First of all, the vectors of 
the former Hilbert space are pmfs of the random variables whereas the vectors of the latter Hilbert 
space are the random variables themselves. Second, the former Hilbert space is related to the
finite-field valued random variables whereas the latter is related to the \emph{complex-valued}
random variables. Finally, the former is meaningful in the Bayesian detection sense whereas the
latter is not. 
 
Although this thesis is about the Hilbert space of the pmfs of finite field-valued random variables,
it is adequate to summarize the Hilbert space of random variables.  
The set of \emph{complex-valued} random variables forms vector space with the usual random variable
addition and scaling over $\mathbb{C}$. This vector space can be endowed with the following inner product
which is nothing but the autocorrelation between two random variables.
\begin{equation}
\innerproduct{X}{Y}=\expectation{XY^{*}}   \textrm{,} \label{rvinnerproduct}
\end{equation}
where $X$ and $Y$ are two \emph{complex-valued} random variables and $\expectation{.}$ denotes the expectation. 
The set of complex-valued random variables with finite second order moment is complete w.r.t. the norm 
induced by the inner product above. Therefore, this set forms a Hilbert space over $\mathbb{C}$ with 
the usual random variable addition, scaling, and the inner product given in (\ref{rvinnerproduct}).
Many important algorithms, such as the Wiener filter, relies upon the orthogonality in this Hilbert space. 
 
Notice that the Hilbert space structure over random variables is constructed over complex-valued random variables. 
Although it is also possible to construct a similar \emph{vector space} over the set of finite field-valued random variables, 
the vector space of finite field-valued random variables does not have an inner product. In other words, 
the set of $\fieldq$-valued random variables forms a vector space with the usual random variable addition
and scaling over $\fieldq$. On the contrary to complex-valued random variable case, the expected value is not
a well defined concept for finite field-valued random variables. Consequently, autocorrelation 
between two $\fieldq$-valued random variables is not well defined either. Therefore, we cannot construct a Hilbert space
structure  over the set of $\fieldq$-valued random variables as we could for the complex valued random variables.
If we had a Hilbert space structure over the set of $\fieldq$-valued random variables then 
we would have  decoding algorithms for linear channel codes with polynomial complexity.

\subsection{Comparison between the convergence of random variables and pmfs}

Another possible confusion might arise between the convergence of finite field-valued random variables
and the convergence of pmfs of finite field-valued random variables. As explained above 
expectation is not well defined for finite field-valued random variables. Therefore, 
convergence in the mean square sense is not well defined for finite field-valued random variables either.  
On the other hand, convergence modes such as convergence almost everywhere and convergence in probability can 
still be well defined. However, due to the topological nature of the finite fields these two convergence modes
are essentially equivalent.  Convergence  of a sequence of finite-field-valued random variables in probability
is formally defined below. 

\begin{definition}\emph{Convergence of a sequence of finite-field-valued random variables in probability:}
A sequence of $\fieldq$-valued random variables, $\left\{X_n\right\}_{n=1}^{\infty}$,
converges in probability to an $\fieldq$-valued random variable $X$ if and only if for each $\epsilon>0$ 
there exist an integer $N$ such that 
\begin{equation}
    n>N \implies \Pr\{X_{n}=X\}>1-\epsilon
\end{equation}
and this convergence is denoted by
\begin{equation}
\lim_{n\rightarrow \infty} \Pr\{X_{n}=X\} =1 \textrm{.}
\end{equation}
\end{definition} 

Convergence of $\fieldq$-valued random variables
in probability, might be confused with the convergence of pmfs in $\setfieldq$. The following 
example aims to clarify the distinction between these two convergences. 

\begin{example}
Let the event space of an experiment $\Omega$ be $[0,1]\subset \mathbb{R}$ and each outcome of the 
experiment is equally likely, i.e. 
\begin{equation}
  \Pr\{\omega\leq c\} = c \textrm{,}
\end{equation}
where $\omega$ denotes the outcome of the experiment. A sequence of $\field{2}$-valued
random variables, $\left\{X_n\right\}_{n=1}^{\infty}$, are assigned to this experiment
as follows.
\begin{equation}
  X_{n}(\omega)\triangleq\left\{\begin{array}{cc} 0, &\omega\in [0, 1-2^{-n}]\\ 1, & \omega \in (1-2^{-n}, 1] \end{array}\right.
\end{equation}
Clearly, the sequence $\left\{X_n\right\}_{n=1}^{\infty}$ converges in probability to a random variable $X$ which 
is defined as 
\begin{equation}
  X\triangleq \left\{ \begin{array}{cl} 0, &  \omega \in [0, 1]\\ 1, & \omega \in \emptyset \end{array}\right.
\end{equation}
In other words,
\begin{equation}
  \lim_{n\rightarrow \infty}\Pr\{X_{n}= X\}=1 \textrm{.}
\end{equation}

Let a sequence $\left\{p_{n}(x)\right\}_{n=1}^{\infty}$ of pmfs in $\setfield{2}$ be defined as
\begin{eqnarray}
  p_{n}(x)&\triangleq&\Pr\{X_{n}=x \}\nonumber \\
  &=&\left\{\begin{array}{cc}1-2^{-n}, & x=0 \\
2^{-n},& x=1  \end{array} \right. 
\end{eqnarray} 
and  $p(x)$ denote $\Pr\{X=x\}$. Due to the basic axioms of probability 
\begin{equation}
  p(x)=\left\{\begin{array}{cc} 1 &,x=0\\ 0 &,x=1 \end{array}\right. \textrm{.}
\end{equation}
It might appear at a first glance that the sequence $\left\{p_{n}(x)\right\}_{n=1}^{\infty}$ converges
to $p(x)$. However, this would contradict with the completeness of $\setfield{2}$ since
$p(x)\notin \setfield{2}$. The  truth is $\left\{p_{n}(x)\right\}_{n=1}^{\infty}$ is not a Cauchy sequence
in $\setfield{2}$. This fact can be shown as  follows. For any $m>n>0$ 
\begin{eqnarray}
  D(p_{m}(x),p_n(x)) &=&\sqrt{\sum_{i\in \field{2}}\left(\log \frac{p_{m}(i)}{p_{n}(i)} \right)^2-\frac{1}{q}
  \left(\sum_{i\in \field{2}}\log \frac{p_{m}(i)}{p_{n}(i)}\right)^2} \nonumber \\
  &=&\frac{1}{\sqrt{2}}\left(\log \frac{p_{m}(0)}{p_{n}(0)}+\log\frac{p_{n}(1)}{p_{m}(1)}\right) \nonumber \textrm{.}
\end{eqnarray}
Since $p_{m}(0)>p_{n}(0)$
\begin{eqnarray}
  D(p_{m}(x),p_n(x)) &>&\frac{1}{\sqrt{2}}\left(\log\frac{p_{n}(1)}{p_{m}(1)}\right) \nonumber \\
&=&\frac{\log 2}{\sqrt{2}}(m-n) \textrm{.}
\end{eqnarray}
Therefore, $\left\{p_{n}(x)\right\}_{n=1}^{\infty}$ is not a Cauchy sequence and 
the limit  $\lim_{n\rightarrow \infty}p_{n}(x)$ does not exist. This example demonstrates that 
convergence of a sequence of random variables in probability does not imply the 
convergence of their pmfs. 

\end{example}

\section{The Hilbert space of multivariate pmfs \label{mvpmf}}

The construction of the Hilbert space on $\setfieldq$ can be applied to the 
set of multivariate (joint) pmfs as well. Basically, we should replace the indeterminate variable $x$
in the Hilbert space of pmfs with a  vector $\vect{x}$ while constructing the
Hilbert space structure on multivariate pmfs.   

Let $\vect{X}=[X_1,X_2,\ldots,X_N]$
be a random vector where $X_i$ is a $\fieldq$-valued random variable. Furthermore,
let $\setfieldqn$ denote the set of all strictly positive pmfs that $\vect{X}$ might posses, i.e.
\begin{equation}
  \setfieldqn\triangleq \left\{ p(\vect{x}):\fieldqn\rightarrow (0,1)\subset
 \mathbb{R}, \sum_{\vect{x}\in \fieldqn} p(\vect{x})=1\right\} \textrm{.}
\end{equation}
The addition and scalar multiplication on $\setfieldqn$ can be defined for any 
$p_1(\vect{x}),p_2(\vect{x}),p(\vect{x})\in \setfieldq$ and $\alpha \in \mathbb{R}$ as 
\begin{eqnarray}
  p_1(\vect{x})\boxplus p_2(\vect{x}) & \triangleq & \n{\fieldqn}{p_1(\vect{x})p_2(\vect{x})} \\
\alpha \boxtimes p(\vect{x})&\triangleq & \n{\fieldqn}{(p(\vect{x}))^{\alpha}} 
\end{eqnarray}
The normalization operator in the multivariate case, which is denoted by  $\n{\fieldqn}{.}$ above, 
maps any strictly positive function of $\fieldqn$, $\alpha(\vect{x})$, to a pmf in $\setfieldqn$ as  follows.
\begin{equation}
  \n{\fieldqn}{\alpha(\vect{x})}\triangleq \frac{\alpha(\vect{x})}{\sum_{\vect{i}\in \fieldqn}\alpha (\vect{i})} 
\end{equation}
Similar to the univariate case, $\setfieldqn$ together with the $\boxplus$ and $\boxtimes$ operations forms
a vector space over $\rfield$.

The analogue of the mapping $\operator{L}{.}$ in the multivariate case  is denoted by $\operatorn{L}{.}$
and maps the pmfs in $\setfieldqn$ to $\rfieldqn$. Before giving the definition of 
$\operatorn{L}{.}$ we need to establish a one-to-one matching between the \emph{vectors} in $\fieldqn$ and
the \emph{canonical basis vectors} of $\rfieldqn$.  We can do this matching since $\fieldqn$ contains $q^{N}$
vectors which is equal to the dimension of  $\rfieldqn$. Since the mapping $\operatorn{L}{.}$ is going to be employed
in borrowing the inner product in $\rfieldqn$ the order of matching is not important.

Using this matching  $\operatorn{L}{.}$
is defined as
\begin{equation}
  \operatorn{L}{p(\vect{x})}:\setfieldqn\rightarrow \rfieldqn \triangleq 
  \sum_{\vect{i}\in\fieldqn} \left(\log p(\vect{i})-\frac{1}{q^N}\sum_{\vect{j}\in \fieldqn} \log  p(\vect{j}) \right)
  \vect{e}_{\vect{i}} \textrm{,}
\end{equation}
where $\vect{e}_{\vect{i}}$ denotes the canonical basis vector of $\rfieldqn$ matched to $\vect{i}\in\fieldqn$. 
$\operatorn{L}{.}$ is a linear and injective mapping as $\operator{L}{.}$. 
Then the inner product of any two $p(\vect{x}),r(\vect{x}) \in \setfieldqn$ becomes
\begin{eqnarray}
\innerproduct{p(\vect{x})}{r(\vect{x})} &\triangleq& \innerproduct{\operatorn{L}{p(\vect{x})}}
{\operatorn{L}{r(\vect{x})}}_{\rfieldqn}  \\
&=&\sum_{\vect{i}\in \fieldqn}\log p(\vect{i})\log r(\vect{i}) 
-
\frac{1}{q^{N}}
\left( \sum_{\vect{i}\in\fieldqn}\log p(\vect{i})\right)
\left( \sum_{\vect{i}\in\fieldqn}\log r(\vect{i})\right) \textrm{.}
\end{eqnarray} 
The definition of inner product makes $\setfieldqn$ an inner product space. Since $\setfieldqn$
is definitely finite dimensional it is also a Hilbert space.

The pseudo inverse of $\operatorn{L}{.}$ is
\begin{equation}
  \pseudoinvn{L}{\vect{p}}:\rfieldqn\rightarrow \setfieldqn 
  \triangleq \n{\fieldqn}{\exp\left(-\frac{1}{2}\norm{\vect{p}-\vect{s}_{N}(\vect{x})}^2 \right)} \textrm{,}
\end{equation} 
where $\vect{s}_{N}(\vect{x})$ is 
\begin{equation}
  \vect{s}_{N}(\vect{x}) \triangleq \vect{e}_{\vect{x}}-\frac{1}{q^{N}}\vect{1} \textrm{.} 
\end{equation}
The vector $\vect{1}$ above denotes the all one vector in $\rfieldqn$. 
Similar to the univariate case it can be shown that $\pseudoinvn{L}{.}$
satisfies 
\begin{eqnarray}
  &\pseudoinvn{L}{\operatorn{L}{p(\vect{x})}} = p(\vect{x}) &\quad \forall p(\vect{x})\in \setfieldq \\
  &\operatorn{L}{\pseudoinvn{L}{\vect{p}}}= \vect{p} &\quad \forall \vect{p} \in \vect{1}^{\perp} \subset \rfieldqn  \textrm{.}
\end{eqnarray}
Consequently, 
\begin{equation}
  \image{\set{L}_{N}} = \vect{1}^{\perp}  \subset \rfieldqn
\end{equation}

\begin{theorem} $\setfieldqn$ is a $q^{N}-1$ dimensional Hilbert space, i.e.
\begin{equation}
  \dim \setfieldqn = q^{N}-1 \textrm{.}
\end{equation}
\end{theorem}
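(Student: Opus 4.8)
The plan is to mimic the proof of Theorem~\ref{dimensionalitytheorem} almost verbatim, since the multivariate mapping $\operatorn{L}{.}$ has already been endowed with the two structural properties that drove the univariate argument: it is linear and injective, and its image equals $\vect{1}^{\perp}\subset\rfieldqn$, as recorded just above the theorem statement. First I would invoke the rank--nullity theorem applied to $\operatorn{L}{.}:\setfieldqn\rightarrow\rfieldqn$, obtaining
\begin{equation}
  \dim\setfieldqn = \dim\image{\set{L}_{N}}+\dim\kernel{\set{L}_{N}} \nonumber \textrm{.}
\end{equation}

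Next, because $\operatorn{L}{.}$ is injective its kernel contains only the zero vector, so $\dim\kernel{\set{L}_{N}}=0$. The image term is handled by the identity $\image{\set{L}_{N}}=\vect{1}^{\perp}$, so it only remains to compute $\dim\vect{1}^{\perp}$. Since $\vect{1}$ is a single nonzero vector of $\rfieldqn$, the line it spans is one dimensional, and its orthogonal complement in the $q^{N}$-dimensional space $\rfieldqn$ therefore has dimension $q^{N}-1$. Substituting these two values yields $\dim\setfieldqn = q^{N}-1$, as claimed.

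I do not expect any genuine obstacle here: every ingredient has already been proved in the multivariate setting, so the argument is a direct transcription of the univariate dimensionality theorem with $q$ replaced by $q^{N}$. The one point worth stating explicitly is that $\rfieldqn$ has dimension exactly $q^{N}$, which follows from the one-to-one matching between the $q^{N}$ vectors of $\fieldqn$ and the canonical basis vectors of $\rfieldqn$ used in defining $\operatorn{L}{.}$; this is what guarantees that $\vect{1}^{\perp}$ has codimension one and hence dimension $q^{N}-1$.
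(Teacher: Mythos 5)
Your proposal is correct and follows exactly the same route as the paper: the paper's own proof is simply the rank--nullity theorem applied to $\operatorn{L}{.}$, relying on the injectivity and image characterization recorded just before the theorem statement, which is precisely what you spell out. Your version is in fact more explicit than the paper's two-line proof, but the underlying argument is identical.
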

\begin{proof}
Due to the rank-nullity theorem in linear algebra
\begin{eqnarray}
  \dim \setfieldqn &=& \dim \kernel{\set{L}_N} + \dim \image {\set{L}_N} \\
  &=&q^{N}-1 \textrm{.} 
\end{eqnarray} 
\end{proof}

As a minor consequence of this theorem we can conclude that 
$\setfieldqn$ is isomorphic to $\setfield{q^{N}}$. This is a quite 
expected result since $\fieldqn$ is isomorphic to $\field{q^{N}}$.


\chapter[THE CANONICAL FACTORIZATION OF MULTIVARIATE \\ PROBABILITY MASS FUNCTIONS]{THE CANONICAL FACTORIZATION OF \\MULTIVARIATE  PROBABILITY MASS FUNCTIONS \label{thechapter}}

\section{Introduction}

The factorization of a multivariate pmf is important in many aspects. For instance, the conditional 
dependence of the random variables distributed by a pmf can be determined by how the
pmf factors. Existence of low complexity maximization and marginalization 
algorithms for a multivariate pmf, such as Viterbi and BCJR,  also depends
on the factorization of the pmf. A very special factorization of  multivariate pmfs 
which we call as the canonical factorization is introduced in this chapter. 

This chapter begins with representing the factorization of a pmf in $\setfieldqn$. 
Then we introduce the soft parity check constraints using which we decompose
  $\setfieldqn$ into orthogonal subspaces. Finally, we obtain the canonical
factorization of pmfs as the projection of pmfs onto these subspaces.

\section{Representing the  factorization of pmfs}

The Hilbert space $\setfieldqn$ provides a suitable environment for analyzing the 
factorization of multivariate pmfs. Suppose that a pmf in $\setfieldqn$
can be factored as
\begin{equation}
  p(\vect{x})=\prod_{i=1}^{K}\phi_{i}(\vect{x}) \textrm{.} \label{firstfactorization}
\end{equation}
Each $\phi_{i}(\vect{x})$ function appearing above may be called 
a factor function, a local function, a constraint, or an interaction. The factor functions
are not necessarily pmfs but they can be assumed to be 
positive. Hence, we can obtain a pmf in $\setfieldq$ 
by scaling the factor functions as in
\begin{eqnarray}
  r_{i}(\vect{x})&=&\n{\fieldqn}{\phi_{i}(\vect{x})} \\
  &=&\frac{1}{\gamma_i}\phi_{i}(\vect{x})\textrm{,}
\end{eqnarray}
where $\gamma_i=\sum_{\vect{i}\in \fieldqn}\phi_{i}(\vect{x})$.
After this normalization the factorization in (\ref{firstfactorization})
becomes
\begin{eqnarray}
  p(\vect{x})&=&\prod_{i=1}^{K}\gamma_{i}r_{i}(\vect{x}) \\
  &=&\n{\fieldqn}{\prod_{i=1}^{K}r_{i}(\vect{x})} \textrm{,}
\end{eqnarray}
which can be represented using the addition in $\setfieldq$ as 
\begin{equation}
  p(\vect{x})=\ssum_{i=1}^{K}r_{i}(\vect{x}) \textrm{.}
\end{equation}
This representation suggests that  a multivariate pmf in $\setfieldqn$ can be 
factored by expressing it as a linear combination of some basis vectors (pmfs)  
in $\setfieldqn$. If these basis pmfs are chosen to be orthogonal 
then we can employ the inner  product on $\setfieldqn$ to determine
the expansion coefficients. However, the basis pmfs should be selected
in such a way that the resulting factorization becomes \emph{useful}. 

We know from the literature on the sum-product  algorithm  \cite{fgsp,aloefg,Wiberg,wiberg} and 
Markov random fields \cite{bishop,hct1,hct2,pabbeel} that 
the factorization of $p(\vect{x})$ given in (\ref{firstfactorization}) is useful
if the factor functions on the right hand side of (\ref{firstfactorization})
are \emph{local}. A factor function of $p(\vect{x})$ is said to be local if it depends
on some but not all of the components of the argument vector $\vect{x}$. Therefore, 
the  basis functions mentioned in the paragraph above should also be selected to be as local 
as possible.

\section{The multivariate pmfs that can be expressed as a function of a linear \\
combination of their arguments}

In this section we propose a special type of multivariate pmfs which will serve as basis vectors
to obtain a factorization of pmfs in $\setfieldq$. We show in the next chapter that 
the factorization obtained using these basis pmfs is quite useful. These basis pmfs
are inspired by the parity check relations in $\fieldq$. Suppose that the components of
an $\fieldqn$-valued random vector $\vect{X}=[X_1,X_2,\ldots,X_N]$ satisfy the following
parity check relation
\begin{equation}
  a_1X_1+a_2X_2+\ldots+a_NX_N=0 \textrm{,} \label{paritycheckrelation}
\end{equation}
where $a_i$ is a constant in $\fieldq$. If all configurations 
satisfying this relation are assumed to be equiprobable 
then the joint pmf of $\vect{X}$, which is denoted by $p(\vect{x})$, is
\begin{equation}
  p(\vect{x})= \left\{\begin{array}{cc}\frac{1}{q^{N-1}}, & \sum_{i=1}^{N}a_ix_i=0 \\ 0, &\textrm{otherwise} \end{array} \right. \textrm{.}
\end{equation}
This pmf can be expressed in a more compact form as 
\begin{eqnarray}
  p(\vect{x})&=&\frac{1}{q^{N-1}}\delta(\vect{a}\vect{x}^T) \\
&=&\n{\fieldqn}{\delta(\vect{a}\vect{x}^T)} \label{pcc} \textrm{,}
\end{eqnarray}
where $\vect{a}$ is $[a_1,a_2,\ldots,a_N]$ and $\delta(.)$ denotes the Kronecker delta. 

The multivariate pmfs which can be expressed in the form as in (\ref{pcc}) are called parity 
check or zero-sum constraints. A parity check constraint depends only on the variables which have nonzero 
coefficients associated with them. Hence, they posses local function properties as we desire
from a basis pmf. Therefore, parity check constraints could be good candidates for being basis 
pmfs if they were elements of $\setfieldqn$. However, parity check constraints are not elements 
of $\setfieldqn$, since their value is zero for the configurations which do not satisfy 
the parity check relation. 

We can obtained a ``softened'' version of the parity check constraints as follows. Suppose
that the components of the random vector $\vect{X}$ satisfy the following relation instead
of (\ref{paritycheckrelation})
\begin{equation}
  a_1X_1+a_2X_2+\ldots+a_NX_N=U \textrm{,} \label{softparitycheckrelation}
\end{equation}
where $U$ is an $\fieldq$-valued random variable distributed with an $r(u)\in\setfieldq$. 
If all configurations resulting with the same value of $U$ are assumed to be equiprobable
then joint pmf of $\vect{X}$ in this case becomes
\begin{eqnarray}
  p(\vect{x}) &=& \left\{\begin{array}{cl} \frac{1}{q^{N-1}}r(0), & \sum_{i=1}^{N}a_ix_i=0 \\
      \frac{1}{q^{N-1}}r(1), & \sum_{i=1}^{N}a_ix_i=1 \\ 
    \vdots &\vdots \\
     \frac{1}{q^{N-1}}r(q-1),& \sum_{i=1}^{N}a_ix_i=q-1 
  \end{array}\right.
\end{eqnarray} 
which can be expressed in a more compact form  as
\begin{eqnarray}
  p(\vect{x})&=&\frac{1}{q^{N-1}}r(\vect{a}\vect{x}^{T}) \\
  &=&
\n{\fieldqn}{r(\vect{a}\vect{x}^{T})} \textrm{.}
\end{eqnarray}

\begin{definition}A multivariate pmf 
$p(\vect{x})$  in 
$\setfieldqn$ is called \emph{a soft parity
check (SPC) constraint} if there exist a $r(x) \in \setfieldq$ and 
a vector $\vect{a}=[a_0,a_1,\ldots,a_{N-1}] \in \fieldqn$ such that
\begin{eqnarray}
  p(\vect{x})&=& \n{\fieldqn}{ r(\vect{a}\vect{x}^T)} \label{spcc}\textrm{.}
\end{eqnarray}
 The  vector $\vect{a}$ is called the \emph{parity check coefficient vector} of the SPC constraint
$p(\vect{x})$.
\end{definition}

The difference between parity check and SPC constraint is the distribution of the  weighted sum
of the random variables $X_0$, $X_1$, $\ldots$, $X_{N-1}$, which is denoted by $U$
in (\ref{softparitycheckrelation}). $U$ is distributed with $\delta(u)$ in the parity check case 
whereas it is distributed with a $r(u)$ in
$\setfieldq$ in the SPC constraint case. The term ``soft'' arises from the fact that the weighted sum 
can take all values with some probability rather than guaranteed to be zero. 
Therefore, unlike parity check constraints SPC constraints are in $\setfieldqn$, since all configurations have nonzero
probabilities. 

\begin{example}\label{spcexample1}
Let two pmfs in $\setfieldnn{3}{2}$  are given with a slight abuse of notation as
\begin{eqnarray}
p_1(x_0,x_1) &=&\frac{1}{30}\left[ \begin{array}{ccc}3 & 6 & 1 \\ 
6 & 1 & 3 \\ 1 & 3 & 6\end{array} \right]
\nonumber \\
p_2(x_0,x_1) &=&\frac{1}{157}\left[ \begin{array}{ccc}12 & 30 & 1 \\ 
10 & 6 & 48 \\ 12 &8 &30\end{array} \right] \nonumber \textrm{,}
\end{eqnarray} 
where $p_k(x_0=i,x_1=j)$ is given by the entry in the $(i+1)^{th}$ row and the $(j+1)^{th}$ column of the
corresponding matrix.

Notice that  $p_1(x_0,x_1)$ can be expressed as 
\begin{eqnarray}
 p_1(x_0,x_1)&=&\frac{1}{3}r(x_0+x_1) \nonumber\\
 &=&\n{\fieldnn{3}{2}}{r(x_0+x_1)} \nonumber 
\end{eqnarray}
 where $r(x)\in \setfieldn{3}$  is
\begin{equation}
  r(x)=\left\{\begin{array}{cc} 0.3, &x=0 \\ 0.6, & x=1\\ 0.1 & x=1    \end{array} \right. \textrm{.} \nonumber
\end{equation}
 Therefore, $p_{1}(x_0,x_1)$ is an SPC constraint with parity check coefficient vector $[1, 1]$. 
On the other hand, we cannot find a similar expression for $p_{2}(x_0,x_1)$. Hence,
$p_{2}(x_0,x_1)$ is not an  SPC constraint.  
\end{example}

Notice that we exploited the field structure of $\fieldq$ in the discussion above. Parity check relations
could also be described in finite rings but the number of configurations satisfying a parity check
relation depends on the parity check coefficients in a finite ring. Therefore, the SPC constraints
in a finite ring would not be in a nice form as above.

In the rest of this chapter we are going to show that SPC constraints form a complete set of orthogonal basis functions 
for $\setfieldqn$. The first step of this process is the following lemma which analyzes the inner product of 
two SPC constraints.

\begin{lemma}\label{spcinnerproduct}\emph{Inner product of two SPC constraints:} Let  $p_{1}(\vect{x}),p_{2}(\vect{x})\in \setfieldqn$ 
are two SPC constraints such that 
\begin{eqnarray}
  p_{1}(\vect{x}) &=& \n{\fieldqn}{r_1(\vect{a}\vect{x}^T)} \\
  p_{2}(\vect{x}) &=& \n{\fieldqn}{r_2(\vect{b}\vect{x}^T)} \textrm{,} \nonumber  
\end{eqnarray}
where $r_{1}(x),r_2(x)\in \setfieldq$. If $\vect{a}$ and $\vect{b}$ are both nonzero vectors
in $\fieldqn$ then
\begin{equation}
  \innerproduct{p_{1}(\vect{x})}{p_2(\vect{x})} = \left\{\begin{array}{cc} 
q^{N-1}\innerproduct{r_{1}(x)}{r_{2}(\alpha x)},  & \exists \alpha \in \fieldq : \vect{b}=\alpha \vect{a} \\
0, & \textrm{otherwise} \end{array} \right. 
\end{equation}
\end{lemma}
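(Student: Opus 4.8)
The plan is to reduce the inner product of the two SPC constraints to a single sum over $\fieldqn$ of centered logarithms, and then to evaluate that sum by counting preimages of the linear functionals $\vect{i}\mapsto\vect{a}\vect{i}^T$ and $\vect{i}\mapsto\vect{b}\vect{i}^T$. First I would unfold the definition of the inner product through $\operatorn{L}{.}$. Since $p_1(\vect{i})$ is proportional to $r_1(\vect{a}\vect{i}^T)$ and the centering term $-\frac{1}{q^N}\sum_{\vect{j}}\log p_1(\vect{j})$ absorbs the constant normalization factor, the $\vect{i}$-th coordinate of $\operatorn{L}{p_1(\vect{x})}$ equals $\log r_1(\vect{a}\vect{i}^T)-m_1$ with $m_1\triangleq\frac{1}{q^N}\sum_{\vect{j}\in\fieldqn}\log r_1(\vect{a}\vect{j}^T)$, and similarly for $p_2$ with $m_2$. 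Thus the inner product becomes $\sum_{\vect{i}\in\fieldqn}\bigl(\log r_1(\vect{a}\vect{i}^T)-m_1\bigr)\bigl(\log r_2(\vect{b}\vect{i}^T)-m_2\bigr)$.

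The decisive ingredient is a counting fact for linear functionals over a finite field. For a nonzero $\vect{a}\in\fieldqn$ the map $\vect{i}\mapsto\vect{a}\vect{i}^T$ is a surjection onto $\fieldq$ whose fibers are cosets of an $(N-1)$-dimensional kernel, so every value is attained exactly $q^{N-1}$ times; this already gives $m_1=\frac{1}{q}\sum_{u\in\fieldq}\log r_1(u)$, which is precisely the centering constant of $r_1(x)$ in $\setfieldq$. If $\vect{a}$ and $\vect{b}$ are linearly independent, the joint map $\vect{i}\mapsto(\vect{a}\vect{i}^T,\vect{b}\vect{i}^T)$ is onto $\fieldq\times\fieldq$ with fibers of size $q^{N-2}$, so the sum factors as $q^{N-2}\bigl(\sum_{u}(\log r_1(u)-m_1)\bigr)\bigl(\sum_{v}(\log r_2(v)-m_2)\bigr)$. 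Each factor vanishes because the centered logarithms sum to zero, which delivers the ``otherwise'' (orthogonal) case; note that $\vect{a},\vect{b}$ being nonzero and non-proportional is exactly linear independence.

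In the remaining case there is $\alpha\in\fieldq$ (necessarily nonzero, since $\vect{b}\neq\vect{0}$) with $\vect{b}=\alpha\vect{a}$, so $\vect{b}\vect{i}^T=\alpha(\vect{a}\vect{i}^T)$. Grouping the $\vect{i}$ according to the common value $u=\vect{a}\vect{i}^T$, each occurring $q^{N-1}$ times, collapses the sum to $q^{N-1}\sum_{u\in\fieldq}(\log r_1(u)-m_1)(\log r_2(\alpha u)-m_2)$. To recognize this as $q^{N-1}\innerproduct{r_1(x)}{r_2(\alpha x)}$ I would check that $m_2$ is the correct centering constant for $r_2(\alpha x)$: since $\alpha\neq 0$, multiplication by $\alpha$ permutes $\fieldq$, so $\frac{1}{q}\sum_u\log r_2(\alpha u)=\frac{1}{q}\sum_v\log r_2(v)=m_2$, and the expression then matches the definition of the inner product on $\setfieldq$ applied to $r_1(x)$ and $r_2(\alpha x)$.

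I expect the main obstacle to be the joint counting step for linearly independent $\vect{a},\vect{b}$ — establishing via rank-nullity for the combined functional that all fibers have size $q^{N-2}$ — together with the bookkeeping that keeps the centering constants $m_1,m_2$ consistent between $\setfieldqn$ and $\setfieldq$, where the bijectivity of $u\mapsto\alpha u$ is essential.
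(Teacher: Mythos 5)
Your proof is correct and follows essentially the same route as the paper's: both reduce the inner product to sums over $\fieldqn$ of logarithms, then evaluate them by counting fibers of the linear functionals ($q^{N-1}$ solutions of $\vect{a}\vect{i}^T=j$ for a single nonzero $\vect{a}$, and $q^{N-2}$ joint solutions when $\vect{a},\vect{b}$ are linearly independent), with the same case split on proportionality. The only difference is presentational — you carry the centered form $\sum(\log r_1(\vect{a}\vect{i}^T)-m_1)(\log r_2(\vect{b}\vect{i}^T)-m_2)$ throughout, so orthogonality appears as a product of two vanishing sums, whereas the paper keeps the uncentered form and cancels two equal terms — but the underlying computation is identical.
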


The proof of this lemma is given in Appendix \ref{spcinnerproductproof}. 

\section{Orthogonal Subspace Decomposition of $\setfieldqn$}
Generating an SPC constraint in $\setfieldqn$ based on a pmf in $\setfieldq$
and a parity check coefficient vector $\vect{a}$ can be viewed as a mapping from
$\setfieldq$ to $\setfieldqn$ parameterized on $\vect{a}$ as given 
below. 
\begin{equation}
\sopsub{\vect{a}}{p(x)}:\setfieldq \rightarrow \setfieldqn \triangleq \n{\fieldqn}{p(\vect{a}\vect{x}^T)}
\end{equation}
Any SPC constraint with parity check coefficient vector $\vect{a}$ is in $\imagesop{\vect{a}}$. The first
of the following pair of lemmas states that $\imagesop{\vect{a}}$ is a subspace of 
$\setfieldqn$ and the second one investigates the relation between two such subspaces. 

\begin{lemma} \label{imagesubspacelemma} For any nonzero parity check coefficient vector $\vect{a}$ in $\fieldqn$,
 $\imagesop{\vect{a}}$ is a $q-1$ dimensional subspace of $\setfieldqn$.
\end{lemma}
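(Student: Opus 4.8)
The plan is to show that the SPC-generating map $\sopsub{\vect{a}}{\cdot}:\setfieldq\rightarrow\setfieldqn$ is a \emph{linear and injective} transformation between vector spaces over $\mathbb{R}$. Once this is in hand, $\imagesop{\vect{a}}$ is automatically a subspace of $\setfieldqn$ (being the image of a linear map), and its dimension equals $\dim\setfieldq=q-1$ by Theorem \ref{dimensionalitytheorem} combined with the rank-nullity theorem. So the whole lemma reduces to two claims about $\sopsub{\vect{a}}{\cdot}$: linearity and injectivity.

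First I would verify linearity, i.e. $\sopsub{\vect{a}}{\alpha\boxtimes p(x)\boxplus\beta\boxtimes r(x)}=\alpha\boxtimes\sopsub{\vect{a}}{p(x)}\boxplus\beta\boxtimes\sopsub{\vect{a}}{r(x)}$ for all $p,r\in\setfieldq$ and $\alpha,\beta\in\mathbb{R}$. The key observation is that substituting $\vect{a}\vect{x}^T$ for the argument commutes with pointwise products and powers of pmfs, and that each inner normalization $\n{\fieldq}{\cdot}$ introduces only a positive multiplicative constant which the outer normalization $\n{\fieldqn}{\cdot}$ then absorbs via $\n{\fieldqn}{\beta\alpha(\vect{x})}=\n{\fieldqn}{\alpha(\vect{x})}$. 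For example, $\sopsub{\vect{a}}{p(x)\boxplus r(x)}=\n{\fieldqn}{p(\vect{a}\vect{x}^T)r(\vect{a}\vect{x}^T)}$, which is precisely $\sopsub{\vect{a}}{p(x)}\boxplus\sopsub{\vect{a}}{r(x)}$; the scalar-multiplication case is identical with powers in place of products. This step is therefore just bookkeeping of normalization constants.

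The crux of the argument is injectivity, and this is the only place the hypothesis $\vect{a}\neq\vect{0}$ is used. Suppose $\sopsub{\vect{a}}{p(x)}=\sopsub{\vect{a}}{r(x)}$. By the definition of $\n{\fieldqn}{\cdot}$ this forces $p(\vect{a}\vect{x}^T)=c\,r(\vect{a}\vect{x}^T)$ for some constant $c>0$ and all $\vect{x}\in\fieldqn$. Because $\vect{a}$ is nonzero, the linear functional $\vect{x}\mapsto\vect{a}\vect{x}^T$ is surjective onto $\fieldq$, so for every $u\in\fieldq$ there exists $\vect{x}$ with $\vect{a}\vect{x}^T=u$; hence $p(u)=c\,r(u)$ on all of $\fieldq$. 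Since $p$ and $r$ are both pmfs summing to one, this forces $c=1$ and therefore $p=r$. Thus $\kernel{\mathcal{S}_{\vect{a}}}$ contains only the additive identity $\theta(x)=1/q$, and injectivity follows.

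Finally I would assemble the pieces: $\imagesop{\vect{a}}$ is the image of a linear map and hence a subspace of $\setfieldqn$, and by the rank-nullity theorem $\dim\imagesop{\vect{a}}=\dim\setfieldq-\dim\kernel{\mathcal{S}_{\vect{a}}}=(q-1)-0=q-1$. I expect the injectivity step — reducing equality of the two SPC constraints to proportionality of $p$ and $r$, and then invoking surjectivity of $\vect{x}\mapsto\vect{a}\vect{x}^T$ — to be the main obstacle, since that is exactly where the nonzero-coefficient hypothesis does its work, whereas the linearity verification is mechanical.
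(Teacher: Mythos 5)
Your proof is correct and takes essentially the same route as the paper's: establish that $\sopsub{\vect{a}}{\cdot}$ is a linear map from $\setfieldq$ to $\setfieldqn$ (so its image is automatically a subspace), then use injectivity for nonzero $\vect{a}$ to conclude $\dim\imagesop{\vect{a}}=\dim\setfieldq=q-1$. The only difference is that the paper declares injectivity ``obvious'' for nonzero $\vect{a}$, whereas you spell it out (proportionality $p(\vect{a}\vect{x}^{T})=c\,r(\vect{a}\vect{x}^{T})$, surjectivity of $\vect{x}\mapsto\vect{a}\vect{x}^{T}$, and normalization forcing $c=1$), which is a sound elaboration of exactly the step the paper skips.
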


\begin{lemma}\label{subspacerelationlemma} For any two nonzero parity check coefficient vectors $\vect{a},\vect{b}\in\fieldqn$ 
\begin{eqnarray}
\exists \alpha \in \fieldq :  \vect{a}=\alpha \vect{b} &\implies& \imagesop{\vect{a}}=\imagesop{\vect{b}}\\  
\nexists \alpha \in \fieldq : \vect{a}=\alpha \vect{b}&\implies& \imagesop{\vect{a}} \perp\imagesop{\vect{b}}
\end{eqnarray}
\end{lemma}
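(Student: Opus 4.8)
The plan is to prove Lemma \ref{subspacerelationlemma} by leveraging the previously established inner-product formula for SPC constraints (Lemma \ref{spcinnerproduct}) together with the subspace structure of $\imagesop{\vect{a}}$ from Lemma \ref{imagesubspacelemma}. The two implications are handled separately.

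First I would dispose of the scalar-multiple case. Suppose $\vect{a}=\alpha\vect{b}$ for some nonzero $\alpha\in\fieldq$. The claim $\imagesop{\vect{a}}=\imagesop{\vect{b}}$ is a set equality, so I would show mutual inclusion. Take any SPC constraint $\sopsub{\vect{a}}{p(x)}=\n{\fieldqn}{p(\vect{a}\vect{x}^T)}$ in $\imagesop{\vect{a}}$. Substituting $\vect{a}=\alpha\vect{b}$ gives $p(\vect{a}\vect{x}^T)=p(\alpha\,\vect{b}\vect{x}^T)$. Now define $\tilde{p}(x)\triangleq p(\alpha x)$; since $x\mapsto\alpha x$ is a bijection on $\fieldq$ (as $\alpha\neq 0$ in a field) and merely permutes the arguments, $\tilde{p}(x)$ is again a strictly positive pmf in $\setfieldq$. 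Then $p(\vect{a}\vect{x}^T)=\tilde{p}(\vect{b}\vect{x}^T)$, so the constraint equals $\sopsub{\vect{b}}{\tilde{p}(x)}\in\imagesop{\vect{b}}$. This shows $\imagesop{\vect{a}}\subseteq\imagesop{\vect{b}}$, and the reverse inclusion follows by the symmetric argument using $\vect{b}=\alpha^{-1}\vect{a}$.

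Next I would handle the orthogonality case. Suppose no $\alpha\in\fieldq$ satisfies $\vect{a}=\alpha\vect{b}$. To prove $\imagesop{\vect{a}}\perp\imagesop{\vect{b}}$ it suffices to show that every element of $\imagesop{\vect{a}}$ is orthogonal to every element of $\imagesop{\vect{b}}$, i.e.\ $\innerproduct{p_1(\vect{x})}{p_2(\vect{x})}=0$ for arbitrary SPC constraints $p_1(\vect{x})=\n{\fieldqn}{r_1(\vect{a}\vect{x}^T)}$ and $p_2(\vect{x})=\n{\fieldqn}{r_2(\vect{b}\vect{x}^T)}$ with $r_1,r_2\in\setfieldq$. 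This is exactly the regime covered by Lemma \ref{spcinnerproduct}: since $\vect{a},\vect{b}$ are nonzero and there is no $\alpha$ with $\vect{b}=\alpha\vect{a}$, the lemma's second case applies and yields $\innerproduct{p_1(\vect{x})}{p_2(\vect{x})}=0$ directly. A small point of care is to confirm that the two phrasings of the non-proportionality condition agree: $\nexists\alpha:\vect{a}=\alpha\vect{b}$ versus $\nexists\alpha:\vect{b}=\alpha\vect{a}$. These are equivalent for nonzero vectors in a vector space over a field, because invertibility of nonzero scalars lets one direction be rewritten as the other; I would note this equivalence explicitly so that Lemma \ref{spcinnerproduct} can be invoked verbatim.

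I expect the main subtlety to be conceptual rather than computational: the orthogonality of two subspaces is a statement about all pairs of elements, so I must be careful to verify that an arbitrary element of $\imagesop{\vect{a}}$ is itself an SPC constraint of the form required by Lemma \ref{spcinnerproduct}, rather than only checking a spanning set. By Lemma \ref{imagesubspacelemma} the image $\imagesop{\vect{a}}$ is precisely the set of SPC constraints with coefficient vector $\vect{a}$ (the image of the map $\sopsub{\vect{a}}{\cdot}$), so every element does have the needed form and the pairwise inner-product computation covers the whole subspace. With that observation, the orthogonality case reduces entirely to Lemma \ref{spcinnerproduct}, and the proof is essentially a matter of carefully matching hypotheses. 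The scalar-multiple case, by contrast, is a short change-of-variable argument whose only delicate point is checking that $p(\alpha x)$ remains a valid strictly positive pmf, which is immediate since multiplication by a nonzero field element permutes $\fieldq$.
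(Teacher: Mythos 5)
Your proposal is correct and matches the paper's own proof essentially step for step: the scalar-multiple case via the change of variable $\tilde{p}(x)=p(\alpha x)$ giving mutual inclusion, and the orthogonality case by applying Lemma \ref{spcinnerproduct} to arbitrary pairs of SPC constraints drawn from the two images. Your extra remarks (that $\alpha\neq 0$ forces $p(\alpha x)\in\setfieldq$, and that the two phrasings of non-proportionality are equivalent for nonzero vectors) are points the paper leaves implicit, but they do not change the argument.
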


The proofs of this lemmas are given in Appendix \ref{imagesubspacelemmaproof} and Appendix \ref{subspacerelationlemmaproof} respectively. 

Lemma \ref{subspacerelationlemma} suggests that $\setfieldqn$ can be decomposed into orthogonal subspaces
by using a sufficient number of parity check coefficient vectors which are all pairwise linearly
independent. Fortunately, we can borrow such a set of parity check coefficient vectors
from  coding theory as explained by the following theorem.   

\begin{theorem}\label{decompositiontheorem}
  There exists a set $\set{H}$ of pairwise linearly independent parity check vectors in $\fieldq$ of length $N$ such that
\begin{equation}
\bigoplus_{\vect{a}\in\set{H}} \imagesop{\vect{a}} = \setfieldqn  \textrm{.} \label{orthogonaldecompositionequation}
\end{equation}
where $\bigoplus$ denotes orthogonal direct summation. 
\end{theorem}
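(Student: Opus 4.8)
The plan is to construct $\set{H}$ explicitly as a complete set of representatives of the one-dimensional subspaces of $\fieldqn$ and then to close the argument by a dimension count. First I would note that two nonzero vectors of $\fieldqn$ are linearly dependent exactly when they span the same one-dimensional subspace. Since each one-dimensional subspace of $\fieldqn$ contains precisely $q-1$ nonzero vectors, and $\fieldqn$ has $q^N-1$ nonzero vectors in total, the number of distinct one-dimensional subspaces is $(q^N-1)/(q-1)$. I would therefore take $\set{H}$ to consist of exactly one nonzero representative from each such subspace, so that $\set{H}$ is a maximal pairwise linearly independent family with
\begin{equation}
  \setsize{H} = \frac{q^N-1}{q-1} \nonumber \textrm{.}
\end{equation}
This is precisely the coding-theoretic object alluded to before the theorem: the columns of a Hamming parity-check matrix over $\fieldq$ form such a maximal set of pairwise linearly independent vectors.

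Next I would invoke Lemma \ref{subspacerelationlemma}. Because every pair of distinct vectors $\vect{a},\vect{b}\in\set{H}$ is linearly independent, the second implication of that lemma gives $\imagesop{\vect{a}} \perp \imagesop{\vect{b}}$, so the subspaces indexed by $\set{H}$ are pairwise orthogonal. Pairwise orthogonality of a finite family of subspaces of an inner product space forces their sum to be direct: if a sum of vectors, one chosen from each subspace, were zero, then taking the inner product of that sum with any single summand annihilates all cross terms and forces that summand to vanish. Hence the subspace sum is a genuine orthogonal direct sum $\bigoplus_{\vect{a}\in\set{H}} \imagesop{\vect{a}}$, and its dimension is the sum of the dimensions of its summands.

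Finally I would carry out the count. By Lemma \ref{imagesubspacelemma} each summand $\imagesop{\vect{a}}$ has dimension $q-1$, so
\begin{equation}
  \dim \bigoplus_{\vect{a}\in\set{H}} \imagesop{\vect{a}} = \setsize{H}\,(q-1) = \frac{q^N-1}{q-1}\,(q-1) = q^N-1 \nonumber \textrm{.}
\end{equation}
Since this equals $\dim \setfieldqn = q^N-1$ established above, and the orthogonal direct sum is a subspace of $\setfieldqn$, a subspace whose dimension matches that of the ambient space must be the whole space; thus \eqref{orthogonaldecompositionequation} holds. The only step deserving care is the linear-algebra fact that pairwise orthogonality yields a direct sum whose dimension is additive, but this is routine. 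I do not anticipate a serious obstacle, since the arithmetic $\frac{q^N-1}{q-1}\,(q-1)=q^N-1$ matches the dimension of $\setfieldqn$ exactly, leaving no slack in either direction.
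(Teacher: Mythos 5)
Your proof is correct and follows essentially the same route as the paper's: choose a maximal pairwise linearly independent set $\set{H}$ of size $\frac{q^N-1}{q-1}$, apply Lemma \ref{subspacerelationlemma} for pairwise orthogonality and Lemma \ref{imagesubspacelemma} for the dimension $q-1$ of each summand, then conclude by the dimension count $\setsize{H}(q-1)=q^N-1=\dim\setfieldqn$. The only cosmetic difference is that you construct $\set{H}$ by counting one-dimensional subspaces of $\fieldqn$ directly (and spell out why pairwise orthogonality makes the sum direct), whereas the paper obtains the same set by citing the columns of the Hamming code parity check matrix.
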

\begin{proof}
For all nonzero $\vect{a}$, $\imagesop{\vect{a}}$ is a subspace 
of $\setfieldqn$. Orthogonal direct sum of subspaces is again 
a subspace of $\setfieldqn$. Therefore, we can complete
the proof by finding an $\set{H}$ which makes
\begin{equation}
   \dim \bigoplus_{\vect{a}\in\set{H}} \imagesop{\vect{a}} = \dim \setfieldqn \textrm{.}
\end{equation}

  Let the elements of $\set{H}$ be selected by \emph{transposing the columns} of 
 the parity check matrix of the Hamming code in $\fieldq$ with $N$ rows. 
 It is known from coding theory that the parity check matrix of 
 such a Hamming code consists of $\frac{q^{N}-1}{q-1}$ columns all
of which are pairwise linearly independent \cite{blahut}. Therefore,
$\set{H}$ contains $\frac{q^{N}-1}{q-1}$ \emph{pairwise linearly 
independent} vectors.  Since these vectors are pairwise linearly 
independent, for any $\vect{a},\vect{b} \in \set{H}$
\begin{equation} 
  \imagesop{\vect{a}} \perp \imagesop{\vect{b}}  
\end{equation}
due to Lemma \ref{subspacerelationlemma}. Hence, 
\begin{equation}
  \dim \bigoplus_{\vect{a}\in \set{H}} \imagesop{\vect{a}} =\sum_{\vect{a}\in \set{H}} \dim \imagesop{\vect{a}} \textrm{,}
\end{equation}
 since these subspace are all orthogonal. $\imagesop{\vect{a}}$ is a $q-1$ dimensional subspace due to Lemma \ref{imagesubspacelemma}.
Therefore,  
\begin{eqnarray}
  \dim \bigoplus_{\vect{a}\in \set{H}} \imagesop{\vect{a}} &=&\sum_{\vect{a}\in \set{H}} (q-1) \nonumber \\
  &=&\setsize{H}(q-1) \nonumber \\
  &=&q^{N}-1 \nonumber \\
  &=&\dim \setfieldqn \textrm{,}
\end{eqnarray}
which completes the proof. 
\end{proof}
\section{The Canonical Factorization}
\begin{corollary}\label{canoniccorollary}\emph{(The fundamental result of the thesis:)} Any multivariate pmf in $\setfieldqn$ 
can be expressed as a product of functions that depend on a linear combination of their arguments. 
\end{corollary}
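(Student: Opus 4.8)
The plan is to read off the corollary directly from the orthogonal direct sum decomposition established in Theorem \ref{decompositiontheorem}, after translating the vector-space addition $\boxplus$ back into ordinary multiplication of functions. First I would fix the set $\set{H}$ of pairwise linearly independent parity check coefficient vectors supplied by that theorem, so that $\bigoplus_{\vect{a}\in\set{H}}\imagesop{\vect{a}}=\setfieldqn$. Because this is an orthogonal direct sum, every pmf $p(\vect{x})\in\setfieldqn$ decomposes uniquely as
\begin{equation}
  p(\vect{x})=\ssum_{\vect{a}\in\set{H}} p_{\vect{a}}(\vect{x}) \textrm{,} \nonumber
\end{equation}
where each component $p_{\vect{a}}(\vect{x})$ is the orthogonal projection of $p(\vect{x})$ onto $\imagesop{\vect{a}}$, and in particular lies in $\imagesop{\vect{a}}$.

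By the definition $\imagesop{\vect{a}}=\image{\mathcal{S}_{\vect{a}}}$, each component is an SPC constraint with parity check coefficient vector $\vect{a}$, so there is a univariate pmf $r_{\vect{a}}(x)\in\setfieldq$ with $p_{\vect{a}}(\vect{x})=\n{\fieldqn}{r_{\vect{a}}(\vect{a}\vect{x}^T)}$. The point is that this component depends on the arguments only through the single linear combination $\vect{a}\vect{x}^T=a_1x_1+\ldots+a_Nx_N$. The remaining task is to convert the $\boxplus$-sum above into an honest product. Applying the definition $p_1(\vect{x})\boxplus p_2(\vect{x})=\n{\fieldqn}{p_1(\vect{x})p_2(\vect{x})}$ repeatedly, together with the scale invariance $\n{\fieldqn}{\beta\alpha(\vect{x})}=\n{\fieldqn}{\alpha(\vect{x})}$, the iterated sum collapses to
\begin{equation}
  p(\vect{x})=\n{\fieldqn}{\prod_{\vect{a}\in\set{H}} r_{\vect{a}}(\vect{a}\vect{x}^T)}
  =\frac{1}{\gamma}\prod_{\vect{a}\in\set{H}} r_{\vect{a}}(\vect{a}\vect{x}^T) \textrm{,} \nonumber
\end{equation}
for a normalizing constant $\gamma$. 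Each factor $r_{\vect{a}}(\vect{a}\vect{x}^T)$ is a function of a linear combination of the arguments, which is exactly the asserted form.

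The genuine content has already been discharged by Theorem \ref{decompositiontheorem} and Lemma \ref{imagesubspacelemma}; what remains here is bookkeeping, and the only step demanding care is the last translation. The hard part will be verifying that replacing the abstract $\boxplus$-combination of projections by the product of their representatives is legitimate, i.e.\ that the per-component scale factors absorbed by $\n{\fieldqn}{\cdot}$ do not interfere. This is precisely where the scale invariance of the normalization operator is used, so I would state that identity explicitly before collapsing the sum. I would also record that the decomposition, and hence the factorization, is unique because it arises from orthogonal projection onto a direct sum of subspaces; this uniqueness is what justifies calling the result the \emph{canonical} factorization.
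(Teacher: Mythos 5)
Your proposal is correct and follows essentially the same route as the paper's own proof: invoke Theorem \ref{decompositiontheorem} to decompose $p(\vect{x})$ as a $\boxplus$-sum of its projections onto the subspaces $\imagesop{\vect{a}}$, identify each projection as an SPC constraint $\n{\fieldqn}{r_{\vect{a}}(\vect{a}\vect{x}^T)}$, and collapse the sum into a product via the definition of $\boxplus$. Your extra care about the scale invariance of $\n{\fieldqn}{\cdot}$ and the remark on uniqueness are sound additions but do not change the argument; the paper treats uniqueness separately in Section \ref{uniquenesssection}.
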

\begin{proof}
Let $\set{H}$ be set of parity check vectors satisfying (\ref{orthogonaldecompositionequation}), existence
of which is guaranteed by Theorem \ref{decompositiontheorem}.
Let the vectors in $\set{H}$ be enumerated as $\vect{a}_1,\vect{a}_2,\ldots,\vect{a}_{\setsize{H}}$.
Then any $p(\vect{x}) \in \setfieldqn$ can be expressed as 
\begin{equation}
  p(\vect{x})=\ssum_{i=1}^{\setsize{H}} p_i(\vect{x}) 
\end{equation}
where $p_i(\vect{x})$ is the projection of $p(\vect{x})$ onto
$\imagesop{\vect{a}_i}$. Since $p_i(\vect{x})$ is in $\imagesop{\vect{a}_i}$,
there exist an $r_i(x)\in\setfieldq$ such that 
\begin{equation}
  p_i(\vect{x}) =\n{\fieldqn}{ r_i(\vect{a}_i\vect{x}^T)} \textrm{.}
\end{equation}
Then $p(\vect{x})$ can be expressed as
\begin{equation}
  p(\vect{x})=\ssum_{i=1}^{\setsize{H}} \n{\fieldqn}{  r_{i}(\vect{a}_i\vect{x}^{T})} \textrm{.}
\end{equation}
Employing the definition of addition in $\setfieldqn$ yields the desired factorization. 
\begin{eqnarray}
  p(\vect{x})&=&\n{\fieldqn}{\prod_{i=1}^{\setsize{H}}r_{i}(\vect{a}_i\vect{x}^{T})} \\
  &=&\frac{1}{\gamma} \prod_{i=1}^{\setsize{H}}r_{i}(\vect{a}_i\vect{x}^{T}) \textrm{,}
\end{eqnarray}
where $\gamma$ is equal to $\sum_{\forall \vect{i} \in \fieldqn}\prod_{i=1}^{\setsize{H}}r_{i}(\vect{a}_i\vect{x}^{T})$.
\end{proof}

\begin{definition}\label{canonicdefinition}
\emph{The canonical factorization:} A factorization of a multivariate pmf is called the canonical factorization of the 
pmf if all factor functions are SPC factors and parity check coefficient vectors of all SPC factors
are pairwise linearly independent.  
\end{definition}

The canonical factorization of a multivariate pmf in $\setfieldqn$ can be obtained by projecting 
the pmf onto the subspaces $\imagesop{\vect{a}_i}$ for $\vect{a}_i \in \set{H}$. In order to compute
this projection a set of orthonormal basis pmfs for $\imagesop{\vect{a}_i}$ is required. 
We can derive such a set of orthonormal basis pmfs from the orthonormal basis
pmfs for $\setfieldq$  given in Section \ref{orthobasispmfs} by using the first part of Lemma \ref{spcinnerproduct}. 
The inner product of two SPC constraints $\n{\fieldqn}{s_j(\vect{a}_i\vect{x}^{T})}$ and $\n{\fieldqn}{s_k(\vect{a}_i\vect{x}^{T})}$
which are derived from $s_j(x)$ and $s_k(x)$ defined in (\ref{orthoinpmf}) is
\begin{equation}
  \innerproduct{\n{\fieldqn}{s_j(\vect{a}_i\vect{x}^{T})}}{\n{\fieldqn}{s_k(\vect{a}_i\vect{x}^{T})}}=q^{N-1}\innerproduct{s_j(x)}{s_k(x)} 
\end{equation}
due to Lemma  \ref{spcinnerproduct}. Consequently,
\begin{equation}
  \innerproduct{\n{\fieldqn}{s_j(\vect{a}_i\vect{x}^{T})}}{\n{\fieldqn}{s_k(\vect{a}_i\vect{x}^{T})}}=
  \left\{\begin{array}{cc}q^{N-1}, & k=j \\ 0 \end{array} \right. 
\end{equation}
Therefore, the set 
given below is a set of orthonormal basis pmfs 
for $\imagesop{\vect{a}_i}$.
\begin{equation}
\left\{q^{-\frac{N-1}{2}}\boxtimes \n{\fieldqn}{s_1(\vect{a}_i\vect{x}^{T})},q^{-\frac{N-1}{2}}
\boxtimes \n{\fieldqn}{s_2(\vect{a}_i\vect{x}^{T})},\ldots, 
  q^{-\frac{N-1}{2}}\boxtimes \n{\fieldqn}{s_{q-1}(\vect{a}_i\vect{x}^{T})}\right\} 
\end{equation}
Then the projection of $p(\vect{x})$ onto $\imagesop{\vect{a}_i}$, which is denoted by $\n{\fieldqn}{r_{i}(\vect{a}_i\vect{x}^T)}$,
can be obtained as 
\begin{equation}
  \n{\fieldqn}{r_{i}(\vect{a}_i\vect{x}^T)}=\ssum_{j=1}^{q-1}q^{-(N-1)}\boxtimes
  \innerproduct{\n{\fieldqn}{s_j(\vect{a}_i\vect{x}^{T})}}{p(\vect{x})}\boxtimes
  \n{\fieldqn}{s_j(\vect{a}_i\vect{x}^{T})} \textrm{.} \label{theanalysisequation}
\end{equation}
Moreover, due to the linearity of the mapping $\sopsub{\vect{a}_i}{.}$
\begin{equation}
  r_i(x)=\ssum_{j=1}^{q-1}q^{-(N-1)}\boxtimes
  \innerproduct{\n{\fieldqn}{s_j(\vect{a}_i\vect{x}^{T})}}{p(\vect{x})}\boxtimes s_j(x) \textrm{.} \label{theanalysisequation2}
\end{equation}

\begin{example}
Suppose that we are required to find the canonical factorization of $p_{2}(x_0,x_1)$ given
in Example \ref{spcexample1}. We can decompose $\setfieldnn{3}{2}$  into orthogonal subspaces
with a set $\set{H}$ containing $\frac{3^{2}-1}{3-1}=4$ pairwise linearly independent 
parity check  vectors of length two. Such an $\set{H}$
can be selected as 
\begin{equation}
  \set{H}=\left\{[1,0],[0,1],[1,1],[1,2]\right\}
\end{equation} 
The subspaces of $\setfieldnn{3}{2}$  based on these parity check vectors are
\begin{eqnarray}
  \imagesop{[1,0]}&=&\left\{p(x_0,x_1)=\frac{1}{3}r(x_0)=
    \frac{1}{3}\left[\begin{array}{ccc}r(0) &r(0)&r(0)\\r(1)&r(1)&r(1)\\r(2)&r(2)&r(2) \end{array}\right]: 
    r(x)\in\setfield{3} \right\} \nonumber \\
  \imagesop{[0,1]}&=&\left\{p(x_0,x_1)=\frac{1}{3}r(x_1)=
    \frac{1}{3}\left[\begin{array}{ccc}r(0) &r(1)&r(2)\\r(0)&r(1)&r(2)\\r(0)&r(1)&r(2) \end{array}\right]: 
    r(x)\in\setfield{3} \right\} \nonumber 
\end{eqnarray}
\begin{eqnarray}
  \imagesop{[1,1]}&=&\left\{p(x_0,x_1)=\frac{1}{3}r(x_0+x_1)=
    \frac{1}{3}\left[\begin{array}{ccc}r(0) &r(1)&r(2)\\r(1)&r(2)&r(0)\\r(2)&r(1)&r(0) \end{array}\right]: 
    r(x)\in\setfield{3} \right\} \nonumber \\
  \imagesop{[1,2]}&=&\left\{p(x_0,x_1)=\frac{1}{3}r(x_0+2x_1)=
    \frac{1}{3}\left[\begin{array}{ccc}r(0) &r(1)&r(2)\\r(2)&r(0)&r(1)\\r(1)&r(2)&r(0) \end{array}\right]: 
    r(x)\in\setfield{3} \right\} \nonumber 
\end{eqnarray}

Let the projections of $p_{2}(x_0,x_1)$ onto these subspaces be denoted with 
$\frac{1}{3}r_1(x_0)$, $\frac{1}{3}r_2(x_1)$, \mbox{$\frac{1}{3}r_3(x_0+x_1)$}, and $\frac{1}{3}r_4(x_0+2x_1)$ respectively.
These pmfs can be computed using  (\ref{theanalysisequation2}) as
\begin{eqnarray}
  r_1(x)=\left\{\begin{array}{cc}0.2, &x=0 \\0.4, &x=1 \\ 0.4, &x=2 \end{array} \right. &,&
  r_2(x)=\left\{\begin{array}{cc}\frac{1}{3}, &x=0 \\\frac{1}{3}, &x=1 \\ \frac{1}{3}, &x=2 \end{array} \right. \nonumber \\
  r_3(x)=\left\{\begin{array}{cc}0.4, &x=0 \\0.5, &x=1 \\ 0.1, &x=2 \end{array} \right. &,&
  r_4(x)=\left\{\begin{array}{cc}0.3, &x=0 \\0.6, &x=1 \\ 0.1, &x=2 \end{array} \right. \nonumber  \textrm{.}
\end{eqnarray}
 Finally, it can be verified that 
\begin{equation}
  p_{2}(x_0,x_1)=\frac{1500}{157}r_{1}(x_0)r_{2}(x_1)r_{3}(x_0+x_1)r_4(x_0+2x_1) \nonumber \textrm{.}
\end{equation}
\end{example}
\chapter[PROPERTIES AND SPECIAL CASES OF THE  CANONICAL\\ FACTORIZATION]{PROPERTIES AND SPECIAL CASES OF \\THE  CANONICAL FACTORIZATION\label{specialcasechapter}}

\section{Introduction}

The canonical factorization deserves its name by possessing some 
important properties. This chapter explains these properties first and then
some special cases of the canonical factorization is derived. These
special cases will be important while applying the canonical 
factorization to communication theory problems in Chapter \ref{applicationchapter}. 
This chapter begins with introducing a matrix notation to represent
local functions in Section \ref{localfunctionrepsection}. Then it is shown
in Section \ref{ultimatenesssection} that the canonical factorization 
is the ultimate factorization possible. 
The uniqueness of the canonical factorization is 
explained Section \ref{uniquenesssection}. The canonical 
factorization of pmfs with known alternative factorizations
is derived in Section \ref{alternativefactorizationsection}.
This chapter ends with deriving the canonical factorization 
of the joint pmf a random vector obtained by linear transformation
of another random vector.

\section{Representation of local functions\label{localfunctionrepsection}}

In the rest of the thesis we deal frequently with local functions.
We adopt a matrix notation to indicate the variables that a factor function 
depends. We use $\fieldq$-valued diagonal matrices such that some of their entries on the main
diagonal are $1$ and the rest are all $0$. For instance, 
\begin{equation}
  p(\vect{x})=p(\vect{x}\vect{D})
\end{equation}  
indicates that the pmf $p(\vect{x})$ depends on only to the components of $\vect{x}$
associated with a $1$ on the diagonal of the matrix $\vect{D}$. We call such matrices
 dependency matrices. Some special dependency matrices we use in the thesis are 
$\vect{E}_i$, $\vect{I}$, and $\vect{O}$. $\vect{E}_i$ denotes the dependency matrix
with a $1$ only on the $i^{th}$ entry of its diagonal. The other two matrices
are the identity matrix and the all-zeros matrix respectively. 

A local pmf is orthogonal to  some SPC constraints as shown by the following lemma.
This lemma is  quite useful not only in this chapter but also 
in Chapter \ref{markovchapter}.

\begin{lemma} \label{propertyorthogonallemma} For any $p(\vect{x})\in\setfieldqn$, any nonzero $\vect{a}\in \fieldqn$, 
and any dependency matrix $\vect{D}$
\begin{equation}
  p(\vect{x})=p(\vect{x}\vect{D}) \land \vect{a}\vect{D}\neq \vect{a} \implies p(\vect{x}) \perp \imagesop{\vect{a}} \textrm{.}
\end{equation}
\end{lemma}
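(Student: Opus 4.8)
The plan is to show that $p(\vect{x})$ is orthogonal to every vector in $\imagesop{\vect{a}}$. Since each element of $\imagesop{\vect{a}}$ is of the form $\n{\fieldqn}{r(\vect{a}\vect{x}^T)}$ for some $r(x)\in\setfieldq$, it suffices to establish $\innerproduct{p(\vect{x})}{\n{\fieldqn}{r(\vect{a}\vect{x}^T)}}=0$ for every such $r$.

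To evaluate this inner product I would pass to its covariance form. Let $\vect{W}=[W_1,\ldots,W_N]$ be uniformly distributed on $\fieldqn$, so that $W_1,\ldots,W_N$ are independent and uniform on $\fieldq$; then the inner product in $\setfieldqn$ can be written as $\innerproduct{f}{g}=q^{N}\,\mathrm{Cov}\!\left(\log f(\vect{W}),\log g(\vect{W})\right)$, as derived in Appendix \ref{covariancederivation}, and because covariance is invariant under additive constants the normalization hidden in $\n{\fieldqn}{r(\vect{a}\vect{x}^T)}$ plays no role. Writing $S=\{i:D_{ii}=1\}$ for the support of the dependency matrix, the hypothesis $p(\vect{x})=p(\vect{x}\vect{D})$ makes $\log p(\vect{W}\vect{D})$ a function of $\{W_i:i\in S\}$ only, whereas $\log r(\vect{a}\vect{W}^T)$ is a function of the single field element $U\triangleq\vect{a}\vect{W}^T$. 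The task thus reduces to showing $\mathrm{Cov}\!\left(\log p(\vect{W}\vect{D}),\log r(U)\right)=0$.

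The crux is to prove that $U$ is statistically independent of the family $\{W_i:i\in S\}$. The hypothesis $\vect{a}\vect{D}\neq\vect{a}$ supplies a coordinate $j\notin S$ with $a_j\neq 0$. Conditioning on arbitrary fixed values of $\{W_i:i\in S\}$, I would write $U=c+\sum_{i\notin S}a_iW_i$ with $c$ constant; the remaining sum contains the term $a_jW_j$ with $a_j$ invertible and $W_j$ independent and uniform, and the elementary finite-field fact that a nonzero $\fieldq$-linear combination of independent uniform variables is again uniform shows that $U$ is uniform on $\fieldq$ for \emph{every} choice of the conditioning values. A conditional law that is independent of the conditioning variables forces independence, so $\log p(\vect{W}\vect{D})$ and $\log r(U)$ are independent, hence uncorrelated, and the covariance vanishes for all $r$. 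This gives $p(\vect{x})\perp\imagesop{\vect{a}}$. I expect this finite-field independence claim to be the main obstacle: it is the field structure---specifically the invertibility of $a_j$---that forces $U$ to be exactly uniform regardless of the $S$-coordinates, and this is precisely where the field assumption (as opposed to a mere ring) is indispensable.
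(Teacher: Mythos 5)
Your proof is correct, and it follows a genuinely different route from the one the paper gives in Appendix \ref{propertyorthogonallemmaproof}. The paper proves the lemma by direct manipulation of the defining sums: it expands $\innerproduct{p(\vect{x})}{\n{\fieldqn}{r(\vect{a}\vect{x}^T)}}$, regroups the cross term over the joint values of $\vect{i}\vect{D}$ and $\vect{a}\vect{i}^{T}$, and counts the solutions of the linear system whose coefficient rows are the nonzero rows of $\vect{D}$ together with $\vect{a}$; the hypothesis $\vect{a}\vect{D}\neq\vect{a}$ enters as linear independence of $\vect{a}$ from the rows of $\vect{D}$, giving exactly $q^{N-\rank{\vect{D}}-1}$ solutions per right-hand side, so the cross term factors and cancels against the product of the two marginal sums. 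Your argument packages the same combinatorial content probabilistically: the constancy of that solution count across right-hand sides is precisely your claim that $U=\vect{a}\vect{W}^{T}$ is conditionally uniform given the coordinates in the support of $\vect{D}$, hence independent of them, and the paper's cancellation of sums becomes the vanishing covariance of independent random variables. What your route buys is modularity and a clearer view of where the field structure is indispensable (the invertibility of $a_j$ for some $j$ outside the support), reducing the lemma to the reusable elementary fact that a nonzero $\fieldq$-linear combination of independent uniform variables is uniform; what the paper's route buys is self-containedness, since it never has to introduce the auxiliary random vector $\vect{W}$ nor extend the covariance identity of Appendix \ref{covariancederivation} --- which the paper states only for the univariate space $\setfieldq$ --- to $\setfieldqn$; that extension is a two-line verification, but you should state it explicitly rather than cite the univariate appendix as if it already covered the multivariate case.
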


The proof is given in Appendix  \ref{propertyorthogonallemmaproof}.

\section{Ultimateness of the canonical factorization\label{ultimatenesssection}}

The ultimate goal of any mathematical factorization operation is to factor the 
mathematical object to its most basic building blocks. For instance, the goal 
of integer factorization is to express a natural number as a product 
of prime numbers. Similarly, the ultimate goal of polynomial factorization
is to express a polynomial as a product of irreducible polynomials. 

In the case of factoring a strictly positive multivariate pmf into  strictly positive factor functions, it is difficult to 
set an ultimate goal or to describe 
the most basic building blocks of multivariate pmfs. 
Since, any factor function in any factorization can still be expressed
as a product of other positive factor functions, a multivariate pmf 
can be factored arbitrarily in many different ways and  the factorization operation 
can continue indefinitely.  In this aspect, factoring a strictly positive 
pmf is similar to trying to factor a real number.

However, not every factorization is useful in practice. A factorization of a multivariate pmf
is useful if it expresses the pmf as a product of \emph{local} functions. Therefore, 
it is reasonable to continue to factor  a multivariate  pmf if any factor function
can still be expressed as a product of \emph{more local} factor functions. For instance,
let a factor function $\phi(\vect{x}\vect{D})$ of $p(\vect{x})$  be
expressed as 
\begin{equation}
  \phi(\vect{x}\vect{D})=\phi_1(\vect{x}\vect{D}_1)\phi_2(\vect{x}\vect{D}_2) \nonumber
\end{equation}
where $\vect{D}_i\neq \vect{D}$ but $\vect{D}\vect{D}_i=\vect{D}_i$ for $i=1,2$.
Since $\phi_1(.)$ and $\phi_2(.)$ have less number of arguments than $\phi(.)$ has,
the ultimate factorization of $p(\vect{x})$ should contain 
the product $\phi_1(\vect{x}\vect{D}_1)\phi_2(\vect{x}\vect{D}_2)$
rather than  $\phi(\vect{x}\vect{D})$. 
In this point of view, the canonical factorization is the ultimate 
factorization that one can achieve as stated by the following theorem.

\begin{theorem} An  SPC factor function with a nonzero norm cannot be factored further to functions
having less number of arguments. 
\end{theorem}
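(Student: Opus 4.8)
The plan is to argue by contradiction, using the orthogonality result of Lemma~\ref{propertyorthogonallemma} together with the additivity of the inner product over $\boxplus$. Write the SPC factor as $p(\vect{x})=\n{\fieldqn}{r(\vect{a}\vect{x}^T)}$ with $r(x)\in\setfieldq$; since $\norm{p(\vect{x})}\neq 0$, the vector $p(\vect{x})$ is not the uniform distribution, so $r(x)$ is non-uniform and $\vect{a}$ is nonzero. First I would pin down the number of arguments of $p(\vect{x})$: it depends on exactly the $w=\abs{\mathrm{supp}(\vect{a})}$ variables indexed by the nonzero positions of $\vect{a}$, because varying any such $x_i$ sweeps $\vect{a}\vect{x}^T$ through all of $\fieldq$ while $r(x)$ is non-constant. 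Hence ``factoring into functions with fewer arguments'' means writing $p(\vect{x})$ as a product of local functions each depending on at most $w-1$ variables.

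Next I would set up the contradiction. Suppose $p(\vect{x})$ factors as a product of such local functions; after normalizing each factor to a pmf and invoking the definition of $\boxplus$, as in the passage ``Representing the factorization of pmfs'', this reads $p(\vect{x})=\ssum_{j=1}^{k} q_j(\vect{x})$, where each $q_j(\vect{x})=q_j(\vect{x}\vect{D}_j)$ for a dependency matrix $\vect{D}_j$ carrying fewer than $w$ ones on its diagonal. The key elementary step is that $\vect{a}\vect{D}_j\neq\vect{a}$ for every $j$: the equality $\vect{a}\vect{D}_j=\vect{a}$ would force $\mathrm{supp}(\vect{a})$ to be contained in $\mathrm{supp}(\vect{D}_j)$, which is impossible once $\vect{D}_j$ has fewer than $w=\abs{\mathrm{supp}(\vect{a})}$ ones. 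Therefore Lemma~\ref{propertyorthogonallemma} applies to each factor and yields $q_j(\vect{x})\perp\imagesop{\vect{a}}$.

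Finally I would collect the orthogonality relations. Because $p(\vect{x})$ is itself an SPC constraint with coefficient vector $\vect{a}$, it lies in $\imagesop{\vect{a}}$, so $\innerproduct{q_j(\vect{x})}{p(\vect{x})}=0$ for every $j$. Expanding the squared norm and using that the inner product is additive over $\boxplus$ (a consequence of the linearity of $\operatorn{L}{.}$),
\begin{equation}
  \norm{p(\vect{x})}^2=\innerproduct{\ssum_{j=1}^{k} q_j(\vect{x})}{p(\vect{x})}=\sum_{j=1}^{k}\innerproduct{q_j(\vect{x})}{p(\vect{x})}=0 \nonumber \textrm{.}
\end{equation}
This forces $\norm{p(\vect{x})}=0$, contradicting the hypothesis that the SPC factor has nonzero norm, so no factorization into functions with fewer arguments can exist.

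I expect the only genuine obstacle to be the bookkeeping in the middle paragraph: confirming that the condition ``each factor has strictly fewer arguments than $p(\vect{x})$'' translates \emph{exactly} into the hypothesis $\vect{a}\vect{D}_j\neq\vect{a}$ required by Lemma~\ref{propertyorthogonallemma}. This covers even the case where a factor nominally involves variables outside $\mathrm{supp}(\vect{a})$, since that only makes $\mathrm{supp}(\vect{D}_j)$ miss part of $\mathrm{supp}(\vect{a})$ and hence cannot give $\vect{a}\vect{D}_j=\vect{a}$. Everything else is immediate from $p(\vect{x})\in\imagesop{\vect{a}}$ and the linearity of $\operatorn{L}{.}$.
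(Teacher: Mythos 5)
Your proof is correct and follows essentially the same route as the paper's: both argue by contradiction, normalize the putative factors into pmfs, invoke Lemma~\ref{propertyorthogonallemma} to get each factor orthogonal to $\imagesop{\vect{a}}$, and then use additivity of the inner product over $\boxplus$ to force $\norm{p(\vect{x})}=0$. Your version is slightly more careful than the paper's in two harmless ways -- it handles $k$ factors rather than two, and it explicitly justifies why ``fewer arguments'' implies $\vect{a}\vect{D}_j\neq\vect{a}$ (a hypothesis the paper simply builds into its setup) -- but the key lemma and the contradiction are identical.
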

\begin{proof}
 Assume that an SPC constraint, $\n{\fieldqn}{r(\vect{a}\vect{x}^T)}$, with a nonzero norm can be factored to functions having less number of
arguments. In other words, assume that $\n{\fieldqn}{r(\vect{a}\vect{x}^T)}$ can be expressed as  
\begin{eqnarray}
  \n{\fieldqn}{r(\vect{a}\vect{x}^T)}& =& \phi_1(\vect{x}\vect{D}_1)\phi_2(\vect{x}\vect{D}_2)\nonumber \\
&=& \n{\fieldqn}{\phi_1(\vect{x}\vect{D}_1)}\boxplus  \n{\fieldqn}{\phi_2(\vect{x}\vect{D}_2)}\label{theult} \textrm{,}
\end{eqnarray}
where $\vect{D}_1$ and $\vect{D}_2$ are such dependency matrices that $\vect{a}\vect{D}_1\neq \vect{a}$ and $\vect{a}\vect{D}_2\neq \vect{a}$.  
$\n{\fieldqn}{\phi_1(\vect{x}\vect{D}_1)}$ and $\n{\fieldqn}{\phi_2(\vect{x}\vect{D}_2)}$ are orthogonal to 
$\n{\fieldqn}{r(\vect{a}\vect{x}^T)}$ due to Lemma  \ref{propertyorthogonallemma}. Then (\ref{theult})
is only possible if 
\begin{equation}
  \n{\fieldqn}{r(\vect{a}\vect{x}^T)}=\n{\fieldqn}{\phi_1(\vect{x}\vect{D}_1)}=
  \n{\fieldqn}{\phi_2(\vect{x}\vect{D}_2)}=\theta(\vect{x}) \textrm{,} \nonumber
\end{equation}
which is a contradiction completing the proof. 
\end{proof}

\section{Uniqueness of the  canonical factorization\label{uniquenesssection}}

Recall that we need a set $\set{H}$ composed of $\frac{q^{N}-1}{q-1}$ pairwise linearly independent
vectors in $\fieldqn$ to derive the canonical factorization of  a pmf in $\setfieldqn$. 
There are $2^{N}-1$ nonzero vectors in $\fieldn{2}$ all of which are pairwise linearly independent. 
Hence, the set $\set{H}$ should contain all the nonzero vectors in $\fieldn{2}$. Consequently, 
the set $\set{H}$ required in the derivation of the canonical factorization 
of a pmf in $\setfieldn{2}$ is unique.  Moreover, the canonical 
factorization obtained from such a set $\set{H}$ is also unique. 

If the $\fieldq$ is not the binary field then there are $q^{N}-1$ nonzero vectors in $\fieldqn$.
Hence, we can have more than one distinct sets which contain $\frac{q^{N}-1}{q-1}$ 
pairwise linearly independent vectors in $\fieldqn$ if $q$ is not equal to two. Let $\set{H}_1=\{\vect{a}_1,\vect{a}_2,\ldots,\vect{a}_M\}$
and $\set{H}_2=\{\vect{b}_1,\vect{b}_2,\ldots,\vect{b}_M\}$ be two distinct sets containing $M=\frac{q^{N}-1}{q-1}$ pairwise linearly independent
vectors in $\fieldqn$. Using these two sets we can obtain two different canonical factorizations
of a multivariate pmf $p(\vect{x})\in \setfieldqn$ as in
\begin{eqnarray}
  p(\vect{x})&=&\n{\fieldqn}{\prod_{i=1}^{M}r_i(\vect{a}_i\vect{x}^{T})}\label{canonic1} \textrm{,}\\
  p(\vect{x})&=&\n{\fieldqn}{\prod_{i=1}^{M}t_i(\vect{b}_i\vect{x}^{T})} \label{canonic2} \textrm{,}
\end{eqnarray}
where $r_{i}(\vect{a}_i\vect{x}^{T})$ and $t_{i}(\vect{b}_i\vect{x}^{T})$ denote the projections 
of $p(\vect{x})$ onto $\imagesop{\vect{a}_i}$ and $\imagesop{\vect{b}_i}$ respectively. Notice
that any $\vect{a}_i$ in $\set{H}_1$ is definitely linearly dependent with one of the 
$\vect{b}_i$ vectors in $\set{H}_2$. In other words for any $\vect{a}_i\in\set{H}_1$ there exist a $\vect{b}_j \in \set{H}_2$
such that 
\begin{equation}
  \vect{b}_j=\alpha \vect{a}_i \textrm{.}
\end{equation}
Consequently, $\imagesop{\vect{a}_i}$ is equal to $\imagesop{\vect{b}_j}$ due to Lemma \ref{subspacerelationlemma}. 
Therefore, the projection of $p(\vect{x})$ onto these same subspaces should also be equal, i.e.,
\begin{equation}
  r_i(\vect{a}_i\vect{x}^{T})=t_j(\vect{b}_j\vect{x}^{T}) \textrm{,} 
\end{equation}
which means that the factorizations in (\ref{canonic1}) and (\ref{canonic2}) are essentially the
same factorization although they appear different. Since different sets of  parity check coefficient vectors
leads to the same canonical factorization, we can conclude that the canonical factorization of a given pmf is unique. 
Since the  selection of the vectors in $\set{H}$ does not affect the resulting 
canonical factorization, in 
the rest of the thesis we use $\set{H}$ to denote any set containing
$\frac{q^N-1}{q-1}$ pairwise linearly independent vectors in $\fieldqn$.

\section{The canonical factorization of pmfs with  alternative factorizations \label{alternativefactorizationsection}}

In the most general case, the canonical factorization of a multivariate pmf in $\setfieldqn$ is composed
of $\setsize{H}$ SPC factors. However, for some special pmfs some of these $\setsize{H}$ SPC factors are essentially 
constants. For these pmfs  less than $\setsize{H}$ SPC factors  may suffice to  express
the canonical factorization. 

The first group of these special types of pmfs consists of pmfs which depend on only a subset of their
arguments. The canonical factorization of these types of pmfs is investigated in the following lemma.

\begin{lemma}\label{localfactorizationlemma}
  Let $\set{D}$ be a subset of $\set{H}$ defined for a dependency matrix $\vect{D}$ as 
  \begin{equation}
    \set{D}\triangleq \{\vect{a}_i\in \set{H}: \vect{a}_i\vect{D}=\vect{a}_i\} \textrm{.}
  \end{equation}
  The canonical factorization of a multivariate pmf $p(\vect{x})\in \setfieldq$ is in the form of 
  \begin{equation}
    p(\vect{x})= \n{\fieldqn}{\prod_{\vect{a}_i\in \set{D}}r_i(\vect{a}_i\vect{x}^T)} \label{localfactorizationthm1}
  \end{equation}
  if and only if 
  \begin{equation}
    p(\vect{x})=p(\vect{x}\vect{D}) \textrm{.}
  \end{equation}
\end{lemma}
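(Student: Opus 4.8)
The plan is to recast the factorization form (\ref{localfactorizationthm1}) in terms of the orthogonal projections that define the canonical factorization, and then to prove the two implications separately. Recall that the canonical factorization writes $p(\vect{x})=\ssum_{\vect{a}_i\in\set{H}}p_i(\vect{x})$, where $p_i(\vect{x})=\n{\fieldqn}{r_i(\vect{a}_i\vect{x}^T)}$ is the projection of $p(\vect{x})$ onto $\imagesop{\vect{a}_i}$. Since the uniform pmf $\theta(\vect{x})$ is the additive identity of $\setfieldqn$, a term $p_i(\vect{x})$ drops out of this sum exactly when $p_i(\vect{x})=\theta(\vect{x})$, i.e.\ when $p(\vect{x})\perp\imagesop{\vect{a}_i}$. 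Noting that $\set{H}\setminus\set{D}=\{\vect{a}_i\in\set{H}:\vect{a}_i\vect{D}\neq\vect{a}_i\}$, the canonical factorization has the form (\ref{localfactorizationthm1}) if and only if $p(\vect{x})\perp\imagesop{\vect{a}_i}$ for every $\vect{a}_i\in\set{H}$ with $\vect{a}_i\vect{D}\neq\vect{a}_i$.

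For the direction $p(\vect{x})=p(\vect{x}\vect{D})\implies$ (\ref{localfactorizationthm1}), I would invoke Lemma \ref{propertyorthogonallemma} directly. For each $\vect{a}_i\in\set{H}$ with $\vect{a}_i\vect{D}\neq\vect{a}_i$, the hypotheses $p(\vect{x})=p(\vect{x}\vect{D})$ and $\vect{a}_i\vect{D}\neq\vect{a}_i$ yield $p(\vect{x})\perp\imagesop{\vect{a}_i}$. Hence every projection with $\vect{a}_i\notin\set{D}$ is trivial, and by the reformulation above the canonical factorization collapses to the product over $\set{D}$ alone, which is precisely (\ref{localfactorizationthm1}).

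For the converse, suppose the canonical factorization is (\ref{localfactorizationthm1}), so that $p(\vect{x})=\n{\fieldqn}{\prod_{\vect{a}_i\in\set{D}}r_i(\vect{a}_i\vect{x}^T)}$. The key algebraic observation is that every surviving factor depends only on $\vect{x}\vect{D}$: since $\vect{D}$ is a $0/1$ diagonal matrix, hence symmetric and idempotent, the condition $\vect{a}_i\vect{D}=\vect{a}_i$ forces $a_{ij}=0$ whenever the $j$-th diagonal entry of $\vect{D}$ vanishes, so that $\vect{a}_i(\vect{x}\vect{D})^T=\vect{a}_i\vect{D}\vect{x}^T=\vect{a}_i\vect{x}^T$. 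Thus $r_i(\vect{a}_i\vect{x}^T)=r_i(\vect{a}_i(\vect{x}\vect{D})^T)$ for every $\vect{a}_i\in\set{D}$, and after normalization $p(\vect{x})$ is a function of $\vect{x}\vect{D}$ alone, say $p(\vect{x})=f(\vect{x}\vect{D})$. Using idempotence $\vect{D}\vect{D}=\vect{D}$ then gives $p(\vect{x}\vect{D})=f(\vect{x}\vect{D}\vect{D})=f(\vect{x}\vect{D})=p(\vect{x})$, as required.

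The conceptual work is almost entirely front-loaded into Lemma \ref{propertyorthogonallemma} and the previously established uniqueness of the canonical factorization; the remaining steps are routine. The point that most deserves care is the bookkeeping in the reformulation: I must verify that ``a factor is absent from (\ref{localfactorizationthm1})'' corresponds \emph{precisely} to ``its projection equals $\theta(\vect{x})$,'' and hence to the orthogonality $p(\vect{x})\perp\imagesop{\vect{a}_i}$, rather than to the factor merely being some other constant. Because $\theta(\vect{x})$ is the unique additive identity and the projection onto each $q-1$ dimensional subspace $\imagesop{\vect{a}_i}$ is well defined, this correspondence is exact, and appealing to the uniqueness of the canonical factorization makes ``the canonical factorization has the form (\ref{localfactorizationthm1})'' an unambiguous hypothesis in the converse.
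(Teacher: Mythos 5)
Your proof is correct and follows essentially the same route as the paper's: the forward implication via Lemma \ref{propertyorthogonallemma} together with the orthogonal decomposition of Theorem \ref{decompositiontheorem}, and the converse by substituting $\vect{x}=\vect{x}\vect{D}$ into the factorization and using $\vect{a}_i\vect{D}=\vect{a}_i$ with the symmetry (and idempotence) of $\vect{D}$. Your explicit reformulation of ``the canonical factorization collapses to $\set{D}$'' as ``$p(\vect{x})\perp\imagesop{\vect{a}_i}$ for all $\vect{a}_i\notin\set{D}$'' is only a more careful spelling-out of what the paper does implicitly when it sets the off-$\set{D}$ projections equal to $\theta(\vect{x})$.
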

\begin{proof}
  Due to Theorem \ref{decompositiontheorem} 
any pmf in $\setfieldqn$ can be expressed as 
\begin{eqnarray}
  p(\vect{x}) &=&\ssum_{\vect{a}_i\in \set{H}}\n{\fieldqn}{r_{i}(\vect{a}_i\vect{x}^{T})} \\
  &=&\ssum_{\vect{a}_i\in \set{D}}\n{\fieldqn}{r_{i}(\vect{a}_i\vect{x}^{T})} \boxplus 
  \ssum_{\vect{a}_i\in \set{H}\setminus \set{D}}\n{\fieldqn}{r_{i}(\vect{a}_i\vect{x}^{T})}  \textrm{,}
\end{eqnarray}
where $\n{\fieldqn}{r_{i}(\vect{a}_i\vect{x}^{T})}$ is the projection of 
$p(\vect{x})$ onto $\imagesop{\vect{a}}$. But $p(\vect{x})$ is orthogonal
to $\imagesop{\vect{a}_i}$ for $\vect{a}_i \in \set{H}\setminus\set{D}$  due to 
Lemma \ref{propertyorthogonallemma}. Hence,
\begin{equation}
\n{\fieldqn}{r_{i}(\vect{a}_i\vect{x}^{T})}=\theta(\vect{x}) \textrm{,} 
\end{equation}
for $\vect{a}_i \in \set{H}\setminus\set{D}$. Consequently,
\begin{eqnarray}
  p(\vect{x}) &=&\ssum_{\vect{a}_i\in \set{D}}\n{\fieldqn}{r_{i}(\vect{a}_i\vect{x}^{T})}\\
  &=&\n{\fieldqn}{\prod_{\vect{a}_i \in \set{D}}r_{i}(\vect{a}_i\vect{x}^{T})} \textrm{,}
\end{eqnarray}
which is the desired factorization to prove the theorem in the forward direction. 

The proof in the backward direction is straight forward. If $p(\vect{x})$ can 
be factored as in (\ref{localfactorizationthm1}) then
\begin{eqnarray}
  p(\vect{x}\vect{D})  &=& \n{\fieldqn}{\prod_{\vect{a}_i\in \set{D}}r_i(\vect{a}_i\vect{x}^T)} \Evalat{\vect{x}=\vect{x}\vect{D}} \\
  &=& \n{\fieldqn}{\prod_{\vect{a}_i\in \set{D}}r_i(\vect{a}_i\vect{D}^T\vect{x}^T)} \textrm{.}
\end{eqnarray}
Since $\vect{D}$ is symmetric and $\vect{a}_i\vect{D}=\vect{a}_i$ for $\vect{a}_i\in \set{D}$,
\begin{eqnarray}
  p(\vect{x}\vect{D})&=&\n{\fieldqn}{\prod_{\vect{a}_i\in \set{D}}r_i(\vect{a}_i\vect{x}^T)}  \\
  &=&p(\vect{x})  \textrm{,}
\end{eqnarray}
which completes the proof. 
\end{proof}

This lemma tells in practice that any pmf satisfying the relation $p(\vect{x})=p(\vect{x}\vect{D})$ can be expressed
as a product of $\setsize{D}$ SPC factors  rather than 
 $\setsize{H}$ SPC factors. Moreover, 
parity check coefficient vectors of these SPC factors satisfy the relation $\vect{a}=\vect{a}\vect{D}$.
We do not need to compute the projection of  $p(\vect{x})$ onto $\imagesop{\vect{a}}$ if $\vect{a}$ is not
in $\set{D}$, since the result of that projection would be $\theta(\vect{x})$ 
definitely.

The next theorem investigates the canonical factorization of pmfs with known alternative factorizations. 

\begin{theorem}\label{localfactorizationtheorem} If a multivariate pmf $p(\vect{x})\in \setfieldqn$ can be factored
as   
  \begin{equation}
    p(\vect{x})= \n{\fieldqn}{\prod_{j=1}^{K}\phi_{j}(\vect{x}\vect{D}_j) } \textrm{,}
  \end{equation}
where $\vect{D}_1$, 
$\vect{D}_2$, $\ldots$, $\vect{D}_K$ are dependency matrices then 
the canonical factorization of $p(\vect{x})$ is in the
form of 
\begin{equation}
  p(\vect{x})= \n{\fieldqn}{
\prod_{j=1}^{K} \prod_{\vect{a}_i\in \set{D}_{j}}r_i(\vect{a}_i\vect{x}^T)} \label{localfactorizationcor1} \textrm{,}
\end{equation}
where $\set{D}_j$ is the subset of $\set{H}$ given by
\begin{equation}
  \set{D}_j \triangleq \{\vect{a}_i\in \set{H}:  \vect{a}_i\vect{D}_j=\vect{a}_i\} 
  \textrm{.}
\end{equation}
\end{theorem}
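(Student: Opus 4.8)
The plan is to reduce the statement to the single-factor case already settled by Lemma~\ref{localfactorizationlemma} and then to recombine the pieces using the linearity of orthogonal projection. First I would normalize each factor by setting $\tilde{\phi}_j(\vect{x})\triangleq\n{\fieldqn}{\phi_j(\vect{x}\vect{D}_j)}$, so that each $\tilde{\phi}_j$ is a genuine pmf in $\setfieldqn$ that depends only on the components selected by $\vect{D}_j$, i.e.\ it satisfies the locality relation $\tilde{\phi}_j(\vect{x})=\tilde{\phi}_j(\vect{x}\vect{D}_j)$ (this uses that a $0/1$ diagonal matrix is idempotent). Since the normalization constants only rescale the overall product, the hypothesis rewrites in the Hilbert space as a $\boxplus$-sum,
\[
p(\vect{x})=\ssum_{j=1}^{K}\tilde{\phi}_j(\vect{x}) \textrm{,}
\]
because $\boxplus$ is precisely multiplication followed by normalization, exactly as in the representation of factorizations discussed earlier in the thesis.

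Next I would apply Lemma~\ref{localfactorizationlemma} to each $\tilde{\phi}_j$ separately. Because $\tilde{\phi}_j(\vect{x})=\tilde{\phi}_j(\vect{x}\vect{D}_j)$, that lemma expresses the canonical factorization of $\tilde{\phi}_j$ using only the parity check vectors lying in $\set{D}_j$; equivalently, by Lemma~\ref{propertyorthogonallemma} the projection of $\tilde{\phi}_j$ onto $\imagesop{\vect{a}_i}$ equals $\theta(\vect{x})$ whenever $\vect{a}_i\in\set{H}\setminus\set{D}_j$. Denoting the projection of $\tilde{\phi}_j$ onto $\imagesop{\vect{a}_i}$ by $\n{\fieldqn}{r_i^{(j)}(\vect{a}_i\vect{x}^T)}$, I get $\tilde{\phi}_j(\vect{x})=\ssum_{\vect{a}_i\in\set{D}_j}\n{\fieldqn}{r_i^{(j)}(\vect{a}_i\vect{x}^T)}$. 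Substituting this into the display above and converting each $\boxplus$ back into an ordinary product yields a factorization of the stated double-product shape whose parity check vectors all lie in $\bigcup_j\set{D}_j$.

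Finally I would verify that this product is the canonical factorization in the sense of Definition~\ref{canonicdefinition}. The only wrinkle is that a fixed $\vect{a}_i$ may belong to several $\set{D}_j$, so the same subspace contributes more than once and the pairwise linear independence of the parity check vectors is temporarily violated. I would eliminate the duplication by collecting all contributions attached to a fixed $\vect{a}_i$: since projection onto $\imagesop{\vect{a}_i}$ is a linear map and $p=\ssum_j\tilde{\phi}_j$, the merged factor attached to $\vect{a}_i$ is exactly the projection of $p$ itself onto $\imagesop{\vect{a}_i}$, namely the canonical factor $\n{\fieldqn}{r_i(\vect{a}_i\vect{x}^T)}$. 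The same linearity argument, combined with the orthogonality of the subspaces $\imagesop{\vect{a}_i}$ for pairwise linearly independent $\vect{a}_i$ (Lemma~\ref{subspacerelationlemma}), shows that the projection of $p$ vanishes for every $\vect{a}_i\notin\bigcup_j\set{D}_j$, so no extra factors appear and the resulting parity check vectors range over $\bigcup_j\set{D}_j$ with each occurring once.

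The step I expect to be the main obstacle is exactly this bookkeeping: one must argue that merging the per-factor contributions $r_i^{(j)}$ that share a common $\vect{a}_i$ produces the genuine canonical factor $r_i$ of the whole pmf $p$ rather than some accidental by-product, and it is precisely the linearity of the projection that licenses dropping the superscript $j$ in the displayed formula. Everything else is the routine translation between products-with-normalization and the $\boxplus$ operation, for which the earlier results can be quoted directly.
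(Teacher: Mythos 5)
Your proposal is correct and follows essentially the same route as the paper's proof: normalize each factor $\phi_j(\vect{x}\vect{D}_j)$ into a pmf $t_j(\vect{x})\in\setfieldqn$, apply Lemma~\ref{localfactorizationlemma} to each one, and recombine the resulting $\boxplus$-sums into the stated double product. If anything, your final bookkeeping step---invoking the linearity of orthogonal projection to merge the per-factor contributions $r_i^{(j)}$ that share a common $\vect{a}_i$, and to show the projection of $p(\vect{x})$ vanishes outside $\bigcup_j\set{D}_j$---is more careful than the paper's own proof, which simply writes $r_i$ for the projection of each $t_j$ and does not comment on the case where one $\vect{a}_i$ lies in several of the sets $\set{D}_j$.
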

\begin{proof}
This theorem is actually a  direct consequence of Lemma \ref{localfactorizationlemma}. 
Let $t_j(\vect{x})\in\setfieldqn$ be 
\begin{equation}
  t_j(\vect{x}) \triangleq \n{\fieldqn}{\phi_{j}(\vect{x}\vect{D}_j)} \textrm{.}
\end{equation}
Since $t_j(\vect{x})$ is equal to $t_j(\vect{x}\vect{D}_j)$, 
\begin{equation}
  t_j(\vect{x})=\ssum_{\vect{a}_i\in \set{D}_j} r_{i}(\vect{a}_i\vect{x}^{T}) \textrm{,}
\end{equation}
 due to Lemma \ref{localfactorizationlemma}. Then $p(\vect{x})$ is
\begin{eqnarray}
  p(\vect{x})&=&\ssum_{j=1}^{K}t_j(\vect{x})\\
  &=&\ssum_{j=1}^{K}\ssum_{\vect{a}_i \in \set{D}_j}r_{i}(\vect{a}_i\vect{x}^{T})\\
&=&\n{\fieldqn}{
\prod_{j=1}^{K} \prod_{\vect{a}_i\in \set{D}_{j}}r_i(\vect{a}_i\vect{x}^T)}
  \textrm{,}
\end{eqnarray}
which completes the proof. 
\end{proof}

The practical consequence of this theorem is that the canonical factorization of a pmf with an alternative
factorization can be derived by obtaining the canonical factorization of the 
factor functions in the alternative factorization. This approach significantly simplifies the derivation 
of the canonical factorization for such pmfs and extensively used in Chapter \ref{applicationchapter}.
\section{The effect of reversible linear transformations on the canonical factorization\label{reversiblesection}}
If two random vectors are related with a \emph{reversible} linear transformation then the canonical 
factorization of the pmf of the one of random vectors can be derived from 
the canonical factorization of the other random vector's pmf. 
Let $\vect{X}$ be an $\fieldqn$-valued random vector distributed with  $p(\vect{x})\in\setfieldqn$.
Moreover, let $\vect{Y}$ be another $\fieldqn$-valued random vector which 
is related to $\vect{X}$ as in 
\begin{equation}
  \vect{Y}=\vect{X}\vect{B}
\end{equation}
where $\vect{B}$ is an  \emph{reversible}  matrix in $\fieldq^{N\times N}$. 
Since $\vect{B}$ is reversible, for each $\vect{y}\in\fieldqn$ there is 
one and only one $\vect{x}\in\fieldqn$ vector satisfying
  $\vect{y}=\vect{x}\vect{B}$, 
which  is given by $\vect{x}=\vect{y}\vect{B}^{-1}$. Hence,
\begin{eqnarray}
  \Pr\{\vect{Y}=\vect{y}\}&=&\Pr\{\vect{X}=\vect{y}\vect{B}^{-1}\} \\
  &=&p(\vect{y}\vect{B}^{-1}) \textrm{.}
\end{eqnarray} 
If the canonical factorization of $p(\vect{x})$ is as given in 
\begin{equation}
  p(\vect{x}) = \prod_{\vect{a}_i \in \set{H}} r_{i}(\vect{a}_i\vect{x}^{T})
\end{equation}
then the canonical factorization of $\Pr\{\vect{Y}=\vect{y}\}$
is simply
\begin{eqnarray}
  \Pr\{\vect{Y}=\vect{y}\} &=& \prod_{\vect{a}_i \in \set{H}} r_{i}(\vect{a}_i\vect{x}^{T}) \evalat{\vect{x}=\vect{y}\vect{B}^{-1}} \\
  &=&\prod_{\vect{a}_i \in \set{H}} r_{i}\left(\vect{a}_i(\vect{B}^{-1})^{T}\vect{y}^{T}\right)  \label{factory}
\end{eqnarray}

If $\vect{B}$
was not reversible then $\Pr\{\vect{Y}=\vect{y}\}$ would be zero for some $\vect{y}$ vectors
in $\fieldqn$. Hence, $\Pr\{\vect{Y}=\vect{y}\}$ would not be a multivariate pmf in $\setfieldqn$ and consequently
we could not talk about the canonical factorization of  $\Pr\{\vect{Y}=\vect{y}\}$. 

An interesting question about the linear transformations of $\fieldqn$-valued random vectors 
might be whether there exists a linear transformation $\vect{K}$ for a random vector $\vect{X}$
such that the components of the vector $\vect{Y}=\vect{X}\vect{K}$ are statistically independent. 
If such a transformation exists it would prove useful in computing the marginal pmfs
of the components of the random vector $\vect{X}$.
The canonical factorization of $\Pr\{\vect{Y}=\vect{y}\}$ given in (\ref{factory}) provides a clue
to this question.  

\begin{theorem}\label{statisticalindependencetheorem} There exist a matrix $\vect{K}$ in $\fieldq^{N\times N}$ for an $\fieldqn$-valued
random vector $\vect{X}$ such that the components of the random vector $\vect{Y}$ given by
\begin{equation}
\vect{Y}=\vect{X}\vect{K}
\end{equation}
 are \textbf{{statistically independent}} if  the canonical factorization 
of the pmf of $\vect{X}$ is composed of at most $N$ SPC factors 
whose parity check coefficient vectors are all \textbf{{linearly independent}}.  
\end{theorem}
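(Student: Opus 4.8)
The plan is to exhibit the matrix $\vect{K}$ explicitly by turning the $M\le N$ linearly independent parity check coefficient vectors of the canonical factorization into the standard coordinate directions. The components of $\vect{Y}$ are statistically independent precisely when $\Pr\{\vect{Y}=\vect{y}\}$ factors as a product of functions each depending on a single component $y_k$. So the goal reduces to choosing $\vect{K}$ so that each nontrivial SPC factor $r_i(\vect{a}_i\vect{x}^T)$ becomes a function of a single coordinate of $\vect{y}$.

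Let $\vect{a}_1,\ldots,\vect{a}_M$ (with $M\le N$) be the parity check coefficient vectors of the nontrivial SPC factors in the canonical factorization of $p(\vect{x})$; by hypothesis they are linearly independent. First I would extend this set to a basis $\vect{a}_1,\ldots,\vect{a}_N$ of $\fieldqn$ and collect these as the rows of a matrix $\vect{A}\in\fieldq^{N\times N}$, which is then reversible. I would then set $\vect{K}=\vect{A}^T$, so that $Y_i=\vect{a}_i\vect{X}^T$ for every $i$. Because $\vect{K}$ is reversible, the transformation result of Section \ref{reversiblesection} applies and $\Pr\{\vect{Y}=\vect{y}\}$ is given by (\ref{factory}) with $\vect{B}=\vect{K}$.

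The next step is to evaluate (\ref{factory}). Since $\vect{K}=\vect{A}^T$ we have $(\vect{K}^{-1})^T=\vect{A}^{-1}$, and because the rows of $\vect{A}$ are the $\vect{a}_i$, the identity $\vect{A}\vect{A}^{-1}=\vect{I}$ gives $\vect{a}_i(\vect{K}^{-1})^T\vect{y}^T=\vect{e}_i\vect{y}^T=y_i$. Hence each nontrivial factor collapses to $r_i(y_i)$, a function of the single coordinate $y_i$, while every other $\vect{a}\in\set{H}$ has a uniform projection $\theta$ by hypothesis and contributes only a constant. Therefore
\begin{equation}
  \Pr\{\vect{Y}=\vect{y}\}=\n{\fieldqn}{\prod_{i=1}^{M}r_i(y_i)} \nonumber \textrm{.}
\end{equation}
The normalizing sum factors over the coordinates (the free components $y_{M+1},\ldots,y_N$ each contributing a factor $q$), which yields
\begin{equation}
  \Pr\{\vect{Y}=\vect{y}\}=\left(\prod_{i=1}^{M}r_i(y_i)\right)\prod_{k=M+1}^{N}\frac{1}{q} \nonumber \textrm{.}
\end{equation}
This is a product of single-component functions, so $Y_1,\ldots,Y_N$ are statistically independent, with $Y_i$ distributed as $r_i$ for $i\le M$ and uniformly for $i>M$.

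The main obstacle is essentially bookkeeping: one must track the transposes carefully so that $\vect{Y}=\vect{X}\vect{K}$ with $\vect{K}=\vect{A}^T$ indeed realizes $Y_i=\vect{a}_i\vect{X}^T$, and one must justify extending the linearly independent parity check vectors to a full basis so that $\vect{K}$ is reversible and the formula of Section \ref{reversiblesection} is available. Once the coordinates are aligned in this way, both the factorization and the independence conclusion follow immediately.
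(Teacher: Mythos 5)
Your proof is correct and takes essentially the same constructive route as the paper: collect the at most $N$ linearly independent parity check coefficient vectors, complete them to an invertible $N\times N$ matrix, take $\vect{K}$ to be its transpose, and read the separable form of $\Pr\{\vect{Y}=\vect{y}\}$ off the reversible-transformation formula, with the extra coordinates coming out uniform. In fact your transpose bookkeeping is more careful than the paper's own: the paper defines $\vect{K}\triangleq(\vect{K}_c^{-1})^{T}$, under which $\vect{f}_j\vect{K}_c(\vect{K}^{-1})^{T}\vect{y}^{T}=\vect{f}_j\vect{K}_c\vect{K}_c\vect{y}^{T}\neq y_j$ in general, so the simplification claimed there only goes through with the choice you make ($\vect{K}$ equal to the transpose, not the inverse-transpose, of the matrix of parity check vectors), and your version quietly repairs that slip.
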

\begin{proof} The proof is constructive. Let $p(\vect{x})$ be the pmf of
$\vect{X}$ and the canonical factorization of $p(\vect{x})$ be denoted as
\begin{equation}
  p(\vect{x})= \n{\fieldqn}{\prod_{\vect{a}_i\in \set{K}} r_i(\vect{a}_i\vect{x}^{T})} 
\end{equation}
where $\set{K}$ is a subset of $\set{H}$ containing at most $N$ linearly independent
vectors. Let $\set{K}_c$ be a subset of $\set{H}$ such that it is a superset of $\set{K}$
and it contains exactly $N$ linearly independent vectors. Since $r_i(\vect{a}_i\vect{x}^{T})$
is equal to $\theta(\vect{x})$ for $\vect{a}_i \in \set{K}_c\setminus \set{K}$, 
the canonical factorization of 
$p(\vect{x})$ can also be expressed as
\begin{equation}
  p(\vect{x})= \n{\fieldqn}{\prod_{\vect{a}_i\in \set{K}_c} r_i(\vect{a}_i\vect{x}^{T})}
\end{equation}
We may define a matrix $\vect{K}_c$ whose rows are the elements of $\set{K}_c$. 
Using this matrix $\vect{K}_c$ the canonical factorization of $p(\vect{x})$
becomes
\begin{equation}
  p(\vect{x})= 
\n{\fieldqn}{\prod_{j=1}^{N}r_{i(j)}(\vect{f}_j\vect{K}_c\vect{x}^{T})} \textrm{,}
\end{equation}
where $\vect{f}_j$ is the $j^{th}$ canonical basis vector of $\fieldqn$ and $i(j)$ is the index of the vector $\vect{a}_i$ when
$\vect{a}_i=\vect{f}_j\vect{K}_c$. Then we may
define the matrix $\vect{K}$ as 
\begin{equation}
  \vect{K} \triangleq(\vect{K}_c^{-1})^{T}  \textrm{.}
\end{equation}

With this definition of $\vect{K}$, the canonical factorization of  $\Pr\{\vect{Y}=\vect{y}\}$ becomes 
\begin{eqnarray}
 \Pr\{\vect{Y}=\vect{y}\}&=& \n{\fieldqn}{\prod_{j=1}^{N}r_{i(j)}(\vect{f}_j\vect{K}_c\vect{x}^{T})} \Evalat{\vect{x}=\vect{y}\vect{K}^{-1}}\\
 &=&\n{\fieldqn}{\prod_{j=1}^{N}r_{i(j)}\left(\vect{f}_j\vect{K}_c(\vect{K}^{-1})^{T}\vect{y}^{T}\right)}\\
 &=&\n{\fieldqn}{\prod_{j=1}^{N}r_{i(j)}(\vect{f}_j\vect{y}^{T})}\\
&=&\n{\fieldqn}{\prod_{j=1}^{N}r_{i(j)}(y_j)}\textrm{,}
\end{eqnarray}
where $y_j$ is the $j^{th}$ component of $\vect{y}$. Since $\Pr\{\vect{Y}=\vect{y}\}$ is separable, the components
of $\vect{Y}$ are statistically independent. Moreover, the distribution of the $j^{th}$ component of $\vect{Y}$ is 
simply
\begin{equation} 
  \Pr\{Y_j=y\}= r_{i(j)}(y) \textrm{.}
\end{equation} 
\end{proof}

In  the general case, the marginal pmfs of the components of an $\fieldqn$-valued random vector $\vect{X}$ can be computed 
via the marginalization sum whose complexity is $q^{N}$. If the multivariate pmf of $\vect{X}$ obeys the condition 
imposed in Theorem \ref{statisticalindependencetheorem} 
then $\vect{X}$ can be related to $\vect{Y}$, whose components are statistically independent, as
\begin{equation} 
\vect{X}=\vect{Y}\vect{K}^{-1} \textrm{.}
\end{equation}
This means that any component of $\vect{X}$ is equal to a linear combination of $N$ statistically independent random variables. 
Hence, the marginal pmfs of the components of  $\vect{X}$ can be computed via $N-1$ circular convolutions over $\fieldq$
instead of the marginalization sum. Consequently, the complexity of computing a single marginal pmf is $Nq^2$ and
the complexity of computing all marginal pmfs is $N^2q^2$ instead of $q^N$ 
for such random vectors\footnote{These complexities can be reduced even more
to $Nq\log_2 q$ and $N^2q\log_2 q$ by computing the convolutions via FFT if $\fieldq$ is an extension field of 
the binary field \cite{reducedcomplexity}.  }.

\chapter{EMPLOYING CHANNEL DECODERS FOR INFERENCE TASKS  BEYOND DECODING \label{decodingchapter}}

\section{Introduction}
This chapter explains subjectively the most important consequence of the 
canonical factorization which allows the decoders of the 
linear error correction codes to be utilized in other inference tasks. 

This chapter starts with an overview of channel decoders. Then how a maximum likelihood (ML)
decoder can be used to maximize a multivariate pmf is explained.  It is shown in Section \ref{marginalizationsection}
that symbolwise decoders can be employed to marginalize multivariate pmfs. Section \ref{universalitydualhamming}
highlights that the decoders of the dual Hamming code can be used as universal inference machines. 
Special cases are analyzed in Section \ref{specialdecoders}. The material presented in this 
chapter is summarized with graphical models in \ref{graphicalrepresentation}. This chapter ends with 
explaining the possible applications of employing channel decoders for inference tasks beyond decoding.

\section{An overview of channel decoders\label{decoderoverview}}

A channel decoder is specified by a code and a channel through which
the coded symbols are transmitted.  
A code $\set{C}$ over a finite field $\fieldq$ of length $L$ is defined 
as a subset of $\fieldql$. The code is called a \emph{linear code}
if  $\set{C}$ is a subspace of $\fieldql$. For linear codes
there exists a matrix $\vect{H}$ which satisfies
\begin{equation} 
  \vect{H}\vect{x}^{T}=\vect{0} \quad \forall \vect{x} \in \set{C} \textrm{.}
\end{equation}
The matrix $\vect{H}$ is called the parity check matrix of the code. 

A channel is a  system which maps a $\fieldq$-valued symbol to
an element of the output alphabet in a probabilistic manner \footnote{This definition of
channel includes the modulator when necessary. }. 
We assume that the channel decoders used in the rest of this
chapter are designed for a specific channel. This channel 
relates the inputs to the outputs via the following relation
\begin{equation}
  \vect{Y}_i=\vect{s}(X_i)+\vect{Z}_i  \label{channelmodel} \virgul
 \end{equation}
where $\vect{Z}_{i}$ is a noise vector consisting of independent, zero-mean, 
real Gaussian random variables with unit variance and $\vect{s}(.)$ denotes 
the simplex mapping as defined in (\ref{simplexmod}). 
The likelihood function, which is a conditional probability density function of a continuous random vector, of this channel
is 
\begin{eqnarray}
  f_{\vect{Y}_i|X_i}\{\vect{Y}_i=\vect{y}_i|X_i=x_i\}&\propto&\exp\left(-\frac{1}{2}\norm{\vect{y}_i-\vect{s}(x_i)}^{2} \right) \\
  &\propto&\pseudoinvi{L}{\vect{y}_i} \textrm{.}
\end{eqnarray}
The reasoning behind the selection of
this channel model is explained in Section \ref{simplerchannel}.

Let $\vect{X}=[X_1,X_2,\ldots,X_L]$ denote a codeword belonging to the  code $\set{C}$
 and $\vect{Y}=[\vect{Y}_1,\vect{Y}_2,\ldots,\vect{Y}_L]$
denote the output of the channel when $\vect{X}$ is transmitted  through this channel. 
If all codewords are equally likely then the a posteriori probability  (APP) of 
$\vect{X}$ is 
\begin{eqnarray}
  \Pr\{\vect{X}=\vect{x}|\vect{Y}=\vect{y}\}&=&\n{\fieldql}{\indicator{C}{\vect{x}}\prod_{i=1}^{L}f_{\vect{Y}_i|X_i}\{\vect{Y}_i=\vect{y}_i|X_i=x_i\} }\\
&=&\n{\fieldql}{\indicator{C}{\vect{x}}\prod_{i=1}^{L}\pseudoinvi{L}{\vect{y}_i}} \textrm{,}
\end{eqnarray}
where $\vect{x}=[x_1,x_2,\ldots,x_L]$, $\vect{y}=[\vect{y}_1,\vect{y}_2,\ldots,\vect{y}_L]$, and $\indicator{C}{.}$ denotes
the indicator function i.e.,
\begin{equation}
  \indicator{C}{\vect{x}}\triangleq \left\{\begin{array}{ll} 1, & \vect{x}\in \set{C} \\ 0, & \vect{x}\notin \set{C}  \end{array} \right. \textrm{.}
\end{equation}
If the code $\set{C}$ is a linear code with the parity check matrix $\vect{H}$ consisting of  $M$ rows
then 
\begin{equation}
  \indicator{C}{\vect{x}}= \prod_{i=1}^{M}\delta(\vect{h}_i\vect{x}^{T})\textrm{,}
\end{equation}
where $\vect{h}_i$ denotes the $i^{th}$ row of $\vect{H}$. Consequently, the APP of $\vect{X}$ is
\begin{equation}
\Pr\{\vect{X}=\vect{x}|\vect{Y}=\vect{y}\}=\n{\fieldql}{\prod_{i=1}^{M}\delta(\vect{h}_i\vect{x}^{T})\prod_{i=1}^{L}\pseudoinvi{L}{\vect{y}_i}} \textrm{.}
\label{linearapp}
\end{equation}

There are two decoding problems that can be associated with a code and the channel model defined above \cite{mackaybook}. 
The first one of these decoding problems is the \emph{codeword} decoding problem which is the task of inferring the transmitted codeword. 
This task is accomplished by finding  the codeword which maximizes the APP $\Pr\{\vect{X}=\vect{x}|\vect{Y}=\vect{y}\}$.
Hence, this decoding is called the maximum a posteriori (MAP) codeword decoding. The MAP codeword decoding can 
be formally defined as 
\begin{equation}
  \hat{\vect{x}}_{MAP}\triangleq \arg \max_{\vect{x}\in \set{C}} \Pr\{\vect{X}=\vect{x}|\vect{Y}=\vect{y}\} \textrm{.}
\end{equation}
If all codewords are equally likely then the MAP codeword decoding problem is equal to the 
maximum likelihood (ML) codeword decoding problem which maximizes the 
likelihood function $f_{\vect{Y}|\vect{X}}\{\vect{Y}=\vect{y}|\vect{X}=\vect{x}\}$ instead of the APP, i.e., 
\begin{eqnarray}
  \hat{\vect{x}}_{ML}&\triangleq& \arg \max_{\vect{x}\in \set{C}}f_{\vect{Y}|\vect{X}}\{\vect{Y}=\vect{y}|\vect{X}=\vect{x}\} \\
  &=&\arg \max_{\vect{x}\in \set{C}} \Pr\{\vect{X}=\vect{x}|\vect{Y}=\vect{y}\} \\
  &=&\hat{\vect{x}}_{MAP} \textrm{.}
\end{eqnarray}
Both MAP and ML codeword decoding problems can be solved by the min-sum (max-product) algorithm,
the most famous example of which is the Viterbi algorithm \cite{aloefg,Wiberg,wiberg,mackaybook}. 

The second decoding problem is the \emph{symbolwise} decoding problem which aims to
produce a soft prediction about the individual coded symbols. This task
is accomplished by marginalizing the APP as in
\begin{equation}
  \Pr\{X_i=x_i|\vect{Y}=\vect{y}\}= \sum_{\summary{x_i}} \Pr\{ \vect{X}=\vect{x}|\vect{Y}=\vect{y}\}
\end{equation}  
where \mbox{$\sim$$\{x_{i}\}$} is the summary notation introduced in \cite{fgsp} and indicates
that the summation runs over  the variables $x_1$, $x_2$, $\ldots$, $x_{i-1}$, $x_{i+1}$, $x_{i+2}$, $\ldots$, $x_N$.
The symbolwise decoding 
problem is solved by the sum-product algorithm whose most famous example  is the 
BCJR algorithm \cite{BCJR,mackaybook}. 

\section{Maximizing a multivariate pmf by using an ML codeword decoder \label{maximizationsection}}

We begin this section with the following example. This example might be impractical 
but it is the simplest possible example to demonstrate the idea. We will generalize the idea
after this example. 

\begin{example} \label{decoderexample} Suppose that we are required to implement a device which 
finds the configuration maximizing a pmf $p(x_1,x_2)\in \setfieldnn{2}{2}$.
This device is supposed to return the pair $(x_1,x_2)$ which maximizes  $p(x_1,x_2)$
after receiving the values  $p(0,0)$, $p(0,1)$, $p(1,0)$, and 
$p(1,1)$ as input.
Assume that while implementing this device we can use a \emph{handicapped} processor which 
can only add two numbers, negate a number, and
compute the logarithm of a number but cannot compare two numbers. Further assume that to compensate
the handicap of the processor we are given 
the  ML codeword decoder hardware of the 
linear code with the parity check matrix 
\begin{equation}
\vect{H}=[\begin{array}{ccc}1&1&1\end{array}]\textrm{,}
\end{equation}
which is designed for the channel model described in Section \ref{decoderoverview}.

If the processor at our hand was a regular processor which could compare two numbers then
the solution of this problem would be obvious. Since this processor cannot compare
two numbers, we need to figure out another solution by employing the ML codeword
decoder. In this solution we should use the processor to compute the three input vectors\footnote{Recall that
the channel model given in (\ref{channelmodel}) maps each bit to a vector in $\mathbb{R}^{2}$ } 
to be applied to the decoder from inputs applied to the whole system.   

We sketch a solution as follows. Let the input vectors applied to the decoder be $\vect{y}_1$, $\vect{y}_2$,
and $\vect{y}_3$. By (\ref{linearapp}) this decoder will return the following $\hat{\vect{x}}_{ML}=[\hat{x}_1,\hat{x}_2,\hat{x}_3]$ 
vector
\begin{equation}
 \hat{\vect{x}}_{ML}=\arg\max_{[x_1,x_2,x_3]\in \set{C}}\delta(x_1+x_2+x_3)\prod_{i=1}^{3}\pseudoinvi{L}{\vect{y}_i} \textrm{.}
\end{equation}  
Since $x_3=x_1+x_2$ for every codeword in $\set{C}$, 
\begin{equation}
 \hat{\vect{x}}_{ML}=\arg\max_{[x_1,x_2,x_3]\in \set{C}}
 \pseudoinvarg{L}{\vect{y}_1}{x_1}
 \pseudoinvarg{L}{\vect{y}_2}{x_2}
\pseudoinvarg{L}{\vect{y}_3}{x_1+x_2} \textrm{.}
\end{equation}
Due to Corollary \ref{canoniccorollary} we know that any $p(x_1,x_2)\in \setfieldnn{2}{2}$
can be expressed as 
\begin{equation}
p(x_1,x_2)=\n{\fieldnn{2}{2}}{r_1(x_1)r_2(x_2)r_3(x_1+x_2)} \textrm{.}
\end{equation}
Hence, if we apply $\vect{y}_i=\operator{L}{r_i(x)}$ to the decoder then the decoder computes
\begin{eqnarray}
 \hat{\vect{x}}_{ML}&=&\arg\max_{[x_1,x_2,x_3]\in \set{C}}r_1(x_1)r_2(x_2)r_3(x_1+x_2) \\
 &=&\arg\max_{[x_1,x_2,x_3]\in \set{C}} p(x_1,x_2)\textrm{.}
\end{eqnarray}
The first two components of the $\hat{\vect{x}}_{ML}$ is the result we are looking for. 

The only missing component of the solution is computing $\vect{y}_i=\operator{L}{r_i(x)}$.  These vectors can be 
derived using the discussion in Chapter \ref{thechapter} as
\begin{eqnarray}
\vect{y}_1 &=& [\begin{array}{c}\log p(0,0)+\log p(0,1)-\log p(1,0)-\log p(1,1)\end{array}][\begin{array}{cc} 1 & -1\end{array}] \nonumber \\
\vect{y}_2 &=& [\begin{array}{c}\log p(0,0)-\log p(0,1)+\log p(1,0)-\log p(1,1)\end{array}][\begin{array}{cc} 1 & -1\end{array}]  \\
\vect{y}_3 &=& [\begin{array}{c}\log p(0,0)-\log p(0,1)-\log p(1,0)+\log p(1,1)\end{array}][\begin{array}{cc} 1 & -1\end{array}] \nonumber \textrm{.}
\end{eqnarray}   
Fortunately,
 our handicapped processor can be programmed to accomplish this subtask. The block diagram of the solution is depicted in 
Figure \ref{decoderhardwareex}.
\end{example}
\begin{figure}
 \begin{center}
   \includegraphics[scale=.55]{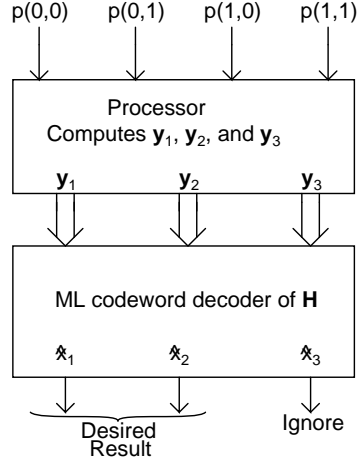}
   \caption{The block diagram of the solution to problem in Example \ref{decoderexample}. \label{decoderhardwareex}}
   \end{center}
\end{figure}

An ML codeword decoder can be utilized to maximize a multivariate pmf $t(\vect{x})$  if it 
can be expressed as a product of parity-check (zero-sum) constraints and degree one factors as in 
\begin{equation}
  t(\vect{x})=\prod_{i=1}^{M}\delta(\vect{h}_i\vect{x}^{T})\prod_{i=1}^{L}\phi_{i}(x_i) \textrm{.} \label{decodersuitablefac}
\end{equation}
The decoder which can maximize this pmf is the decoder of the linear code with the parity
check matrix $\vect{H}$ given by
\begin{equation}
  \vect{H}=\left[\begin{array}{c}\vect{h}_1\\\vect{h}_2\\\vdots \\\vect{h}_M   \end{array} \right] \textrm{.}
\end{equation}
If $\vect{y}_i=\operator{L}{\n{\fieldqn}{\phi_{i}(x)}}$ is applied
as the $i^{th}$ input to the decoder then the ML codeword decoder performs the following maximization 
\begin{eqnarray}
  \hat{\vect{x}}_{ML} &=&\arg \max_{\vect{x}\in \set{C}}\prod_{i=1}^{M}\delta(\vect{h}_i\vect{x}^{T})\prod_{i=1}^{L}\pseudoinvi{L}{\vect{y}_i} \\
  &=&\arg \max_{\vect{x}}\prod_{i=1}^{M}\delta(\vect{h}_i\vect{x}^{T})\prod_{i=1}^{L}\phi_{i}(x_i)\\
&=& \arg \max_{\vect{x}} t(\vect{x})\textrm{,}
\end{eqnarray} 
which is the desired maximization. 

Unfortunately,  most of the  pmfs cannot be factored as in (\ref{decodersuitablefac}). 
Therefore, 
it might seem that utilization of an ML codeword decoder for maximizing a pmf has limited
applicability. However, for any pmf in $\setfieldqn$ we can find a substitute pmf which factors
as in  (\ref{decodersuitablefac}) and can be used to maximize the original pmf.  Consequently,
 ML codeword decoders can be utilized
in the maximization of a broad range of pmfs. 
 
We derive such a substitute pmf based on
the canonical factorization. 
Let $p(\vect{x})\in \setfieldqn$ be the multivariate which we want to maximize by using 
an ML codeword decoder. Due to Corollary \ref{canoniccorollary}, $p(\vect{x})$
can be expressed as a product of SPC constraints as in 
\begin{equation}
  p(\vect{x}) = \n{\fieldqn}{\prod_{i=1}^{\setsize{H}}r_{i}(\vect{a}_i\vect{x}^{T})} \textrm{,}
\end{equation}
where $\set{H}$ is $\{\vect{a}_1,\vect{a}_2,\ldots,\vect{a}_{\setsize{H}}\}$ and $\n{\fieldqn}{r_{i}(\vect{a}_i\vect{x}^{T})}$
is the projection of $p(\vect{x})$ onto $\imagesop{\vect{a}_i}$. Recall that the set $\set{H}$ consists
 of $\frac{q^{N}-1}{q-1}$ pairwise linearly independent parity check vectors. Since all of these parity check
coefficient vectors are pairwise linearly independent, $N$ of them have to be of weight one. Without
loss of generality we may assume that these weight one vectors are the first $N$ parity check vectors
in $\set{H}$, i.e. $\vect{a}_1$, $\vect{a}_2$, $\ldots$, $\vect{a}_N$. Then we may define 
a matrix $\vect{A}$ by using the remaining parity check coefficient vectors in $\set{H}$ as 
\begin{equation}
  \vect{A}\triangleq \left[\begin{array}{c}\vect{a}_{N+1} \\\vect{a}_{N+2} \\\vdots\\ \vect{a}_{\setsize{H}}  \end{array}\right] \label{adef} \textrm{.}
\end{equation}
We will use $\vect{A}$ while defining the substitute pmf for $p(\vect{x})$. This substitute pmf has an extended argument 
vector $\vect{x}_E$ consisting of $L$ components where $L$ is equal to $\setsize{H}$. This extended argument vector is defined as 
\begin{equation}
  \vect{x}_{E}\triangleq[  \vect{x} \quad  \vect{x}_{A} ] \label{xedef} \textrm{,}
\end{equation} 
 where  $\vect{x}$ and $\vect{x}_{A}$ are given by
\begin{eqnarray}
  \vect{x} &\triangleq&[x_1 \ x_{2}\ \ldots \ x_{N}] \textrm{,} \\
  \vect{x}_{A}& \triangleq& [x_{N+1} \ x_{N+2}\ \ldots \ x_{L} ] \label{xadef} \textrm{.}
\end{eqnarray}
Finally, we propose the substitute pmf for $p(\vect{x})$ as 
\begin{equation}
  t_p(\vect{x}_E) \triangleq \left\{ \begin{array}{ll}p(\vect{x}), & \textrm{if }\vect{x}_A^{T}=\vect{A}\vect{x}^{T}\\
    0, &\textrm{otherwise}\end{array} \right.   \textrm{.} \label{tpdef}
\end{equation}
Clearly, $t_p(\vect{x}_E)$ achieves its maximum value at a configuration $\vect{x}_{E,MAX}$ which 
is equal to
\begin{eqnarray}
  \vect{x}_{E,MAX}&\triangleq& \arg \max_{\vect{x}_{E}} t_p(\vect{x}_E) \\
  &=& [\vect{x}_{MAX} \quad \vect{x}_{MAX}\vect{A}^{T}]   \label{maximizingrelation}
\end{eqnarray}
where $\vect{x}_{MAX}$ is the configuration maximizing $p(\vect{x})$. Due to this property
any device which determines the configuration maximizing $t_{p}(\vect{x}_{E})$ also 
determines the configuration maximizing $p(\vect{x})$ at the same time.

Now we need to show that $t_{p}(\vect{x}_E)$ can be maximized by an ML codeword decoder. 
As a first step, we can obtain an equivalent alternative definition of $t_p(\vect{x}_E)$ with using parity check constraints as
\begin{equation}
  t_p(\vect{x}_E)= p(\vect{x})\prod_{i=N+1}^{L}\delta(\vect{a}_{i}\vect{x}^{T}-x_i) \textrm{.}
\end{equation}
Inserting the canonical factorization of $p(\vect{x})$ into the equation above yields
\begin{eqnarray}
  t_p(\vect{x}_E)&=&\n{\fieldql}{\prod_{i=1}^{L}r_{i}(\vect{a}_i\vect{x}^{T})\prod_{i=N+1}^{L}\delta(\vect{a}_{i}\vect{x}^{T}-x_i)}\\
  &=&\n{\fieldql}{\prod_{i=1}^{N}r_i(\vect{a}_i\vect{x}^{T})
    \prod_{i=N+1}^{L}r_{i}(\vect{a}_i\vect{x}^{T})\delta(\vect{a}_{i}\vect{x}^{T}-x_i)}    \textrm{.}
\end{eqnarray}
Recall that we assumed the first $N$ $\vect{a}_i$ vectors to be of weight one while defining the matrix $\vect{A}$. 
Hence, we may safely assume further that these $N$ $\vect{a}_i$ vectors are the canonical basis vectors of $\fieldqn$. With this assumption 
the factorization of $t_p(\vect{x}_E)$ becomes
\begin{equation}
  t_p(\vect{x}_E)=\n{\fieldql}{\prod_{i=1}^{N}r_i(x_i)
    \prod_{i=N+1}^{L}r_{i}(\vect{a}_i\vect{x}^{T})\delta(\vect{a}_{i}\vect{x}^{T}-x_i)}\textrm{.}
\end{equation}
 The only remaining step to obtain a factorization as in (\ref{decodersuitablefac})
is to replace $r_{i}(\vect{a}_i\vect{x}^{T})\delta(\vect{a}_{i}\vect{x}^{T}-x_i)$
with $r_{i}(x_i)\delta(\vect{a}_{i}\vect{x}^{T}-x_i)$ which yields
\begin{eqnarray}
t_p(\vect{x}_E)&=&\n{\fieldql}{\prod_{i=1}^{N}r_i(x_i)
    \prod_{i=N+1}^{L}r_{i}(x_i)\delta(\vect{a}_{i}\vect{x}^{T}-x_i)}\\
&=&\n{\fieldql}{\prod_{i=1}^{L}r_i(x_i)
    \prod_{i=N+1}^{L}\delta(\vect{a}_{i}\vect{x}^{T}-x_i)} \label{substitutefactorization}\textrm{.}
\end{eqnarray}
Since all the factor functions above are either degree one factor functions or parity-check constraints,
an ML decoder of a linear code can be utilized to maximize $t_p(\vect{x}_E)$.

The parity check matrix of the linear code which can be used to maximize $t_p(\vect{x}_E)$ and consequently  $p(\vect{x})$
at the same time can be found as follows. Let a parity check coefficient vector $\vect{h}_i$ of length
$L$ be defined as in
\begin{equation}
  \vect{h}_i\triangleq \left[ \vect{a}_{i+N} \quad \vect{0}_{1\times (i-1)}\quad -1 \quad \vect{0}_{1\times (L-i-N)}  \right]  
\quad \textrm{ for }1\leq i \leq L-N  \textrm{.} \label{hidef}
 \end{equation}
Equation (\ref{substitutefactorization}) can be expressed  using $\vect{h}_i$ as 
\begin{equation}
t_p(\vect{x}_E)=\n{\fieldql}{\prod_{i=1}^{L}r_i(x_i)
    \prod_{i=1}^{L-N}\delta(\vect{h}_{i}\vect{x}^{T}_E)} \textrm{.} \label{tpfactorization1}
\end{equation}
Hence, the parity check matrix $\vect{H}$ of the code whose ML codeword decoder can be used to maximize 
$t_p(\vect{x}_E)$ and $p(\vect{x})$ is 
\begin{eqnarray}
  \vect{H}&\triangleq&\left[\begin{array}{c}\vect{h}_1\\\vect{h}_2 \\ \ldots \\ \vect{h}_{L-N} \end{array} \right] \label{hdef} \\
  &=&\left[ \vect{A} \quad -\vect{I}_{(L-N)\times (L-N)}\right] \textrm{.}
\end{eqnarray}
The $L$ input vectors  that should be applied to maximize $p(\vect{x})$ and $t_p(\vect{x}_E)$ are
\begin{equation}
  \vect{y}_i= \operator{L}{r_{i}(x)} \quad \textrm{ for } 1\leq i \leq L \textrm{.} 
\end{equation}
To sum up, with these input vectors ML codeword decoder of the linear code with parity check
matrix $\vect{H}$ returns 
\begin{eqnarray}
  \hat{\vect{x}}_{E,ML}&=& \arg \max_{\vect{x}_E} \prod_{i=1}^{L} r_{i}(x_i)\prod_{i=1}^{L-N}\delta(\vect{h}_i\vect{x}^{T}_{E}) \\
  &=&\arg \max_{\vect{x}_E} t_{p}(\vect{x}_{E}) \textrm{.}
\end{eqnarray}
Due to (\ref{maximizingrelation}) the leading $N$ components of $\hat{\vect{x}}_{E,ML}$ is the 
 $\vect{x}_{MAX}$ vector  maximizing $p(\vect{x})$ which we are seeking for. We can ignore
the rest of the  $\hat{\vect{x}}_{E,ML}$ vector. The whole process of finding the configuration
maximizing $p(\vect{x})$ is summarized in Figure \ref{maximumsummary}.
\begin{figure}
\def\putbox#1#2#3#4{\makebox[0in][l]{\makebox[#1][l]{}\raisebox{\baselineskip}[0in][0in]{\raisebox{#2}[0in][0in]{\scalebox{#3}{#4}}}}}
\def\rightbox#1{\makebox[0in][r]{#1}}
\def\centbox#1{\makebox[0in]{#1}}
\def\topbox#1{\raisebox{-0.60\baselineskip}[0in][0in]{#1}}
\def\midbox#1{\raisebox{-0.20\baselineskip}[0in][0in]{#1}}
\begin{center}
   \scalebox{1}{%
   \normalsize
   \parbox{5.39062in}{%
   \includegraphics[scale=1]{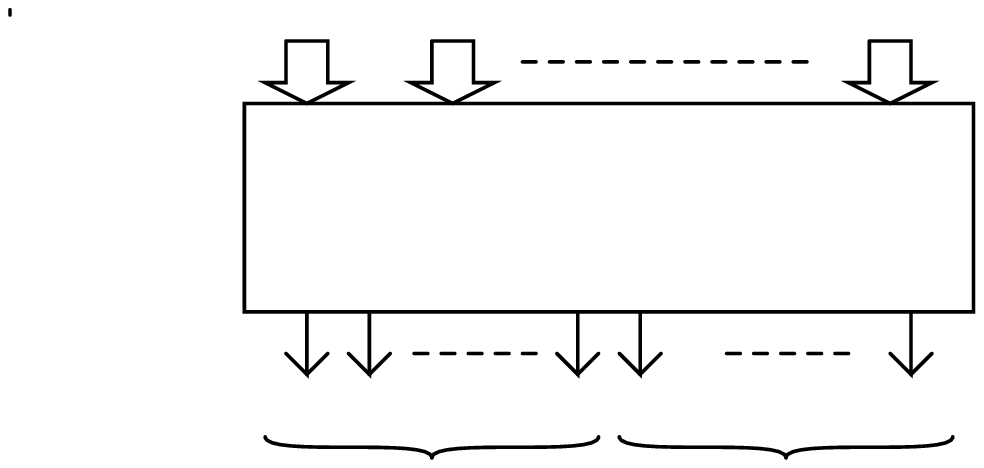}\\
   \putbox{0.83in}{2.14in}{1.20}{$\operator{L}{r_1(x)}$}%
   \putbox{1.58in}{1.56in}{1.20}{ML codeword decoder of}%
   \putbox{1.99in}{1.22in}{1.20}{$\vect{H}=[\vect{A} \quad -\vect{I}]$}%
   \putbox{1.49in}{2.14in}{1.20}{$\operator{L}{r_2(x)}$}%
   \putbox{3.33in}{2.14in}{1.20}{$\operator{L}{r_L(x)}$}%
   \putbox{1.16in}{0.56in}{1.20}{$x_1$}%
   \putbox{1.41in}{0.56in}{1.20}{$x_2$}%
   \putbox{2.24in}{0.56in}{1.20}{$x_N$}%
   \putbox{2.45in}{0.56in}{1.20}{$x_{N+1}$}%
   \putbox{3.49in}{0.56in}{1.20}{$x_L$}%
   \putbox{1.33in}{0.22in}{1.20}{Desired}%
   \putbox{1.41in}{0.06in}{1.20}{result}%
   \putbox{2.91in}{0.22in}{1.20}{Ignore}%
   } 
   } 
   \vspace{-\baselineskip} 
\end{center}
\begin{center}
    \caption{Summary of the utilization of an ML codeword decoder for maximizing a pmf\label{maximumsummary}}
    \end{center}
\end{figure} 

It is well known that ML codeword decoding problem is a special instance of 
maximization of multivariate pmf problems. In this section, we showed
that there exists a special ML codeword decoding problem which can handle
the maximization task of an arbitrary multivariate pmf. Hence, the reverse
of the well known statement above is also true. Therefore, we can
conclude that \emph{ML codeword decoding
and maximization of multivariate pmfs are equivalent problems}.

\section{Marginalizing a multivariate pmf by using a symbolwise decoder\label{marginalizationsection}}

Let an $\fieldq$-valued random vector $\vect{X}=[X_1,X_2,\ldots,X_N]$
be distributed with a $p(\vect{x})\in \setfieldqn$. The marginal pmf of $X_i$ is 
\begin{equation}
  \Pr\{X_i=x_i\}=\sum_{\summary{x_i}}p(\vect{x}) \textrm{.}
\end{equation}
A symbolwise decoder can perform this marginalization if $p(\vect{x})$ can be 
factored into degree one factor functions and parity-check constraints, which 
is not possible for a strictly positive pmf. However, as we did in the previous section,
for each $p(\vect{x})\in \fieldq$
we can obtain a substitute multivariate pmf which has the desired factorization
 and can be used in the marginalization of $p(\vect{x})$.

We can follow a more straightforward path to obtain this substitute pmf when compared to 
the previous section.
Inserting the canonical factorization of $p(\vect{x})$ into the marginalization sum above yields
\begin{eqnarray}
  \Pr\{X_i=x_i\}&=&\sum_{\summary{x_i}}\n{\fieldqn}{\prod_{i=1}^{L}r_{i}(\vect{a}_i\vect{x}^{T})}\\
  &=&\n{\fieldq}{\sum_{\summary{x_i}}\prod_{i=1}^{L}r_{i}(\vect{a}_i\vect{x}^{T})}\\
  &=&\n{\fieldq}{\sum_{\summary{x_i}}\prod_{i=1}^{N}r_i(x_i)\prod_{i=N+1}^{L}r_{i}(\vect{a}_i\vect{x}^{T})} \textrm{,}
\end{eqnarray}
where we make the same assumptions as in the previous section about the  canonical factorization of $p(\vect{x})$. 
This equation shows that $N$ of the factor functions are already of degree one. 
The remaining factor functions, which are SPC constraints, can be expressed by using the sifting property of 
the Kronecker delta function as 
\begin{equation}
  r_{i}(\vect{a}_i\vect{x}^{T})=\sum_{\forall x_i \in \fieldq} \delta(\vect{a}_i\vect{x}^{T}-x_i)r_i(x_i)\quad \textrm{ for } N+1 \leq i \leq L \textrm{.}
\end{equation}
 Since $i$ is greater than $N$, $x_i$ above is not a component of vector $\vect{x}$ and is just a dummy variable.  
Using this identity in the marginalization sum gives
\begin{eqnarray}
  \Pr\{X_i=x_i\}&=&\n{\fieldq}{\sum_{\summary{x_i}}\prod_{i=1}^{N}r_i(x_i)\prod_{i=N+1}^{L}
    \sum_{\forall x_i \in \fieldq} \delta(\vect{a}_i\vect{x}^{T}-x_i)r_i(x_i)}\\
&=&\n{\fieldq}{\sum_{\summary{x_i}}\sum_{\forall \vect{x}_A \in \field{q}^{L-N}}\prod_{i=1}^{L}r_i(x_i)\prod_{i=N+1}^{L}\delta(\vect{a}_i\vect{x}^{T}-x_i) } 
\textrm{,}
\end{eqnarray}
where $\vect{x}_A$ is as defined in (\ref{xadef}). Thanks to the summary notation the summation running over $\vect{x}_A$ can be 
merged to the first summation  which yields
\begin{equation}
  \Pr\{X_i=x_i\}=\n{\fieldq}{\sum_{\summary{x_i}}\prod_{i=1}^{L}r_i(x_i)\prod_{i=1}^{L-N}\delta(\vect{h}_i\vect{x}_E^T)} \textrm{,}
\end{equation}
where $\vect{x}_E$ and $\vect{h}_i$ are defined in (\ref{xedef}) and (\ref{hidef}) respectively. Notice that 
the two products above is the factorization of $t_p(\vect{x}_E)$, which is defined in (\ref{tpdef}), given in (\ref{tpfactorization1}). Therefore,
\begin{eqnarray}
  \Pr\{X_i=x_i\}&=& \n{\fieldq}{\sum_{\summary{x_i}} t_p(\vect{x}_E) }\\
  &=&\sum_{\summary{x_i}} t_p(\vect{x}_E) \label{marginalizationrelation}
  \textrm{.}
\end{eqnarray}
This result shows that the marginal probability of $X_i$, $\Pr\{X_i=x_i\}$, can be computed
either by marginalizing $p(\vect{x})$ or by marginalizing $t_p(\vect{x}_E)$. 

Similar to the maximization of $t_p(\vect{x}_E)$, marginalization of $t_p(\vect{x}_E)$ can be accomplished 
by the \emph{symbolwise decoder} of the linear
code with parity check $\vect{H}$ defined in (\ref{hdef}). When input vectors $\vect{y}_i=\operator{L}{r_i(x)}$
is applied to this symbolwise decoder it returns the marginal probabilities associated
with the APP
\begin{equation}
  \Pr\{\vect{X}_E=\vect{x}_E|\vect{Y}=\vect{y}\}= \n{\fieldql}{\prod_{i=1}^{L}r_i(x_i)\prod_{i=1}^{L-N}\delta(\vect{h}_i\vect{x}_E^T)}\textrm{,}
\end{equation}
which is equal to $t_p(\vect{x}_E)$. Hence, this decoder is capable of both marginalizing $t_p(\vect{x}_E)$ and consequently $p(\vect{x})$
at the same time.

We could achieve the result given in (\ref{marginalizationrelation}) through a much shorter path 
if we started from the definition of $t_p(\vect{x}_E)$ given in (\ref{tpdef}). We preferred the path followed
above to this shorter path, since the path above explains how we reached to the proposed definition of $t_p(\vect{x}_E)$
which is the most critical part of the previous section. 

It is very well known that symbolwise decoding is an instance of  marginalization problems in general.  
In this section we showed that marginalization of multivariate pmfs can be expressed as a particular symbolwise decoding
problem. Hence, it can be concluded that \emph{symbolwise decoding 
and marginalization are equivalent problems}.

\section{The decoder of the dual Hamming code as the universal inference machine\label{universalitydualhamming}}

In the previous two sections we have shown that the ML codeword and symbolwise
decoders of the linear code with parity check matrix $\vect{H}$ can be
used to maximize and marginalize any pmf in $\setfieldqn$. This parity check
matrix belongs to the dual code of a very well known code from 
coding theory. Recall that the matrix $\vect{H}$ is as defined 
as
\begin{eqnarray}
  \vect{H}&=& \left[\vect{A} \quad -\vect{I}_{(L-N)\times(L-N)}\right] \label{hdef1} \\
  &=&\left[\begin{array}{ccccc} \vect{a}_{N+1} & -1 & 0 & \cdots & 0\\ 
      \vect{a}_{N+2} & 0 & -1 & \cdots & 0 \\
      \vdots& \vdots & \vdots & \ddots & 0 \\
      \vect{a}_{L} &0 & 0 & \cdots & -1
    \end{array} \right] \textrm{.}
\end{eqnarray}
The generator matrix of this code is
\begin{equation}
  \vect{G}= \left[\vect{I}_{N\times N}  \quad \vect{A}^{T} \right]\textrm{.}
\end{equation}
In Section \ref{maximizationsection} we assumed that the first $N$ $\vect{a}_i$
vectors are the canonical basis vectors of $\fieldqn$. Therefore, 
the generator matrix can be written as
\begin{equation}
\vect{G}=\left[ \vect{a}_1^{T} \quad \vect{a}_2^{T} \quad \cdots \quad \vect{a}_{L}^{T} \right]\textrm{.}
\end{equation}
Recall that all $\vect{a}_i$ vectors were pairwise linearly independent and $L$ was equal to 
$\frac{q^{N}-1}{q-1}$. Therefore, the generator matrix $\vect{G}$ given above is
actually the parity check matrix of the Hamming code in $\fieldq$ of length $L$. 
 Hence, the parity check matrix $\vect{H}$
given in (\ref{hdef1}) is the parity check matrix of the dual Hamming code in $\fieldq$ of 
length $L$. Consequently, the ML codeword decoder of the $(L,N)$ dual Hamming code
can be configured by adjusting its inputs to maximize any pmf in $\setfieldqn$. 
Similarly, the symbolwise decoder of the $(L,N)$ dual Hamming code can be used 
as an apparatus to marginalize any pmf in $\setfieldqn$. Therefore, the decoders
of the dual Hamming codes are universal inference machines.

\section{Performing inference on special pmfs by decoders\label{specialdecoders}}

In the previous sections we have shown that the decoders of the $(L,N)$ dual
Hamming code designed for the channel model given in (\ref{channelmodel}) can be used to perform inference on any pmf in $\setfieldqn$. 
The analysis presented in the previous sections is for the most general case.
Decoders of shorter codes designed for simpler channel models can be employed
to perform inference on some pmfs  enjoying
special properties in their canonical factorization. 

\subsection{Performing inference with the decoders of shorter codes}

In Chapter \ref{specialcasechapter} we investigated the canonical 
factorizations of some special pmfs.  The canonical 
factorization of these special pmfs consisted of  less than $\frac{q^{N}-1}{q-1}$
SPC factors. We can perform inference on these special pmfs by using
the decoders of the codes whose parity check matrices are the sub-matrices
of the $(L,N)$ dual Hamming code. 

Suppose that we would like to perform inference on  a special pmf $p(\vect{x}) \in \setfieldqn$ whose canonical
factorization can be expressed as 
\begin{equation}
  p(\vect{x})=\n{\fieldqn}{\prod_{\vect{a}_i\in\set{D}}r_{i}(\vect{a}_i\vect{x}^{T})}  \textrm{,}
\end{equation} 
where $\set{D}$ is a subset of $\set{H}$ and $r_{i}(\vect{a}_i\vect{x}^{T})$
is the projection of $p(\vect{x})$ onto $\imagesop{\vect{a}_i}$. Let $\set{B}$
be a subset of $\set{D}$ which consists of all of the parity check coefficient
vectors in $\set{D}$ of \emph{weight two or more}. Moreover, let $\vect{B}$ be a $\setsize{B}\times N$ matrix
whose rows are the vectors in $\set{B}$.  Then we may define the substitute pmf $t_{p}(\vect{x}_F)$ which can be 
used to perform inference on $p(\vect{x})$ as
\begin{equation}
    t_p(\vect{x}_F) \triangleq \left\{ \begin{array}{ll}p(\vect{x}), & \textrm{if }{\vect{x}}_B^{T}=\vect{B}\vect{x}^{T}\\
    0, &\textrm{otherwise}\end{array} \right.   \textrm{.} 
\end{equation}
where $\vect{x}_{B}$ and $\vect{x}_F$ are given by
\begin{eqnarray}
\vect{x}_{B}&\triangleq& [x_{N+1} \ x_{N+2}\ \ldots \ x_{\setsize{B}+N} ] \textrm{,}\\
\vect{x}_{F}&\triangleq& [  \vect{x} \quad  \vect{x}_{A} ] \textrm{.}
\end{eqnarray}
 It can be shown through a similar path to the one in Section \ref{maximizationsection}
and Section \ref{marginalizationsection} that we can maximize or marginalize
$t_p(\vect{x}_F)$  if we wish to determine the configuration maximizing 
$p(\vect{x})$ or marginalize $p(\vect{x})$. Moreover, we can use the 
ML codeword and symbolwise decoders of the linear code with 
parity check matrix 
\begin{equation}
  \vect{H}_S\triangleq [\vect{B} \quad -\vect{I}]
\end{equation}  
to maximize or marginalize $t_p(\vect{x}_F)$. Hence, these decoders
can be used to maximize or marginalize $p(\vect{x})$. 

As in Section \ref{maximizationsection} and Section \ref{marginalizationsection},
the ML codeword and symbolwise decoders of the linear code
described by parity check matrix $\vect{H}_S$  should be configured
to perform inference on $p(\vect{x})$ by applying a certain set of
inputs. The $i^{th}$ of these inputs is $\pseudoinv{L}{v_{i}(x)}$
where $v_i(\vect{b}_{i}\vect{x}^{T})$ is the projection 
of $p(\vect{x})$ onto $\imagesop{\vect{b}_i}$, and $\vect{b}_i$
is the $i^{th}$ canonical basis vector of $\fieldqn$ if $i$ is less than 
or equal to $N$ and $(i-N)^{th}$ row of $\vect{B}$ otherwise.

Since $\set{D}$ is  a subset of $\set{H}$, $\vect{B}$ is a sub-matrix 
of $\vect{A}$ defined in (\ref{adef}). Consequently, $\vect{H}_S$ is
a sub-matrix of $\vect{H}$ defined in (\ref{hdef}). Therefore, 
implementing  the decoder associated with $\vect{H}_S$
is easier than implementing the decoder associated with $\vect{H}$. 

Actually, there are many linear codes whose decoders can be employed
to perform inference on $t_p(\vect{x}_F)$ and $p(\vect{x})$ at the same time.
For instance the decoders of the linear codes with parity check 
matrices in the form given below can be used in performing
inference on $p(\vect{x})$,
\begin{equation}
  \vect{H}_{SE} \triangleq [\vect{C}\  -\vect{I}]\textrm{,}
\end{equation}
where $\vect{C}$ is a sub-matrix of $\vect{A}$ such that it contains all 
rows of $\vect{B}$ and some more. The linear code with parity check matrix
$\vect{H}_S$ is the one with the shortest length among these codes. At a first glance 
 preferring the decoder of a longer code to a shorter one might seem 
useless while solving the same inference problem. However, in the next chapter we are going to provide some  
examples in which choosing the decoder of the longer code might be 
advantageous. 

If a linear code has a parity check matrix which can be obtained
by permuting the columns of $\vect{H}_{SE}$ then the 
decoders of this code can also be employed in maximizing or marginalizing
$p(\vect{x})$. However, in order to obtain the desired 
result we need to apply permuted inputs. 

\subsection{Performing inference by decoders designed for simpler channels\label{simplerchannel}}

Let $\set{Y}$ be the output alphabet of a communication channel which might
be a finite set, real field, complex field, or a vector space. If there exists
a sequence of channel outputs $y_1$, $y_2$, $\ldots$, $y_{\setsize{H}}$ in $\set{Y}$ such that a multivariate
pmf $p(\vect{x}) \in \setfieldqn$ can be expressed as 
\begin{equation}
  p(\vect{x})=\n{\fieldqn}{\prod_{\vect{a}_i\in \set{H}}\Pr\{Y=y_i|X=\vect{a}_i\vect{x}^{T}\}} \textrm{,} 
\end{equation}
where $\Pr\{Y=y|X=x\}$ denotes the likelihood function of the channel, then the decoder
of the dual Hamming code designed for this channel can be employed to perform inference
on $p(\vect{x})$. The inputs that should be applied to this decoder to perform
inference on $p(\vect{x})$ are obviously $y_1$, $y_2$, $\ldots$, $y_{\setsize{H}}$. 

In some problems preferring other channel models to the one described 
in (\ref{channelmodel}) might be simpler. We selected the channel model
therein since for each $r(x)\in \fieldq$ there exists a $\vect{y}\in\real^{q}$ such that 
$r(x)=\n{\fieldq}{\Pr\{\vect{Y}=\vect{y}|X=x\}}$. Consequently, the decoders 
of the dual Hamming code designed for this channel can be employed to 
perform inference on any pmf $p(\vect{x})\in \fieldqn$.  

\section{The Generic Factor Graph and Equivalent Tanner graph\label{graphicalrepresentation}}

In Chapter \ref{thechapter}, it is shown that the canonical factorization of 
any multivariate pmf $p(\vect{x})$ in $\setfieldqn$ exists which 
is given by
\begin{equation}
p(\vect{x})=\n{\fieldqn}{\prod_{i=1}^{L}r_{i}(\vect{a}_i\vect{x}^{T})} \nokta
\end{equation}
In this chapter, we made some assumptions on $\vect{a}_i$. We assumed without loss of generality that
the first $N$ parity check coefficient vectors, $\vect{a}_1$, $\vect{a}_2$, 
$\ldots$, $\vect{a}_{N}$, are the canonical basis vectors of $\fieldqn$. 
Therefore, the canonical factorization of any $p(\vect{x})$ becomes
\begin{equation}
p(\vect{x})=\n{\fieldqn}{\prod_{i=1}^{N}r_i(x_i)\prod_{i=N+1}^{L}r_{i}(\vect{a}_i\vect{x}^{T})} \nokta
\end{equation}
The factor graph representing this factorization is shown in Figure \ref{genericgraphs}-a. 
This factor graph can represent any pmf in $\setfieldqn$ since all of the pmfs in $\setfieldqn$
has a factorization given above. The only difference between any two factor graphs representing
two different joint pmfs are the factor functions in the factor graph. 

\begin{figure}
  \begin{center}
\begin{tabular}{c}
  \includegraphics[scale=.6]{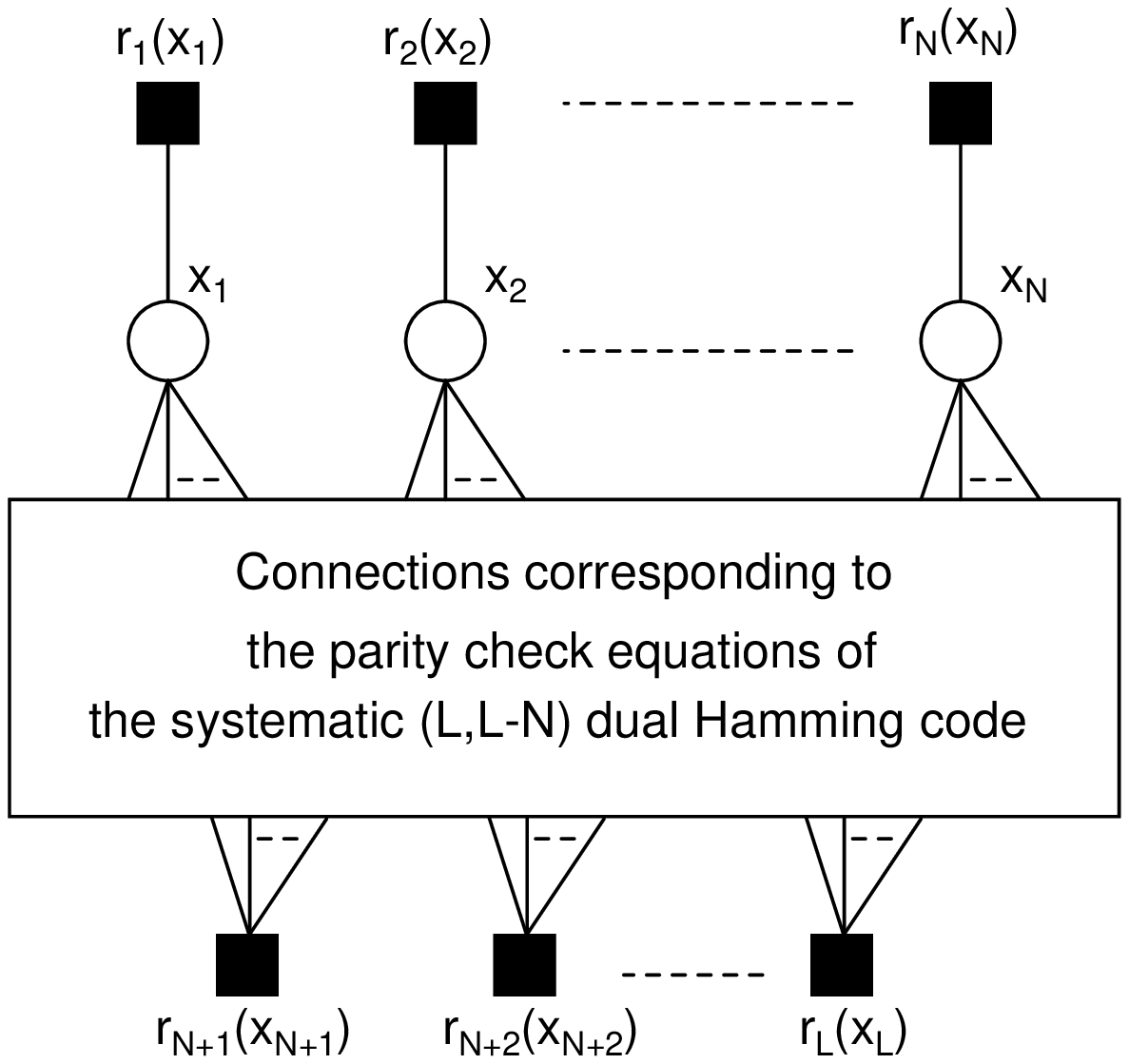}\\
  (a)\\
  \includegraphics[scale=.6]{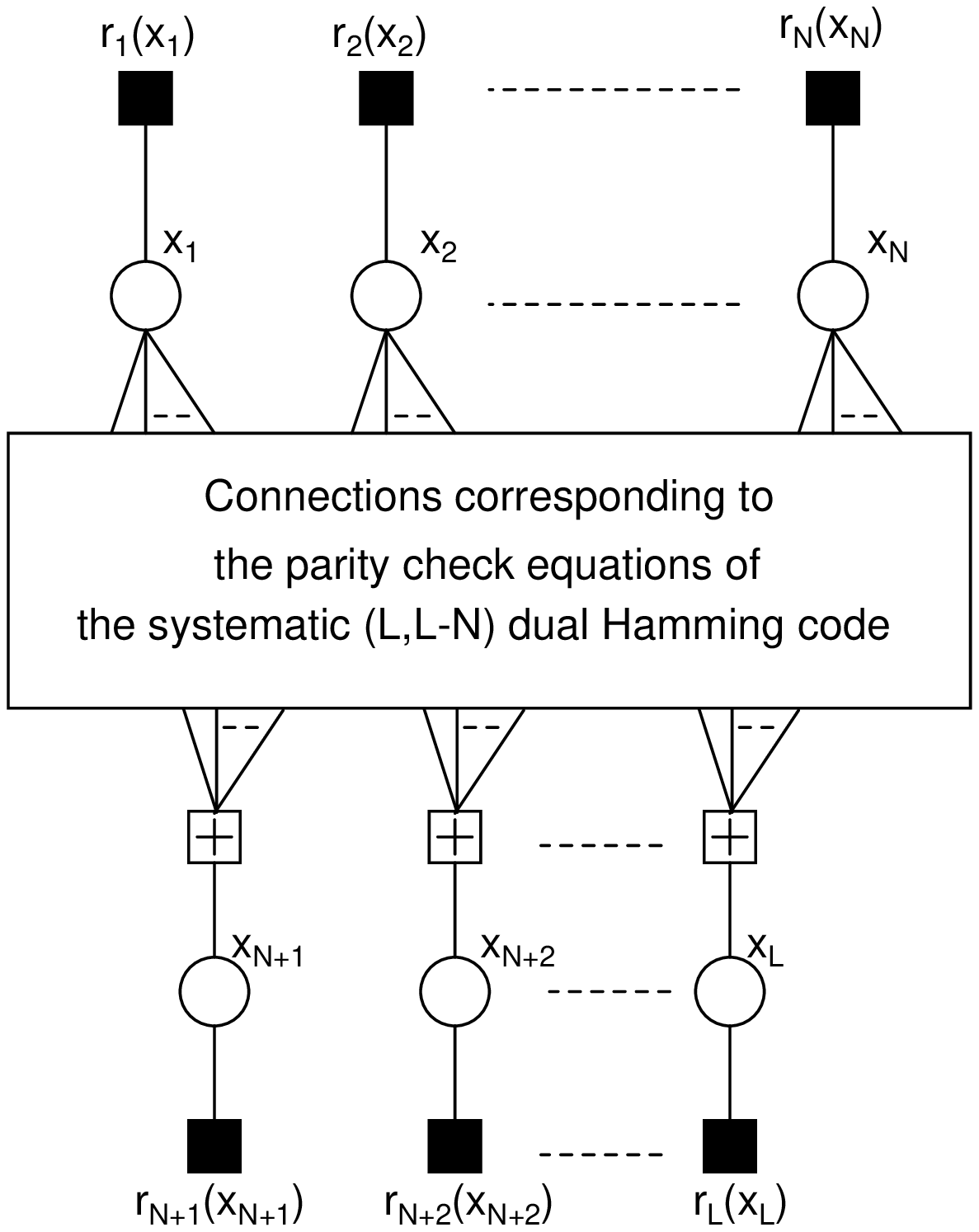}\\
  (b)
\end{tabular}
  \caption{(a) The generic factor graph which can represent any $p(\vect{x})$ in $\setfieldqn$.  
(b) The equivalent Tanner graph of the generic Tanner graph. \label{genericgraphs}}    
  \end{center}
\end{figure}

In this chapter, we showed that  performing inference on $p(\vect{x})$  is equivalent to 
performing inference  on $t_p(\vect{x}_E)$ which is defined in (\ref{tpdef}). 
The factorization of $t_p(\vect{x}_E)$ given in \ref{substitutefactorization}
is represented by the Tanner graph shown in Figure \ref{genericgraphs}-b.in 
Hence, this Tanner graph is the equivalent Tanner graph representing 
the canonical factorization. 
While transforming the factor graph in Figure \ref{genericgraphs}-a to the
Tanner graph in in Figure \ref{genericgraphs}-b,  auxiliary variable
nodes representing the variables $x_{N+1}$, $x_{N+2}$, $\ldots$, $x_{L}$ are 
added. These auxiliary variables are very different from the 
hidden state nodes introduced in the Wiberg style Tanner graphs \cite{Wiberg}.    

\section{Importance \label{importancesection}}

Using channel decoders for inference tasks beyond decoding is important mainly in two 
aspects. Firstly, using a channel decoder for an inference task provides 
new hardware options in  the solution of the inference 
problems. Among these hardware options the analog probability propagation technique
is important in particular \cite{loeligeranalog}. Secondly, 
new approximate algorithms for the solution of the inference
problems can be developed by using the sub-optimal decoders of the codes, which have been
studied for a long time.

\subsection{Performing inference with  probability propagation in analog VLSI}

The semiconductor devices such as transistors and diodes are the most primitive 
building blocks of any electronic device today. By their very nature these
devices are \emph{nonlinear}.  Over the last few decades engineers
developed ways to cope with this nonlinearity. While designing analog
circuitry engineers restrict the operation of the circuit to such 
a region in which these devices behave almost linearly. Another way 
to cope with nonlinearity of these devices is avoiding analog
circuits as much as possible and trying to implement everything in
digital. The signals flowing in a digital circuit are so large that transistors
behave like switches. Hence, digital circuits are robust against
the nonlinearity of the transistors. Digital circuits are also robust 
against other factors such as component mismatch and noise. Due to these
and some other advantages digital circuits are usually preferred to analog
circuits. 

However, Carver Mead, who is one of the 
pioneers of the VLSI revolution, claimed in his book \cite{meadbook} that digital computation is inefficient
and analog computation
is the way to achieve the capacity and the efficiency of the brains of the animals. 
Moreover, he claimed that analog computation can be made as robust as 
 digital computation to the factors such as noise and component mismatch. 
He provided many practical  examples to support his claims in his book.  

A decade after Mead's book, another evidence arise from coding theory to 
support his claims. Just two operations are sufficient to perform 
soft-input soft-output decoding. These operations are addition, which can be
easily implemented with analog circuitry, and the hyperbolic tangent function \cite{fgsp}. 
Since the differential pair exhibits tangent hyperbolic function this second
function can also be implemented with analog circuits. Motivated with this
idea, 
 Loeliger and his group designed and tested analog circuits
to perform decoding of channel codes \cite{loeligerdigeranalog,loeligeranalog}.  
 They report that their analog decoding circuitry consumes two orders of magnitude
less power than their digital counterparts. This efficiency arises from the
fact that their analog decoding circuit does not fight with the nonlinearities 
of the transistors but exploits those nonlinearities \cite{loeligerdigeranalog}. 
They also report that these circuits are robust to component mismatch. 

Loeliger's ``probability propagation in analog VLSI'' has an important limitation. This 
approach can be applied
to  probabilistic inference problems if a condition related
to  the factorization of the multivariate pmf under concern is satisfied. This condition
states that the pmf should be able to be expressed as a product
of zero-one valued functions and functions of degree one \cite{loeligeranalog}.
Although this condition is satisfied in decoding problems, it is not 
satisfied in other  problems arising in communication theory such as channel equalization and 
MIMO detection. Hence, equalizers or MIMO detectors could not be built directly with their brilliant idea
whereas decoders could. A pure decoder implemented with probability
propagation in analog is not very useful without implementing the equalizer or detector 
in analog since the interface required between the decoder and the equalizer (or detector) would cancel
all the efficiency of the analog decoder. 

In this chapter, we showed that inference problems can be solved by using channel decoders. Hence,
it is possible to solve the equalization or MIMO detection problems by decoders. Consequently,
the results presented in this chapter, allows us to implement channel equalizers or MIMO detectors
with the very efficient analog probability propagation approach. 
It is reasonable to expect, based on the experience 
on analog decoding, that such receiver blocks would be two orders of magnitude smaller in size and
consumes two orders of magnitude less power than current receivers. 
Probably this aspect 
will be the most important contribution of this thesis.  


\subsection{New approximate inference algorithms}

The iterative sum-product algorithm running on loopy Tanner graphs is proven to 
be efficient decoding algorithm for various codes.  The sum-product algorithm
is characterized by the Tanner graph representing the code. A code might be represented
with many different parity check matrices. For each parity check matrix, more than 
one Tanner graphs might be obtained representing the code. Hence, for each 
code we have various alternative Tanner graphs to represent the code. Consequently, 
we may have various versions of the sum-product algorithm to decode the same code.    
Each of these alternative versions have different characteristics in terms
of complexity and performance \cite{wiberg}. Therefore, employing a channel
decoder to perform an inference task allows us to choose among different 
sum-product algorithm versions to handle the inference task. Hence, 
 new approximate inference algorithms can be developed in this manner. 
We provide an example on MIMO detection in the next chapter.

\chapter{USING  CHANNEL DECODERS AS DETECTORS\label{applicationchapter}}

\section{Introduction}

This chapter contains examples to the idea presented in Chapter \ref{decodingchapter} by showing
how to employ channel decoders as the detectors of communication receivers. 
One of these examples which 
is MIMO detection by using the decoder of a tail biting convolutional code demonstrates that
new inference algorithms with low complexity can be developed by employing channel 
decoders for other purposes.   

Unfortunately, some of the derivations presented in this chapter might appear quite tedious, Sections \ref{mpamdemodsection}, 
\ref{graydemodsection}, and \ref{mimodemodsection} in particular. Actually, the derivations
in these sections are straightforward applications of the methods presented in the previous chapter. 
Most of these derivations are so straight forward that they can be derived with symbolic programming.
Indeed, we used the  GiNaC symbolic programming library in C++ while deriving some of the cumbersome derivations
presented in this chapter. Hence, reporting and following these derivations is much more difficult than deriving them.  
However, these sections include examples to make the subject more concrete. These examples also demonstrate
how the same decoder can be used for different purposes by changing its  inputs. 
 
This chapter begins with analyzing the multiple-input single-output (MISO) detection. 
Then the results obtained in Section \ref{misoqpsk} are used to derive the channel 
decoder which can be used in the detection of naturally mapped
pulse amplitude modulation (PAM) signals in Section \ref{mpamdemodsection}.
Section \ref{graydemodsection} explains the detection of gray mapped 
PAM signals by using channel decoders. Section \ref{mimodemodsection}
investigates the multiple-input multiple-output (MIMO) detection
of QPSK signal by using decoders. Special attention is paid to 
the MIMO detection of QPSK signals by using the decoders 
of tail biting convolutional codes in Section \ref{tailbitingsection}.
This section also includes some simulation results.
This chapter ends with briefly reporting that the Viterbi and BCJR decoders of 
the convolutional codes can be used channel equalizers.

\section{MISO detection  of $q$-ary PSK signaling with prime $q$ \label{misoqpsk} by using a channel decoder}

The MISO detection of $q$-ary PSK signaling under additive  Gaussian noise  is the simplest task 
(in terms of derivation) to be handled by a decoder. Moreover, analyzing this case 
first helps to transform other detection problems to decoding problems. ML MISO
detection task is finding the most likely input sequence given the 
received symbol. This task can be handled by 
ML codeword decoders. Soft output MISO detection is the computation 
of marginal a posteriori probabilities. This task can be handled by 
symbolwise decoders.

\subsection{Signal Model}
Let $\psk{x}$ be a function from $\fieldq$ to $\complex$ representing 
the $q$-ary PSK \footnote{$q$-ary PSK is not the same as QPSK. } mapping, i.e.  
\begin{equation}
\psk{x} \triangleq \exp\left(j \frac{2\pi}{q}\integer{x} \right) \textrm{,}
\end{equation}
where $\integer{.}$ denotes the usual mapping from $\fieldq$ to $\mathbb{N}$. 
Let a complex-valued random variable $Y$ be related to an $\fieldq$-valued random vector $\vect{X}=[X_1,X_2,\ldots,X_N]$
as 
\begin{equation}
  Y \triangleq \sum_{i=1}^{N}h_{i}\psk{X_i} +Z \label{misomodel}
\end{equation} 
where $h_i$ is a complex constant and $Z$ is a zero mean circularly symmetric complex Gaussian noise
with $\expectation{ZZ^{*}}=2\sigma^{2}$. Clearly, $Y$ models the received symbol after the symbols $X_{1}$, $X_{2}$, $\ldots$, $X_{N}$
are modulated with $q$-ary PSK and passed  through a $1 \times N$ multi-input single output (MISO) channel with channel coefficients $h_{i}$.
With these assumptions the a posteriori pmf $\vect{X}$ is 
\begin{equation}
  \Pr\{\vect{X}=\vect{x}|Y=y\} = \n{\fieldqn}{ \exp\left(-\frac{\abs{y-\sum_{i=1}^{N}h_{i}\psk{x_i}}^{2}}{2\sigma^{2}}  \right)} \textrm{,}
\label{appmisoqpsk}
\end{equation}
where $\vect{x}=[x_1,x_2,\ldots,x_N]$. We assume perfect channel information is known at the receiver side.

\subsection{The canonical factorization of the joint a posteriori pmf} 
The first step in determining 
the linear code whose ML codeword (symbolwise)  decoder can be used to maximize (marginalize) the a posteriori  
pmf $\Pr\{\vect{X}=\vect{x}|Y=y\}$ is obtaining the canonical factorization of the a posteriori pmf of $\vect{X}$.
The generic procedure of obtaining this canonical factorization is explained in detail 
in Chapter \ref{thechapter}, which could 
have been prohibitively tedious for this problem. Fortunately, the joint a posteriori pmf in this problem,  $\Pr\{\vect{X}=\vect{x}|Y=y\}$,
enjoys many special properties so that deriving its canonical factorization is easier. 

Let $p(\vect{x})$ denote  $\Pr\{\vect{X}=\vect{x}|Y=y\}$. As shown in Appendix \ref{facmisoqpsk},
$p(\vect{x})$ can be factored as in 
\begin{equation}
  p(\vect{x})= \n{\fieldqn}{\prod_{i=1}^{N}\exp\left(\frac{2\Re{yh_i^*\psk{x_i}^*}}{2\sigma^2}\right) 
  \prod_{j=2}^{N}\prod_{i=1}^{j-1}  \exp\left(-\frac{2\Re{h_ih_j^*\psk{x_i}\psk{x_j}^*}}{2\sigma^2}\right) }  \textrm{.}
\end{equation}
Since we have a known factorization for $p(\vect{x})$, we can apply Theorem \ref{localfactorizationtheorem}
to obtain the \emph{canonical factorization} of $p(\vect{x})$ as explained below.

Let two functions  $\gamma(\omega;\rho,\sigma)$ and $\theta(\omega_1,\omega_2;\chi,\sigma)$ be defined as in 
\begin{eqnarray}
  \gamma(\omega;\rho,\sigma)&\triangleq&\exp\left( \frac{2\Re{\rho \psk{\omega}^*}}{2\sigma^2} \right)\label{gammadef}\textrm{,} \\
  \theta(\omega_1,\omega_2;\chi,\sigma)&\triangleq&\exp\left(-\frac{2\Re{\chi\psk{\omega_1}\psk{\omega_2}^*}}{2\sigma^2} \right) \label{thetadef}
 \textrm{,}
\end{eqnarray}
for $\omega$, $\omega_1$, $\omega_2$  in $\fieldq$, $\sigma$ in $\real$, and $\rho$, $\chi$ in $\complex$.
The function $\gamma(\omega;\rho,\sigma)$ is nothing but the likelihood function of $\omega$ when 
it is modulated with $q$-ary PSK, passed through an additive white Gaussian noise (AWGN) channel with power spectral density (PSD)
 $N_{0}/2=\sigma^{2}$, and 
given that the  value  at the output of the matched filter is $\rho$. 
Using these functions the factorization of $p(\vect{x})$ becomes
\begin{equation}
  p(\vect{x})=\n{\fieldqn}{\prod_{i=1}^{N} \gamma(x_i;yh_i^*,\sigma) \prod_{j=2}^{N}\prod_{i=1}^{j-1}\theta(x_i,x_j;h_ih_j^*,\sigma) } 
  \label{appfactorizationgammatheta}\textrm{.}
\end{equation}

Notice that the factorization of $p(\vect{x})$ given above is composed of degree one and degree two factors only \footnote{We 
regard $\rho$, $\chi$ and $\sigma$ as parameters of functions $\gamma(.;.)$ and $\theta(.;.)$, not their arguments.}. 
Therefore, the \emph{canonical factorization} of $p(\vect{x})$ should be composed of SPC factors
of degree one and two due to Theorem \ref{localfactorizationtheorem}. The 
SPC factors of degree one composing $p(\vect{x})$ are simply the normalizations of $\gamma(x_i;yh_i^*,\sigma)$'s.  

The SPC factors of degree two composing $p(\vect{x})$ can be derived by obtaining the 
canonical factorization of $\theta(x_i,x_j;h_ih_j^*,\sigma)$. The straightforward 
way of deriving the canonical factorization of $\theta(x_i,x_j;h_ih_j^*,\sigma)$
might be projecting this function onto the subspaces $\imagesop{(\vect{f}_i+\alpha\vect{f}_j)}$
for all nonzero $\alpha\in \fieldq$, where $\vect{f}_i$ is the $i^{th}$ canonical 
basis vector of $\fieldqn$. However, the required canonical factorization 
can be obtained in a simpler way by exploiting the fact that $q$ is assumed
to be a prime number in this section. Since $q$ is a prime 
number, $\fieldq$ is a prime field. Consequently, the subtraction in $\fieldq$
is the subtraction modulo $q$. Due to this fact,
\begin{equation}
  \psk{\omega_1}\psk{\omega_2}^*=\psk{\omega_1-\omega_2} \textrm{.}
\end{equation} 
Therefore, 
\begin{eqnarray}
\theta(\omega_i,\omega_j;\chi,\sigma)&=&\exp\left(-\frac{2\Re{\chi\psk{\omega_1-\omega_2}}}{2\sigma^2} \right)\\
&=&\gamma(\omega_1-\omega_2;-\chi,\sigma)  \textrm{.}  
\end{eqnarray}
Inserting this result into (\ref{appfactorizationgammatheta}) yields, 
\begin{equation}
  p(\vect{x})=\n{\fieldqn}{\prod_{i=1}^{N} \gamma(x_i;yh_i^*,\sigma) \prod_{j=2}^{N}\prod_{i=1}^{j-1}\gamma(x_i-x_j;-h_ih_j^*,\sigma) } 
  \label{gammafactorization}\textrm{.}
\end{equation} 
We can define  pmfs in $\setfieldq$ by scaling $\gamma(x;yh_i^*,\sigma)$ and $\gamma(x;-h_ih_j^*,\sigma)$
as in 
\begin{eqnarray}
  r_i(x)&\triangleq&\n{\fieldq}{\gamma(x;yh_i^*,\sigma)} \textrm{,} \\
  r_{i,j}(x) &\triangleq&\n{\fieldq}{\gamma(x;-h_ih_j^*,\sigma)}\textrm{.}
\end{eqnarray}
The factorization of $p(\vect{x})$ can be expressed by using these pmfs as 
\begin{eqnarray}
  p(\vect{x})&=&\n{\fieldqn}{\prod_{i=1}^{N} r_i(x_i)\prod_{j=2}^{N}\prod_{i=1}^{j-1}r_{i,j}(x_i-x_j)}\\
  &=&\n{\fieldqn}{\prod_{i=1}^{N} r_i(\vect{f}_i\vect{x}^{T})\prod_{j=2}^{N}\prod_{i=1}^{j-1}r_{i,j}(\vect{a}_{i,j}\vect{x}^{T})}
  \label{canonicalfactofapp} \textrm{,}
\end{eqnarray}
where $\vect{a}_{i,j}$ is  
\begin{equation}
  \vect{a}_{i,j}\triangleq \vect{f}_i-\vect{f}_j \label{aijdef}\textrm{.}
\end{equation}
Notice that all the factor functions in factorization above are SPC factors. 
Moreover, the parity check coefficient vectors of all SPC factors are pairwise 
linearly independent. Hence, due to Definition \ref{canonicdefinition}, the factorization of $p(\vect{x})$ 
given in (\ref{canonicalfactofapp}) is the canonical factorization of $p(\vect{x})$.

\subsection{The  decoders which are able to perform inference on the joint a posteriori pmf\label{misodecodersubsection}}

As explained in Section \ref{universalitydualhamming} the ML codeword and symbolwise decoders of the 
dual Hamming code of length $\frac{q^{N}-1}{q-1}$ can perform inference on $p(\vect{x})$. However, 
since the canonical factorization of $p(\vect{x})$ given in (\ref{canonicalfactofapp}) 
consists of less than $\frac{q^{N}-1}{q-1}$ SPC  factors, the ML codeword or symbolwise decoders
of a shorter code can be employed for maximizing and marginalizing $p(\vect{x})$ as discussed 
in Section \ref{specialdecoders}. Following the discussion in Section \ref{specialdecoders}
the parity check matrix of this code whose decoder can be employed in the demodulation 
of $1\times N$ MISO system is
\begin{equation}
  \vect{H}_{qPSK}(N)\triangleq \left[\begin{array}{c}
      \vect{a}_{1,2} \\
      \vect{a}_{1,3}\\
      \vect{a}_{2,3}\\
      \vdots \\
      \vect{a}_{1,N}\\
      \vect{a}_{2,N}\\
      \vdots\\
      \vect{a}_{N-1,N}
    \end{array}\quad -\vect{I}_{\frac{N(N-1)}{2} \times\frac{N(N-1)}{2}}\right] \textrm{.}
\end{equation}
For a neater  representation of $\vect{H}_{qPSK}(N)$, we define a matrix  parameterized on $i$ and $N$ $\vect{K}(i,N)$ 
as  
\begin{equation}
 \vect{K}(i,N) \triangleq  \left[\vect{I}_{i\times i } \quad -\vect{1}_{i\times 1} \quad \vect{0}_{i\times (N-i-1)} \right] \label{Kdef}
 \textrm{.}
\end{equation}
Then $\vect{H}_{qPSK}(N)$ can be expressed as 
\begin{equation}
  \vect{H}_{qPSK}(N)= \left[\begin{array}{c} \vect{K}(1,N)\\\vect{K}(2,N)\\ \vdots \\\vect{K}(N,N)  \end{array} \quad 
-\vect{I}_{\frac{N(N-1)}{2} \times\frac{N(N-1)}{2}}\right] \textrm{.}
\end{equation}

The complete specification of a decoder of a linear code consists of a parity 
check matrix and a channel model. The parity check matrix of the decoders
which can detect received symbols of $1\times N$ MISO system are explained above. 
As the channel model we can use the one  described in (\ref{channelmodel}). 
However, we can use a more natural channel model in this case as explained
in Section \ref{simplerchannel}. Recall that the factorization 
of $p(\vect{x})$ given in (\ref{gammafactorization}) is composed of likelihood functions of 
the channel which first modulates  an $\fieldq$-valued symbol with $q$-ary PSK
and then passes through an AWGN channel with PSD $N_{0}/2=\sigma^2$.  Therefore,
the received symbols of $1\times N$ MISO system can be detected with the decoders
of the code with parity check matrix $\vect{H}_{qPSK}$ which is designed for $q$-ary PSK
modulation and AWGN channel with variance $\sigma^2$. In order to achieve the desired
detection inputs that should be applied to these decoders are components of 
the vector given below. 
\begin{equation}
[yh_{1}^{*} \ yh_{2}^{*}\ \ldots \ yh_{N}^{*}\ -h_1h_2^*\ -h_1h_3^*\ -h_2h_3^*\ \ldots \ -h_1h_N^*\ -h_2h_N^*\ \ldots -h_{N-1}h_N^*   ]\nonumber
\end{equation}
We can also use a modification of the same decoder which is designed for standard noise with $\sigma^{2}=1$. In this case
all of the inputs given above should be scaled by $\frac{1}{\sigma}$.

\begin{example} This example  demonstrates how can we  employ a symbolwise decoder to compute
the marginal APPs in a $1\times 4$ MISO system.  Let a complex-valued random variable $Y$
be given as
\begin{equation}
  Y=\sum_{i=1}^{4}h_i\psk{X_i}+Z
\end{equation}
where $X_i$ is an $\fieldq$-valued random variable and $Z$ is the circularly symmetric Gaussian noise
with $\expectation{ZZ^*}=2$. Our aim is to compute $\Pr\{X_i=x_i|Y=y\}$ by using 
a symbolwise decoder. As explained above the parity check matrix of this decoder is
\begin{eqnarray}
  \vect{H}_{qPSK}(4)&=&\left[ \begin{array}{c} \vect{K}(1,4)\\\vect{K}(2,4)\\\vect{K}(3,4) \end{array} \quad -\vect{I}_{6\times 6} \right]\\
  &=&
\left[\begin{array}{ccccccccccccccc}  
1& -1 & 0 & 0& -1 & 0 & 0 & 0 & 0 & 0 \\
1& 0 & -1 & 0&  0 & -1 & 0 & 0 & 0 & 0 \\
0& 1 & -1 & 0&  0 & 0 & -1 & 0 & 0 & 0 \\
1& 0 & 0 & -1&  0 & 0 & 0 & -1 & 0 & 0 \\
0& 1 & 0 & -1&  0 & 0 & 0 & 0 & -1 & 0 \\
0& 0 & 1 & -1&  0 & 0 & 0 & 0 & 0 & -1 
\end{array} \right] \textrm{.} \label{hqpsk4}
\end{eqnarray}
The input vector that should be applied to this decoder is 
\begin{equation}
[yh_{1}^{*} \quad yh_{2}^{*}\quad yh_{3}^{*} \quad
 yh_{4}^{*}\quad -h_1h_2^*\quad -h_1h_3^*\quad -h_2h_3^*\quad  -h_1h_4^*\quad -h_2h_4^*\quad  -h_{3}h_4^*    ]\nonumber \textrm{.}
\end{equation}
Notice that configuring the demodulator for a new observation and new set of channel coefficients requires only 
changing the inputs to the decoder. This example is illustrated in Figure \ref{qpskmisofig}. 
\end{example}
\begin{figure}
\begin{center}
\begin{tabular}{c}
  \includegraphics[scale=.6]{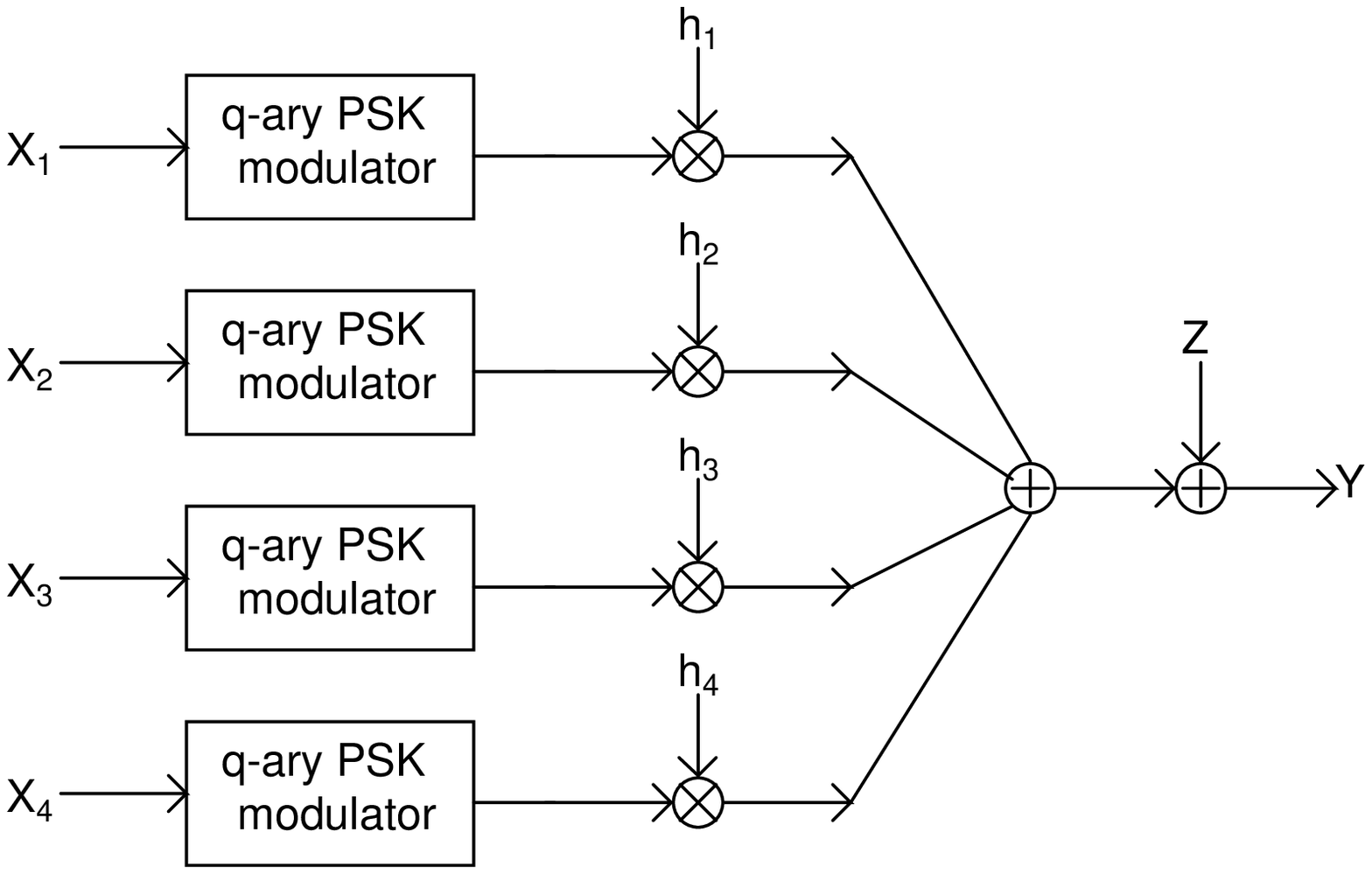}\\
  (a)\\
  \includegraphics[scale=.7]{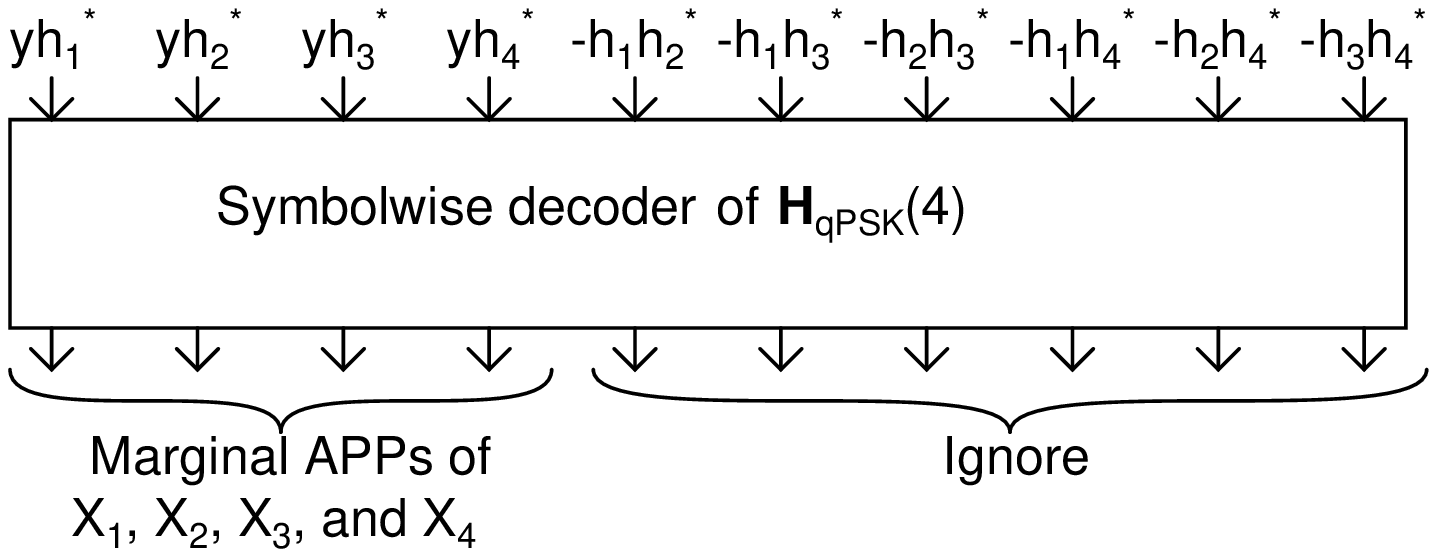}\\
  (b)
\end{tabular}
  \caption{$1\times 4$ MISO system. (a) The system model. (b) Demodulating the received symbol by using a symbolwise decoder. \label{qpskmisofig}}
\end{center}
\end{figure}

\section{Channel decoders as detectors of naturally mapped M-PAM\label{mpamdemodsection}}
In this section we show how to demodulate the naturally mapped M-PAM modulation 
by using a channel decoder. Let $\pam{N}{\vect{x}}$ be a function from $\fieldn{2}$ to $\real$
which maps binary valued vectors of length $N$ to $M\triangleq 2^{N}$ real amplitude 
values as in the naturally mapped PAM modulation, i.e. 
\begin{equation}
  \pam{N}{\vect{x}}\triangleq \sum_{i=1}^{N}2^{i-1}\bpsk{x_i} \virgul \label{pamdef}
\end{equation}
where $\vect{x}=[x_1,x_2,\ldots,x_N]$ and $\bpsk{x}$ denotes binary antipodal mapping given in 
\begin{equation}
  \bpsk{x}\triangleq \left\{\begin{array}{lc}1, & x=0 \\ -1, &x=1 \end{array} \right. \nokta
\end{equation}
Assume that $\pam{N}{\vect{X}}$ is transmitted through a discrete additive Gaussian noise channel and $Y$ is received. In other words,  
\begin{equation}
  Y=\pam{N}{\vect{X}}+Z\virgul \label{pamchannel}
\end{equation}
where $\vect{X}=[X_1,X_2,\ldots,X_N]$ and $Z$ is a real Gaussian random variable with variance $\sigma^{2}$.
Inserting the definition of $\pam{N}{\vect{X}}$ into (\ref{pamchannel}) yields
\begin{equation}
  Y=\sum_{i=1}^{N}2^{i-1}\bpsk{X_i} + Z \nokta \label{pamasmiso}
\end{equation}
Since $\bpsk{X_i}$ is equal to $\psk{X_i}$ for $q=2$ and $2$ is a prime number, 
(\ref{pamasmiso}) is a special case of (\ref{misomodel}). Consequently, 
naturally mapped M-PAM detection is a special case of MISO detection of binary phase shift keying (BPSK)
with channel coefficients $h_i=2^{i-1}$. 
Hence, the parity check matrix of the code whose decoder can demodulate M-PAM is 
$\vect{H}_{2PSK}(\log_{2}M)$. Since $-1$ is equal to $1$ in the binary field,
all of the minus ones in  $\vect{H}_{2PSK}(\log_{2}M)$ can be replaced with ones. 
The input vector that should applied to the decoder in order achieve demodulation of  M-PAM
is
\begin{multline}
\bigg[\frac{y}{\sigma} \quad  \frac{2y}{\sigma}\quad  \ldots \quad  \frac{2^{N-1}y}{\sigma} \quad -\frac{2^{0}2^{1}}{\sigma}\quad 
-\frac{2^{0}2^{2}}{\sigma} \quad -\frac{2^12^2}{\sigma} \quad -\frac{2^{0}2^3}{\sigma} \quad -\frac{2^{1}2^3}{\sigma} \quad
-\frac{2^{2}2^3}{\sigma} \quad  \ldots 
\\\ldots \quad -\frac{2^02^{N-1}}{\sigma}\quad -\frac{2^12^{N-1}}{\sigma}\quad \ldots
-\frac{2^{N-2}2^{N-1}}{\sigma}    \bigg] \nonumber  \virgul
\end{multline}  
where $y$ denotes the received value. 

Implementing an ML M-PAM detector by using the ML codeword decoder of the code with parity check 
matrix $\vect{H}_{2PSK}(\log_{2}M)$ might not be  practical since there are simpler ways 
to implement such a detector. However, implementing a soft output M-PAM detector
by using the symbolwise decoder of the same code might be of practical importance. 

\begin{example}\label{naturalexample}This example shows how to compute marginal APPs of four bits
which are modulated with naturally mapped 16-PAM and passed through
an AWGN channel with PSD $N_0/2=\sigma^2$. Constellation diagram 
of the naturally mapped 16-PAM is shown in Figure \ref{natural16pamfig}-a. 

The parity check matrix of the code whose symbolwise decoder can be used
to compute marginal APPs of the individual bits is 
\begin{equation}
  \vect{H}_{2PSK}(4) = 
\left[\begin{array}{ccccccccccccccc}  
1& 1 & 0 & 0& 1 & 0 & 0 & 0 & 0 & 0 \\
1& 0 & 1 & 0&  0 & 1 & 0 & 0 & 0 & 0 \\
0& 1 & 1 & 0&  0 & 0 & 1 & 0 & 0 & 0 \\
1& 0 & 0 & 1&  0 & 0 & 0 & 1 & 0 & 0 \\
0& 1 & 0 & 1&  0 & 0 & 0 & 0 & 1 & 0 \\
0& 0 & 1 & 1&  0 & 0 & 0 & 0 & 0 & 1 
\end{array} \right] \textrm{.}
\end{equation}
Notice that this parity check matrix is a special case of the $\vect{H}_{qPSK}(4)$ matrix given in the previous 
example for $q=2$. Since $-1$ is equal to $1$ in the binary field, minus ones in that matrix are replaced 
with plus ones. 

If the received value is denoted with $y$ then the input vector that should be applied to this decoder 
is 
\begin{equation}
\left[\frac{y}{\sigma}\quad \frac{2y}{\sigma}\quad \frac{4y}{\sigma}\quad \frac{8y}{\sigma}\quad 
-\frac{2}{\sigma}\quad -\frac{4}{\sigma}\quad -\frac{8}{\sigma}\quad -\frac{8}{\sigma}\quad 
-\frac{16}{\sigma}\quad -\frac{32}{\sigma} \right]\nonumber \textrm{.}
\end{equation}
 This example is illustrated in Figure \ref{natural16pamfig}. 
\end{example}

\begin{figure}
  \begin{center}
\begin{tabular}{c}
  \includegraphics[scale=.5]{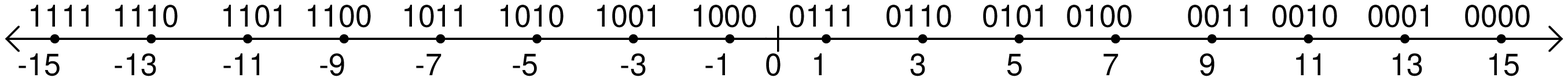}\\
  (a)\\
  \includegraphics[scale=.7]{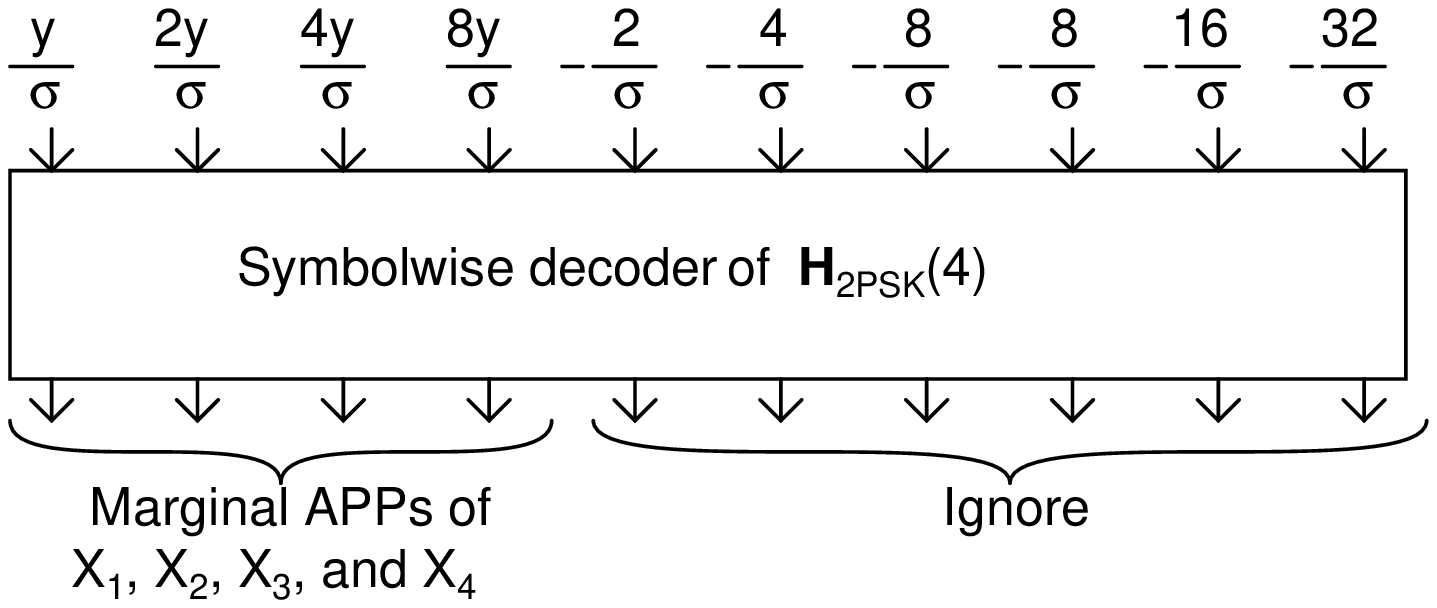}\\
  (b)
\end{tabular}
  \caption{(a) Constellation diagram of naturally mapped 16-PAM modulation. 
(b) Computing marginal APPs from  the received symbol by using the  symbolwise decoder of $\vect{H}_{2PSK}(4)$. \label{natural16pamfig}}    
  \end{center}
\end{figure}

\section{Channel decoders as the detectors of  gray mapped M-PAM\label{graydemodsection}}

Naturally mapped M-PAM , whose detection by using a decoder is investigated in the previous
section, suffers from the fact that more than one bits may differ between two adjacent symbols. This 
problem is overcome with the gray mapping in which a one bit differs between two adjacent 
symbols. In this section detection of gray mapped M-PAM  by using a decoder
is investigated. Let $\gpam{N}{\vect{x}}$ be a function from $\fieldn{2}$ to $\real$
which maps binary valued vectors of length $N$ to $M\triangleq 2^{N}$ real amplitude 
values as in the gray mapped M-PAM, i.e. 
\begin{equation}
  \gpam{N}{\vect{x}}\triangleq \sum_{i=1}^{N}2^{i-1}\bpsk{\sum_{j=i}^{N}x_j} \virgul \label{gpamdef}
\end{equation}
where $\vect{x}=[x_1,x_2,\ldots,x_N]$ and the summation inside the $\beta(.)$
function takes places in $\field{2}$. Unfortunately, due this summation inside
the $\beta(.)$ function, detection of gray mapped M-PAM is not 
a special case MISO detection of BPSK as opposed to the detection of 
naturally mapped M-PAM. Hence, in order to determine the parity 
check matrix and inputs of the decoder to detect the M-PAM 
we need to obtain the canonical factorization of the joint a posteriori pmf. 

Assume that $\gpam{N}{\vect{X}}$ is transmitted through a discrete additive Gaussian noise channel and $Y$ is received. In other words,  
\begin{equation}
  Y=\gpam{N}{\vect{X}}+Z\virgul \label{gpamchannel}
\end{equation}
where $\vect{X}=[X_1,X_2,\ldots,X_N]$ and $Z$ is real Gaussian random variable with variance $\sigma^{2}$.
Let $p(\vect{x})$ denote the joint a posteriori probability $\Pr\{\vect{X}=\vect{x}|Y=y\}$. 
The canonical factorization of $p(\vect{x})$ can be obtained by following the generic procedures
explained in Chapter \ref{thechapter}. However, the canonical factorization of $p(\vect{x})$ 
can be obtained more easily by exploiting the relation between $\gpam{N}{\vect{x}}$ and $\pam{N}{\vect{x}}$.

The relation between $\gpam{N}{\vect{x}}$ and $\pam{N}{\vect{x}}$ can be expressed as in
\begin{equation}
\gpam{N}{\vect{x}} =\pam{N}{\vect{x}\vect{G}(N)} \virgul
\end{equation}
where  $\vect{G}(N)$ is the $N\times N$ matrix defined as 
\begin{equation}
  \vect{G}(N)\triangleq \left[ \begin{array}{cccc} 
      1 & 0 & \ldots & 0 \\
      1 & 1 & \ldots & 0 \\
    \vdots & \vdots & \ddots & \vdots \\
  1 & 1 & \ldots &1 \end{array}  \right] \textrm{.}
\end{equation}
Since $\vect{G}(N)$ is a reversible matrix, the canonical factorization of $p(\vect{x})$
can be derived from the canonical factorization of the APP $\Pr\{\vect{W}=\vect{w}|\pam{N}{\vect{W}}+Z=y\}$
by following the discussion in Section \ref{reversiblesection}.
 
Let $t(\vect{w})$ be the shorthand notation for the APP  $\Pr\{\vect{W}=\vect{w}|\pam{N}{\vect{W}}+Z=y\}$.
Since $t(\vect{w})$ represents the APP in the naturally mapped M-PAM case, its canonical
factorization is a special case of the canonical factorization given in (\ref{canonicalfactofapp})
with channel coefficients $h_i=2^{i-1}$ and $\psk{w}=\bpsk{w}$.
The $\gamma(w;\rho,\sigma)$ function for the BPSK modulation  is 
\begin{equation}
  \gamma(w;\rho,\sigma)=\exp\left(\frac{2\rho\bpsk{w}}{2\sigma^{2}}\right) \nokta
\end{equation}
Consequently, $r_i(w)$ and $r_{i,j}(w)$ in this specific case of (\ref{canonicalfactofapp}) are
\begin{eqnarray}
    r_i(w)&=&\n{\fieldq}{\gamma(w;2^{i-1}y,\sigma)}=\n{\fieldq}{\exp\left(\frac{2^{i-1}y\bpsk{w}}{2\sigma^2}\right)} \textrm{,} \\
  r_{i,j}(w) &=&\n{\fieldq}{\gamma(w;-2^{i-1}2^{j-1},\sigma)}=\n{\fieldq}{\exp\left(-\frac{2^{i+j-2}\bpsk{w}}{2\sigma^2}\right)}\textrm{.}
\end{eqnarray}
Finally, the canonical factorization of $t(\vect{w})$ is
\begin{equation}
t(\vect{w})=\n{\fieldqn}{\prod_{i=1}^{N} r_i(\vect{f}_i\vect{w}^{T})\prod_{j=2}^{N}\prod_{i=1}^{j-1}r_{i,j}(\vect{a}_{i,j}\vect{w}^{T})} \nokta
\end{equation}

Consequently, due to the discussion in Section \ref{reversiblesection} the canonical factorization of $p(\vect{x})$ is
\begin{equation}
  p(\vect{x})=\n{\fieldqn}{\prod_{i=1}^{N} r_i(\vect{f}_i\vect{G}(N)^{T}\vect{x}^{T})\prod_{j=2}^{N}
    \prod_{i=1}^{j-1}r_{i,j}(\vect{a}_{i,j}\vect{G}(N)^{T}\vect{x}^{T})} \nokta
\end{equation}
Let $\vect{b}_{i,j}$ defined as 
\begin{equation}
  \vect{b}_{i,j}\triangleq \sum_{k=i}^{j} \vect{f}_k \nokta 
\end{equation}
 $\vect{f}_i\vect{G}(N)^{T}$ and $\vect{a}_{i,j}\vect{G}(N)^{T}$ can be expressed by using 
$\vect{b}_{i,j}$ as 
\begin{eqnarray}
  \vect{f}_i\vect{G}(N)^{T}&=& \vect{b}_{i,N} \virgul \\
  \vect{a}_{i,j}\vect{G}(N)^{T}&=&\vect{b}_{i,j-1}\nokta
\end{eqnarray}
Consequently,
\begin{eqnarray}
  p(\vect{x})&=&\n{\fieldqn}{\prod_{i=1}^{N} r_i(\vect{b}_{i,N}\vect{x}^{T})\prod_{j=2}^{N}
    \prod_{i=1}^{j-1}r_{i,j}(\vect{b}_{i,j-1}\vect{x}^{T})} \\
  &=&\n{\fieldqn}{r_{N}(\vect{f}_{N}\vect{x}^{T})\prod_{i=1}^{N-1}r_{i,i+1}(\vect{f}_{i}\vect{x}^{T})
\prod_{i=1}^{N-1}r_i(\vect{b}_{i,N}\vect{x}^{T}) \prod_{j=2}^{N}
    \prod_{i=1}^{j-2}r_{i,j}(\vect{b}_{i,j-1}\vect{x}^{T})}            \nokta
\end{eqnarray}
This last form of the factorization clearly shows which  parity check coefficient vectors 
are of weight two or more. Then, following the discussion in Section \ref{specialdecoders}
the parity check matrix of this code whose decoder can be employed in the detection 
of gray mapped M-PAM is
\begin{equation}
  \vect{H}_{GRAY}(N)\triangleq \left[\begin{array}{c}
      \vect{b}_{1,2} \\
      \vect{b}_{1,3}\\
      \vect{b}_{2,3}\\
      \vdots \\
      \vect{b}_{1,N}\\
      \vect{b}_{2,N}\\
      \vdots\\
      \vect{b}_{N-1,N}
    \end{array}\quad \vect{I}_{\frac{N(N-1)}{2} \times\frac{N(N-1)}{2}}\right] \textrm{.}
\end{equation}
Notice that  the sizes of  $\vect{H}_{GRAY}(N)$ and $\vect{H}_{qPSK}(N)$ are same. 

The symbolwise and ML codeword decoders of the code  with parity check matrix $\vect{H}_{GRAY}(N)$
can be designed for BPSK modulation and AWGN channel. In order to achieve the desired detection
the inputs applied to this decoder should be a permuted version of the inputs
applied for the naturally mapped detection since the canonical factorization 
of $p(\vect{x})$ is derived from the canonical factorization of $t(\vect{w})$. 
The first $N$ of these inputs are
 \begin{equation}
\left[-\frac{2^02^1}{\sigma}\quad -\frac{2^12^2}{\sigma}\quad \ldots \quad -\frac{2^{N-2}2^{N-1}}{\sigma} \quad \frac{2^{N-1}y}{\sigma}  \right] 
\nonumber \nokta   
\end{equation}
The last $N-1$ of these inputs are
\begin{equation}
 \left[\frac{y}{\sigma}\quad \frac{2y}{\sigma} \quad\ldots \quad  \frac{2^{N-2}y}{\sigma}\right] \nonumber \nokta 
\end{equation}
The remaining $\frac{(N-2)(N-1)}{2})$ inputs in between are
\begin{equation}
\left[-\frac{2^02^2}{\sigma} \quad -\frac{2^02^3}{\sigma} \quad -\frac{2^12^3}{\sigma} \quad \ldots \quad 
-\frac{2^{0}2^{N-1}}{\sigma} \quad -\frac{2^{1}2^{N-1}}{\sigma} \quad \ldots \quad -\frac{2^{N-3}2^{N-1}}{\sigma} \right] \nonumber \nokta
\end{equation}

\begin{example}\label{grayexample}
This example shows how to compute marginal APPs of four bits
which are modulated with gray mapped 16-PAM and passed through
an AWGN channel with PSD $N_{0}/2=\sigma^2$. Constellation diagram 
of the  16-PAM modulation with gray mapping is shown in Figure \ref{gray16pamfig}-a. 
This example demonstrates an interesting property of demodulating gray mapped
M-PAM modulation with decoders. 

The parity check matrix of the code whose symbolwise decoder can be used
to compute marginal APPs of the individual bits is 
\begin{equation}
  \vect{H}_{GRAY}(4) = 
\left[\begin{array}{ccccccccccccccc}  
1& 1 & 0 & 0& 1 & 0 & 0 & 0 & 0 & 0 \\
1& 1 & 1 & 0&  0 & 1 & 0 & 0 & 0 & 0 \\
0& 1 & 1 & 0&  0 & 0 & 1 & 0 & 0 & 0 \\
1& 1 & 1 & 1&  0 & 0 & 0 & 1 & 0 & 0 \\
0& 1 & 1 & 1&  0 & 0 & 0 & 0 & 1 & 0 \\
0& 0 & 1 & 1&  0 & 0 & 0 & 0 & 0 & 1 
\end{array} \right] \textrm{.}
\end{equation}

If the received value is denoted with $y$ then the input vector that should be applied to this decoder is 
\begin{equation}
\left[-\frac{2}{\sigma}\quad -\frac{8}{\sigma}\quad -\frac{32}{\sigma}\quad \frac{8y}{\sigma}\quad 
-\frac{4}{\sigma}\quad -\frac{8}{\sigma}\quad -\frac{16}{\sigma}\quad \frac{y}{\sigma}\quad 
\frac{2y}{\sigma}\quad \frac{4y}{\sigma} \right]\nonumber \textrm{.}
\end{equation}

It is well known that carrying out row operations on the parity check matrix of a code
does not alter the code. Hence, we can carry out row operations on $\vect{H}_{GRAY}(4)$
and obtain an alternative parity check matrix for the code. Let $\vect{H}'$ be the
parity check matrix derived from $\vect{H}_{GRAY}(4)$ by  adding the first row onto 
second and fourth rows and then adding sixth row onto fourth and fifth rows, i.e.
\begin{equation}
  \vect{H}' = 
\left[\begin{array}{ccccccccccccccc}  
1& 1 & 0 & 0&  1 & 0 & 0 & 0 & 0 & 0 \\
0& 0 & 1 & 0&  1 & 1 & 0 & 0 & 0 & 0 \\
0& 1 & 1 & 0&  0 & 0 & 1 & 0 & 0 & 0 \\
0& 0 & 0 & 0&  1 & 0 & 0 & 1 & 0 & 1 \\
0& 1 & 0 & 0&  0 & 0 & 0 & 0 & 1 & 1 \\
0& 0 & 1 & 1&  0 & 0 & 0 & 0 & 0 & 1 
\end{array} \right] \nokta
\end{equation}
Since $\vect{H}'$ and $\vect{H}_{GRAY}(4)$ are the parity check matrices of 
the same code, we can use the decoder designed for either $\vect{H}'$ or
$\vect{H}_{GRAY}(4)$ to compute soft outputs in gray mapped 16-PAM modulation. 

 Notice that all rows $\vect{H}'$ are of weight $3$. Moreover, four columns of $\vect{H}'$
are of weight $3$ and the remaining six columns are of weight one. $\vect{H}'$ shares
these properties with $\vect{H}_{2PSK}(4)$. Furthermore, let $\vect{H}''$ be the 
parity check matrix derived from $\vect{H}'$ by replacing the first column with fifth
and fourth column with tenth, i.e. 
\begin{equation}
  \vect{H}'' = 
\left[\begin{array}{ccccccccccccccc}  
1& 1 & 0 & 0&  1 & 0 & 0 & 0 & 0 & 0 \\
1& 0 & 1 & 0&  0 & 1 & 0 & 0 & 0 & 0 \\
0& 1 & 1 & 0&  0 & 0 & 1 & 0 & 0 & 0 \\
1& 0 & 0 & 1&  0 & 0 & 0 & 1 & 0 & 0 \\
0& 1 & 0 & 1&  0 & 0 & 0 & 0 & 1 & 0 \\
0& 0 & 1 & 1&  0 & 0 & 0 & 0 & 0 & 1 
\end{array} \right] \nokta
\end{equation}
$\vect{H}''$ describes a code whose codewords are
permuted form of the codewords of the code described 
by $\vect{H}'$. Hence, we can also use the decoder
designed for $\vect{H}''$ to compute soft outputs in gray mapped
16-PAM modulation. In order to achieve this demodulation
it is necessary to permute the inputs applied
to the decoder designed for $\vect{H}_{GRAY}(4)$ before applying
to the   decoder designed for $\vect{H}''$ in the same order
as the column permutations applied while passing from $\vect{H}'$
to $\vect{H}''$. Hence, the inputs that should be applied
to this decoder are
\begin{equation}
\left[-\frac{4}{\sigma}\quad -\frac{8}{\sigma}\quad -\frac{32}{\sigma}\quad \frac{4y}{\sigma}\quad 
-\frac{2}{\sigma}\quad -\frac{8}{\sigma}\quad -\frac{16}{\sigma}\quad \frac{y}{\sigma}\quad 
\frac{2y}{\sigma}\quad \frac{8y}{\sigma} \right]\nonumber \textrm{.}
\end{equation}

The \textbf{\emph{interesting point}} in here is that $\vect{H}''$ \textbf{\emph{is equal to}} $\vect{H}_{2PSK}(4)$.
Therefore, the symbolwise decoder of the parity check matrix $\vect{H}_{2PSK}(4)$
can be used to compute marginal APPs for both naturally mapped and gray mapped
16-PAM modulation. The decoder can be configured to natural mapping or
gray mapping by permuting the inputs. Computing the soft outputs in of gray mapped 16-PAM modulation depicted in Figure \ref{gray16pamfig}-c.

\end{example}

\begin{figure}
  \begin{center}
\begin{tabular}{c}
  \includegraphics[scale=.5]{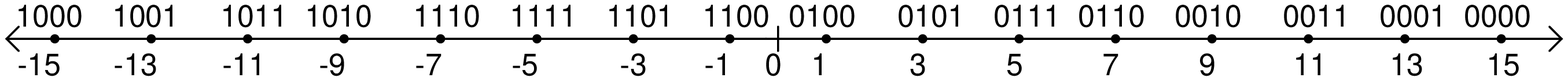}\\
  (a)\\
  \includegraphics[scale=.7]{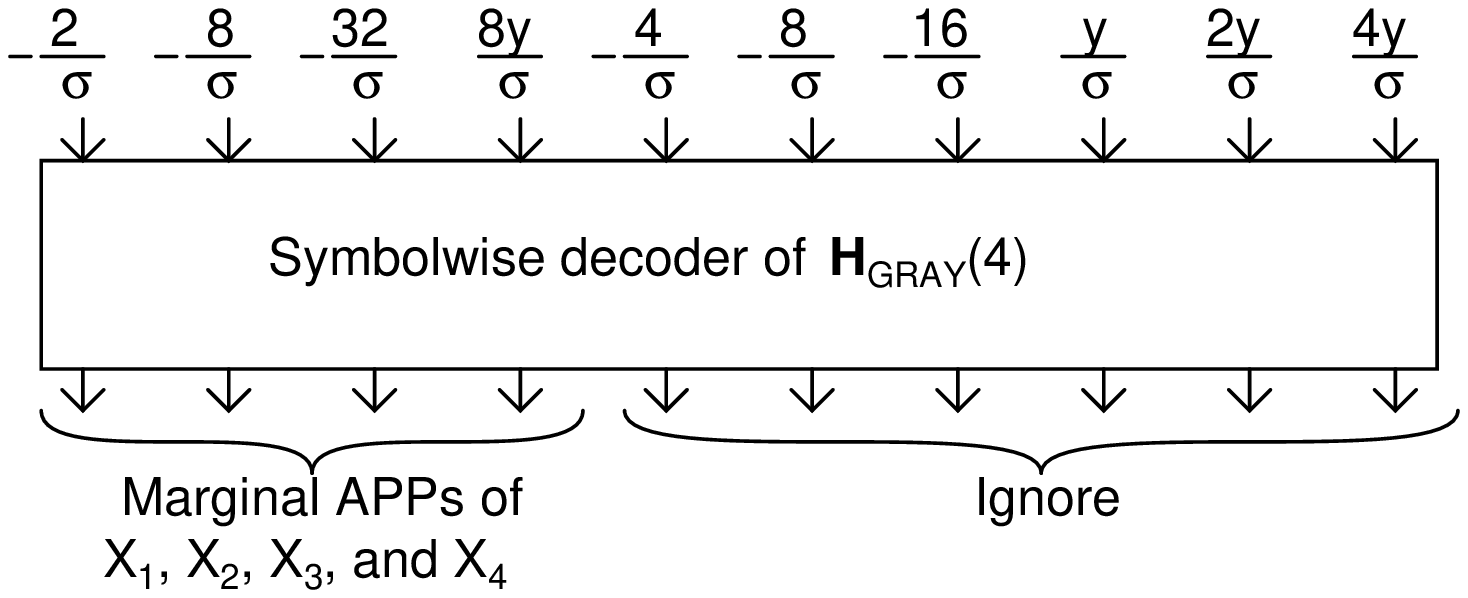}\\
  (b)\\
  \includegraphics[scale=.7]{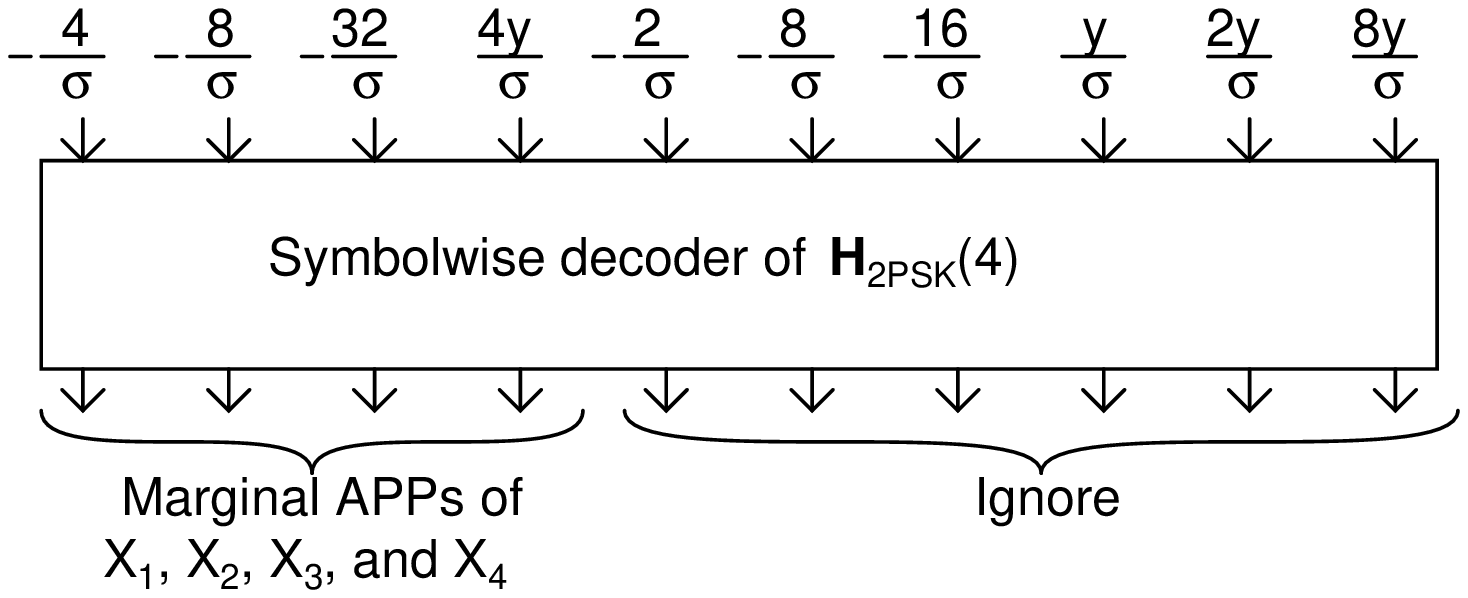}\\
  (c)\\
  
\end{tabular}
  \caption{(a) Constellation diagram of gray mapped 16-PAM modulation. 
(b) Computing marginal APPs from  the received symbol by using the  symbolwise decoder of $\vect{H}_{GRAY}(4)$.
(c) Computing marginal APPs by using the  symbolwise decoder of $\vect{H}_{2PSK}(4)$.
 \label{gray16pamfig}}    
  \end{center}
\end{figure}

The example above shows that the decoder of $\vect{H}_{2PSK}(4)$ can be used to demodulate 
both naturally mapped and gray mapped 16-PAM modulation. The following theorem
states that this is true not only for 16-PAM but for any M-PAM modulation. 

\begin{theorem} \label{naturalgraythm}
  There exist a sequence of row operations such that  performing these row operations
on $\vect{H}_{GRAY}(N)$ leads to $\vect{H}_{2PSK}(N)$ with some columns permuted. 
\end{theorem}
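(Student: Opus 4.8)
The plan is to prove the statement by showing that the two binary codes whose parity-check matrices are $\vect{H}_{GRAY}(N)$ and $\vect{H}_{2PSK}(N)$ are permutation-equivalent, and then invoking the fact that elementary row operations connect any two full-rank matrices that share a common row space. Throughout I would use that, in the binary case, both matrices have the form $[\vect{L}\mid\vect{I}]$ with $\vect{I}$ the $\binom{N}{2}\times\binom{N}{2}$ identity, so the identity block forces both to have full row rank $\binom{N}{2}$.

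First I would describe the two null spaces explicitly. A vector $(x_1,\ldots,x_N,\{u_{i,j}\}_{i<j})$ lies in $\mathcal{C}_{2PSK}=\ker\vect{H}_{2PSK}(N)$ iff $u_{i,j}=x_i+x_j$, and in $\mathcal{C}_{GRAY}=\ker\vect{H}_{GRAY}(N)$ iff $u_{i,j}=x_i+x_{i+1}+\cdots+x_j$, all sums in $\field{2}$. Both are $[\,N+\binom{N}{2},\,N\,]$ codes, and $N+\binom{N}{2}=\binom{N+1}{2}$. The crucial step is a reparametrization exhibiting a common form. For $\mathcal{C}_{GRAY}$ I would introduce prefix sums $s_0=0$, $s_k=x_1+\cdots+x_k$; then $x_i=s_{i-1}+s_i$ and $u_{i,j}=s_{i-1}+s_j$, so every coordinate value has the shape $s_a+s_b$, and the index pairs $(a,b)=(i-1,i)$ coming from the $x_i$ and $(a,b)=(i-1,j)$ coming from the $u_{i,j}$ together sweep out \emph{exactly} all pairs with $0\le a<b\le N$. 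For $\mathcal{C}_{2PSK}$ the trivial substitution $t_0=0$, $t_i=x_i$ gives $x_i=t_0+t_i$ and $u_{i,j}=t_i+t_j$, and the pairs $(0,i)$ and $(i,j)$ again sweep out all $0\le a<b\le N$. Hence, after relabelling the $\binom{N+1}{2}$ coordinates by these pairs, both codes coincide with the single abstract code $\{(w_a+w_b)_{0\le a<b\le N}:w_0=0\}$.

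Consequently $\mathcal{C}_{GRAY}=\pi(\mathcal{C}_{2PSK})$ for the explicit coordinate permutation $\pi$ obtained by composing the two pair-labellings. Since a coordinate permutation preserves the dot product it preserves duals, so $\mathcal{C}_{GRAY}^{\perp}=\pi(\mathcal{C}_{2PSK}^{\perp})$; as these duals are precisely the row spaces of $\vect{H}_{GRAY}(N)$ and $\vect{H}_{2PSK}(N)$, letting $\vect{T}$ denote $\vect{H}_{2PSK}(N)$ with its columns permuted by $\pi$, the matrices $\vect{H}_{GRAY}(N)$ and $\vect{T}$ have the same row space. Both being full row rank, there is a unique invertible $\binom{N}{2}\times\binom{N}{2}$ matrix $\vect{M}$ with $\vect{M}\,\vect{H}_{GRAY}(N)=\vect{T}$, and factoring $\vect{M}$ into elementary matrices yields the required sequence of row operations; this recovers the $N=4$ computation of Example \ref{grayexample} as a special case. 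The main obstacle is precisely finding the common description of the two codes, that is, recognizing that the prefix-sum substitution converts the interval-sum parity structure of the gray code into the same pairwise-sum structure $w_a+w_b$ that the natural code already has. Once this reindexing is in place, the duality step and the existence of the row operations are routine; a purely computational alternative would be to generalize the explicit reductions of Example \ref{grayexample}, but the code-equivalence argument is cleaner and makes the column permutation canonical.
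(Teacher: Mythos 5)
Your proof is correct, but it takes a genuinely different route from the paper's. The paper's proof in Appendix \ref{naturalgraythmproof} is purely constructive: it lists, in closed combinatorial form, exactly which rows to add to which (generalizing the hand computation of Example \ref{grayexample}) and offers no structural justification beyond that recipe; verifying that the listed operations actually produce a column-permuted $\vect{H}_{2PSK}(N)$ is left implicit. You instead prove existence structurally: the prefix-sum substitution $s_k=x_1+\cdots+x_k$, $s_0=0$, turns every coordinate of a codeword of $\ker \vect{H}_{GRAY}(N)$ into $s_a+s_b$, and your bookkeeping that the index pairs $(i-1,i)$ and $(i-1,j)$ sweep out exactly the $\binom{N+1}{2}$ pairs $0\le a<b\le N$ is accurate, so both kernels are, after relabeling coordinates by pairs, the single abstract code $\{(w_a+w_b)_{a<b}: w_0=0\}$; duality transfers this permutation equivalence from kernels to row spaces, and the standard facts you invoke (the identity blocks force full row rank, $-1=1$ in $\field{2}$, two full-row-rank matrices with equal row spaces differ by a unique invertible left factor, and every invertible matrix over a field is a product of elementary matrices) are all sound. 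What each approach buys: the paper's hands you the explicit operations, which matters if one actually wants to reconfigure a decoder, and is in the spirit of the surrounding engineering examples; yours explains \emph{why} the theorem must hold --- it is the parity-check-matrix reflection of the fact, already used in Section \ref{graydemodsection}, that gray mapping is natural mapping precomposed with the invertible accumulator matrix $\vect{G}(N)$ --- and it yields a canonical column permutation, at the price of leaving the concrete row operations implicit in the factorization of $\vect{M}$.
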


A constructive proof is given in Appendix \ref{naturalgraythmproof}.

\section{MIMO detection by using channel decoders\label{mimodemodsection}}

In this section we show how to employ channel decoders for multiple-input multiple-output (MIMO)
detection. The analysis is presented for QPSK modulation but is straightforward
to extend method to any other PAM or QAM modulation. 

\subsection{System Model}

Let a random vector $\vect{X}_k=[X_{2k-1} , X_{2k}]$  is mapped to a complex symbol
$W_k$ via the function $\qpsk{.}$ as in 
\begin{equation}
  W_k\triangleq \qpsk{\vect{X}_k} \virgul
\end{equation}  
where $\qpsk{.}$ represents the gray mapped QPSK modulation and defined  as
\begin{equation}
  \qpsk{\vect{x}}\triangleq \left\{\begin{array}{ll} 
        1, &\vect{x}=[0 \quad 0 ]\\
        j, &\vect{x}=[0 \quad 1]\\
        -1, &\vect{x}=[1 \quad 1]\\
        -j, &\vect{x}=[1 \quad 0]
\end{array} \right.  \virgul
\end{equation}
and $j$ is the square root of $-1$.  
The constellation diagram of gray mapped QPSK modulation is shown in Figure \ref{Qpskgray}. Furthermore, let a random vector 
$\vect{W}=[W_1,W_2,\ldots,W_{N_t}]^T$ is passed through an $N_r\times N_t$ MIMO channel with independent 
circularly symmetric Gaussian noise and the received vector is $\vect{Y}$.
In other words, 
\begin{equation}
  \vect{Y}= \vect{H}_c\vect{W}+\vect{Z} \virgul
\end{equation}
where $\vect{H}_c$ is the $N_r \times N_t$ channel coefficient matrix, $\vect{Z}$ is the $N_r\times 1$ noise vector 
consisting of independent, zero mean, circularly symmetric normal  distributed random variables of variance $2\sigma^{2}$. 

\begin{figure}
\begin{center}
  \includegraphics[scale=.6]{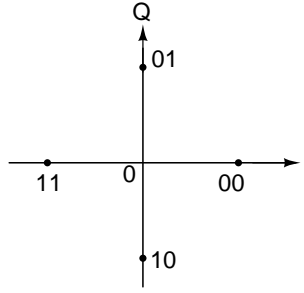}
  \caption{The QPSK constellation with gray mapping\label{Qpskgray}}
\end{center}
\end{figure}

ML MIMO detection is the task of  determining the configuration $\vect{x}$ maximizes the likelihood function
 $\Pr\{\vect{Y}=\vect{y}|\vect{X}=\vect{x}\}$ where $\vect{X}$ is 
\begin{equation}
  \vect{X}\triangleq[\vect{X}_1 \quad \vect{X}_2 \quad \ldots \quad \vect{X}_{N_t}] \nokta 
\end{equation}
We assume that all  $\vect{X}$ is uniformly distributed. Hence, ML MIMO detection is equivalent to 
finding the configuration maximizing the APP $\Pr\{\vect{X}=\vect{x}|\vect{Y}=\vect{y}\}$. Soft output
MIMO detection is the task of computing the marginal APPs $\Pr\{X_{k}=x|\vect{Y}=\vect{y}\}$.

\subsection{The decoders which can be used in MIMO detection with QPSK signaling}

The first step in determining the parity check matrix of the decoders which can be employed
as MIMO demodulators is determining the canonical factorization of the APP.  The APP 
$\Pr\{\vect{X}=\vect{x}|\vect{Y}=\vect{y}\}$ is
\begin{equation}
  \Pr\{\vect{X}=\vect{x}|\vect{Y}=\vect{y}\}= \n{\fieldnn{q}{2N_t}}{\exp\left(- \frac{\norm{\vect{y}-\vect{H}_c\vect{w}}^2}{2\sigma^{2}}\right)}
\end{equation}
where $\vect{x}$ is $[\vect{x}_1, \vect{x}_2,\ldots, \vect{x}_{N_t}]$, $\vect{x}_{k}$ is $[x_{2k-1}, x_{2k}]$, and
$\vect{w}$ is $[\qpsk{\vect{x}_{1}},\qpsk{\vect{x}_2},\ldots, \qpsk{\vect{x}_{N_t}} ]^{T}$. 
As shown in Appendix \ref{mimoappfact},  this APP can factored as
\begin{equation}
\begin{split}
\Pr\{\vect{X}=\vect{x}|\vect{Y}=\vect{y}\} \propto & \prod_{k=1}^{N_t}\gamma\left(x_{2k-1};\frac{\Re{u_{k}}+\Im{u_k}}{2},\sigma\right) 
\gamma\left(x_{2k};\frac{\Re{u_{k}}-\Im{u_k}}{2},\sigma\right) \\
&\cdot \prod_{k=2}^{N_t}\prod_{l=1}^{k-1}\gamma\left(x_{2k-1}+x_{2l-1};-\frac{\Re{(\vect{R})_{k,l}}}{2},\sigma\right)
\gamma\left(x_{2k-1}+x_{2l};-\frac{\Im{(\vect{R})_{k,l}}}{2},\sigma\right)\\
&\cdot \prod_{k=2}^{N_t}\prod_{l=1}^{k-1}\gamma\left(x_{2k}+x_{2l-1};\frac{\Im{(\vect{R})_{k,l}}}{2},\sigma\right)
\gamma\left(x_{2k}+x_{2l};-\frac{\Re{(\vect{R})_{k,l}}}{2},\sigma\right)
\end{split}
\label{mimoappfact1} \virgul
\end{equation}  
where $\vect{R}$ and $\vect{u}$ are 
\begin{eqnarray}
  \vect{R}&\triangleq&\vect{H}_{c}^{H}\vect{H}_c \virgul \label{rinputdef}\\  
  \vect{u}&\triangleq &\vect{H}_c^H\vect{y} \virgul \label{uinputdef}
\end{eqnarray}
and $(\vect{R})_{k,l}$ denotes $k$ by $l^{th}$ entry of $\vect{R}$ and $u_k$ is the 
$k^{th}$ component of $\vect{u}$.

The factorization above can be expressed by using the $\vect{a}_{k,l}$ vectors defined in (\ref{aijdef})
as
\begin{equation}
\begin{split}
\Pr\{\vect{X}=\vect{x}|\vect{Y}=\vect{y}\} \propto & \prod_{k=1}^{N_t}\gamma\left(\vect{f}_{2k-1}\vect{x}^{T};\frac{\Re{u_{k}}+\Im{u_k}}{2},\sigma\right) 
\gamma\left(\vect{f}_{2k}\vect{x}^{T};\frac{\Re{u_{k}}-\Im{u_k}}{2},\sigma\right) \\
&\cdot \prod_{k=2}^{N_t}\prod_{l=1}^{k-1}\gamma\left(\vect{a}_{2k-1,2l-1}\vect{x}^{T};-\frac{\Re{(\vect{R})_{k,l}}}{2},\sigma\right)
\gamma\left(\vect{a}_{2k-1,2l}\vect{x}^{T};-\frac{\Im{(\vect{R})_{k,l}}}{2},\sigma\right)\\
&\cdot \prod_{k=2}^{N_t}\prod_{l=1}^{k-1}\gamma\left(\vect{a}_{2k,2l-1}\vect{x}^{T};\frac{\Im{(\vect{R})_{k,l}}}{2},\sigma\right)
\gamma\left(\vect{a}_{2k,2l}\vect{x}^{T};-\frac{\Re{(\vect{R})_{k,l}}}{2},\sigma\right)
\end{split} \nokta \label{mimoappfact3}
\end{equation}  

The only remaining step in the derivation of canonical factorization is to normalize all of the factor functions 
existing above. We omit this obvious step for the sake of neatness. 
This factorization leads to  the  parity check matrix of the decoder which can be used in MIMO detection given in 
\begin{equation}
\vect{H}_{MIMO,QPSK}(N_t)\triangleq \left[  \begin{array}{c} 
    \vect{L}(1,N_t) \\
    \vect{L}(2,N_t) \\
    \ldots \\   
    \vect{L}(N_t-1,N_t) \\    
  \end{array} \vect{I}_{2N_t(N_t-1)\times 2N_t(N_t-1)}\right] \virgul \label{hmimodefinition}
\end{equation} 
where $\vect{L}(k,N_t)$ is 
\begin{equation}
  \vect{L}(k,N_t)\triangleq \left[  \begin{array}{c} 
\vect{a}_{1,2k+1}\\
\vect{a}_{2,2k+1} \\
\ldots \\
\vect{a}_{2k,2k+1}\\
\vect{a}_{1,2k+2}\\
\vect{a}_{2,2k+2} \\
\ldots \\
\vect{a}_{2k,2k+2}
\end{array}\right] \textrm{.} \label{ldef}
\end{equation}
As in the previous sections we can use a decoder designed for BPSK modulation and AWGN channel for MIMO detection.
The inputs that should be applied to this decoder to achieve MIMO detection are the first parameters  after the semicolon
divided by the second parameters of the $\gamma(.;.,.)$
functions in the factorization given in (\ref{mimoappfact3}).

\begin{figure}
  \begin{center}
\begin{tabular}{c}
  \includegraphics[scale=.7]{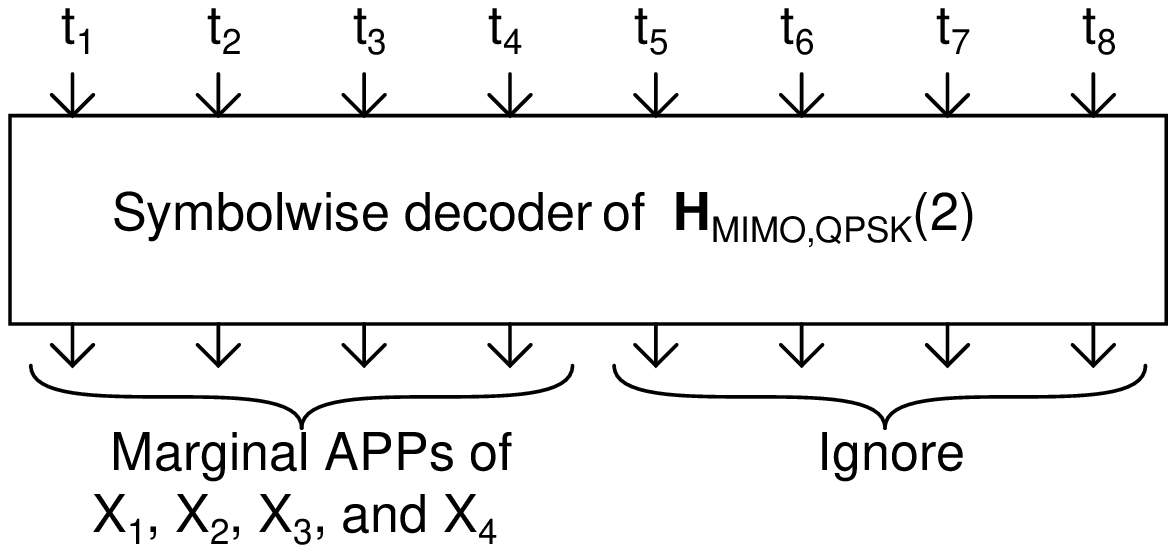}\\
  (a)\\
  \includegraphics[scale=.7]{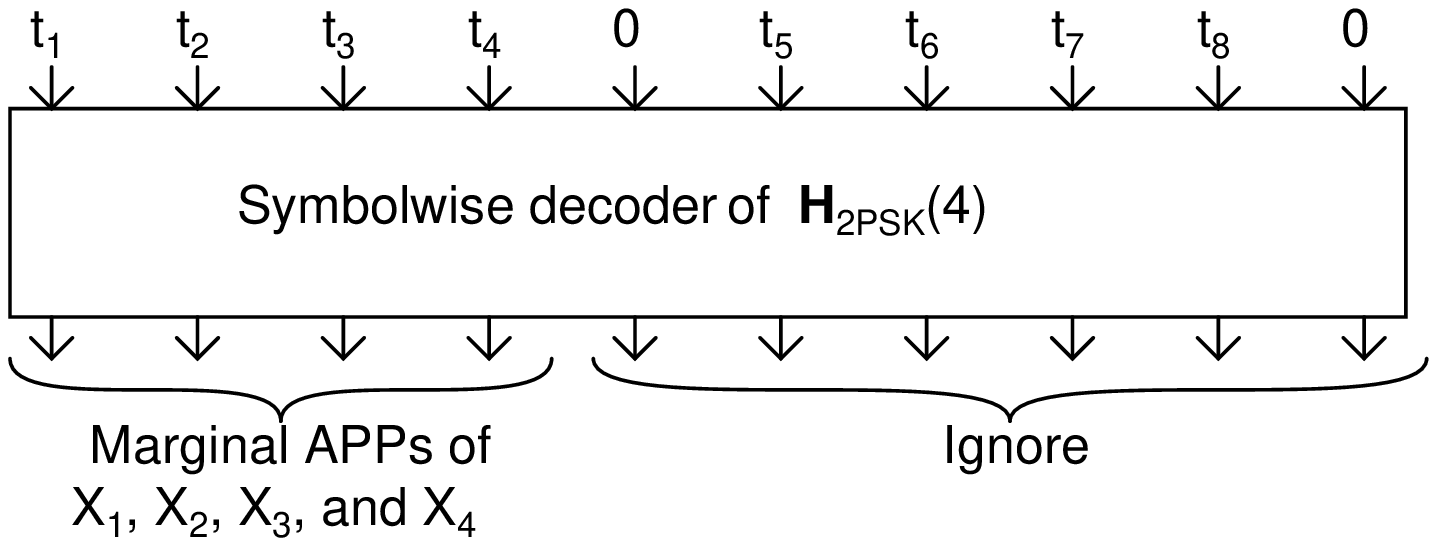}\\
  (b)\\
\end{tabular}
  \caption{Computing the marginal APPs in a MIMO system by using two different decoders. (a) By using the 
decoder of  $\vect{H}_{MIMO,QPSK}(2)$. (b) By using the decoder of  $\vect{H}_{2PSK}(4)$.\label{mimofig}}
\end{center}
\end{figure}

\begin{example}\label{mimoexample} This example shows how to compute marginal APPs of four bits
which are first  modulated with QPSK modulation and passed through a $2\times 2$ MIMO channel 
with channel coefficient matrix $\vect{H}_c$ and noise variance $2\sigma^{2}$. 
The parity check matrix of the symbolwise decoder which can be used for this purpose
is 
\begin{equation}
  \vect{H}_{MIMO,QPSK}(2)=
\left[\begin{array}{ccccccccccccc}  
1& 0 & 1 & 0&  1 & 0 & 0 & 0 \\
0& 1 & 1 & 0&  0 & 1 & 0 & 0  \\
1& 0 & 0 & 1&  0 & 0 & 1 & 0  \\
0& 1 & 0 & 1&  0 & 0 & 0 & 1  \\
\end{array} \right] \nokta
\end{equation}
Let the vector $\vect{t}=[t_1,t_2,\ldots,t_8]$ be
\begin{equation}
\begin{split}
  \vect{t}= \bigg[& \frac{\Re{u_{1}}+\Im{u_1}}{2\sigma}\quad   \frac{\Re{u_{1}}-\Im{u_1}}{2\sigma} \quad
\frac{\Re{u_{2}}+\Im{u_2}}{2\sigma}\quad   \frac{\Re{u_{2}}-\Im{u_2}}{2\sigma}\\
 & -\frac{\Re{(\vect{R})_{1,2}}}{2\sigma}\quad -\frac{\Im{(\vect{R})_{1,2}}}{2\sigma}\quad \frac{\Im{(\vect{R})_{1,2}}}{2\sigma}
\quad  -\frac{\Re{(\vect{R})_{1,2}}}{2\sigma} \bigg] \virgul
\end{split}
\end{equation}
where $\vect{R}=\vect{H}_{c}^{H}\vect{H}_{c}$ and $\vect{u}=\vect{H}_c\vect{y}$. This $\vect{t}$ vector is the 
vector that must be applied to the decoder. 

Notice that  $\vect{H}_{MIMO,QPSK}(2)$ is a sub-matrix of $\vect{H}_{2PSK}(4)$. Therefore, the symbolwise
decoder of $\vect{H}_{2PSK}(4)$ can also be used to compute marginal APP probabilities in MIMO detection. 
The inputs that must be applied in this case are given below. 
\begin{equation}
\begin{split}
  \bigg[& \frac{\Re{u_{1}}+\Im{u_1}}{2\sigma}\quad   \frac{\Re{u_{1}}-\Im{u_1}}{2\sigma} \quad
\frac{\Re{u_{2}}+\Im{u_2}}{2\sigma}\quad   \frac{\Re{u_{2}}-\Im{u_2}}{2\sigma}\\
 & 0 \quad -\frac{\Re{(\vect{R})_{1,2}}}{2\sigma}\quad -\frac{\Im{(\vect{R})_{1,2}}}{2\sigma}\quad \frac{\Im{(\vect{R})_{1,2}}}{2\sigma}
\quad  -\frac{\Re{(\vect{R})_{1,2}}}{2\sigma}\quad 0 \bigg] \virgul
\end{split}
\end{equation} 
 Notice that we added two zeros to the input vector when compared to the vector $\vect{t}$. These 
zeros correspond the missing columns in $\vect{H}_{MIMO,QPSK}(2)$ when compared to  $\vect{H}_{2PSK}(4)$. 
Computing the marginal APPs with these two decoders is depicted in Figure \ref{mimofig}. 
\end{example}

It is worth emphasizing that in Examples \ref{naturalexample}, \ref{grayexample}, and \ref{mimoexample}
the decoder of $\vect{H}_{2PSK}(4)$ is used for \emph{three different purposes}.   
\section{Usage of  decoders of tail biting convolutional codes as approximate  MIMO detectors\label{tailbitingsection}}

Trellis representation is mainly used for representing convolutional codes.
However, it is also possible to represent block codes with trellises \cite{mackaybook}. 
Block codes can also be represented with a special type of 
trellis which is the tail biting trellis. 
 Maximum trellis width in a tail biting trellis might  be as low as the square root
of the maximum width of the ordinary trellis representing the same code \cite{blahut,Wiberg}.

If a block code has a parity check matrix as in the form given below 
\begin{equation}
 \vect{H}= \left[\begin{array}{c} 
      ((\vect{L}_{r\times c}))_{0}\\
      ((\vect{L}_{r\times c}))_{1} \\ 
      \ldots \\   
      ((\vect{L}_{r\times c}))_{c-1} 
    \end{array} \vect{I}_{rc\times rc}\right] \virgul
\end{equation}
where $\vect{L}_{r\times c}$ is any $r\times c$ matrix and $((\vect{L}))_i$ denotes
cyclically shifting the columns of $\vect{L}$ towards right $i$ times, 
then it is called a tail biting convolutional code of rate $1/(r+1)$.  For instance,
the Golay code is of this type \cite{blahut}. Tail biting convolutional 
codes can be encoded by the encoders of the convolutional codes
by applying the data bits cyclically. 

The tail biting convolutional codes
have simple approximate decoders enjoying low complexity \cite{blahut,Wiberg}.  
 Hence, there are many studies and standards,
such as LTE, exploiting this reduction in complexity and simplicity of the tail biting trellises. Even an analog
implementation of such a decoder is proposed in \cite{loeligeranalog}.

In this section we show that the MIMO detection problem can be handled by the decoder of a tail biting
convolutional code. The characteristics of this code depend on the number of  transmitting antennae 
and modulation used. We are going to analyze the MIMO detectors for QPSK modulation  as we did in the previous section, 
although it is possible to generalize the technique to other QAM and PAM modulations as well.

We are going to use the same channel model and notation  as in the previous section. That model 
lead us the parity check matrix  $\vect{H}_{MIMO,QPSK}(N)$ given in (\ref{hmimodefinition}). 
This parity check matrix hardly looks like the parity check matrix of a tail biting convolutional code. 

Let a permutation matrix $\vect{P}$ is defined as in 
\begin{equation}
  \vect{P}\triangleq \left[\vect{f}_1^{T} \quad \vect{f}_3^{T} \quad \ldots \quad \vect{f}_{2N_t-1}^{T} 
    \quad \vect{f}_2^{T} \quad \vect{f}_4^{T} \quad \ldots \quad \vect{f}_{2N}^{T}  \right] \nokta
\end{equation}
Furthermore, let $\vect{V}$ be obtained by permuting $\vect{X}$ as in
\begin{equation}
  \vect{V}\triangleq \vect{X}\vect{P} \nokta \label{vxrelation}
\end{equation}

Since $\vect{V}$ is a permutation of $\vect{X}$, maximizing (marginalizing) the APP $\Pr\{\vect{X}=\vect{x}| \vect{Y}=\vect{y} \}$
  is equivalent to maximizing (marginalizing) the APP $\Pr\{\vect{V}=\vect{v}| \vect{Y}=\vect{y} \}$. 
Let $t(\vect{v})$ be a shorthand notation for $\Pr\{\vect{V}=\vect{v}| \vect{Y}=\vect{y} \}$. Then the factorization 
of $t(\vect{v})$ can be derived from the factorization (\ref{mimoappfact3}) as
\begin{equation}
\begin{split}
t(\vect{v}) \propto & \prod_{k=1}^{N_t}\gamma\left(\vect{f}_{2k-1}\vect{P}\vect{v}^{T};\frac{\Re{u_{k}}+\Im{u_k}}{2},\sigma\right) 
\gamma\left(\vect{f}_{2k}\vect{P}\vect{v}^{T};\frac{\Re{u_{k}}-\Im{u_k}}{2},\sigma\right) \\
&\cdot \prod_{k=2}^{N_t}\prod_{l=1}^{k-1}\gamma\left(\vect{a}_{2k-1,2l-1}\vect{P}\vect{v}^{T};-\frac{\Re{(\vect{R})_{k,l}}}{2},\sigma\right)
\gamma\left(\vect{a}_{2k-1,2l}\vect{P}\vect{v}^{T};-\frac{\Im{(\vect{R})_{k,l}}}{2},\sigma\right)\\
&\cdot \prod_{k=2}^{N_t}\prod_{l=1}^{k-1}\gamma\left(\vect{a}_{2k,2l-1}\vect{P}\vect{v}^{T};\frac{\Im{(\vect{R})_{k,l}}}{2},\sigma\right)
\gamma\left(\vect{a}_{2k,2l}\vect{P}\vect{v}^{T};-\frac{\Re{(\vect{R})_{k,l}}}{2},\sigma\right)
\end{split} \virgul 
\end{equation}  
since $(\vect{P}^{-1})^{T}=\vect{P}$. 
Consequently, a parity check matrix
whose ML codeword (symbolwise) decoder can be employed in maximization (marginalization) of  $\Pr\{\vect{V}=\vect{v}| \vect{Y}=\vect{y} \}$
is 
\begin{equation}
  \vect{H}_{\vect{V}}(N_t) \triangleq  \left[\vect{B}(N_t) \quad \vect{I}_{2N_t(N_t-1)\times 2N_t(N_t-1)}\right] 
\end{equation}
where $\vect{B}(N)$  is 
\begin{equation}
  \vect{B}(N) \triangleq  \left[
 \begin{array}{c} 
    \vect{L}(1,N)\vect{P} \\
    \vect{L}(2,N)\vect{P} \\
    \ldots \\   
    \vect{L}(N-1,N)\vect{P} \\    
  \end{array}
  \right] \nokta
\end{equation}
Other alternative parity check matrices whose decoder can be employed in performing inference on 
$\Pr\{\vect{V}=\vect{v}| \vect{Y}=\vect{y} \}$ are in the form of 
\begin{equation}
\left[\vect{B}'(N_t) \quad \vect{I}_{2N_t(N_t-1)\times 2N_t(N_t-1)}\right] \nonumber \virgul
\end{equation}
where  $\vect{B}'(N_t)$ is derived from $\vect{B}(N_t)$ by permuting rows (not columns this time). 
Fortunately, there exists a special row permutation which forms $\vect{B}(N_t)$ into the form given in 
\begin{equation}
 \vect{B}_{TB}(N_t) \triangleq \left[\begin{array}{c} 
      ((\ \vect{L}_{TB}(N_t)\quad \vect{0}_{(N_t-1)\times N_t  } ))_{0}\\
      ((\ \vect{L}_{TB}(N_t)\quad \vect{0}_{(N_t-1)\times N_t  } ))_{1} \\ 
      \ldots \\   
      ((\ \vect{L}_{TB}(N_t)\quad \vect{0}_{(N_t-1)\times N_t  } ))_{2N_t} 
      \end{array} \right] \virgul
\end{equation}
where $\vect{L}_{TB}(N)$ is 
\begin{equation}
  \vect{L}_{TB}(N)\triangleq \left[\vect{I}_{(N_t-1)\times (N_t-1)} \quad \vect{1}_{(N_t-1)\times  1}  \right] \textrm{.}
\end{equation}
Consequently, the decoders of the  parity check matrix given in 
\begin{equation}
  \vect{H}_{TB,MIMO}(N_t)=\left[\vect{B}_{TB}(N_t) \quad  \vect{I}_{2N_t(N_t-1)\times 2N_t(N_t-1)}\right]
\end{equation}
can be employed in performing inference
on $\Pr\{\vect{V}=\vect{v}| \vect{Y}=\vect{y} \}$. $\vect{H}_{TB,MIMO}(N_t)$ is the parity 
check matrix of the tail biting convolutional code of rate $(1/(N_{t}))$ and constraint length $N_t$, whose encoder is shown in Figure \ref{tbconvenc}.

\begin{figure}
\begin{center}
\includegraphics[scale=.6]{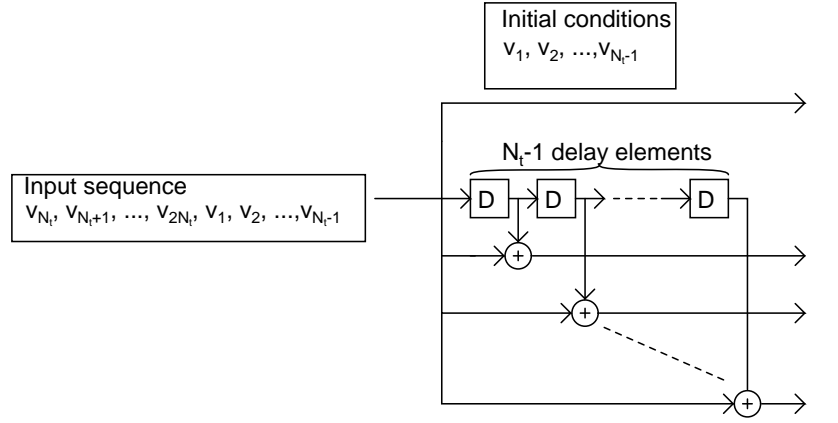}
  \caption[The encoder of the tail biting convolutional code whose decoder can be used as a $N_r\times N_t$ MIMO detector.]
  {The encoder of the tail biting convolutional code whose decoder can be used as the detector MIMO system 
    with $N_t$ transmit and $N_r$ receiving antennae.
    Notice that as opposed to ordinary convolutional encoders 
    the encoder does not initiate from the all zero state. 
    Tail biting nature of the decoder arises from the fact that after all the input sequence is applied
    the decoder returns to its initial condition. 
\label{tbconvenc}}
\end{center}
\end{figure}

\begin{example}
  In this example we are demonstrate that $\vect{B}_{TB}(N_t)$ can be derived from $\vect{B}(N_t)$ by permuting
rows for cases $N_t=2$ and $N_t=3$. 

For $N_t=2$, $\vect{B}(N_t)$  is equal to $\vect{L}(1,2)\vect{P}$. By (\ref{ldef}), $\vect{L}(1,2)$ is
\begin{equation}
  \vect{L}(1,2)=\left[ \begin{array}{cccc} 
      1& 0 & 1 & 0 \\
      0& 1 & 1 & 0 \\ 
      1& 0 & 0 & 1 \\
      0& 1 & 0 & 1 \end{array} \right] \nokta
\end{equation} 
Consequently, $\vect{B}(2)$ is
\begin{equation}
  \vect{B}(2)=\left[ \begin{array}{cccc} 
      1& 1 & 0 & 0 \\
      0& 1 & 1 & 0 \\ 
      1& 0 & 0 & 1 \\
      0& 0 & 1 & 1 \end{array} \right] \nokta
\end{equation}
Changing the places of third and fourth rows gives  $\vect{B}_{TB}(2)$, which is 
\begin{equation}
  \vect{B}_{TB}(2)=\left[ \begin{array}{cccc} 
      1& 1 & 0 & 0 \\
      0& 1 & 1 & 0 \\ 
      0& 0 & 1 & 1 \\
      1& 0 & 0 & 1 
\end{array} \right] \nokta
\end{equation} 
The Tanner graph of the resulting $\vect{H}_{TB,MIMO}(2)= [\vect{B}_{TB}(2) \quad \vect{I}_{4\times 4}]$ is shown in 
Figure \ref{exampletannergraphs}-a.  

For $N_t=2$, $\vect{B}(N_t)$  is equal to $\left[\begin{array}{c} \vect{L}(1,3) \\\vect{L}(2,3) \end{array}\right]\vect{P}$
where $\left[\begin{array}{c} \vect{L}(1,3) \\\vect{L}(2,3) \end{array}\right]$ is 
\begin{equation}
\left[\begin{array}{c} \vect{L}(1,3) \\\vect{L}(2,3) \end{array}\right]=
\left[\begin{array}{cccccc} 
    1 & 0 & 1 & 0 & 0 & 0 \\    
    0 & 1 & 1 & 0 & 0 & 0 \\
    1 & 0 & 0 & 1 & 0 & 0 \\    
    0 & 1 & 0 & 1 & 0 & 0 \\
    1 & 0 & 0 & 0 & 1 & 0 \\    
    0 & 1 & 0 & 0 & 1 & 0 \\
    0 & 0 & 1 & 0 & 1 & 0 \\    
    0 & 0 & 0 & 1 & 1 & 0 \\
    1 & 0 & 0 & 0 & 0 & 1 \\    
    0 & 1 & 0 & 0 & 0 & 1 \\
    0 & 0 & 1 & 0 & 0 & 1 \\    
    0 & 0 & 0 & 1 & 0 & 1 \\
 \end{array}\right] \nokta
\end{equation}
Then $\vect{B}(3)$ is 
\begin{equation}
\vect{B}(3)=
\left[\begin{array}{cccccc} 
    1 & 1 & 0&0  & 0  & 0 \\    
    0 & 1 & 0&1  & 0  & 0 \\
    1 & 0 & 0&0  & 1  & 0 \\    
    0 & 0 & 0 &1  & 1  & 0 \\
    1 & 0 & 1 &0  & 0  & 0 \\    
    0 & 0 & 1 &1  & 0  & 0 \\
    0 & 1 & 1&0  & 0  & 0 \\    
    0 & 0 & 1&0  & 1  & 0 \\
    1 & 0 & 0 &0  & 0  & 1 \\    
    0 & 0 & 0 &1  & 0  & 1 \\
    0 & 1 & 0&0  & 0  & 1 \\    
    0 & 0 & 0 &0  & 1  & 1 \\
 \end{array}\right] \nokta
\end{equation}
Finally  carrying the  $5^{th}$, $7^{th}$, $2^{nd}$, $6^{th}$, $8^{th}$, $4^{th}$, $10^{th}$, $12^{th}$, $3^{rd}$, 
$9^{th}$, $11^{th}$, and $1^{st}$ rows  to $1^{st}$, $2^{nd}$, $\ldots$, $12^{th}$ rows gives  
  $\vect{B}_{TB}(3)$ as in 
\begin{equation}
\vect{B}_{TB}(3)=
\left[\begin{array}{cccccc} 
    1 & 0 & 1& 0  & 0  & 0 \\    
    0 & 1 & 1& 0  & 0  & 0 \\
    0 & 1 & 0& 1  & 0  & 0 \\    
    0 & 0 & 1 &1  & 0  & 0 \\
    0 & 0 & 1 &0  & 1  & 0 \\    
    0 & 0 & 0 &1  & 1  & 0 \\
    0 & 0 & 0& 1  & 0  & 1 \\    
    0 & 0 & 0& 0  & 1  & 1 \\
    1 & 0 & 0 &0  & 1  & 0 \\    
    1 & 0 & 0 &0  & 0  & 1 \\
    0 & 1 & 0& 0  & 0  & 1 \\    
    1 & 1 & 0 &0  & 0  & 0 \\
 \end{array}\right] \nokta
\end{equation}
The Wiberg style Tanner graph of the resulting $\vect{H}_{TB,MIMO}(3)= [\vect{B}_{TB}(3) \quad \vect{I}_{12\times 12}]$ is shown in 
Figure \ref{exampletannergraphs}-b.  
\end{example}

\begin{figure}
  \begin{center}
\begin{tabular}{c}
  \includegraphics[scale=.7]{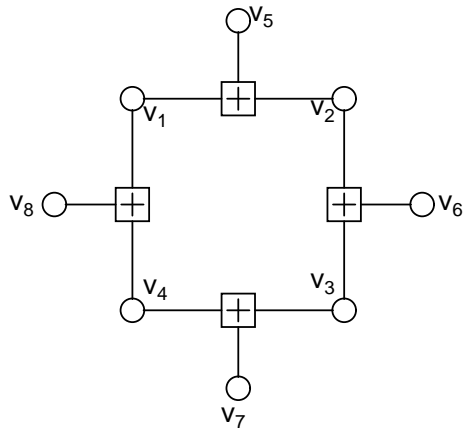}\\
  (a)\\
  \includegraphics[scale=.7]{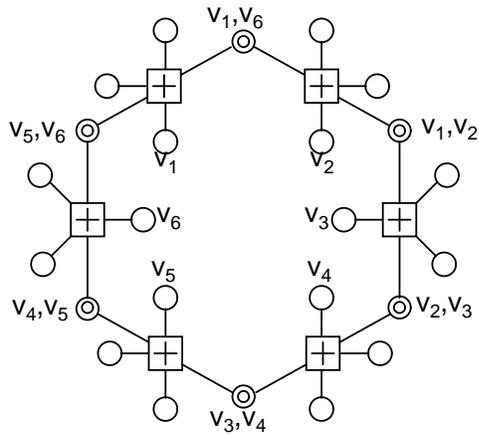}\\
  (b)\\
\end{tabular}

    \caption[The Tanner graphs of $\vect{H}_{TB,MIMO}(N_t)$ for $N_t=2$ and $N_t=3$.]
    {The Tanner graphs of $\vect{H}_{TB,MIMO}(N_t)$ for $N_t=2$ and $N_t=3$. (a) Tanner graph of  
$\vect{H}_{TB,MIMO}(2)$. (b) Wiberg style Tanner graph of $\vect{H}_{TB,MIMO}(3)$ \label{exampletannergraphs}}
  \end{center}
\end{figure}

\subsection{Using the decoding algorithms of tail biting convolutional codes for MIMO detection\label{approxmimodet}}

Since a tail biting trellis does not have a starting or ending state, Viterbi and
BCJR algorithms cannot be run on such trellises directly. To process a tail biting
trellis with Viterbi algorithm we need to run the Viterbi algorithm 
$\nu$ times on the trellis where $\nu$  denotes the trellis width. In each run, the Viterbi
algorithm  determines a candidate path which is the most probable path among the paths 
starting and ending on a certain state on the trellis. Then the most probable path can 
be chosen among the  $\nu$ candidate paths. Since the complexity of 
each running of the  Viterbi algorithm is $O(L\nu)$, where $L$ denotes the length of the trellis,
 the complexity of determining the most possible path with Viterbi algorithm is $O(L\nu^{2})$.
Recall that the complexity would be $O(L\nu)$ if the trellis were an ordinary trellis. 
Similar arguments are true for the BCJR algorithm as well. 

The complexity of ML codeword and exact symbolwise decoders of $\vect{H}_{TB,MIMO}(N_t)$
is $O(N_t2^{2N_t})$ as explained in the previous paragraph. The complexity 
of the trivial MIMO detection algorithm is $O(2^{2N_t})$. Hence, using the exact decoders
of $\vect{H}_{TB,MIMO}(N_t)$ for MIMO detection does not make  sense.   

Fortunately, tail biting convolutional codes have an \emph{approximate} symbolwise decoder.
This decoder operates by running BCJR algorithm on the tail biting trellis \emph{iteratively}.  
Equivalently, this decoder can be viewed as the iterative sum-product algorithm running 
on the Wiberg style Tanner graph an example of which is shown in Figure \ref{exampletannergraphs}-b.
Usually, a few iterations are sufficient to converge \cite{Wiberg}.
We propose implementing an approximate soft output MIMO detector by using this approximate 
symbolwise as the decoder $\vect{H}_{TB,MIMO}(N_t)$.
 Such a MIMO detector is also capable of using any a priori information available since it uses the BCJR algorithm. 
 The block diagram of this \emph{approximate} soft output MIMO detector is shown in Figure \ref{approxmimodetector}. 

\begin{figure}
  \begin{center}
   \includegraphics[scale=.42]{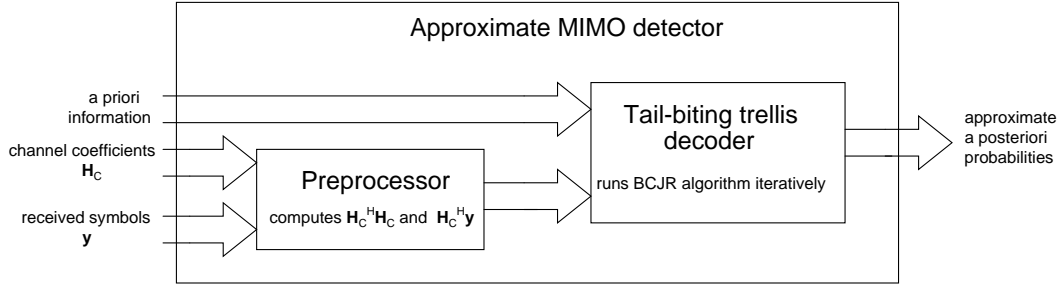}
    \caption{Block diagram of the proposed approximate soft output MIMO detector which uses the 
    approximate decoder of a tail biting convolutional code.\label{approxmimodetector}}
  \end{center}
\end{figure}
\subsection{Complexity issues}

There are two subtasks when the decoder mentioned above is employed as an approximate soft ouput MIMO detector. 
These tasks are the computation of the inputs applied to the decoder and processing
the decoder trellis.

As explained in Section \ref{mimodemodsection}, the inputs that must be applied
to the decoder of $\vect{H}_{TB,MIMO}$ are the components of $\vect{u}$
and the entries of $\vect{R}$ defined in (\ref{rinputdef}) and (\ref{uinputdef}) respectively.  
The computation of $\vect{u}$ has a complexity $O(N_{r}N_{t})$ whereas 
the computation of $\vect{R}$ has a complexity  $O(N_{t}^2N_{r})$. 

Processing the decoding trellis with the BCJR algorithm 
has a complexity $O(N_t2^{N_t})$.  This complexity 
is almost the square root of the complexity of the trivial ML and soft output MIMO  detectors which
is $O(2^{2N_t})$. From  a computer scientific point of view, this last 
component of the complexity might be dominant to the complexity of the computation of $\vect{R}$. 
However, in an engineering point of view computing
$\vect{R}$ is a more computationally demanding task than processing 
the decoding trellis for two reasons. First, in a practical scenario $N_{r}$ and
$N_t$ is eight at most. Hence, $N_t2^{N_t}$ and $N_t^{2}N_r$  are comparable 
in practical scenarios. Second,
the decoding trellis processing involves only additions and 
maximizations\footnote{We assume Max-Log-MAP approximation is used for the BCJR algorithm
running on the trellis.}
whereas computing $\vect{R}$ involves  complex multiplications
which require much more complex hardware than addition. Therefore, computing $\vect{R}$
is the most computationally demanding subtask of the proposed method. 
However, it should be noted that $\vect{R}$ is computed only once
for a constant $\vect{H}_c$. 

 The proposed technique, which employs a tail biting decoder as the MIMO detector, is comparable to other 
 sub optimal methods such as minimum mean square error (MMSE) or zero forcing (ZF) detectors in terms
of hardware complexity which both have a complexity $O(N^3)$ if $N_t=N_r=N$. Furthermore,
other sub optimal methods require matrix inversion. Although,
matrix inversion have complexity $O(N^3)$, it requires complex number divisions which 
require even more complex hardware than multiplication. Hence, the proposed technique 
still has an advantage in terms of hardware complexity over MMSE and ZF detectors.  

\subsection{Simulation Results \label{simulationresults1}}

We simulated the proposed approximate soft output MIMO detector for the $8\times 8$ Rayleigh fading MIMO channel. 
In this channel entries of $\vect{H}_c$ are  independent, zero-mean, circularly symmetric Gaussian random variables
where the variances of the real and imaginary parts are $1/2$. We assumed that 
$\vect{H}_c$ changes for every transmitted MIMO symbol and perfectly known at the
receiver side.  The noise vector added at the receiver
also consists of independent, zero-mean, circularly symmetric Gaussian random variables where the variances
of the  real and imaginary parts are $N_0/2$.  The signal to noise ratio (SNR) per receiving 
antenna is $E_{b}/N_{0}$. Since there are $N_{r}$ receiving antennae in a MIMO system the 
convention is to use $N_rE_b/N_0$  as SNR \cite{tenBrinkMIMOLDPC}.

\begin{figure}
  \begin{center}
    \includegraphics{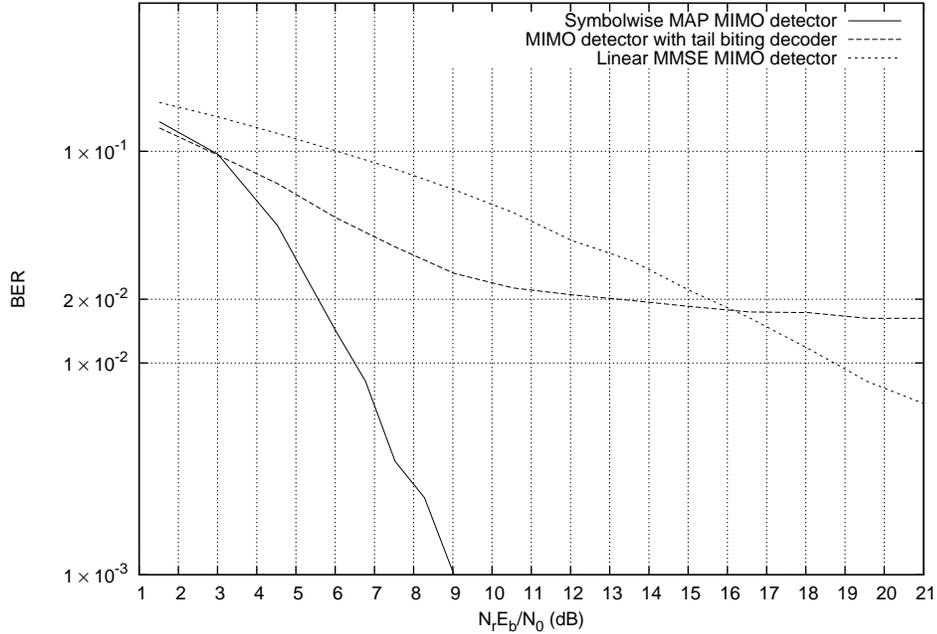}
  \caption{BER performances of the MIMO detector using the decoder of a  
    tail biting convolutional code, the symbolwise MAP MIMO detector, and the linear MMSE MIMO detector in a 
    Rayleigh fading $8\times 8$ MIMO channel. \label{expresult1}}
  \end{center}
\end{figure}

The bit error rate (BER) performance of the  proposed algorithm is shown in Figure \ref{expresult1}.
These results show that the proposed method has an 
unexpected poor performance when compared to the symbolwise MAP MIMO 
detector. Moreover, the proposed method exhibits an error floor as early as $2\times 10^{-2}$ level. 
The performance of the proposed algorithm is better than the linear minimum mean square 
error (MMSE) MIMO detector \cite{mimobook} until $16\ \textrm{dB}$. After $16\ \textrm{dB}$
the performance of the linear MMSE becomes better due to the early error floor of the proposed MIMO detector.     
We provide some comments on this unexpected performance in the next section and propose an
improvement in Section \ref{improvementsubsection}.

\subsection{Comments on the convergence of the sum-product algorithm on factor graphs with a single cycle}

The Wiberg style Tanner graph that represents the tail biting trellis contains only a single loop,
as in Figure \ref{exampletannergraphs}-a.
There are many studies in the sum-product algorithm literature which claim that the sum-product 
algorithm running on Tanner graph with  a single cycle always converges such as \cite{sp_on_simple_graphs,aji_single_cycle,weiss_empirical}. 
These studies also claim that the approximate marginals computed by  the sum-product algorithm 
is close to the exact marginals when the sum-product runs on these graphs. 
According to these studies, our proposed MIMO detector 
was supposed to converge at all times and it was expected to yield good results. 
However, our empirical results shown in Figure \ref{expresult1} do not agree with these 
expectations. 

Our experimental results verify that the sum-product algorithm running on a Tanner factor
graph with a single cycle always converges. However, in some cases this convergence
require as few as two or three iterations to converge whereas in some other rare cases
 it might require thousands of iterations.
A detailed analysis of the experimental results shows that the relatively high error floor in Figure \ref{expresult1} 
is caused by the cases in which the sum-product algorithm requires thousands of iterations to converge. 
Therefore, the sum-product algorithm 
produces good approximations of the exact marginals only if it converges in a few iterations. 
Otherwise,  the results generated by the sum-product algorithm is not a good approximation. 
We provide a numerical example in which sum-product algorithm requires thousands of iterations 
to converge below. 

\begin{example}
We provide the example for the Tanner graph shown in Figure \ref{exampletannergraphs}-a which 
is a factor graph with just a single cycle and contains only binary variable nodes. 
Let the inputs applied to the decoder represented by the Tanner graph shown in Figure \ref{exampletannergraphs}-a
designed for BPSK modulation and AWGN channel be 
\begin{equation}
\left[-55 \quad 60\quad -25\quad -20\quad 40\quad 55 \quad 40 \quad -55 \right]
\end{equation}
 If one runs the sum product algorithm on the Tanner graph shown in Figure \ref{exampletannergraphs}-a
with these inputs, it can be observed that the sum-product algorithm achieves a reasonable convergence
 at least after $3000$ iterations.  
Such an input settings can be observed in a scenario in which that decoder
is employed as a MIMO detector for a $2\times 2$ channel with coefficients 
\begin{equation}
  \vect{H}_c=\begin{bmatrix} 1.5j & 1-0.5j \\1+0.5j&-0.5-1.5j  \end{bmatrix} \nonumber
\end{equation}
and a sequence $[-j,1]$ is transmitted when noise has a variance $\sigma^{2}=0.01$.  

We would like to note that the likelihoods given above are very unlikely to be observed
in a real channel decoding problem. Therefore, such likelihoods is probably never
observed in  \cite{sp_on_simple_graphs,aji_single_cycle,weiss_empirical}. Hence, they claimed
that the sum-product algorithm produces good approximations for exact marginals if the sum-product
algorithm converges. Unfortunately, this claim is not quite true as this counter example shows.
\end{example}

Even if the sum-product algorithm produced good approximations in cases requiring thousands 
of iterations to converge, a practical MIMO detection algorithm cannot wait that 
much to complete the demodulation of a single MIMO symbol. Therefore,  
this late convergence problem requires a solution to develop a practical MIMO  detection 
algorithm with tail  biting decoders which we provide in the next section.

\subsection{Performance Improvements by using tail biting convolutional codes of 
  longer constraint length \label{improvementsubsection}}

Recall that the tail biting decoder of $\vect{H}_{TB,MIMO}(N_t)$ 
is used for performing inference 
on $\Pr\{\vect{V}=\vect{v}|\vect{Y}=\vect{y}\}$ where $\vect{V}$ was 
a permutation of $\vect{X}$ given by (\ref{vxrelation}).  This decoder
can only perform inference for this specific permutation of $\vect{X}$.

We define   an \emph{extended} version of parity check matrix $\vect{H}_{TB,MIMO}(N_t)$ 
as in
\begin{equation}
\vect{H}_{ETB,MIMO}(N_t)\triangleq \left[\vect{C}_{TB}(N_t) \quad  \vect{I}_{2N_t^2 \times 2N_t^2 } \right] \virgul
\end{equation} 
where $\vect{C}_{TB}(N_t)$ is 
\begin{equation}
 \vect{C}_{TB}(N_t) \triangleq \left[\begin{array}{c} 
      ((\ \vect{L}_{TB}(N_{t+1})\quad \vect{0}_{N_t\times (N_t-1)  } ))_{0}\\
      ((\ \vect{L}_{TB}(N_{t+1})\quad \vect{0}_{N_t\times (N_t-1)  } ))_{1} \\ 
      \ldots \\   
      ((\ \vect{L}_{TB}(N_{t+1})\quad \vect{0}_{N_t\times (N_t-1)  } ))_{2N_t} 
      \end{array} \right] \nokta
\end{equation}
 Notice that 
$\vect{H}_{ETB,MIMO}(N_t)$ is the parity check matrix of a tail biting 
convolutional code of rate $1/(N_t+1)$ and of constraint length $N_t+1$  and
can be derived from $\vect{H}_{TB,MIMO}(N_t)$  by adding $2N_t$ more parity 
checks. 

As opposed to the decoder of $\vect{H}_{TB,MIMO}(N_t)$, which can perform inference only on
$\Pr\{\vect{V}=\vect{v}|\vect{Y}=\vect{y}\}$ , the decoder of $\vect{H}_{ETB,MIMO}(N_t)$ can be used to perform 
inference on  $\Pr\{\vect{V}_A=\vect{v}|\vect{Y}=\vect{y}\}$ 
where $\vect{V}_A$ is \emph{any permutation} of $\vect{X}$. 

An improved 
soft output MIMO detector can be implemented by using the approximate symbolwise detector of 
 $\vect{H}_{ETB,MIMO}(N_t)$ instead of $\vect{H}_{TB,MIMO}(N_t)$. The main advantage of the detector
with extended tail biting decoder when compared original tail biting decoder is that it can 
work with any permutation of $\vect{X}$. Moreover, a certain permutation can work 
better for a given $\vect{H}_c$ and noise realization while another permutation 
can work better with another  $\vect{H}_c$ and noise realization. This 
flexibility comes at the cost of increasing trellis processing complexity by two
which is acceptable. 

We propose a soft output MIMO detection algorithm by using the approximate 
symbolwise decoder of the extended tail biting code as follows. 
\begin{enumerate}
  \item Select a permutation $\vect{P}_{A}$ from a set $\set{P}$ of 
    permutations. 
  \item Apply the inputs properly permuted with the permutation $\vect{P}_A$ 
    to the approximate symbolwise  decoder of $\vect{H}_{ETB,MIMO}(N_t)$.
  \item Run the BCJR algorithm iteratively on the tail biting trellis 
    until it converges or a maximum number of 
    iterations reached.
  \item If the iterative BCJR algorithm converges declare 
    its result as the output of the MIMO detector and 
    halt.
  \item If the iterative BCJR algorithm does not converge
    select another permutation $\vect{P}_A$ from $\set{P}$
    and goto Step 2. If there is not any remaining permutation in 
    $\set{P}$ then declare a failure. 
\end{enumerate}

We tested this algorithm on the $8 \times 8$ MIMO channel described in Section \ref{simulationresults1}.
The set $\set{P}$ we used in this simulations consists of $15$ specific
permutations among $16!$ possible permutations. 
These permutations are given Appendix \ref{mimopermutations}. BER performance of this MIMO detector is given 
in Figure \ref{improvedresults}. These results show that the MIMO detector using 
the extended tail biting decoder improves the error floor performance 
by an order of magnitude. Furthermore, the BER performance before reaching 
the error floor is also improved significantly. The improved MIMO detector
is just $2\textrm{dB}$ away from the optimum algorithm when it reaches
the error floor.

\begin{figure}
  \begin{center}
    \includegraphics[scale=1]{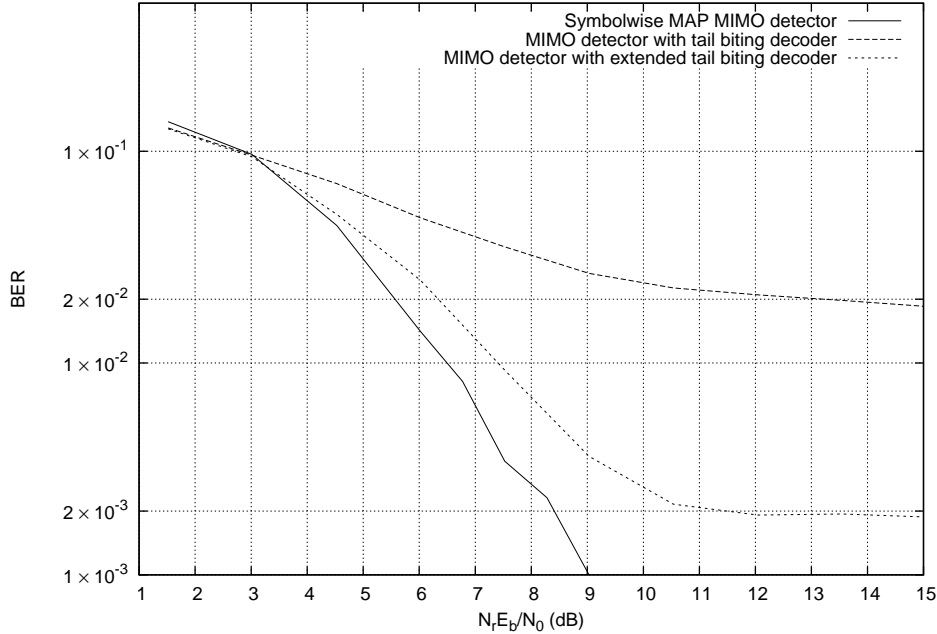}
    \caption{ BER performance of the MIMO detector using the extended tail biting decoder
      with different permutations together with the MIMO detector with the normal tail biting decoder
and symbolwise MAP MIMO detector in Rayleigh fading $8 \times 8$ channel.\label{improvedresults}}.
  \end{center}
\end{figure}

\begin{figure}
  \begin{center}
    \includegraphics[scale=1]{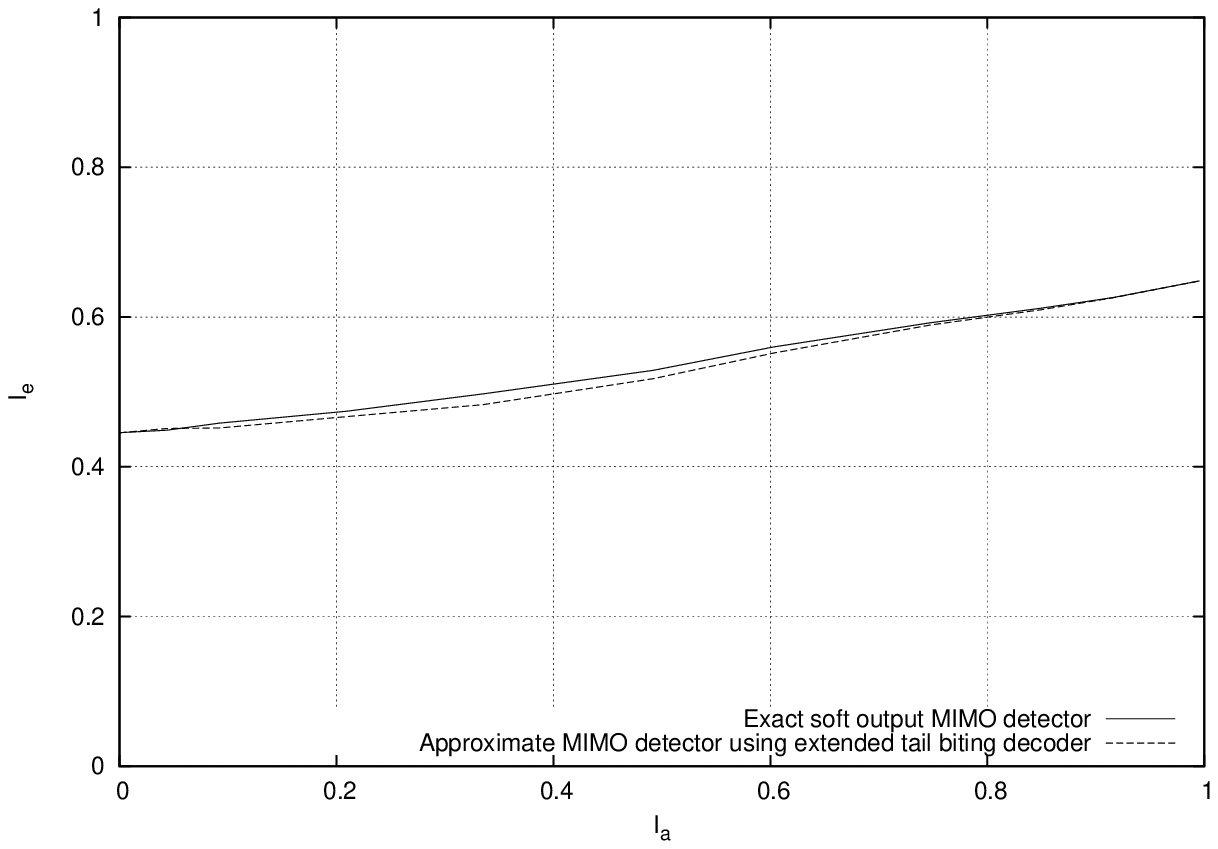}
    \caption{EXIT curves of the approximate MIMO detector using extended tail biting decoder
      and the exact soft output MIMO detector at $N_rE_b/N_{0}=-0.96 \textrm{dB}$\label{exit1}}
  \end{center}
\end{figure}

\begin{figure}
  \begin{center}
    \includegraphics{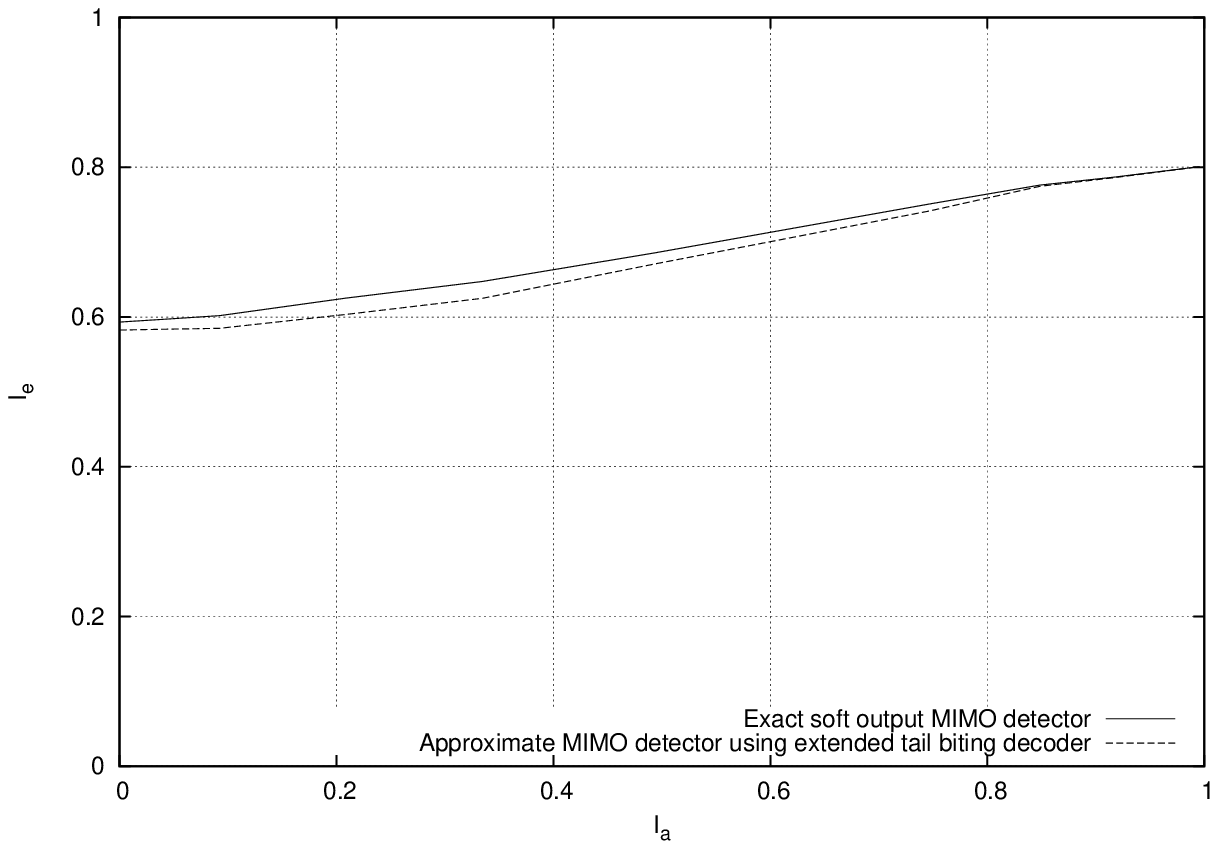}
    \caption{EXIT curves of the approximate MIMO detector using extended tail biting decoder
      and the exact soft output MIMO detector $N_rE_b/N_{0}=1.25\textrm{dB}$\label{exit2}}
  \end{center}
\end{figure}

\begin{figure}
  \begin{center}
    \includegraphics{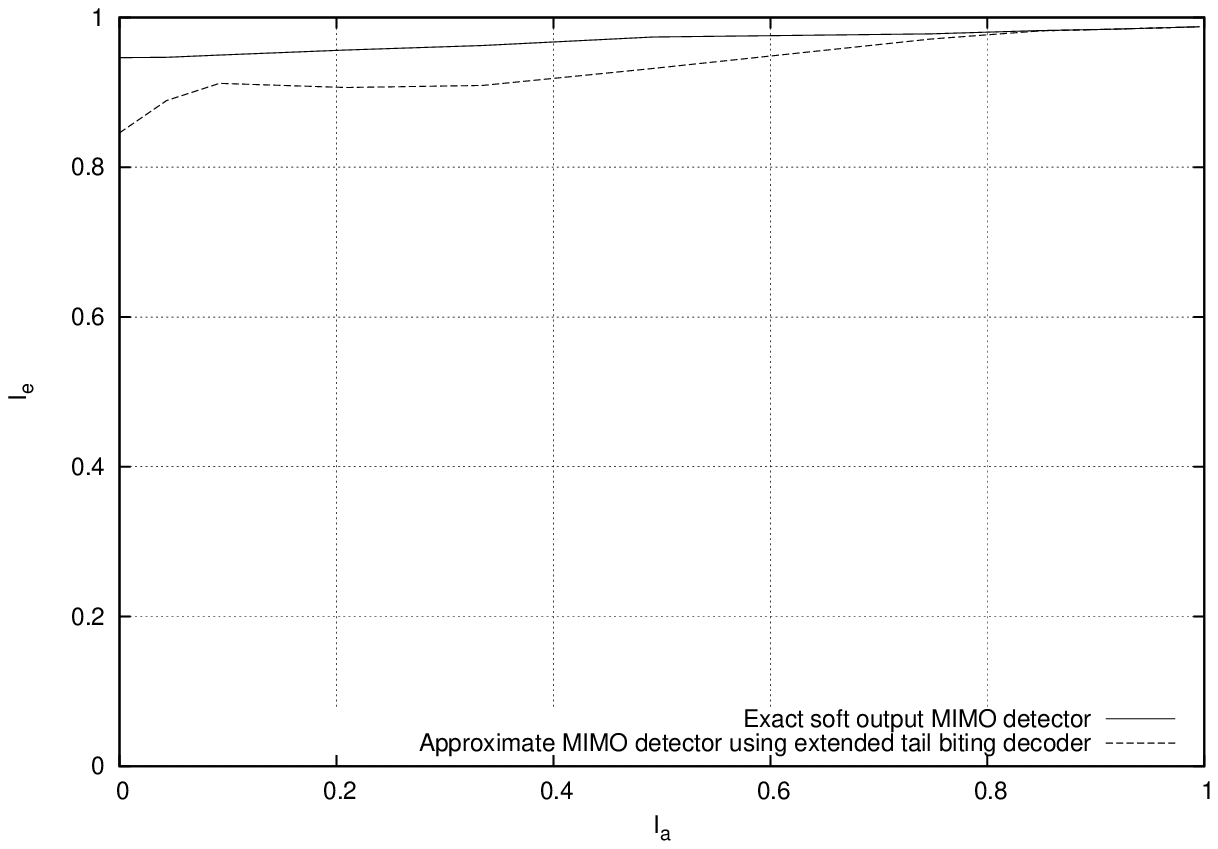}
    \caption{EXIT curves of the approximate MIMO detector using extended tail biting decoder
      and the exact soft output MIMO detector $N_rE_b/N_{0}=6.02\textrm{dB}$\label{exit3}}
  \end{center}
\end{figure}

Recall that this MIMO detector is capable of using a priori information and produces soft output. 
Hence, it can be easily used in a iterative detection-decoding scheme. In order estimate 
the possible performance of the improved MIMO detector in such an iterative scheme, we computed extrinsic information transfer (EXIT) 
curves \cite{tenBrink1,tenBrink2,tenBrinkMIMOLDPC}.
The area under the EXIT curve of a MIMO detector is an approximate estimation of the maximum possible rate of the 
code which can be used in an iterative detection-decoding scheme and can achieve arbitrarily small error rate. 
In this aspect the area under exact soft output MIMO detector is an approximate estimation 
of the MIMO channel capacity \cite{tenBrink2}.  
  
We computed the EXIT curves at three different SNR values.
 These results are shown in Figures \ref{exit1}, \ref{exit2},
and \ref{exit3}.  Since the area between the two EXIT curves in  Figure \ref{exit1} is negligible,
the proposed algorithm can be used in an iterative detection-decoding scheme with the same
code as the optimum algorithm at low SNR or in the power limited region. The EXIT curves
shown in Figure \ref{exit2} lead to similar conclusion. The area between 
the two EXIT curves becomes $0.04$ in Figure \ref{exit3}. This means that the proposed
algorithm can also be used in the bandwidth limited region but at the cost of  a rate loss of $0.04 \textrm{bits}$
which is quite acceptable.

\section{Usage of the decoders of the convolutional codes  as channel equalizers}

Let $X(t)$ be a stochastic process defined as follows. 
\begin{equation}
  X(t)=\sum_{n}\pam{N}{\vect{X}_{n}}f(t-nT) \virgul
\end{equation}
where $f(t)$ is the impulse response of a pulse shaping filter and $\vect{X}_{n}$ is 
a random vector consisting of $N$ bits. Furthermore, let $Y(t)$ be
\begin{eqnarray}
  Y(t)&=&X(t)*g(t)+Z(t)\\
  &=&\sum_{n}\pam{N}{\vect{X}_{n}}*h(t) +Z(t) \virgul
\end{eqnarray} 
where $Z(t)$ is a zero mean white Gaussian noise process with power spectral density $\frac{N_0}{2}$, $*$ denotes 
convolution, $g(t)$ is the impulse response of a causal channel, 
and $h(t)$ is the convolution of the $g(t)$ and $f(t)$. Then it can be shown by following 
similar procedures applied in the previous sections that the Viterbi and BCJR decoders of a certain 
convolutional code $C$ can be used as ML sequence estimator and marginal APP receiver for this inter-symbol
interference system respectively. This code $C$ is the non-recursive systematic convolutional code of
rate $1/NL$ and of constraint length $NL$ where $L$ is the smallest integer such that $h(t)=0$ for $t>LT$. 
The generator polynomials of this code are $1$, $1+x$, $1+x^2$, $\ldots$, $1+x^{NL-1}$.

The inputs that must be applied to these decoders to achieve the desired results consists of samples
taken from the output of the matched filter i.e. $y(t)*h(-t)$ with sampling period $T$, where $y(t)$ is the received signal, 
 samples taken from the time autocorrelation function 
$h(t)*h(-t)$ again with sampling period $T$, and scaling of these samples with $2$'s powers
\footnote{We dropped conjugations since $\pam{N}{\vect{X}_{n}}$ is real}. 

The Viterbi decoder of the mentioned code above actually 
works as an alternative device to compute the Ungerboeck's metric \cite{ungerboeck}.
Therefore, this result would be much more interesting if we achieved it before Ungerboeck.     
However, using a Viterbi decoder  as an alternative device to compute  Ungerboeck's
might still be of practical importance since this approach takes all of the multiplications 
outside of the Viterbi data path. 

We have also empirically verified that the BCJR decoder of the convolutional code
mentioned above with the mentioned inputs returns the exact marginal APPs of the transmitted bits.

\newpage
\chapter[DETERMINING CONDITIONAL  INDEPENDENCE RELATIONS FROM THE CANONICAL FACTORIZATION]
{DETERMINING CONDITIONAL INDEPENDENCE RELATIONS  FROM  THE 
  CANONICAL FACTORIZATION
\label{markovchapter}}
\section{Introduction}
Investigating the conditional independence relations
of random variables is important in many different disciplines \cite{bishop,amsmrfbook}.
These conditional independence relationships are well represented by 
a graphical model called Markov random field (MRF) or undirected
graphical model. In this section we  show that the MRF representing a joint
pmf can be determined from the projections of the joint PMF onto the 
subspaces described in Chapter \ref{thechapter}.

This chapter begins with introducing the relation between conditional independence 
of two random variables and the canonical factorization. Then we explain
how to determine Markov blankets from the canonical factorization. 
This chapter ends with comparing the canonical factorization with 
the Hammersley-Clifford Theorem.  

\section{Conditional Independence of Two Random Variables}

Suppose that it is desired to determine the
conditional independence relations between the components
of the random vector $\vect{X}=[X_1,X_2,\ldots,X_{N}]$ which is
distributed with a $p(\vect{x}) \in \setfieldqn$. 
Then a random variable $X_i$ is  said to be conditionally 
independent of $X_j$ given all the other components of $\vect{X}$
if and only if the following relation
is satisfied:  
\begin{equation}
  \Pr\left\{X_i=x_i| \vectex{X}{i}=\vectex{x}{i}\right\}
  = \Pr\left\{X_i=x_i|\vectex{X}{i,j}=\vectex{x}{i,j}\right\}
\label{conditionalindependence1}
\end{equation}
where $\vect{X}_{\setminus\set{I}}$ ($\vect{x}_{\setminus \set{I}}$) denotes the  vector obtained by removing 
the  components having indices in $\set{I}$ from  $\vect{X}$ ($\vect{x}$). The following theorem 
states the necessary and sufficient conditions for the conditional independence of two 
random variables in terms of the canonical factorization. 
\begin{theorem}\label{conditionalindependencetheorem}
 Let $\vect{X}$ be a random vector distributed with  $p(\vect{x})$ in $\setfieldq$.
$X_k$ and $X_l$ are  conditionally independent given $\vectex{X}{k,l}$ if and only if
$p(\vect{x})$ can be factored as  
\begin{equation}
  p(\vect{x}) = \n{\fieldqn}{\prod_{\vect{a}_i \in \set{K}_{k} \cup \set{K}_{l} }
  r_{i}(\vect{a}_i\vect{x}^{T})} \label{yoruldum_artik}
\end{equation} 
where $\set{K}_{k}$ and $\set{K}_{l}$  are defined as
 \begin{eqnarray}
  \set{K}_{k} &\triangleq& \left\{\vect{a}_i \in \set{H}: \vect{f}_k\vect{a}_i^T=0 \right\} \nonumber\\ 
   \set{K}_{l} &\triangleq& \left\{\vect{a}_i \in \set{H}: \vect{f}_l\vect{a}_i^T=0 \right\} \nonumber  \textrm{.}
\end{eqnarray}
\end{theorem}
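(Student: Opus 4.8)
The plan is to bridge the probabilistic notion in Definition (\ref{conditionalindependence1}) with the canonical factorization through the classical factorization criterion for strictly positive distributions. First I would establish that, for a strictly positive $p(\vect{x})$, the variables $X_k$ and $X_l$ are conditionally independent given $\vectex{X}{k,l}$ if and only if $p$ splits as $p(\vect{x}) = g(\vect{x}(\vect{I}-\vect{E}_l))\,h(\vect{x}(\vect{I}-\vect{E}_k))$ for strictly positive functions $g$ and $h$, one factor free of $x_l$ and the other free of $x_k$. This follows directly from (\ref{conditionalindependence1}): writing $\Pr\{X_k=x_k\mid\vectex{X}{k}=\vectex{x}{k}\} = p(\vect{x})/\sum_{x_k}p(\vect{x})$, the defining equation says this ratio is free of $x_l$, so one takes $g$ to be the ratio and $h=\sum_{x_k}p(\vect{x})$, which is free of $x_k$; conversely, substituting the product form into the ratio cancels $h$ and leaves a quantity free of $x_l$.

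For the forward direction I would normalize the two factors into pmfs $p_g \triangleq \n{\fieldqn}{g(\vect{x}(\vect{I}-\vect{E}_l))}$ and $p_h \triangleq \n{\fieldqn}{h(\vect{x}(\vect{I}-\vect{E}_k))}$ in $\setfieldqn$, so that $p(\vect{x}) = p_g(\vect{x}) \boxplus p_h(\vect{x})$ with $p_g(\vect{x}) = p_g(\vect{x}(\vect{I}-\vect{E}_l))$ and $p_h(\vect{x}) = p_h(\vect{x}(\vect{I}-\vect{E}_k))$. The key observation is that the dependency set $\{\vect{a}_i \in \set{H} : \vect{a}_i(\vect{I}-\vect{E}_l) = \vect{a}_i\}$ of Lemma \ref{localfactorizationlemma} is exactly $\set{K}_l$, since $\vect{a}_i(\vect{I}-\vect{E}_l)=\vect{a}_i$ holds precisely when the $l$-th coordinate of $\vect{a}_i$ vanishes, i.e. $\vect{f}_l\vect{a}_i^T=0$; likewise the set for $\vect{I}-\vect{E}_k$ is $\set{K}_k$. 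Lemma \ref{localfactorizationlemma} then forces the canonical factorization of $p_g$ to involve only vectors of $\set{K}_l$ and that of $p_h$ only vectors of $\set{K}_k$. Since each canonical component is the orthogonal projection onto a summand $\imagesop{\vect{a}_i}$ of the decomposition in Theorem \ref{decompositiontheorem}, and projection is linear with respect to $\boxplus$ and $\boxtimes$, the projection of $p=p_g\boxplus p_h$ onto $\imagesop{\vect{a}_i}$ is the $\boxplus$-sum of the projections of $p_g$ and $p_h$. For any $\vect{a}_i \notin \set{K}_k\cup\set{K}_l$ both summands are the neutral element $\theta(\vect{x})$, so that component is trivial; hence the canonical factorization of $p$ involves only $\set{K}_k\cup\set{K}_l$, which is (\ref{yoruldum_artik}).

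For the converse I would start from (\ref{yoruldum_artik}) and partition the index set as $\set{K}_k\cup\set{K}_l = \set{K}_l \,\sqcup\, (\set{K}_k\setminus\set{K}_l)$. The product of the SPC factors with $\vect{a}_i\in\set{K}_l$ is a function of $\vect{x}(\vect{I}-\vect{E}_l)$ alone, since every such $\vect{a}_i$ has vanishing $l$-th coordinate, while the product over $\set{K}_k\setminus\set{K}_l\subseteq\set{K}_k$ is a function of $\vect{x}(\vect{I}-\vect{E}_k)$ alone. This exhibits $p(\vect{x})$ in the product form $g(\vect{x}(\vect{I}-\vect{E}_l))\,h(\vect{x}(\vect{I}-\vect{E}_k))$ of the criterion, so $X_k$ and $X_l$ are conditionally independent given $\vectex{X}{k,l}$.

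The main obstacle I anticipate is the forward step's claim that the canonical component of a $\boxplus$-sum is the $\boxplus$-sum of the canonical components, so that the support (the set of $\vect{a}_i$ with nontrivial projection) of $p$ is contained in the union of the supports of $p_g$ and $p_h$. This rests on treating each projection as a genuine linear map on the Hilbert space $\setfieldqn$, which is precisely what the orthogonal decomposition of Theorem \ref{decompositiontheorem} and the locality characterization of Lemma \ref{localfactorizationlemma} (equivalently the orthogonality of Lemma \ref{propertyorthogonallemma}) provide; one also must check that $g$ and $h$ are strictly positive so that $p_g,p_h$ are legitimately members of $\setfieldqn$ and the $\boxplus$ manipulations are valid.
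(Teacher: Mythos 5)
Your proposal is correct and follows essentially the same route as the paper's own proof: the forward half of your factorization criterion (dividing $p$ by $\sum_{x_k}p$ to obtain a factor free of $x_l$ times a factor free of $x_k$) is exactly the paper's construction of $m_k$ and $m_l$, after which the paper cites Theorem \ref{localfactorizationtheorem} where you re-derive its content by hand via Lemma \ref{propertyorthogonallemma} and linearity of the orthogonal projections onto the subspaces $\imagesop{\vect{a}_i}$. The only cosmetic difference is in the other direction, where the paper verifies conditional independence by explicitly computing the three marginals from the split into $t_k$ and $t_l$, while you dispatch it with the ratio-cancellation argument of your criterion; the two verifications are elementary and equivalent.
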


The proof is given Appendix \ref{conditionalindependencethmproof}.

The forward statement of this theorem asserts that if none of the SPC factors composing the canonical 
factorization of $p(\vect{x})$ depend 
on both $x_i$ and $x_j$ simultaneously then $X_i$ and $X_j$ are conditionally independent given 
$\vectex{X}{i,j}$. Actually, this result is true not only for the canonical factorization 
but also for any factorization.

The backward statement of Theorem \ref{conditionalindependencetheorem} states that if an SPC factor
of $p(\vect{x})$
\emph{with nonzero norm} 
depends on  $x_i$ and $x_j$ simultaneously then $X_i$ and $X_j$ are definitely conditionally
dependent given $\vectex{X}{i,j}$. On the other hand, in an ordinary factorization  
a  factor function  may depend on $x_i$ and $x_j$ together 
but $X_i$ and $X_j$ can still be conditionally independent given $\vectex{X}{i,j}$.  Therefore, the backward statement of 
Theorem \ref{conditionalindependencetheorem} is specific to the canonical factorization 
and does not hold for all factorizations in general.  
This fact is another reason why we call the proposed factorization the canonical factorization.

\section{Determining Markov Blankets and the Markov Random Field}
 
The Markov blanket of a random variable $X_i$, which is denoted with $\partial X_i$,
is the \emph{smallest} possible set containing the components of $\vectex{X}{i}$
which satisfies
\begin{equation}
  \Pr\{X_i|\vectex{X}{i}\}= \Pr\{X_i|\partial X_i\} \textrm{.}
\end{equation}
Clearly, $\partial X_i$ consists of variables
$X_j$ which are not conditionally independent of $X_i$ given $\vectex{X}{i,j}$.
Based on Theorem \ref{conditionalindependencetheorem}, $\partial X_i$ can be 
obtained in terms of projections onto the SPC constraints as follows.

\begin{corollary}\label{connectioncorollary}
Let the canonical factorization of $p(\vect{x})$ be given by
\begin{equation}
  p(\vect{x})=\n{\fieldqn}{\prod_{\vect{a}_i\in \set{H}} r_{i}(\vect{a}_i\vect{x}^{T})}\nokta 
\end{equation}
$X_k$ is in $\partial X_l$ if and only if  
there exist a parity check coefficient vector $\vect{a}_i \in \set{H}$
such that 
\begin{eqnarray}
  \vect{f}_k\vect{a}_i^{T}&\neq& 0 \nonumber \\
  \vect{f}_l\vect{a}_i^{T}&\neq& 0 \nonumber 
\end{eqnarray}
and
\begin{equation}
   \norm{r_{i}(x)}>0  \nonumber \textrm{.}
\end{equation}
\end{corollary}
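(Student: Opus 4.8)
The plan is to reduce the statement to Theorem~\ref{conditionalindependencetheorem} together with the uniqueness of the canonical factorization established in Section~\ref{uniquenesssection}. As noted in the paragraph immediately preceding the corollary, a variable $X_k$ belongs to $\partial X_l$ if and only if $X_k$ and $X_l$ fail to be conditionally independent given $\vectex{X}{k,l}$. Hence the whole task is to translate the conditional \emph{dependence} of $X_k$ and $X_l$, as characterised by Theorem~\ref{conditionalindependencetheorem}, into the existence of a single nontrivial SPC factor that depends on both coordinates.

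First I would observe that the complement $\set{H}\setminus(\set{K}_k\cup\set{K}_l)$ is precisely the collection of parity-check coefficient vectors $\vect{a}_i$ satisfying $\vect{f}_k\vect{a}_i^{T}\neq 0$ and $\vect{f}_l\vect{a}_i^{T}\neq 0$ simultaneously, since $\vect{f}_k\vect{a}_i^{T}$ is nothing but the $k$-th component of $\vect{a}_i$. These are exactly the SPC factors of $p(\vect{x})$ that depend on both $x_k$ and $x_l$. Writing the canonical factorization of $p(\vect{x})$ as the orthogonal expansion over all of $\set{H}$, I would then split the product into the part indexed by $\set{K}_k\cup\set{K}_l$ and the part indexed by its complement.

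The key step is to invoke uniqueness. Theorem~\ref{conditionalindependencetheorem} says $X_k$ and $X_l$ are conditionally independent given $\vectex{X}{k,l}$ exactly when $p(\vect{x})$ admits a factorization into SPC factors whose parity-check vectors all lie in $\set{K}_k\cup\set{K}_l$. Because the canonical factorization is unique, any such factorization must coincide with the canonical one, which forces every projection $\n{\fieldqn}{r_i(\vect{a}_i\vect{x}^T)}$ with $\vect{a}_i\in\set{H}\setminus(\set{K}_k\cup\set{K}_l)$ to equal the neutral element $\theta(\vect{x})$, i.e. to satisfy $\norm{r_i(x)}=0$. Conversely, if all those projections are trivial, then the canonical factorization itself already involves only indices in $\set{K}_k\cup\set{K}_l$, supplying the factorization demanded by the theorem. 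Thus conditional independence of $X_k$ and $X_l$ given $\vectex{X}{k,l}$ is equivalent to $\norm{r_i(x)}=0$ for every $\vect{a}_i$ with $\vect{f}_k\vect{a}_i^{T}\neq 0$ and $\vect{f}_l\vect{a}_i^{T}\neq 0$.

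Negating this equivalence yields the corollary: $X_k\in\partial X_l$ if and only if there is at least one $\vect{a}_i\in\set{H}$ with $\vect{f}_k\vect{a}_i^{T}\neq 0$, $\vect{f}_l\vect{a}_i^{T}\neq 0$, and $\norm{r_i(x)}>0$. The main obstacle I anticipate is the careful use of uniqueness in the key step: one must argue that a factorization whose factors are indexed by a \emph{subset} of $\set{H}$ is still a canonical factorization, with the remaining factors implicitly equal to $\theta(\vect{x})$, so that the projections outside $\set{K}_k\cup\set{K}_l$ are genuinely forced to vanish rather than merely being re-expressible in some alternative local form. Everything else is bookkeeping on which coordinates each $\vect{a}_i$ activates.
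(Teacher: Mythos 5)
Your proposal is correct and follows essentially the same route as the paper: both directions are obtained by applying Theorem~\ref{conditionalindependencetheorem}, identifying membership in $\partial X_l$ with conditional dependence and translating it into the existence of a nontrivial SPC factor outside $\set{K}_k\cup\set{K}_l$. The only difference is one of detail: the paper's proof simply asserts that a nontrivial factor with $\vect{f}_k\vect{a}_i^{T}\neq 0$ and $\vect{f}_l\vect{a}_i^{T}\neq 0$ obstructs a factorization of the form (\ref{yoruldum_artik}), whereas you spell out the underlying reason (orthogonality of the subspaces $\imagesop{\vect{a}_i}$ forces the projections outside $\set{K}_k\cup\set{K}_l$ to equal $\theta(\vect{x})$ in any such factorization), which is precisely the step the paper leaves implicit.
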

\begin{proof}
If such a vector $\vect{a}$ exist then $p(\vect{x})$ cannot be factored 
as in (\ref{yoruldum_artik}) and hence, $X_i$ and $X_j$
are conditionally dependent given $\vectex{X}{i,j}$ due Theorem \ref{conditionalindependencetheorem}.

If there is no such $\vect{a}$ then    $p(\vect{x})$ can be factored as
in (\ref{yoruldum_artik}), which means that
$X_i$ and $X_j$ are conditionally independent given $\vectex{X}{i,j}$.  
\end{proof}

The MRF is an undirected graphical model representing a probability 
distribution where each variable is represented with a node. The node 
representing $X_i$ is connected to the node representing $X_j$ in the MRF
if $X_j$ is in $\partial X_i$. Since the Markov blankets of every 
variable can be determined from the canonical factorization by Corollary
\ref{connectioncorollary}, the MRF can also be determined from the 
canonical factorization. 

Notice that every argument (arguments associated with nonzero parity check coefficient)  
of a non-constant SPC factor are in the Markov blankets of the other arguments of the 
SPC factor. Therefore, the nodes representing these variables in the MRF are all pairwise 
connected. In graph theoretic terminology, these nodes form a clique in the MRF. 
Hence, SPC factors are functions of the cliques (not necessarily maximal) of the MRF.

\section{Comparison to the Hammersley-Clifford Theorem}

The relation between the factorization of 
a multivariate PMF and Markov properties is 
first established by Hammersley and Clifford
in \cite{hct1,hct2}. In this work they show 
that any \emph{strictly positive} multivariate PMF can be expressed as
\begin{equation}
  p(\vect{x}) = \frac{1}{C}\prod_{\vect{D}\in \set{D}_{C}} \phi_{\vect{D}}(\vect{D}\vect{x})\textrm{,}
\label{hcteq1}
\end{equation}
where each element of $\set{D}_C$ is associated with a clique in the MRF. 
Moreover, their proof is constructive. The factor functions
are given as  
\begin{equation}
  \phi_{\vect{D}}(\vect{D}\vect{x}) \triangleq \prod_{\vect{D}': \vect{D}' \vect{D}=\vect{D}'} 
  p(\vect{D}'\vect{x} +(\vect{I}-\vect{D}')\vect{x}_B)^{\left((-1)^{\abs{\vect{D}-\vect{D}'}}\right)}
\label{hcteq2}
\end{equation}
where $\vect{x}_{B}$ is  a fixed configuration \footnote{This configuration corresponds
to the all-black coloring in \cite{hct1,hct2}.}. 
Although both in our and their approaches the 
factor functions appear to be the functions of the 
cliques of the MRF, our approach differs significantly 
from theirs in many aspects.


First of all, the dependencies between the random variables imposed by factor
functions in (\ref{hcteq2}) are rather arbitrary. SPC factors, on the other hand,   
impose an algebraic form of dependency. In other words, SPC factors explain
how a random variable is related to a linear combination of other variables. 
This property is quite important and allows us to express
an inference problem as a decoding problem. 

Second, the factor functions defined in  (\ref{hcteq2}) depend on  a  certain 
fixed configuration $\vect{x}_B$. A different factorization
is obtained for each different $\vect{x}_B$.  Therefore, the factorization 
proposed by Hammersley and Clifford  is not unique. On the other hand, 
the canonical factorization is unique as explained in Section \ref{uniquenesssection}.

In addition, there is at most one factor function per clique in the factorization 
given in (\ref{hcteq1}) whereas there may be  more than one SPC factors
depending on the same set of variables in non-binary fields. 

Finally, the applicability of our approach is more restricted than 
that of the Hammersley and Clifford's. Our method is applicable only if 
the event space of the combined experiment  can be mapped to $\fieldq^{N}$
 whereas the Hammersley-Clifford theorem 
is applicable to any strictly positive pmf. 
Moreover, it should be emphasized that both approaches
are applicable to strictly positive pmfs only.

\chapter{Conclusions and Future Directions \label{conclusionchapter}}

\section{Summary}

In this thesis the Hilbert space of pmfs is introduced. Then the tools provided
by this Hilbert space, is utilized to develop an analysis method for multivariate
pmfs. The aim of this analysis method is to obtain a factorization of the multivariate pmf. 
The resulting factorization from this analysis method possess some important properties. 
First of all it is the ultimate factorization possible. Secondly, it is unique. Thirdly,
the conditional independence relations can be determined completely from this factorization. 
Probably the most important property of the resulting factorization is the fact that 
it reveals the algebraic dependencies between the involved random variables. Thanks to this 
fact  probabilistic inference problems can be transformed into channel decoding problems
and channel decoders can be used for other tasks beyond decoding. 
Many examples are provided in thesis on how channel decoders can be used as detectors of 
communication receivers. It is also shown that the decoders of tail biting 
convolutional codes can be used as a MIMO detector. This approach  results
in a significant reduction in complexity while maintaining good performance.

\section{Future directions}

The application of the Hilbert space of pmfs is presented in this thesis is the canonical factorization. 
We believe that the Hilbert space of pmfs might lead to further applications in  communication
theory, information theory, and probabilistic inference.

The most important consequence of the canonical factorization is that it shows
how to employ channel decoders for other purposes. The MIMO detector 
which uses the decoder of a tail biting convolutional code demonstrates
that new detection and probabilistic inference algorithms can be developed
by using channel decoders for tasks beyond decoding. 

Employing channel decoders for other tasks also allows to apply the analog 
probability propagation method proposed in \cite{loeligeranalog,loeligerdigeranalog}
for other probabilistic inference problems. In particular, by implementing channel equalizers and 
MIMO detectors with analog probability propagation much more power efficient communication 
receivers can be implemented.  We anticipate that this direction will
be the most important application area of this thesis. 

Some other possible future directions are summarized below. 

\subsection{Applications on machine learning}

Estimating the factorization of a joint pmf from samples generated from the pmf  is an important problem
in machine learning, e.g. \cite{pabbeel}. A straightforward approach after this thesis could 
be estimating the joint pmf first and obtain the canonical factorization by applying
the procedure explained in Chapter \ref{thechapter}. However, such an approach 
both require too many samples to estimate the joint pmf accurately and extensive computational 
resources to obtain the canonical factorization. 
A more interesting solution to this problem might be proposed by combining the
results obtained in this thesis and the results presented in \cite{massey}.
By combining these results it can be concluded that the necessary algorithm 
for estimating the factorization of a joint pmf from samples
is exactly the \emph{inverse of the sum-product algorithm}. 

As it is explained in Section \ref{ultimatenesssection} the ultimate  factorization of
a pmf is the canonical factorization. The equivalent Tanner graph representing
the canonical factorization is shown in Figure \ref{genericgraphs}-b. 
Hence, estimating the canonical factorization is equivalent to estimating 
all of the local evidences in this Tanner graph. 

Let $\vect{X}=[X_1,X_2,\ldots,X_N]$ be distributed with a $p(\vect{x})$ in $\setfieldqn$.
Estimating all the marginals $\Pr\{X_i=x_i\}$ from experimental data is much easier
than estimating the joint distribution $p(\vect{x})$ from data. Let
\begin{equation}
  X_i \triangleq \vect{a}_i \vect{X}^{T}, \quad \textrm{for} i=N+1,N+2,\ldots, \setsize{H} \virgul \nonumber
\end{equation} 
where $\vect{a}_{N+1}$, $\vect{a}_{N+2}$, $\ldots$, $\vect{a}_{\setsize{H}}$ are the 
elements of $\set{H}$ of weight two or more as we assumed in Chapter \ref{decodingchapter}.
Since $X_i$ for $i>N$ is completely determined by $\vect{X}$, the marginal distributions
of $X_i$ for $i>N$ can also be estimated from the data.
Consequently, the marginal distributions of $X_1$, $X_2$, $\ldots$, $X_{\setsize{H}}$ can be easily 
estimated from the experimental data.

However, what we need to estimate  the canonical factorization are not the marginal distributions of 
the random variables  $X_1$, $X_2$, $\ldots$, $X_{\setsize{H}}$ but the local evidences 
in Figure \ref{genericgraphs}-b. Therefore, we need an algorithm which computes 
the local evidences from the marginals. Notice that, this task is exactly \emph{the inverse of 
the sum-product algorithm} as the sum-product algorithm computes the marginals from local evidences.

A question might arise on the existence and uniqueness of the set of the local evidences
corresponding to a  set of marginals. Indeed, if the Tanner graph  in Figure \ref{genericgraphs}-b 
represented an arbitrary code then we might not find a set of local evidences resulting in 
a given set of marginal distributions at all or might find more than one 
set of local evidences resulting in the same set of marginal distributions. 
Any linear combination of the vector $\vect{X}$ is equal to 
$\alpha X_i$ for an $\alpha \in \fieldq$ and $1\leq i \leq \setsize{H}$. 
Massey showed in \cite{massey} that the marginal distributions 
of the linear combinations of a sequence of random variables 
is enough to specify their joint distribution. Hence, the marginal distributions 
of  $X_1$, $X_2$, $\ldots$, $X_{\setsize{H}}$  uniquely specifies $p(\vect{x})$
and consequently its canonical factorization.

To the best of our knowledge, neither exact nor approximate versions of the 
inverse of the sum-product algorithm is known. As explained above, developing 
the inverse of the sum-product algorithm solves an important problem 
in machine learning. 

\subsection{Using channel decoders for channel estimation}

In the examples presented in Chapter \ref{applicationchapter}, we assumed
that  the channel coefficients are completely known at the receiver. In a
practical communication receiver, the channel coefficients must be estimated.
Employing channel decoders for channel estimation would be very interesting.

Actually, the channel estimation problem does not perfectly fit into the framework 
presented in this thesis since the channel coefficients take samples from a 
continuous alphabet rather than a finite alphabet. The apparent solution 
to this problem might be quantizing the channel coefficients. However, 
such an approach would lead to a factor graph topologically equivalent to the one
 in \cite{worthen} which contains too many short cycles. Hence, such 
an approach probably will not be useful.

While employing decoders for detection, we observed that  the channel coefficients and
the channel outputs appeared as the parameters of the canonical factorization 
of the transmitted bits. Therefore, a more interesting approach 
might be bypassing the channel estimation step and estimating the canonical factorization 
of the joint pmf of the transmitted bits and the quantized channel outputs directly from 
a pilot sequence. This approach transforms the channel estimation 
problem into a machine learning problem a solution to which is conjectured 
in the previous section. 




\newpage


\appendix

\chapter{PROOFS AND DERIVATIONS}
\section{Proofs and derivations in Chapter 2}
\subsection{Proof of Lemma \ref{mappinglemma}\label{proofmappinglemma}}
The function $\sigma(p(x),r(x))$ defined in (\ref{innerproductdef}) is
an  inner product on $\setfieldq$ if it satisfies three inner product axioms stated below. 
\begin{itemize}
  \item \emph{Symmetry:} This property of $\sigma(.,.)$ is directly inherited from the inner product on $\mathbb{R}^{q}$.
  \item \emph{Linearity w.r.t. first argument:} If $\operator{M}{.}$ is linear this property is also inherited from 
    the inner product on $\mathbb{R}^{q}$.
  \item \emph{Positive definiteness:} For any $p(x) \in \setfieldq$
    \begin{eqnarray}
      \sigma(p(x),p(x)) &=& \innerproduct{\operator{M}{p(x)}}{\operator{M}{p(x)}} \nonumber \\
      &\geq& 0 \nonumber
    \end{eqnarray}
    due to the non-negativity of the inner product on $\mathbb{R}^{q}$. The equality is 
    satisfied only if $\operator{M}{p(x)}$ equals to $\vect{0}$. Since $\operator{M}{.}$ is linear
    and an injection  $\operator{M}{p(x)}$ is equal to $\vect{0}$ if and only if $p(x)=\theta(x)$. 
\end{itemize}

\subsection{Rationale behind the proposal for $\operator{L}{.}$\label{anglerationale}}
The trivial way of mapping a pmf $p(x)\in \setfieldq$ by a vector $\vect{p}\in \mathbb{R}^{q}$ 
is making the $i^{th}$ \footnote{We enumerate the components of the vector with the elements of $\fieldq$ instead
of positive integers.} component of $\vect{p}$ equal to $p(i)$. Let this trivial mapping be denoted by $\operator{T}{.}$, i.e.,
\begin{equation}
  \operator{T}{p(x)} \triangleq \sum_{i\in\fieldq}p(i) \vect{e}_i \textrm{.} \nonumber
\end{equation} 
Although this mapping is injective, it is obviously nonlinear. Therefore, $\operator{T}{.}$ does not
satisfy one of the two requirements imposed by Lemma \ref{mappinglemma} and consequently it cannot be employed
as a tool for borrowing the inner product on $\mathbb{R}^{q}$. However, we can define a notion 
of angle between pmfs using $\operator{T}{.}$ and then reach a proposal for a mapping which satisfies the requirements
of Lemma \ref{mappinglemma}.      

Whatever the definition of the angle between two pmfs is, the sine of the angle should be kept constant if two
pmfs are scaled by some nonzero scalars. In other words, for any $p(x),r(x)\in \setfieldq$ and $\alpha,\beta \in \mathbb{R}\setminus\{0\}$
\begin{equation}
  \sin \angle (p(x),r(x))= \sin \angle (\alpha\boxtimes p(x),\beta\boxtimes r(x)) \nonumber \textrm{,} 
\end{equation}
where $\angle (p(x),r(x))$ denotes the angle between $p(x)$ and $r(x)$. This property of
angle imposes that the angle between two pmfs should be a function of the two 
 parametric curves on $\mathbb{R}^q$ based on $p(x)$ and $r(x)$ as follows.
\begin{eqnarray}
\vect{c}_{p}(t)&\triangleq& \operator{T}{t\boxtimes p(x) }\textrm{,} \nonumber \\
\vect{c}_{r}(t)&\triangleq& \operator{T}{t\boxtimes r(x) }\textrm{.} \nonumber 
\end{eqnarray} 
For $t=0$ both of these curves pass through $\frac{1}{q}\vect{1}$. An example consisting of a pair of such curves for $\setfield{3}$
is depicted in Figure \ref{angleplot}. Then we can reasonably define the angle between $p(x)$ and $r(x)$ as the angle 
between  $\vect{c}_{p}(t)$ and $\vect{c}_{r}(t)$ at their intersection point. 
\begin{figure}
\begin{center}
\includegraphics[scale=.9]{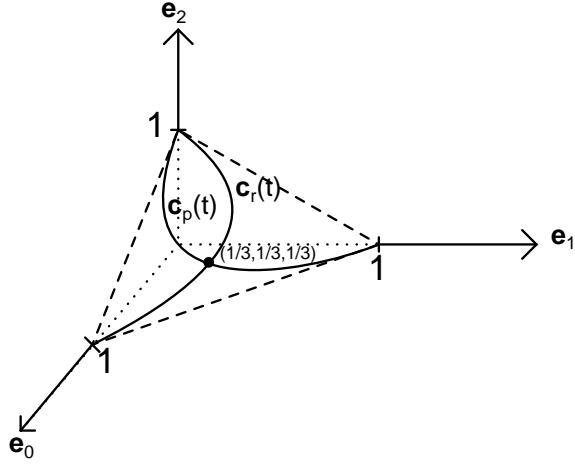}
  \caption{A pair of parametric curves obtained by scaling two pmfs in $\setfield{3}$ and then mapping them to 
$\mathbb{R}^{3}$ via the trivial mapping. \label{angleplot}}
\end{center} 
\end{figure}

In order to derive the angle between $\vect{c}_{p}(t)$ and $\vect{c}_{r}(t)$,  we need to derive vectors tangent to 
these curves at $t=0$. The expression defining   $\vect{c}_{p}(t)$ can be simplified as
\begin{eqnarray}
  \vect{c}_{p}(t)&=&\sum_{i\in\fieldq}\frac{(p(i))^t}{\sum_{j\in\fieldq} (p(j))^t} \vect{e}_i \nonumber \\
  &=&\sum_{i\in\fieldq} \left(\sum_{j\in\fieldq}\exp\left(t \Big( \log p(j)-\log p(i)\Big) \right)\right)^{-1}\vect{e}_i \nonumber \textrm{.}
\end{eqnarray}
Let $\vect{t}_{p}$ denote the vector which is tangent to $\vect{c}_{p}(t)$ at $t=0$. Then $\vect{t}_{p}$  can be
derived using derivation as
\begin{equation}
  \vect{t}_{p}=\sum_{i\in\fieldq }\left(q\log p(i) -\sum_{j\in \fieldq} \log p(j)\right)\vect{e}_i \nonumber \textrm{.}
\end{equation}
Having inspired from this equation, We proposed the mapping $\operator{L}{.}$   as
\begin{eqnarray}
  \operator{L}{.}&=&\frac{1}{q}\vect{t}_p \nonumber \\
  &=&\sum_{i\in\fieldq }\left(\log p(i) -\frac{1}{q}\sum_{j\in \fieldq} \log p(j)\right)\vect{e}_i \nonumber \textrm{.}
\end{eqnarray}  
Since $\operator{L}{.}$ is defined as above, the angle between the two curves $\vect{c}_{p}(t)$ and $\vect{c}_{q}(t)$, which
is proposed to be the of the angle between $p(x)$ and $q(x)$, is equal to the angle between $p(x)$ and $q(x)$ on $\setfieldq$
defined on (\ref{angledefinition}).

\subsection{Proof of Lemma \ref{operatorlemma}\label{operatorlemmaproof}}
First we are going to prove that $\operator{L}{.}$ is linear and then it is an injection. 
For any $p(x),r(x) \in \setfieldq$,
\begin{eqnarray}
  \operator{L}{p(x) \boxplus r(x)} &=&
  \sum_{i\in \fieldq}\left(\log \n{\fieldq}{p(x)r(x)}\evalat{x=i}-
    \frac{1}{q}\sum_{j\in \fieldq}\log \n{\fieldq}{p(x)r(x)}\evalat{x=j} \right) \vect{e}_{i} \nonumber \\
  &=&   \sum_{i\in \fieldq}\left(\log \frac{1}{\gamma}p(i)r(i)-
    \frac{1}{q}\sum_{j\in \fieldq}\log\frac{1}{\gamma} p(j)r(j)\right) \vect{e}_{i} \nonumber \\
&=& \sum_{i\in \fieldq}\left(\log p(i)-
    \frac{1}{q}\sum_{j\in \fieldq}\log p(j)\right) \vect{e}_{i}+\sum_{i\in \fieldq}\left(\log p(i)-
    \frac{1}{q}\sum_{j\in \fieldq}\log p(j)\right) \vect{e}_{i} \nonumber \\
&=&\operator{L}{p(x)}+\operator{L}{r(x)} \textrm{,}\nonumber 
\end{eqnarray}
where $\gamma$ in the second line above is $\sum_{i\in \fieldq}p(i)r(i)$. Hence, $\operator{L}{.}$ is additive. 
For any $p(x) \in \setfieldq$ and $\alpha \in \mathbb{R}$,
\begin{eqnarray}
  \operator{L}{\alpha \boxtimes p(x)} &=& \sum_{i \in \fieldq} \left(\log \n{\fieldq}{(p(x))^{\alpha}}\evalat{x=i}-
  \frac{1}{q}\sum_{j \in \fieldq}\log \n{\fieldq}{(p(x)^{\alpha})}\evalat{x=j} \right)\vect{e}_{i} \nonumber\\
&=&\sum_{i \in \fieldq} \alpha\left(\log p(i)-
  \frac{1}{q}\sum_{j \in \fieldq}\log p(j) \right)\vect{e}_{i} \nonumber\\
&=&\alpha \operator{L}{p(x)} \nonumber\textrm{.} 
\end{eqnarray}
Hence, $\operator{L}{.}$ is homogeneous and consequently a linear mapping. 

A linear mapping is injective if its kernel  (null space) is composed of only 
the additive identity. 
If  $\operator{L}{p(x)}=\vect{0}$ for a $p(x) \in \fieldq$ then
\begin{eqnarray}
  \log p(i) - \frac{1}{q}\sum_{j \in \fieldq} \log p(j)= 0 &\quad &\forall i \in \fieldq \nonumber \\ 
  p(i) = \exp\left(\frac{1}{q} \sum_{j \in \fieldq} \log p(j)\right) &\quad &\forall i \in \fieldq \nonumber \textrm{,}
\end{eqnarray}
which is possible only if $p(x)=\frac{1}{q}$ or equivalently $p(x)=\theta(x)$. Since the kernel of $\operator{L}{.}$
consists of only $\theta(x)$, which is the additive identity in $\setfieldq$, the mapping $\operator{L}{.}$ is injective.  
\subsection{Expressing the inner product on $\setfieldq$ as a covariance\label{covariancederivation}}
Let $X$ be a $\fieldq$-valued random variable. Then $\log p(X)$ and $\log r(X)$
are two real-valued functions of an $\fieldq$-valued random variable. 
Their expectations and covariance are well-defined. Clearly, the inner product of $p(x)$ and $r(x)$ can be 
expressed as 
\begin{eqnarray}
  \innerproduct{p(x)}{r(x)} &=&q\expectation{(\log p(X)-\expectation{\log p(X)})(\log r(X)-\expectation{\log r(X)})}\nonumber \\
  &=&q\left(\expectation{\log p(X) \log r(X)}-\expectation{\log p(X)}\expectation{ \log r(X)} \right) \nonumber \textrm{,}
\end{eqnarray}
where $\expectation{.}$ denotes expectation and $X$ is a uniformly distributed random variable in $\fieldq$, i.e.
\begin{equation}
  \Pr\{X=x\}=\theta(x) \nonumber \textrm{.}
\end{equation}

\subsection{Proof of Lemma \ref{operatorortholemma}\label{operatorortholemmaproof}}
For any $p(x)$ in $\setfieldq$
\begin{eqnarray}
  \innerproduct{\operator{L}{p(x)}}{\vect{1}}_{\mathbb{R}^{q}}&=&\innerproduct{\sum_{i \in \fieldq}\left(\log p(i)-\frac{1}{q}
      \sum_{j\in\fieldq}\log p(j)\right)\vect{e}_i}{\vect{1}}_{\mathbb{R}^{q}} \nonumber \\
  &=&\sum_{i\in\fieldq}\left(log p(i)-\frac{1}{q}\sum_{j\in \fieldq} \log p(j)\right)\nonumber \\
    &=&\sum_{i\in \fieldq} \log p(i)-\sum_{j\in\fieldq} \log p(j)\nonumber \\
    &=&0 \textrm{,}\nonumber
\end{eqnarray}
which completes the proof. 
\subsection{Proof of Lemma \ref{pseudoinvlemma}\label{pseudoinvlemmaproof}}
First we are going to simplify the expression defining $\pseudoinv{L}{\vect{p}}$.
\begin{eqnarray}
  \pseudoinv{L}{\vect{p}}&=& \n{\fieldq}{\exp\left(-\frac{1}{2}\norm{\vect{p}-\vect{s}(x)}^2\right)}\nonumber\\
  &=&\n{\fieldq}{\exp\left(-\frac{\norm{\vect{p}}^2-2\innerproduct{\vect{p}}{\vect{s}(x)}_{\mathbb{R}^{q}}+\norm{\vect{s}(x)}^2}{2}\right)}\nonumber\\
  &=&\n{\fieldq}{\exp\left(-\frac{\norm{\vect{p}}^2}{2}\right) 
    \exp\left(-\frac{\norm{\vect{s}(x)}^2}{2}  \right)  \exp\left( \innerproduct{\vect{p}}{\vect{s}(x)}_{\mathbb{R}^{q}}  \right)}   \nonumber
\end{eqnarray}
Since $\norm{\vect{p}}$ and $\norm{\vect{s}(x)}$ is constant for all $x$,
the product $\exp\left(-\frac{\norm{\vect{p}}^2}{2}\right) 
    \exp\left(-\frac{\norm{\vect{s}(x)}^2}{2}  \right)$ has no effect due to the normalization operator. Therefore, 
\begin{equation}
  \pseudoinv{L}{\vect{p}}=\n{\fieldq}{\exp\left( \innerproduct{\vect{p}}{\vect{s}(x)}_{\mathbb{R}^{q}}  \right)} \textrm{.}\label{result1} 
\end{equation}
If $\vect{p}$ is equal to $\operator{L}{p(x)}$ for a $p(x)$ in $\setfieldq$ then the inner product above becomes
\begin{eqnarray}
  \innerproduct{\vect{p}}{\vect{s}(x)}_{\mathbb{R}^{q}}&=& \innerproduct{\operator{L}{p(x)}}{\vect{e}_{x}-\frac{1}{q}\vect{1}} \nonumber \\
  &=&\innerproduct{\operator{L}{p(x)}}{\vect{e}_x}-\frac{1}{q}\innerproduct{\operator{L}{p(x)}}{\vect{1}} \nonumber \textrm{.}
\end{eqnarray}
Due to Lemma \ref{operatorortholemma} the second inner product above is zero.  Inserting this result into (\ref{result1})
yields
\begin{eqnarray}
\pseudoinv{L}{\operator{L}{p(x)}} &=& \n{\fieldq}{\exp\left(\log p(x)-\frac{1}{q}\sum_{j \in \fieldq} \log p(j) \right)}\nonumber \\
&=&\n{\fieldq}{\exp\left(\log p(x)\right)} \nonumber \\
&=&p(x)\nonumber \textrm{,}
\end{eqnarray}  
where the summation in the first line above is cancelled by the normalization operator 
since it is a constant. This completes the proof of the first part of the lemma.

Any $\vect{p}\in \mathbb{R}^{q}$ can be decomposed as
\begin{equation}
  \vect{p}=\vect{p}'+\alpha \vect{1} \nonumber \textrm{,}
\end{equation} 
for an $\alpha$ in $\mathbb{R}$ such that $\vect{p}' \perp \vect{1}$. Inserting this decomposition into (\ref{result1}) yields
\begin{eqnarray}
  \pseudoinv{L}{\vect{p}}&=&\n{\fieldq}{\exp\left( \innerproduct{\vect{p}'+\alpha \vect{1}}{\vect{s}(x)}_{\mathbb{R}^{q}}  \right)} \nonumber \\
  &=&\n{\fieldq}{\exp\left( \innerproduct{\vect{p}'}{\vect{s}(x)}_{\mathbb{R}^{q}}+
\alpha \innerproduct{ \vect{1}}{\vect{s}(x)}_{\mathbb{R}^{q}}  \right)} \nonumber
\end{eqnarray}
$\vect{s}(x)$ is orthogonal to $\vect{1}$ for all $x$. Therefore,
\begin{eqnarray}
\pseudoinv{L}{\vect{p}}&=&\n{\fieldq}{\exp\left( \innerproduct{\vect{p}'}{\vect{s}(x)}_{\mathbb{R}^{q}}  \right)} \nonumber\\
\operator{L}{\pseudoinv{L}{\vect{p}}}&=&\sum_{i \in \fieldq}
\left(\log \frac{1}{\gamma}{\exp\left( \innerproduct{\vect{p}'}{\vect{s}(i)}_{\mathbb{R}^{q}}  \right)}-
\frac{1}{q}\sum_{j\in \fieldq} \log\frac{1}{\gamma}{\exp\left( \innerproduct{\vect{p}'}{\vect{s}(j)}_{\mathbb{R}^{q}}  \right)}\right)\vect{e}_i \nonumber\\
&=&\sum_{i \in \fieldq}\left(\innerproduct{\vect{p}'}{\vect{s}(i)}_{\mathbb{R}^{q}}-\frac{1}{q}\innerproduct{\vect{p}'}{\sum_{j\in \fieldq}\vect{s}{(j)}}_{\mathbb{R}_{q}}\right)\vect{e}_{i} \textrm{,} \nonumber
\end{eqnarray}
where $\gamma=\sum_{i\in \fieldq }\exp\left( \innerproduct{\vect{p}'}{\vect{s}(i)}_{\mathbb{R}^{q}}  \right)$. 
$\sum_{j\in\fieldq} \vect{s}(j)$ is equal to the zero vector. Therefore, 
\begin{eqnarray}
\operator{L}{\pseudoinv{L}{\vect{p}}}&=&\sum_{i \in \fieldq}\left(\innerproduct{\vect{p}'}{\vect{s}(i)}_{\mathbb{R}^{q}}\right)\vect{e}_i \nonumber \\
&=&\sum_{i \in \fieldq}\left(\innerproduct{\vect{p}'}{\vect{e}_i+\frac{1}{q}\vect{1}}_{\mathbb{R}^{q}}\right)\vect{e}_i \nonumber \\
&=&\sum_{i \in \fieldq}\left(\innerproduct{\vect{p}'}{\vect{e}_i}_{\mathbb{R}^{q}}+\frac{1}{q}\innerproduct{\vect{p}'}{\vect{1}}_{\mathbb{R}^q}\right)\vect{e}_i \nonumber \\
&=&\vect{p}' \nonumber 
\end{eqnarray}
If $\vect{p}$ is orthogonal to $\vect{1}$ then $\vect{p}$ becomes equal to $\vect{p}'$ and consequently
\begin{equation}
  \operator{L}{\pseudoinv{L}{\vect{p}}}=\vect{p} \textrm{.} \nonumber
\end{equation}

\section{Proofs and derivations in Chapter 3}

\subsection{Proof of Lemma \ref{spcinnerproduct}\label{spcinnerproductproof}} 
Inserting the expressions for  $p_1(\vect{x})$ and  $p_2(\vect{x})$
into inner product definition  yields 
\begin{eqnarray}
  \innerproduct{p_1(\vect{x})}{p_2(\vect{x})} &=&
\sum_{\vect{i}\in \fieldqn}\log \frac{r_1(\vect{a}\vect{i}^{T})}{q^{N-1}}
\log \frac{r_2(\vect{b}\vect{i}^{T})}{q^{N-1}}
-\frac{1}{q^{N}}\sum_{\vect{i}\in\fieldqn}\log \frac{r_1(\vect{a}\vect{i}^{T})}{q^{N-1}}
\sum_{\vect{i}\in\fieldqn}\log \frac{r_2(\vect{b}\vect{i}^{T})}{q^{N-1}} \nonumber \\
&=&\sum_{\vect{i}\in \fieldqn}\log r_1(\vect{a}\vect{i}^{T})
\log \frac{r_2(\vect{b}\vect{i}^{T})}{q^{N-1}}
-\frac{1}{q^{N}}\sum_{\vect{i}\in\fieldqn}\log r_1(\vect{a}\vect{i}^{T})
\sum_{\vect{i}\in\fieldqn}\log \frac{r_2(\vect{b}\vect{i}^{T})}{q^{N-1}} \nonumber \\
&=&\sum_{\vect{i}\in \fieldqn}\log r_1(\vect{a}\vect{i}^{T})
\log r_2(\vect{b}\vect{i}^{T})
-\frac{1}{q^{N}}\sum_{\vect{i}\in\fieldqn}\log r_1(\vect{a}\vect{i}^{T})
\sum_{\vect{i}\in\fieldqn}\log r_2(\vect{b}\vect{i}^{T}) \label{p1p2def}
 \end{eqnarray} 

First we are going to derive the inner product of the two SPC constraints 
if there exist an $\alpha \in \fieldq$ such that $\vect{b}=\alpha\vect{a}$.
Since $\fieldq$ is a field with $q$ elements
 and $\vect{a}$ is nonzero there are $q^{N-1}$ $\vect{i}$ vectors
satisfying the equation $\vect{a}\vect{i}^{T}=j$ for all $j\in \fieldq$. Hence, 
\begin{eqnarray}
  \innerproduct{p_1(\vect{x})}{p_2(\vect{x})} &=&q^{N-1} \sum_{j \in \fieldq} \log r_1(j) \log r_2(\alpha j)
  -q^{N-2}\left(\sum_{j\in \fieldq} \log r_1(j) \right)\left(\sum_{i\in \fieldq} \log r_2(\alpha j) \right) \nonumber \\
  &=&q^{N-1}\innerproduct{r_1(x)}{r_2(\alpha x)} \nonumber  \textrm{,}
\end{eqnarray}
which completes the proof for the first part. 

In the second part, we derive the inner product of $p_1(\vect{x})$ and $p_2(\vect{x})$ when
 there is not any $\alpha\in \fieldq$ such that $\vect{b}=\alpha \vect{a}$.
In other words, $\vect{a}$ and $\vect{b}$ are linearly independent.
The first summation in (\ref{p1p2def}) can be regrouped for this case as follows.
\begin{eqnarray}
  \sum_{\vect{i}\in \fieldqn} \log r_1(\vect{a}\vect{i}^{T}) \log r_{2}( \vect{b}\vect{i}^{T})&=& 
\sum_{j\in\fieldq}\left(\sum_{\vect{i}: \vect{a}\vect{i}^{T}=j}\log r_1(j) \log r_{2}(\vect{b}\vect{i}^{T})\right) \nonumber\\
&=&\sum_{j\in\fieldq} \log r_1(j) \sum_{\vect{i}: \vect{a}\vect{i}^{T}=j} \log r_{2}(\vect{b}\vect{i}^{T}) \nonumber \\
&=&\sum_{j\in\fieldq} \log r_1(j) \sum_{j\in \fieldq} \left(\sum_{\vect{i}: (\vect{a}\vect{i}^{T}=j \land \vect{b}\vect{i}^{T}=k)}\log r_2(k) \right) \nonumber\\
&=&\sum_{j\in\fieldq} \log r_1(j) \sum_{j\in \fieldq} \log r_2(k) \left(\sum_{\vect{i}: (\vect{a}\vect{i}^{T}=j \land \vect{b}\vect{i}^{T}=k)}1 \right) 
\nonumber \textrm{.} 
\end{eqnarray}
Since $\fieldq$ is a field with $q$ elements and $\vect{a}$, $\vect{b}$ are linearly independent the innermost 
summation above runs $q^{N-2}$ times for all $j$ and $k$. Therefore, 
\begin{eqnarray}
  \sum_{\vect{i}\in \fieldqn} \log r_1(\vect{a}\vect{i}^{T}) \log r_{2}( \vect{b}\vect{i}^{T})&=& 
  q^{N-2}\sum_{j\in\fieldq} \log r_1(j) \sum_{k\in \fieldq} \log r_2(k) \nonumber \\
  &=&q^{N-2}\left(\sum_{j\in\fieldq} \log r_1(j)\right) \left(\sum_{j\in \fieldq}\log r_2(j)\right)\nonumber \textrm{.}
\end{eqnarray}
Inserting this result into (\ref{p1p2def}) 
yields  
\begin{eqnarray}
\innerproduct{p_1(\vect{x})}{p_2(\vect{x})}&=&q^{N-2}\sum_{j\in\fieldq} \log r_1(j)\sum_{j\in \fieldq}\log r_2(j)
-\frac{1}{q^{N}} \sum_{\vect{i}\in \fieldqn} \log r_{1}(\vect{a}\vect{i}^T)
\sum_{\vect{i}\in \fieldqn} \log r_{2}(\vect{b}\vect{i}^T) \nonumber\\ 
&=&q^{N-2}\sum_{j\in\fieldq} \log r_1(j) \sum_{j\in \fieldq}\log r_2(j)-\nonumber \\ &&
\frac{1}{q^{N}}\left(\sum_{j\in\fieldq}\log r_1(j) \sum_{\vect{i}:\vect{a}\vect{i}^{T}=j}1\right) 
\left(\sum_{j\in \fieldq}\log r_2(j)\sum_{\vect{i}:\vect{b}\vect{i}^{T}=j}1\right) \nonumber \\
&=&q^{N-2}\sum_{j\in\fieldq} \log r_1(j) \sum_{j\in \fieldq}\log r_2(j)-
q^{N-2}\sum_{j\in\fieldq} \log r_1(j) \sum_{j\in \fieldq}\log r_2(j) \nonumber \\
&=&0 \nonumber \textrm{,}
\end{eqnarray}
which completes the proof of the second part. 
\subsection{Proof of Lemma \ref{imagesubspacelemma}\label{imagesubspacelemmaproof}}
First we are going to prove that $\imagesop{a}$ is a subspace of $\setfieldqn$ by showing that 
$\sopsub{a}{.}$ is a linear mapping from $\setfieldq$ to $\setfieldqn$. For any $p(x),r(x)\in\setfieldq$
and $\alpha,\beta\in\rfield$
\begin{eqnarray}
  \sopsub{\vect{a}}{\alpha\boxtimes p(x) \boxplus \beta \boxtimes r(x)} &=&
  \n{\fieldqn}{ \n{\fieldq}{(p(x))^{\alpha}(r(x))^{\beta}}\evalat{x=\vect{a}\vect{x}^{T}}} \nonumber \textrm{.}
\end{eqnarray}
The inner normalization operator above can be cancelled since there is another normalization outside. 
\begin{eqnarray}
\sopsub{\vect{a}}{\alpha\boxtimes p(x) \boxplus \beta \boxtimes r(x)} &=&
\n{\fieldqn}{ \left({(p(\vect{a}\vect{x}^{T}))^{\alpha}(r(\vect{a}\vect{x}^{T}))^{\beta}}\right)} \nonumber \textrm{.}
\end{eqnarray}
Using the definition of addition and scalar multiplication on $\setfieldqn$ 
we obtain
\begin{eqnarray}
 \sopsub{\vect{a}}{\alpha\boxtimes p(x) \boxplus \beta \boxtimes r(x)}
 &=&\alpha \boxtimes \n{\fieldqn}{p(\vect{a}\vect{x}^{T})} \boxplus \beta \boxtimes \n{\fieldqn}{r(\vect{a}\vect{x}^{T})} \nonumber \\
  &=&\alpha \boxtimes \sopsub{\vect{a}}{p(x)}\boxplus \beta \boxtimes \sopsub{\vect{a}}{r(x)}   \nonumber  \textrm{,}
\end{eqnarray}
which proves that $\sopsub{\vect{a}}{.}$ is a linear mapping. Since the image of any linear mapping is a subspace 
of the co-domain, $\imagesop{\vect{a}}$ is a subspace of $\setfieldqn$. 

Obviously, $\sopsub{\vect{a}}{.}$ is an injective mapping for nonzero $\vect{a}$. Therefore,
\begin{eqnarray}
  \dim \imagesop{\vect{a}} &=& \dim \setfieldq \nonumber \\
  &=&q-1 \textrm{,} \nonumber 
\end{eqnarray}  
which completes the proof. 
\subsection{Proof of Lemma \ref{subspacerelationlemma} \label{subspacerelationlemmaproof}}
If there exist an $\alpha \in \fieldq$ such that 
$\vect{b}=\alpha \vect{a}$ then for any $p(x) \in \setfieldq$
\begin{eqnarray}
  \sopsub{\vect{b}}{p(x)} &=&\n{\fieldqn}{p(\vect{b}\vect{x}^{T})} \nonumber \\
  &=&\n{\fieldqn}{p(\alpha\vect{a}\vect{x}^{T})} \nonumber \\
  &=&\sopsub{\vect{a}}{p(\alpha x)} \nonumber \textrm{.}
\end{eqnarray}
Since $p(x)$ is in $\setfieldq$, $p(\alpha x)$ is also in $\setfieldq$. Therefore, 
$\imagesop{\vect{b}}\subset \imagesop{\vect{a}}$. Similarly, it can be shown that 
$\imagesop{\vect{a}}\subset \imagesop{\vect{b}}$. Consequently,
\begin{equation}
  \imagesop{\vect{a}}=\imagesop{\vect{b}} \nonumber
\end{equation}
if  $\vect{b}$ is  equal to $\alpha\vect{a}$ for an $\alpha \in \fieldq$.

If there is not any $\alpha$ such that $\vect{b}=\alpha \vect{a}$ then
for any $p_1(\vect{x})\in \imagesop{\vect{a}}$ and $p_2(\vect{x}) \in \imagesop{\vect{b}}$
\begin{equation}
  \innerproduct{p_1(\vect{x})}{p_2(\vect{x})} = 0 \nonumber 
\end{equation} 
due to Lemma \ref{spcinnerproduct}. Hence, 
\begin{equation}
  \imagesop{\vect{a}} \perp \imagesop{\vect{b}} \nonumber  
\end{equation}
if there is not any $\alpha \in \fieldq$ such that $\vect{b}=\alpha \vect{a}$.

\section{Proofs and Derivations in Chapter 4 }
\subsection{Proof of Lemma \ref{propertyorthogonallemma}\label{propertyorthogonallemmaproof}}
We need to show that $p(\vect{x})$ is orthogonal to  $\n{\fieldqn}{r(\vect{a}\vect{x}^T)}$ for any
$r(x) \in \setfieldq$. 
\begin{eqnarray}
  \innerproduct{p(\vect{x})}{\n{\fieldqn}{r(\vect{a}\vect{x}^T)}} &=& 
  \sum_{\vect{i}\in \fieldq}\log p(\vect{i}\vect{D}) \log \frac{r(\vect{a}\vect{i}^{T})}{q^{N-1}}-
  \frac{1}{q^{N}}\sum_{\vect{i}\in \fieldq}\log p(\vect{i}\vect{D}) \sum_{\vect{i}\in \fieldq}\log  \frac{r(\vect{a}\vect{i}^{T})}{q^{N-1}} \nonumber\\
&=& 
  \sum_{\vect{i}\in \fieldq}\log p(\vect{i}\vect{D}) \log r(\vect{a}\vect{i}^{T})-
  \frac{1}{q^{N}}\sum_{\vect{i}\in \fieldq}\log p(\vect{i}\vect{D}) \sum_{\vect{i}\in \fieldq}\log r(\vect{a}\vect{i}^{T}) \label{proport1} \nonumber 
\end{eqnarray}

Let $\vect{j}$ be a vector in $\fieldqn$. For each $\vect{j}$ vector there are
$q^{N-\rank{\vect{D}}}$ $\vect{i}$ vectors in $\fieldqn$  satisfying the relation  
\begin{equation}
\vect{j}\vect{D}=\vect{i}\vect{D} \textrm{,}
\end{equation}
where $\rank{\vect{D}}$ denotes the 
rank of the dependency matrix $\vect{D}$. 
Therefore, the  first summation in (\ref{proport1}) is equal to the following nested summation.  
\begin{eqnarray}
\sum_{\vect{i}\in \fieldq}\log p(\vect{i}\vect{D}) \log r(\vect{a}\vect{i}^{T})&=&
\frac{1}{q^{N-\rank{\vect{D}}}}\sum_{\vect{i}\in \fieldqn}\left(\sum_{\vect{j}\in \fieldqn :\vect{i}\vect{D}=\vect{j}\vect{D}} 
\log p(\vect{j}\vect{D}) \log r(\vect{a}\vect{j}^{T})\right) \nonumber \\
&=&\frac{1}{q^{N-\rank{\vect{D}}}}\sum_{\vect{i}\in \fieldqn}
\log p(\vect{i}\vect{D})\left(\sum_{\vect{j}\in \fieldqn :\vect{i}\vect{D}=\vect{j}\vect{D}} 
 \log r(\vect{a}\vect{j}^{T})\right) \nonumber 
\end{eqnarray}
The inner summation on the right hand side above can be grouped as 
\begin{eqnarray}
\sum_{\vect{i}\in \fieldq}\log p(\vect{i}\vect{D}) \log r(\vect{a}\vect{i}^{T})&=&
\frac{1}{q^{N-\rank{\vect{D}}}}\sum_{\vect{i}\in \fieldqn}
\log p(\vect{i}\vect{D})\left(\sum_{k\in\fieldq}\left(\sum_{\vect{j}\in \fieldqn :\vect{i}\vect{D}=\vect{j}\vect{D} \land \vect{a}\vect{j}^{T}=k} 
 \log r(\vect{a}\vect{j}^{T})\right)\right)
 \nonumber\\
&=&\frac{1}{q^{N-\rank{\vect{D}}}}\sum_{\vect{i}\in \fieldqn}
\log p(\vect{i}\vect{D})\left(\sum_{k\in\fieldq}\log r(k)\left(\sum_{\vect{j}\in \fieldqn :\vect{i}\vect{D}=\vect{j}\vect{D} \land \vect{a}\vect{j}^{T}=k} 
 1\right)\right) \nonumber \textrm{.}
\end{eqnarray}
We have to determine how many times the innermost summation above runs. Let $\vect{d}_1$, $\vect{d}_2$, $\ldots$, $\vect{d}_{\rank{\vect{D}}}$
be the nonzero rows of $\vect{D}$. Then the innermost summation above runs once for all $\vect{j}$ vector satisfying
the system of linear equations below. 
\begin{equation}
 \left[ \begin{array}{c} \vect{d}_1 \\ \vect{d}_2 \\ \vdots \\ \vect{d}_{\rank{\vect{D}}} \\\vect{a}\end{array}\right]\vect{j}^{T} 
=\left[ \begin{array}{c}\vect{d}_1 \vect{i}^{T}\\\vect{d}_2 \vect{i}^{T} \\\vdots \\\vect{d}_{\rank{\vect{D}}} \vect{i}^{T} \\i  \end{array}\right]
\nonumber 
\end{equation} 
Due to the definition of the dependency matrix, all nonzero rows of $\vect{D}$ are linearly independent. Moreover, 
all these nonzero rows of $\vect{D}$ are also linearly independent with $\vect{a}$, since $\vect{a}$ is not
equal to $\vect{a}\vect{D}$. Therefore, the system of linear equations above has $q^{N-\rank{D}-1}$ solutions.     
Hence,
\begin{equation}
\sum_{\vect{i}\in \fieldq}\log p(\vect{i}\vect{D}) \log r(\vect{a}\vect{i}^{T})=\frac{1}{q}\sum_{\vect{i}\in \fieldqn}
\log p(\vect{i}\vect{D})\sum_{k\in\fieldq}\log r(k) \nonumber.
\end{equation}
Inserting this result into (\ref{proport1}) yields
\begin{eqnarray}
  \innerproduct{p(\vect{x})}{\n{\fieldqn}{r(\vect{a}\vect{x}^T)}}&=&
\frac{1}{q}\sum_{\vect{i}\in \fieldqn}
\log p(\vect{i}\vect{D})\sum_{k\in\fieldq}\log r(k)-
\frac{1}{q^{N}}\sum_{\vect{i}\in \fieldq}\log p(\vect{i}\vect{D}) \sum_{\vect{i}\in \fieldq}\log r(\vect{a}\vect{i}^{T})\nonumber\\
&=&\frac{1}{q}\sum_{\vect{i}\in \fieldqn}
\log p(\vect{i}\vect{D})\sum_{k\in\fieldq}\log r(k)-
\frac{1}{q}\sum_{\vect{i}\in \fieldqn}
\log p(\vect{i}\vect{D})\sum_{k\in\fieldq}\log r(k) \nonumber \\
&=&0 \nonumber \textrm{,}
\end{eqnarray}
which completes the proof. 

\section{Proofs and Derivations in Chapter 6}
\subsection{The factorization of a posteriori probability of $\vect{X}$ given in Section \ref{misoqpsk}\label{facmisoqpsk} }
Expanding the absolute value in (\ref{appmisoqpsk}) yields
\begin{eqnarray}
  p(\vect{x})&=&\n{\fieldqn}{ \exp\left(-\frac{1}{2\sigma^2}{\abs{y-\sum_{i=1}^{N}h_{i}\psk{x_i}}^{2}}  \right)} \nonumber \\
  &=&\n{\fieldqn}{ \exp\left(-\frac{1}{2\sigma^2}\left(\abs{y}^{2}-2y\sum_{i=1}^{N}\Re{h_i^*\psk{x_i}^*}+ \abs{\sum_{i=1}^{N}h_i\psk{x_i}}^2 
\right)\right)} \nonumber \textrm{.}
\end{eqnarray}
Since $\abs{y}^2$ does not depend on  $\vect{x}$, it can be cancelled by the normalization operator which gives,
\begin{eqnarray}
p(\vect{x})&=&\n{\fieldqn}{ \exp\left(\frac{1}{2\sigma^2}\left(\sum_{i=1}^{N}2\Re{yh_i^*\psk{x_i}^*}-\left(\sum_{i=1}^{N}h_i\psk{x_i}\right)
    \left(\sum_{i=1}^{N}h_i^*\psk{x_i}^*\right)\right) \right)}\nonumber \\
&=&\n{\fieldqn}{ \exp\left(\sum_{i=1}^{N}\frac{2\Re{yh_i^*\psk{x_i}^*}-\abs{h_i\psk{x_i}}^2}{2\sigma^2}-
    \sum_{j=2}^{N}\sum_{i=1}^{j-1}\frac{2\Re{h_i\psk{x_i}h_j^*\psk{x_j}^*}}{2\sigma^2} \right)} \nonumber 
\end{eqnarray}
Since PSK is a constant amplitude modulation, $\abs{\psk{x_i}}$ is constant for all $x_i$. Consequently, $\abs{h_i\psk{x_i}}^2$ 
does not depend on $\vect{x}$.   Canceling $\abs{h_i\psk{x_i}}^2$  by the normalization operator yields the desired factorization. 
\begin{equation}
p(\vect{x})=\n{\fieldqn}{\prod_{i=1}^{N}\exp\left(\frac{2\Re{yh_i^*\psk{x_i}^*}}{2\sigma^2}\right) 
  \prod_{j=2}^{N}\prod_{i=1}^{j-1}  \exp\left(-\frac{2\Re{h_ih_j^*\psk{x_i}\psk{x_j}^*}}{2\sigma^2}\right) } 
\end{equation}
\subsection{Proof of Theorem \ref{naturalgraythm} \label{naturalgraythmproof}}
The necessary row operations are listed below. 
\begin{enumerate}
\item Add $1^{st}$ row to $(1+\frac{(j-2)(j-1)}{2})^{th}$ row for $j=3$ up to $N$.
\item For $i=4$ up to $N$, add $(\frac{(i-1)(i-2)}{2}+3)^{th}$ row to 
  \begin{enumerate}
    \item $(\frac{(i-1)(i-2)}{2}+1)^{th}$ row,
    \item $(\frac{(i-1)(i-2)}{2}+2)^{th}$ row,
    \item $(\frac{(i-1)(i-2)}{2}+j)^{th}$ row for $j=4$ up to $i-1$,
    \item $(\frac{(i)(i-1)}{2}+3)^{th}$ row,
    \item $(\frac{(j-1)(j-2)}{2}+1+i)^{th}$ row for $j=i+2$ up to $N$.  
  \end{enumerate}

\end{enumerate}

\subsection{Derivation of the factorization in (\ref{mimoappfact1}) \label{mimoappfact}}
\begin{eqnarray}
\Pr\{\vect{X}=\vect{x}|\vect{Y}=\vect{y}\}&=& \n{\fieldnn{q}{2N_t}}{\exp\left(- \frac{\norm{\vect{y}-\vect{H}_c\vect{w}}^2}{2\sigma^{2}}\right)}\nonumber\\
&\propto&\exp\left(-\frac{1}{2\sigma^{2}}\left(\norm{\vect{y}}^{2}-2\Re{\vect{w}^{H}\vect{H}_c^H\vect{y}}+
    \vect{w}^{H}\vect{H}_c^{H}\vect{H}_c\vect{w} \right) \right) \nonumber
\end{eqnarray}
We can cancel $\norm{y}^{2}$ since it is constant for all $\vect{x}$. Let 
$\vect{u}\triangleq \vect{H}_c^H\vect{y}$ and
$\vect{R}\triangleq\vect{H}_{c}^{H}\vect{H}_c$. Then the factorization becomes,
\begin{eqnarray}
  \Pr\{\vect{X}=\vect{x}|\vect{Y}=\vect{y}\}&\propto&\exp\left(-\frac{1}{2\sigma^{2}}\left(-2\Re{\vect{w}^{H}\vect{u}} +
\vect{w}^{H}\vect{R}\vect{w}  \right) \right) \nonumber \\
&\propto&\exp\left(\frac{1}{2\sigma^{2}}\left(2\sum_{k=1}^{N_t}\Re{\qpsk{\vect{x}_k}u_{k}^{*}} - 
\sum_{k=1}^{N_t}\sum_{l=1}^{N_t}\qpsk{\vect{x}_{k}}^{*}(\vect{R})_{k,l}\qpsk{\vect{x}_{l} } \right)\right) \nonumber \virgul
\end{eqnarray}
where $u_k$ is the $k^{th}$ component of $\vect{u}$  and $(\vect{R})_{k,l}$ is the entry in the $k^{th}$ row 
and $j^{th}$ column of the matrix $\vect{R}$. Since $\vect{R}$ is hermitian symmetric,
\begin{eqnarray}
  \Pr\{\vect{X}=\vect{x}|\vect{Y}=\vect{y}\}&\propto& \exp\left(\frac{1}{2\sigma^{2}}\left(
      \sum_{k=1}^{N_t}\left(2\Re{\qpsk{\vect{x}_k}u_{k}^{*}}-\norm{\qpsk{\vect{x}_k}}^{2}(\vect{R})_{k,k} \right)\right)\right) \nonumber \\
      &&\cdot\exp\left(-\frac{1}{2\sigma^{2}}\left( \sum_{k=2}^{N_t}\sum_{l=1}^{k-1}2\Re{\qpsk{\vect{x}_{k}}^{*}
            (\vect{R})_{k,l}\qpsk{\vect{x}_{l} }} \right)\right) \nonumber \nokta
\end{eqnarray}
Since $\norm{\qpsk{\vect{x}_k}}^{2}$ is constant,
\begin{equation}
\Pr\{\vect{X}=\vect{x}|\vect{Y}=\vect{y}\} \propto
{
\exp\left(\frac{1}{2\sigma^{2}}\left(
      \sum_{k=1}^{N_t}2\Re{\qpsk{\vect{x}_k}u_{k}^{*}}-\sum_{k=2}^{N_t}\sum_{l=1}^{k-1}2\Re{\qpsk{\vect{x}_{k}}^{*}
            (\vect{R})_{k,l}\qpsk{\vect{x}_{l} }} \right)\right)
}  \nokta \label{intermediateqpskmimo1} 
\end{equation}

The function $\qpsk{\vect{x}_k}$ can be expressed in terms of $\bpsk{.}$ function as
\begin{equation}
\qpsk{\vect{x}_k}=a\bpsk{x_{2k-1}}+a^*\bpsk{x_{2k}} \nonumber \virgul
\end{equation}
where $a=\frac{1}{2}+j\frac{1}{2}$. Therefore,
\begin{equation}
  \Re{\qpsk{\vect{x}_k}u_{k}^{*}} = \Re{au_{k}^{*}}\bpsk{x_{2k-1}}+\Re{a^*u_{k}^{*}}\bpsk{x_{2k}} \nokta \label{intermediateqpskmimo2}
\end{equation}
Furthermore, 
\begin{eqnarray}
\Re{\qpsk{\vect{x}_{k}}^{*}(\vect{R})_{k,l}\qpsk{\vect{x}_{l} }}&=&
\Re{(a^*\bpsk{x_{2k-1}}+a\bpsk{x_{2k}})(\vect{R})_{k,l}(a\bpsk{x_{2l-1}}+a^*\bpsk{x_{2l}}) }\nonumber \\
&=&\Re{\abs{a}^{2}(\vect{R})_{k,l}\bpsk{x_{2k-1}}\bpsk{x_{2l-1}}}+
\Re{(a^*)^{2}(\vect{R})_{k,l}\bpsk{x_{2k-1}}\bpsk{x_{2l}}}\nonumber\\
&&+\Re{a^2(\vect{R})_{k,l}\bpsk{x_{2k}}\bpsk{x_{2l-1}}} +
\Re{\abs{a}^{2}(\vect{R})_{k,l}\bpsk{x_{2k}}\bpsk{x_{2l}}}\nonumber \\
&=&\frac{1}{2}\big(\bpsk{x_{2k-1}+x_{2l-1}}\Re{(\vect{R})_{k,l}}+\bpsk{x_{2k-1}+x_{2l}}\Im{(\vect{R})_{k,l}} \nonumber\\
&&-\bpsk{x_{2k}+x_{2l-1}}\Im{(\vect{R})_{k,l}} +\bpsk{x_{2k}+x_{2l}}\Re{(\vect{R})_{k,l}} \big) \label{intermediateqpskmimo3} \nokta
\end{eqnarray} 
Inserting  (\ref{intermediateqpskmimo2}) and (\ref{intermediateqpskmimo3}) together with the 
definition of the $\gamma(.;.)$ function into (\ref{intermediateqpskmimo1}) gives the
desired factorization. 
\begin{equation}
\begin{split}
\Pr\{\vect{X}=\vect{x}|\vect{Y}=\vect{y}\} \propto & \prod_{k=1}^{N_t}\gamma\left(x_{2k-1};\frac{\Re{u_{k}}+\Im{u_k}}{2},\sigma\right) 
\gamma\left(x_{2k};\frac{\Re{u_{k}}-\Im{u_k}}{2},\sigma\right) \\
&\cdot \prod_{k=2}^{N_t}\prod_{l=1}^{k-1}\gamma\left(x_{2k-1}+x_{2l-1};-\frac{\Re{(\vect{R})_{k,l}}}{2},\sigma\right)
\gamma\left(x_{2k-1}+x_{2l};-\frac{\Im{(\vect{R})_{k,l}}}{2},\sigma\right)\\
&\cdot \prod_{k=2}^{N_t}\prod_{l=1}^{k-1}\gamma\left(x_{2k}+x_{2l-1};\frac{\Im{(\vect{R})_{k,l}}}{2},\sigma\right)
\gamma\left(x_{2k}+x_{2l};-\frac{\Re{(\vect{R})_{k,l}}}{2},\sigma\right)
\end{split}\nokta
\end{equation}  
\subsection{Permutations used in the simulation in Section \ref{improvementsubsection}\label{mimopermutations}}
The set $\set{P}$ consists of the following permutations. 
\begin{eqnarray}
  \vect{P}_{1}&=& \left[ \vect{f}_{1}^{T}\  \vect{f}_{3}^{T}\  \vect{f}_{5}^{T}\  \vect{f}_{7}^{T}\  
\vect{f}_{9}^{T}\  \vect{f}_{11}^{T}\  \vect{f}_{13}^{T}\  \vect{f}_{15}^{T}\ 
\vect{f}_{2}^{T}\  \vect{f}_{4}^{T}\  \vect{f}_{6}^{T}\  \vect{f}_{8}^{T}\  
\vect{f}_{10}^{T}\  \vect{f}_{12}^{T}\  \vect{f}_{14}^{T}\  \vect{f}_{16}^{T}
\right] \nonumber \\
  \vect{P}_{2}&=& \left[ \vect{f}_{1}^{T}\  \vect{f}_{3}^{T}\  \vect{f}_{5}^{T}\  \vect{f}_{7}^{T}\  
\vect{f}_{9}^{T}\  \vect{f}_{11}^{T}\  \vect{f}_{13}^{T}\  \vect{f}_{15}^{T}\ 
  \vect{f}_{4}^{T}\  \vect{f}_{6}^{T}\  \vect{f}_{8}^{T}\  
\vect{f}_{10}^{T}\  \vect{f}_{12}^{T}\  \vect{f}_{14}^{T}\  \vect{f}_{16}^{T}\ \vect{f}_{2}^{T}
\right] \nonumber \\
  \vect{P}_{3}&=& \left[ \vect{f}_{1}^{T}\  \vect{f}_{3}^{T}\  \vect{f}_{5}^{T}\  \vect{f}_{7}^{T}\  
\vect{f}_{9}^{T}\  \vect{f}_{11}^{T}\  \vect{f}_{13}^{T}\  \vect{f}_{15}^{T}\ 
    \vect{f}_{6}^{T}\  \vect{f}_{8}^{T}\  
\vect{f}_{10}^{T}\  \vect{f}_{12}^{T}\  \vect{f}_{14}^{T}\  \vect{f}_{16}^{T}\ \vect{f}_{2}^{T}\ \vect{f}_{4}^{T}\
\right] \nonumber \\
  \vect{P}_{4}&=& \left[ \vect{f}_{1}^{T}\  \vect{f}_{3}^{T}\  \vect{f}_{5}^{T}\  \vect{f}_{7}^{T}\  
\vect{f}_{9}^{T}\  \vect{f}_{11}^{T}\  \vect{f}_{13}^{T}\  \vect{f}_{15}^{T}\ 
      \vect{f}_{8}^{T}\  
\vect{f}_{10}^{T}\  \vect{f}_{12}^{T}\  \vect{f}_{14}^{T}\  \vect{f}_{16}^{T}\ \vect{f}_{2}^{T}\ \vect{f}_{4}^{T}\ \vect{f}_{6}^{T}\
\right] \nonumber
\end{eqnarray}
\begin{eqnarray} 
  \vect{P}_{5}&=& \left[ \vect{f}_{1}^{T}\  \vect{f}_{3}^{T}\  \vect{f}_{5}^{T}\  \vect{f}_{7}^{T}\  
\vect{f}_{9}^{T}\  \vect{f}_{11}^{T}\  \vect{f}_{13}^{T}\  \vect{f}_{15}^{T}\ 
\vect{f}_{10}^{T}\  \vect{f}_{12}^{T}\  \vect{f}_{14}^{T}\  \vect{f}_{16}^{T}\ \vect{f}_{2}^{T}\ \vect{f}_{4}^{T}\ \vect{f}_{6}^{T}\
\vect{f}_{8}^{T}\right] \nonumber \\  
\vect{P}_{6}&=& \left[ \vect{f}_{1}^{T}\  \vect{f}_{3}^{T}\  \vect{f}_{5}^{T}\  \vect{f}_{7}^{T}\  
\vect{f}_{9}^{T}\  \vect{f}_{11}^{T}\  \vect{f}_{13}^{T}\  \vect{f}_{15}^{T}\ 
  \vect{f}_{12}^{T}\  \vect{f}_{14}^{T}\  \vect{f}_{16}^{T}\ \vect{f}_{2}^{T}\ \vect{f}_{4}^{T}\ \vect{f}_{6}^{T}\
\vect{f}_{8}^{T}\  \vect{f}_{10}^{T} \right] \nonumber \\
\vect{P}_{7}&=& \left[ \vect{f}_{1}^{T}\  \vect{f}_{3}^{T}\  \vect{f}_{5}^{T}\  \vect{f}_{7}^{T}\  
\vect{f}_{9}^{T}\  \vect{f}_{11}^{T}\  \vect{f}_{13}^{T}\  \vect{f}_{15}^{T}\ 
    \vect{f}_{14}^{T}\  \vect{f}_{16}^{T}\ \vect{f}_{2}^{T}\ \vect{f}_{4}^{T}\ \vect{f}_{6}^{T}\
\vect{f}_{8}^{T}\  \vect{f}_{10}^{T}\ \vect{f}_{12}^{T} \right] \nonumber \\
\vect{P}_{8}&=& \left[ \vect{f}_{1}^{T}\  \vect{f}_{3}^{T}\  \vect{f}_{5}^{T}\  \vect{f}_{7}^{T}\  
\vect{f}_{9}^{T}\  \vect{f}_{11}^{T}\  \vect{f}_{13}^{T}\  \vect{f}_{15}^{T}\ 
      \vect{f}_{16}^{T}\ \vect{f}_{2}^{T}\ \vect{f}_{4}^{T}\ \vect{f}_{6}^{T}\
\vect{f}_{8}^{T}\  \vect{f}_{10}^{T}\ \vect{f}_{12}^{T}\ \vect{f}_{14}^{T} \right] \nonumber 
\end{eqnarray}
\begin{eqnarray} 
  \vect{P}_{9}&=& \left[ 
  \vect{f}_{4}^{T}\  \vect{f}_{6}^{T}\  \vect{f}_{8}^{T}\  
\vect{f}_{10}^{T}\  \vect{f}_{12}^{T}\  \vect{f}_{14}^{T}\  \vect{f}_{16}^{T}\ \vect{f}_{2}^{T}\
\vect{f}_{1}^{T}\  \vect{f}_{3}^{T}\  \vect{f}_{5}^{T}\  \vect{f}_{7}^{T}\  
\vect{f}_{9}^{T}\  \vect{f}_{11}^{T}\  \vect{f}_{13}^{T}\  \vect{f}_{15}^{T}\ 
\right] \nonumber \\  
\vect{P}_{10}&=& \left[ 
  \vect{f}_{6}^{T}\  \vect{f}_{8}^{T}\  
\vect{f}_{10}^{T}\  \vect{f}_{12}^{T}\  \vect{f}_{14}^{T}\  \vect{f}_{16}^{T}\ \vect{f}_{2}^{T}\    \vect{f}_{4}^{T}\
\vect{f}_{1}^{T}\  \vect{f}_{3}^{T}\  \vect{f}_{5}^{T}\  \vect{f}_{7}^{T}\  
\vect{f}_{9}^{T}\  \vect{f}_{11}^{T}\  \vect{f}_{13}^{T}\  \vect{f}_{15}^{T}\ \right] \nonumber \\  
\vect{P}_{11}&=& \left[ 
    \vect{f}_{8}^{T}\  
\vect{f}_{10}^{T}\  \vect{f}_{12}^{T}\  \vect{f}_{14}^{T}\  \vect{f}_{16}^{T}\ \vect{f}_{2}^{T}\    \vect{f}_{4}^{T}\ \vect{f}_{6}^{T}\
\vect{f}_{1}^{T}\  \vect{f}_{3}^{T}\  \vect{f}_{5}^{T}\  \vect{f}_{7}^{T}\  
\vect{f}_{9}^{T}\  \vect{f}_{11}^{T}\  \vect{f}_{13}^{T}\  \vect{f}_{15}^{T}\ \right] \nonumber \\  
\vect{P}_{12}&=&   \left[     
\vect{f}_{10}^{T}\  \vect{f}_{12}^{T}\  \vect{f}_{14}^{T}\  \vect{f}_{16}^{T}\ \vect{f}_{2}^{T}\    \vect{f}_{4}^{T}\ \vect{f}_{6}^{T}\ \vect{f}_{8}^{T}\
\vect{f}_{1}^{T}\  \vect{f}_{3}^{T}\  \vect{f}_{5}^{T}\  \vect{f}_{7}^{T}\  
\vect{f}_{9}^{T}\  \vect{f}_{11}^{T}\  \vect{f}_{13}^{T}\  \vect{f}_{15}^{T}\ \right] \nonumber 
\end{eqnarray}
\begin{eqnarray} 
  \vect{P}_{13}&=& \left[ 
 \vect{f}_{12}^{T}\  \vect{f}_{14}^{T}\  \vect{f}_{16}^{T}\ \vect{f}_{2}^{T}\
  \vect{f}_{4}^{T}\  \vect{f}_{6}^{T}\  \vect{f}_{8}^{T}\  \vect{f}_{10}^{T}\ 
\vect{f}_{1}^{T}\  \vect{f}_{3}^{T}\  \vect{f}_{5}^{T}\  \vect{f}_{7}^{T}\  
\vect{f}_{9}^{T}\  \vect{f}_{11}^{T}\  \vect{f}_{13}^{T}\  \vect{f}_{15}^{T}\ 
\right] \nonumber \\  
\vect{P}_{14}&=& \left[ 
  \vect{f}_{14}^{T}\  \vect{f}_{16}^{T}\ \vect{f}_{2}^{T}\    \vect{f}_{4}^{T}\
  \vect{f}_{6}^{T}\  \vect{f}_{8}^{T}\  \vect{f}_{10}^{T}\  \vect{f}_{12}^{T}\
\vect{f}_{1}^{T}\  \vect{f}_{3}^{T}\  \vect{f}_{5}^{T}\  \vect{f}_{7}^{T}\  
\vect{f}_{9}^{T}\  \vect{f}_{11}^{T}\  \vect{f}_{13}^{T}\  \vect{f}_{15}^{T}\ \right] \nonumber \\  
\vect{P}_{15}&=& \left[ 
  \vect{f}_{16}^{T}\ \vect{f}_{2}^{T}\    \vect{f}_{4}^{T}\ \vect{f}_{6}^{T}\
    \vect{f}_{8}^{T}\  \vect{f}_{10}^{T}\  \vect{f}_{12}^{T}\  \vect{f}_{14}^{T}\
\vect{f}_{1}^{T}\  \vect{f}_{3}^{T}\  \vect{f}_{5}^{T}\  \vect{f}_{7}^{T}\  
\vect{f}_{9}^{T}\  \vect{f}_{11}^{T}\  \vect{f}_{13}^{T}\  \vect{f}_{15}^{T}\ \right] \nonumber
\end{eqnarray}


\section{Proofs and Derivations in Chapter 7}
\subsection{Proof of Theorem \ref{conditionalindependencetheorem}\label{conditionalindependencethmproof}}

The proof in the forward direction is actually an implication of the cut-set independence theorem stated in \cite{aloefg}.
An alternative proof is given below. 

Let $t_{k}(\vect{x})$ and $t_{l}(\vect{x})$ be defined as
\begin{eqnarray}
  t_{k}(\vect{x})& \triangleq &\n{\fieldqn}{\prod_{\vect{a}_{i}\in \set{K}_{k}}r_{i}(\vect{a}_{i}\vect{x}^{T})}  \virgul\\
  t_{l}(\vect{x})& \triangleq &\n{\fieldqn}{\prod_{\vect{a}_{i}\in \set{K}_{l}\setminus \set{K}_k}r_{i}(\vect{a}_{i}\vect{x}^{T})} \nokta
\end{eqnarray}
Clearly,
\begin{equation}
  p(\vect{x})=\n{\fieldqn}{t_{k}(\vect{x})t_{l}(\vect{x})} \nokta \label{prrr}
\end{equation}
Due to the definitions of $\set{K}_{k}$ and $\set{K}_{k}$, $t_{k}(\vect{x})$ and $t_{l}(\vect{x})$
satisfies
\begin{eqnarray}
  r_{k}(\vect{x}) &\triangleq& r_k(\vect{x}(\vect{I}-\vect{E}_k)) \virgul \label{rklocality} \\
  r_{l}(\vect{x}) &\triangleq& r_l(\vect{x}(\vect{I}-\vect{E}_l)) \virgul 
\end{eqnarray}  
where $\vect{E}_k$ ($\vect{E}_l$) is the dependency with just a single $1$
on its $k^{th}$ ($l^{th}$) entry on the main diagonal.

An equivalent requirement on conditional independence can be obtained 
by multiplying both sides of (\ref{conditionalindependence1}) with
 $\Pr\{\vectex{X}{k,l}=\vectex{x}{k,l}\} \Pr\{\vectex{X}{k}=\vectex{x}{l}\}$
as follows.
\begin{equation}
  p(\vect{x})\Pr\{\vectex{X}{k,l}=\vectex{x}{k,l}\}=
  \Pr\{\vectex{X}{k}=\vectex{x}{k}\}
  \Pr\{\vectex{X}{l}=\vectex{x}{l}\}
\label{conditionalindependence2} 
\end{equation}

The marginal distribution $\Pr\{\vectex{X}{k}=\vectex{x}{k}\}$ can be derived in terms 
of $t_{k}(\vect{x})$ and $t_{l}(\vect{x})$ as 
\begin{eqnarray}
  \Pr\{\vectex{X}{k}=\vectex{x}{k}\}&=&\sum_{\forall x_{k} \in \fieldq} p(\vect{x}) \nonumber \\
  &=&\n{\fieldq^{n-1}}{\sum_{\forall x_{k} \in \fieldq} r_{k}(\vect{x})r_{l}(\vect{x})} \nonumber \\
   &=&\n{\fieldq^{n-1}}{r_{k}(\vect{x})\sum_{\forall x_{k} \in \fieldq}r_{l}(\vect{x})} \label{mar1} \virgul 
\end{eqnarray}
where the last line follows from (\ref{rklocality}). Other marginal distributions in (\ref{conditionalindependence2}) can similarly be derived 
as
\begin{equation}
\Pr\{\vectex{X}{l}=\vectex{x}{l}\}= \n{\fieldq^{n-1}}{r_{l}(\vect{x})\sum_{\forall x_{l} \in \fieldq}r_{k}(\vect{x})} \label{mar2}
\end{equation}
\begin{equation}
\Pr\{\vectex{X}{k,l}=\vectex{x}{k,l}\}=\n{\fieldq^{n-2}}
{\sum_{\forall x_{k}\in \fieldq}r_{l}(\vect{x})\sum_{\forall x_{l} \in \fieldq}r_{k}(\vect{x})}
\label{mar3}
\end{equation}
Inserting (\ref{prrr}), (\ref{mar1}), (\ref{mar2}), and (\ref{mar3}) into (\ref{conditionalindependence2})
 verifies that the equality in (\ref{conditionalindependence2}) holds and completes the proof in the forward direction. 

The proof in the backward direction starts with multiplying both sides of (\ref{conditionalindependence1}) with
$\Pr\{\vectex{X}{k}=\vectex{x}{k}\}$ which yields
\begin{eqnarray}
p(\vect{x})&=&\Pr\{\vectex{X}{k}=\vectex{x}{k}\}\Pr\{X_{k}=x_{k}|\vectex{X}{k,l}=\vectex{k,l}\} \nonumber \\
&=&\n{\fieldqn}{m_{k}(\vect{x})m_{l}(\vect{x})} \virgul
\end{eqnarray}
where $m_{k}(\vect{x})$ and $m_{l}(\vect{x})$ are $\n{\fieldqn}{\Pr\{\vectex{X}{k}=\vectex{x}{k}\}}$ and   
$\Pr\{X_{k}=x_{k}|\vectex{X}{k,l}=\vectex{k,l}\}$. Clearly, these functions satisfy 
\begin{eqnarray}
m_{k}(\vect{x})&=&m_{k}(\vect{x}(\vect{I}-\vect{E}_{k})) \virgul \\
m_{l}(\vect{x})&=&m_{l}(\vect{x}(\vect{I}-\vect{E}_{l})) \nokta
\end{eqnarray}
Then due to Theorem \ref{localfactorizationtheorem} $p(\vect{x})$ can be factored as in (\ref{yoruldum_artik}).

\chapter*{VITA}
\section*{Personal Information}
\begin{tabular}{rl}
    \textsc{Date and Place of Birth:} & June 12, 1980 | \.{I}stanbul, Turkey    \\
\    \textsc{Nationality:} & Turkish \\
    \textsc{Marital Status:} & Married \\
    \textsc{Address:}   & Elektrik Elektronik M\"{u}h. B\"{o}l\"{u}m\"{u}, ODT\"{U}, 06531, Ankara/Turkey   \\
    \textsc{Phone:}     & +90 505 514 61 14 \\ 
 \textsc{Web Page:}     & {www.eee.metu.edu.tr/\textasciitilde fatih}\\
       \textsc{e-mail:}       & {fatih@eee.metu.edu.tr}\\
    \textsc{}     & {bfatih@gmail.com}
\end{tabular}
\section*{Education}
\begin{tabular}{rp{12cm}}
Sep. 2005 &  \emph{M.Sc. in Electrical and Electronics Engineering} \\
 &Middle East Technical University, Ankara, Turkey \\
&Thesis Title: \emph{``Sub-Graph Approach in Iterative Sum-Product Algorithm''} \\
&Advisor: Prof. Dr. Buyurman Baykal\\
&Co-advisor: Assoc. Prof. Dr. Ali \"Ozg\"{u}r Y\i lmaz \\
& \\
June 2002 & \emph{B.Sc. in Electrical and Electronics Engineering} \\
 &Middle East Technical University, Ankara, Turkey \\
\end{tabular}

\section*{Teaching Experience}
\begin{tabular}{r|p{11cm}}
\textsc{Sep. 2002 -}&\emph{Teaching Assistant} \\ \textsc{Sep. 2009}& {Department of Electrical and Electronics 
Engineering}
\\&{Middle East Technical University}
\\&\footnotesize{Signal processing courses}
\\& \footnotesize{Non-linear electronics for communications course}
\\& \footnotesize{Analog electronics course}
\\& \footnotesize{Electrical circuits laboratory course}
\\\multicolumn{2}{c}{}
 \\ \textsc{Dec. 2007} & \emph{Instructor of a short course on signal processing}
 \\ & Environmental Tectonics Corporation, Turkey Branch
\\\multicolumn{2}{c}{}
\\\textsc{Oct. 2005} & \emph{Establishment of the electrical circuits laboratory}   
\\ & Northern Cyprus Campus of Middle East Technical University 

\end{tabular}

\section*{Engineering Experience}
\begin{tabular}{r|p{11cm}}
\textsc{Oct. 2001-} & \emph{DSP Programmer}\\
\textsc{May 2002} & Implementation of NATO STANAG 4285 (HF modem standard) \\ 
& on TMS320C54 DSP processor\\
\multicolumn{2}{c}{}\\
\textsc{Sep. 2007-}& \emph{Researcher} \\
\textsc{March 2009}& Detecting and classifying low probability of intercept radar project\\
\end{tabular}

\section*{Publications}
\begin{itemize}
\item Bayramo\~glu M. F. and Y\i lmaz A. \"{O}., \emph{``An analysis method of multivariate probability mass functions''}
, to be submitted 
\item Bayramo\~glu M. F. and Y\i lmaz A. \"{O}., \emph{``Klasik tespit kuram\i{}  i\c{c}in \"{O}klid 
geometrisi ile genel bir g\"{o}sterim''}, {IEEE 18. Sinyal \.{I}\c{s}leme ve \.{I}leti\c{s}im Uygulamalar\i{} 
Kurultay\i{}} (S\.IU),  April 2010, Diyarbak\i r, Turkey, An English version is available on arXiv. 
\item Bayramo\~glu M. F. and Y\i lmaz A. \"{O}., \emph{``Factorization of joint probability mass functions 
  into parity check interactions''}, Int. Symposium on Information Theory, June 2009, Seoul, Korea
\item Bayramo\~glu M. F. and  Y\i lmaz A. \"{O}., \emph{``A Hilbert space of probability mass 
functions and applications on the sum-product algorithm''}, Int. Symposium on Turbo Codes,  September
2008, Lausanne, Switzerland 
\item Bayramo\~glu M. F.,  Y\i lmaz A. \"{O}.,  and Baykal B.,  \emph{``Sub-graph approach 
in iterative sum-product algorithm''}, Int. Symposium on Turbo Codes,  April
2006, Munich, Germany 
\end{itemize}
\section*{Honors and Awards}
\begin{tabular}{r|l}
1998 & Ranked  27\textsuperscript{th} in \\
& {Nationwide university admission examination among 1.3 million exam takers}\\
\multicolumn{2}{c}{}\\
1997 & Silver medalist in \\
& National Olympics in Informatics  
\end{tabular}


\section*{Computer Skills}
C++, Matlab, Linux, Agilent VEE,  Texas Instruments TMS320C54 Assembly language, 
  {\LaTeX}

\section*{Interests and Activities}
Road cycling, archery, history
\definecolor{refColor}{rgb}{0.8,0.0,0.3}

\end{document}